\title{One Attack to Rule Them All: Tight Quadratic Bounds for Adaptive Queries on Cardinality Sketches}
\author{Edith Cohen\\
Google Research and Tel Aviv University\\
Mountain View, CA, USA\\
\texttt{edith@cohenwang.com}
\and
Jelani Nelson\\
UC Berkeley and Google Research\\
Berkeley, CA, USA\\
\texttt{minilek@alum.mit.edu}
\and
Tam\'{a}s Sarl\'{o}s \\
Google Research \\
Mountain View, CA, USA\\
\texttt{stamas@google.com}
\and
Mihir Singhal\\
UC Berkeley and Google Research\\
Berkeley, CA, USA\\
\texttt{mihir.a.singhal@gmail.com}
\and
Uri Stemmer\\
Tel Aviv University and Google Research\\
Tel Aviv-Yafo, Israel\\
\texttt{u@uri.co.il}
}
\date{}
\begin{document}
\maketitle
\begin{abstract}
   Cardinality sketches are compact data structures for representing sets or vectors. These sketches are space-efficient, typically requiring only logarithmic storage in the input size, and enable approximation of cardinality (or the number of nonzero entries). A crucial property in applications is \emph{composability}, meaning that the sketch of a union of sets can be computed from individual sketches. Existing designs provide strong statistical guarantees, ensuring that a randomly sampled sketching map remains robust for an exponential number of queries in terms of the sketch size $k$. However, these guarantees degrade to quadratic in 
$k$ when queries are \emph{adaptive}, meaning they depend on previous responses. 

   Prior works on statistical queries (Steinke and Ullman, 2015) and specific MinHash cardinality sketches (Ahmadian and Cohen, 2024) established that this is tight in that they can be compromised using a quadratic number of adaptive queries. In this work, we develop a universal attack framework that applies to broad classes of cardinality sketches. We show that any union-composable sketching map can be compromised with $\tilde{O}(k^4)$ adaptive queries and this improves to a tight bound of $\tilde{O}(k^2)$ for monotone maps (including MinHash, statistical queries, and Boolean linear maps). Similarly, any linear sketching map over the reals $\mathbb{R}$ and finite fields $\mathbb{F}_p$ can be compromised using $\tilde{O}(k^2)$ adaptive queries, which is optimal and  strengthens some of the recent results by~\citet{GribelyukLWYZ:FOCS2024}, who established a weaker polynomial bound.

\end{abstract}

\section{Introduction}

Cardinality sketches provide compact, efficient summaries of sets, enabling efficient approximate cardinality estimation. Formally, a \emph{sketching map} 
$S$ maps a subset $U\subset \mathcal{U}$ to a succinct representation $S(U)$, where $|S(U)|\leq k$ is typically logarithmic in the input size $|U|$.
A key feature of cardinality sketches is composability, which allows essential set operations to be performed directly on the sketches:
Specifically, the sketch of  $U\cup \{x\}$ can be computed from the sketch of $U$ without additional access to $U$, and the sketch $S(U\cup V)$ can be computed from the sketches $S(U)$ and $S(V)$ of (potentially overlapping) subsets $U$ and $V$.
(see \cref{composablemap:def} for details). This property makes sketches well-suited for distributed and streaming data processing.

% Cardinality sketches are compact representations of subsets $U\in 2^\mathcal{U}$ of a ground set $\mathcal{U}$ of size $n=|\mathcal{U}|$. A \emph{sketching map} $S$ maps $U\mapsto S(U)$ where $|S(U)|\leq k$ is small, typically logarithmic in the input cardinality $|U|$. The sketch $S(U)$ supports efficient approximation of the cardinality of $U$. A key feature of sketching maps is \emph{composability} -- support for essential set operations directly in the sketch space, such as adding elements and merging sets (see \cref{composablemap:def} for a precise definition). Specifically, the sketch of  $U\cup \{x\}$ can be computed from the sketch of $U$ without additional access to $U$, and the sketch $S(U\cup V)$ can be computed from the sketches $S(U)$ and $S(V)$ of (potentially overlapping) subsets $U$ and $V$. Composability facilitates efficient sketching of distributed or streaming data.

Cardinality sketches have been widely studied~\cite{FlajoletMartin85,flajolet2007hyperloglog,ECohen6f,ams99,BJKST:random02,KaneNW10,ECohenADS:TKDE2015,Blasiok20} and are extensively used in real-world applications~\citep{datasketches,bigquerydocs}.
They are deployed in network monitoring, data analytics, and large-scale tracking, where data is collected across multiple locations and time intervals. For example, they are used to track unique IP addresses, users, network flows, search queries, and resource accesses across distributed systems. Each location, such as a router, data center, or cache, 
maintaining a local sketch that summarizes the unique elements observed within a time window. 
These sketches are lightweight and are retained well beyond the massive raw data. They enable both local optimizations and large-scale analytics. Crucially, these applications depend on composability, which ensures that sketches can be seamlessly merged. In nearly all these real-world application, the setting is distributed (that is, not a single data stream) and in the so-called \emph{increment only} model (that is, there are no ``negative'' occurrences).

Notable cardinality sketching methods include the variety of MinHash sketches -- $k$-partition (PCSA)~\cite{FlajoletMartin85,flajolet2007hyperloglog},
 $k$-mins~\citep{ECohen6f,Broder:CPM00}, and bottom-$k$~\citep{Rosen1997a,ECohen6f,BRODER:sequences97,BJKST:random02} (see survey~\cite{MinHash:Enc2008,Cohen:PODS2023}) and linear sketches~\citep{CormodeDIM03,distinct_deletions_Ganguly:2007,KaneNW10}. Most prevalent in practice are implementations based on MinHash sketches~\cite{hyperloglog:2007,hyperloglogpractice:EDBT2013}.

Common to all methods is that they are parametrized by the sketch size $k$ and they specify a distribution from which a composable sketching map $S$, which maps sets to their sketches, is randomly sampled. 
The sampled sketching map must then be applied to all subsets, in order to facilitate composability.\footnote{and to support additional approximate queries in sketch space such as set similarity} The methods pair each sketching map $S$ with an \emph{estimator} that maps sketches $S(U)$ to estimates of the cardinality $|U|$.
These methods provide statistical guarantees on accuracy: for any particular subset $U$, the probability of a \emph{mistake} (relative error that exceeds the specification) over the sampling of the sketching and estimator map pairs decreases exponentially with the sketch size $k$. Therefore, with a sketch size of $k$, with high probability, an exponential number of queries in $k$ can be approximated with a small relative error with the same sampled sketching map. Importantly, this assumes that the queries are \emph{non-adaptive}, meaning 
% the updates to the underlying subset and the queries made \eccomment{we are not talking about updates. Not a streaming model}
they \emph{do not depend on the sampled sketching map} $S$.\footnote{for the purpose of analysis, the queries can be considered to be fixed in advance, before the map is sampled.}
% \eccomment{Technically, they may not be fixed in advance. We just want to avoid a feedback loop. So I edited that sentence.}

\paragraph{Adaptive vs non-adaptive settings}
In adaptive settings, inputs are generated interactively, with each input potentially depending on previous outputs. These outputs -- and thus subsequent inputs -- may depend on the sketching map. Therefore, analyses that assume query independence from the map no longer hold. The vulnerability is that for any sketching map there are ``adversarial'' inputs with different cardinality but with the same sketch. The probability of randomly identifying such inputs without access to the map decreases exponentially with sketch size. But an adaptive interaction may leak information on the map which would allow for efficiently identify adversarial inputs.
% Adaptive settings are not an issue when choosing sketching maps deterministically, but in the randomized case, adaptively chosen queries even after receiving just the first output response may be correlated with the randomness of the sketching map, at which point error guarantees no longer hold. %When  
%the sketching map and estimator are known and guarantees are statistical, an adversary can construct queries that result in erroneous responses. %When responses leak information on the map, enough of it might be reconstructed to compromise it, making an analysis that assumes that queries are independent of the randomly chosen map no longer valid. 
% In our context of cardinality sketches, adaptive settings arise when using the same sketching map to interactively answer approximate cardinality queries, where queries may depend on prior outputs (and hence on the sketching map).

Research works on the adaptive settings 
span multiple sub-areas including statistical queries~\citep{Freedman:1983,Ioannidis:2005,FreedmanParadox:2009,HardtUllman:FOCS2014,DworkFHPRR:STOC2015}, sketching and streaming algorithms~\citep{MironovNS:STOC2008,HardtW:STOC2013,BenEliezerJWY21,HassidimKMMS20,WoodruffZ21,AttiasCSS21,BEO21,DBLP:conf/icml/CohenLNSSS22,TrickingHashingTrick:arxiv2022,AhmadianCohen:ICML2024}, dynamic graph algorithms~\citep{ShiloachEven:JACM1981,AhnGM:SODA2012,gawrychowskiMW:ICALP2020,GutenbergPW:SODA2020,Wajc:STOC2020, BKMNSS22}, and machine learning~\citep{szegedy2013intriguing,goodfellow2014explaining,athalye2018synthesizing,papernot2017practical}.

\paragraph{Positive results via wrapper methods}
Robustness for adaptive queries can be obtained by \emph{wrapper} methods that use in a black box manner multiple independent copies of a randomized data structure that only provides statistical guarantees for non-adaptive queries. A simple
wrapper maintains $k$ copies and 
responds to each query using a fresh copy. This simple approach supports $k$ adaptive queries\footnote{in fact slightly less, since once has to apply a union bound to argue that all copies succeeded}. This can be improved to a number of queries that is \emph{quadratic} in the number of copies $k$. 
This was first established for adaptive statistical queries~\cite{Kearns1998} by
\citep{DworkFHPRR:STOC2015,BassilyNSSSU:sicomp2021}. The idea was to use differential privacy to protect the identity of the sampled keys. The quadratic boost followed from the advanced composition property of differential privacy~\citep{DMNS06} (that for a fixed privacy budget, allows for a number of privacy-preserving queries that is quadratic in the number of protected  units). The robustness wrapper  method of~\citet{HassidimKMMS20} lifted this framework from statistical queries to any randomized  data structure by interpreting the protected unit (that is, each sample) as the randomness initializing the data structure. An  alternative approach to obtain this quadratic boost without engaging the theory of differential  privacy was given by~\cite{Blanc:STOC2023} who established that it suffices to randomly sample a copy and use it to respond to each query. To summarize, these methods allow us to process $\tilde{O}(k^2)$ adaptive queries using super sketches of the form
$(S_1(U),S_2(U),\ldots,S_k(U))$ that 
use $k$ i.i.d.\ sketching maps $(S_i)_{i\in[k]}$. Observe that when the individual sketching maps are composable, so is the super-sketch.

\paragraph{Negative results by attack constructions}
Lower bounds were obtained using explicit constructions of attacks that specify an adaptive sequence of queries.
An attack on a family of sketching maps can be \emph{specific} to a particular estimator or it can be {\em universal} when it applies with {\em any} estimator.
\citet{HardtUllman:FOCS2014,SteinkeUllman:COLT2015} presented quadratic size universal attacks for 
adaptive statistical queries, based on Fingerprinting Codes~\citep{BonehShaw_fingerprinting:1998}. 
\citet{HardtW:STOC2013} designed a
polynomial-size universal attack on general linear sketches for $\ell_2$ norm estimation.
\citet{DBLP:conf/nips/CherapanamjeriN20} constructed an
$\tilde{O}(k)$ size specific
attack on the Johnson Lindenstrauss Transform with the standard estimator.
\citet{BenEliezerJWY21} presented an
$\tilde{O}(k)$ size specific attack on the AMS sketch~\citep{ams99} with the standard estimator.
\citet{DBLP:conf/icml/CohenLNSSS22}
presented an
$\tilde{O}(k)$ size specific attack on Count-Sketch~\citep{CharikarCFC:2002} with the standard estimator map.
\citet{TrickingHashingTrick:arxiv2022} constructed an  $\tilde{O}(k^2)$ universal attack on the AMS sketch and on Count-Sketch (when used for heavy hitter or inner product queries).

Specifically for cardinality sketches, 
\citet{DBLP:journals/icl/ReviriegoT20} and~\citet{cryptoeprint:2021/1139} constructed $\tilde{O}(k)$ size specific attacks on  the popular HLL sketch~\cite{hyperloglog:2007} with the standard estimator~\citep{hyperloglogpractice:EDBT2013}.
\citet{AhmadianCohen:ICML2024} constructed single-batch
$\tilde{O}(k)$ size specific attack with the standard estimator and
$\tilde{O}(k^2)$ universal attack for 
MinHash sketching maps. 
Finally,  a recent work by~\citet{GribelyukLWYZ:FOCS2024} on linear sketches construct universal attacks of a polynomial size $O(\mathrm{poly}(k))$ over the reals $\mathbbm{R}$, of $O(k^8)$ size over the integers $\mathbbm{Z}$, and of $O(k^3)$ size over a finite field $\mathbbm{F}_p$.

To summarize, existing cardinality sketch constructions  are guaranteed to be robust for a number of adaptive queries that is at most quadratic in sketch size. Existing attack constructions are either of super-quadratic polynomial size or narrowly apply to a particular sketch design. This raises the following natural questions:
\begin{ques} \label{main:problem}
    Can we devise a unified adaptive attack that  applies against any composable cardinality sketch? What is the broadest class of cardinality sketches for which we can design such an attack of polynomial size? Of quadratic size?
\end{ques}

\subsection{Overview of contributions}
Our results affirmatively settle Question~\ref{main:problem} for a broad classes of sketching maps, namely composable sketches and linear sketches. % We introduce an attack framework and cleverly apply it for composable and linear sketches. 

Our main contribution is an attack framework that unifies, simplifies, and generalizes prior works including
the method of~\cite{AhmadianCohen:ICML2024}, which was specifically tailored for certain known MinHash cardinality sketch designs, and the lower  bound  of \cite{UllmanSNSS:NEURIPS2018,HardtUllman:FOCS2014,SteinkeUllman:COLT2015} that was specific for adaptive statistical queries~\citep{Kearns1998} and based on fingerprinting codes~\citep{BonehShaw_fingerprinting:1998}.

A sketching map is specified by a pair $(S,\mathcal{U})$, where $\mathcal{U}$ is a ground set of keys of size $|\mc U|=n$ and $S:2^{\mathcal{U}}$ represents a map from sets to their sketches. 
Our interaction model, described in detail in \cref{prelim:sec}, follows prior work and corresponds to  typical applications in practice. An \emph{attacker} issues a sequence of adaptive queries $(U_i)$ where $U_i\in 2^{\mathcal{U}}$ is a subset of our ground set.  A \emph{query responder} (QR) receives the sketch $S(U_i)$ of the query and returns a response $Z_i$.\footnote{This query-based interaction model corresponds to the typical practical setting where estimates are used for localized analytics on disjoint data across time and servers and for global analytics. As for a single stream model with query points, the attack can be implemented for linear maps that support negative updates by using updates to turn the frequencies in $U_i$ into the frequencies in $U_{i+1}$. Otherwise, in the common scenario of no deletions, the count is monotone increasing and robustness can be secured with only a logarithmic blowup~\citep{BenEliezerJWY21}.}
% (just delete $U_i\setminus U_{i+1}$ and then insert $U_{i+1}\setminus U_i$).

Instead of approximate cardinality queries, we task the query responder with the simpler \emph{soft threshold} queries. For fixed values $A<B$, the response is a single bit $Z$ that is correct if $Z=0$ when the cardinality is  $|U|\leq A$ and $Z=1$ when $|U|\geq B$. Any response $Z\in\{0,1\}$ is acceptable as correct when $|U|\in (A,B)$. Observe that these simpler queries, and in particular larger ratios $B/A$, only makes the query responder more resilient to attacks. Our attacks apply with $B/A = \Omega(1)$.
The goal of the attack is to force the query response algorithm to make mistakes on at least some constant fraction $\eta$ of the queries.
% \jncomment{I think before we get too technical, we should spend a little time setting up the game clearly first for the reader. That is, the user queries subsets of the universe $U_1,U_2,\ldots$ in sequence, chosen adaptively, and for each set needs to be told whether $|U_i|<A$ or $|U_i|>B$, and any answer is acceptable as correct for $|U_i|$ in between. We should also specify that $A$ and $B$ differ by at least a constant factor -- maybe even just say $B = 1.1A$ or something. And lastly, we should remind the reader that such attacks can be implemented in the streaming model by deleting/inserting items as needed to turn $U_i$ into $U_{i+1}$ (just delete $U_i\setminus U_{i+1}$ and then insert $U_{i+1}\setminus U_i$.}\eccomment{I think I addressed these points. See above (streaming is mentioned in a footnote).}

\subsubsection{Attack Framework}

Our attack (see details in Section~\ref{unifiedattack:sec}) specifies a distribution $\mathcal{Q}$ over $2^\mathcal{U}$ where a rate $q$ is sampled from some distribution and a subset $U\sim \Bern[q]^{\mathcal{U}}$ is sampled by including each key independently with probability $q$. The actual query subset is $U\cup M$, where $M\subset\mathcal{U}$ is an initially empty set that we refer to as a \emph{mask}.  The query responder answers a soft threshold query based on the sketch $S(U\cup M)$. When the response is $1$, the attacker increments the score of all keys in $U\setminus M$. Keys in $\mathcal{U}\setminus M$ with scores that are sufficiently above the median score are added to $M$. 

The goal of the attack is to force an incorrect response on at least some fixed constant $\eta$ fraction of the specified number of $r$ queries.  The analysis idea is as follows: 
if the responder makes many intentional unforced errors, and thus does not reveal enough about the map $S$, then our goal is achieved. Otherwise, the attacker makes progress in building the mask $M$ and specifically towards making the distribution $\mathcal{Q} \cup M$\footnote{The notation $V\sim \mc Q \cup M$ means that we sample $U\sim \mc Q$ and take $V =U\cup M$.} 
\emph{adversarial to the sketching map} $S$. The distribution is $\eta$-adversarial if for \emph{any} estimator map $\phi$, $\phi(S(V))$ is incorrect with probability at least $\eta$. When the distribution becomes adversarial, the query responder, even if it knows $\mathcal{Q} \cup M$, must make forced errors on at least an $\eta$ fraction of the remaining queries.

% That is, any query responder must err on at least a constant fraction of the queries.
% construct an \emph{adversarial query distribution} for the sketching map \jncomment{I think this is a bit vague -- an attacker just having an ``attack distribution'' sounds nonadaptive. What do we really mean here?}, which forces any query responder to make errors with at least a constant probability.

% The attack terminates when the query responder makes a mistake. At this point, the distribution of $\mathcal{Q} \cup M$ is \emph{adversarial to the sketching map}: it spans a range of cardinality values but the information in the sketch is insufficient for correctly answering soft threshold queries (even when the query distribution itself is known to the responder!).

Our analysis of the attack relies on the existence of a \emph{determining pool} $L\subset \mathcal{U}$, a smaller set of keys $|L|\ll |\mathcal{U}|$ 
% \jncomment{have we ever defined ``$n$'' (I know it's $|U|$, but has the reader already been told this by this point. We might also want to explicitly say $L\subset U$.} 
such that we can determine the sketch of the query $U\cup M$ only from its intersection with the pool: 
\begin{definition} [Determining pool; simplified]
    For a sketching map $(S,\mathcal{U})$, a set $L\in 2^\mathcal{U}$ is a \emph{determining pool} if for any $M\subset L$, with probability at least $1-\delta$ over $U\sim \mathcal{Q}$, the sketch $S(U\cup M)$ can be computed from $M$ and $U\cap (L\setminus M)$. 
\end{definition}

\begin{theorem} [Informal; See Theorem~\ref{metacorrectness:thm} and Lemma~\ref{limitedpoolnaturalQR:lemma}] \label{informalmain:thm}
For any fixed $\eta<1/4$, if the sketching map $(S,\mathcal{U})$ has a determining pool $L\in 2^\mathcal{U}$ so that $|L| \ll n$,\footnote{$|L|<c n$ for some universal constant $c<1$}, then with probability $1-\delta$, our unified attack 
after $r=O(\log(1/\delta)|L|^2 \log n)$ queries
forces at least $\eta r$ mistakes.
% our unified attack succeeds with probability $1-\delta$ after $r=O(\log(1/\delta)|L|^2 \log n)$ queries.
This holds even against a powerful and strategic query response algorithm that is tailored to our query distribution at each step.
\end{theorem}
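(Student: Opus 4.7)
The plan is to prove the theorem via a two-case dichotomy at every step of the attack: either the query responder (QR) already makes close to $\eta r$ mistakes (in which case we are done immediately), or QR is forced to be ``mostly truthful,'' and we use that truthfulness to show that the mask $M$ grows to cover the determining pool $L$ within $O(|L|^2 \log n)$ queries. Once $L \subseteq M$, the determining-pool property says that $S(U\cup M)$ is, with probability $1-\delta$ over $U\sim \mathcal Q$, computable from $M$ alone; hence the sketch carries essentially no information about $|U|$. Since the threshold $B/A = \Omega(1)$ and $q$ is drawn from a distribution whose support crosses between $|U\cup M|\le A$ and $|U\cup M|\ge B$ with constant probability on each side, no estimator -- even one that knows $\mathcal Q\cup M$ -- can be correct on more than a $1-\eta$ fraction of queries. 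This supplies the ``forced errors'' half of the dichotomy and reduces the theorem to controlling the mask-growth phase.

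For the mask-growth phase, I would analyze the score dynamics following the fingerprinting intuition. Conditional on $M$ and on the sequence of queries, a key $x\notin L$ has its inclusion indicator $\mathbf 1[x\in U]$ \emph{independent} of the sketch $S(U\cup M)$ (hence of the response $Z$), because the sketch depends only on $U\cap L$ modulo the $\delta$-failure event of the pool. Therefore the increment $\mathbf 1[x\in U\setminus M]\cdot Z$ has expectation exactly $q\cdot \Pr[Z=1]$, the baseline. In contrast, for $x\in L\setminus M$, the truthfulness of QR forces a positive correlation between $\mathbf 1[x\in U]$ and $Z$: writing $p_1 = \Pr[Z=1]$, one shows by a second-moment / fingerprinting-style calculation that the \emph{summed} excess of $L$-key scores over the baseline is $\Omega(r/|L|)$ per step -- essentially because the only way for QR to distinguish $|U\cup M|\ge B$ from $|U\cup M|\le A$ is to react to the $L$-coordinates of $U$, and total reaction divided among $\le |L|$ keys gives per-key signal $\Omega(1/|L|)$.

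With the per-key increments and variances in hand, standard Chernoff/Hoeffding concentration, conditional on the response transcript, shows that after $r = \Theta(|L|^2\log n\cdot \log(1/\delta))$ queries every $L$-key has score at least a constant above the median, while by a union bound no non-$L$ key does; the $\log n$ absorbs the union bound over keys and the $\log(1/\delta)$ absorbs the overall failure probability. At this point $M\supseteq L$ and the argument from the first paragraph kicks in, forcing the remaining queries to incur $\ge \eta$ mistake rate, which gives the claimed $\eta r$ mistakes in total (after absorbing constants into $\eta<1/4$).

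The main obstacle I expect is the score-separation step against a \emph{strategic} QR. The responder may deliberately lie on a controlled fraction of queries, choose responses adversarially on boundary cases $|U\cup M|\in (A,B)$, and attempt to inject noise that inflates non-$L$ keys' scores or suppresses $L$-keys' scores. The core lemma that must be proved carefully is that any QR strategy with unforced-error rate below $\eta$ necessarily produces responses whose correlation structure on $L\setminus M$ dominates anything achievable on $\mathcal U\setminus L$, uniformly over the transcript; this is where the Steinke--Ullman-style fingerprinting accounting enters, adapted so that the ``fingerprint'' coordinates are exactly the elements of the (unknown) determining pool $L$.
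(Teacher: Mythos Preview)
Your overall architecture matches the paper's closely: reduce to the QR seeing only $W=U\cap(L\setminus M)$, observe that transparent keys $x\notin L$ have $\mathbf 1[x\in U]$ independent of $W$ (hence of $Z$), and prove a score-advantage lemma saying that low error rate forces the \emph{average} score increment over $L'=L\setminus M$ to exceed the transparent baseline by $\Omega(1/|L'|)$. That part is right.

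The genuine gap is the step ``every $L$-key has score at least a constant above the median.'' This is not true against a strategic QR, and per-key Chernoff cannot establish it. The score-advantage lemma controls only the \emph{average} $\bar p_\phi(L')-p^*_\phi$; a QR can make $\phi$ depend on only one coordinate of $W$ at a time, so that $p_\phi(x)=p^*_\phi$ for all but one $x\in L'$ in any given step. Those keys accumulate zero signal that step and are indistinguishable from transparent keys. No amount of rounds guarantees that \emph{each} $L$-key individually separates.

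The paper avoids this by never tracking individual $L$-keys. It defines the aggregate
\[
Y_t \;=\; \sum_{x\in L} C^{(t)}[x]\;-\;|L|\,P^{(t)}\;-\;c_1 t\;+\;c_2 E^{(t)},
\]
where $P^{(t)}$ is the cumulative transparent baseline and $E^{(t)}$ the error count. The score-advantage lemma (in the precise form $\bar p_\phi(L')\ge p^*_\phi+(c_1-c_2\eta_\phi)/|L'|$) makes $Y_t$ a submartingale with $O(|L|)$-bounded increments, so Azuma gives $Y_r\ge -O(|L|\sqrt{r\log n})$. The key observation you are missing is that the first sum is \emph{capped}: once a key's score exceeds the median by the threshold $16\sqrt{r\log(rn)}$ it enters $M$ and (by construction) its contribution to $C^{(t)}[x]-P^{(t)}$ freezes. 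Hence $\sum_{x\in L}(C^{(r)}[x]-P^{(r)})\le O(|L|\sqrt{r\log(rn)})$ always. Plugging both bounds into the definition of $Y_r$ forces either $E^{(r)}\ge (c_1/c_2-o(1))r$ or $|L'|$ has already shrunk below a constant $\ell_0$. The second alternative is then eliminated by a bootstrap: enlarge $L$ by $\ell_0+1$ transparent keys (still a determining pool), rerun the same argument, and note that the conclusion would place a transparent key in $M$, contradicting the transparent-key concentration you already proved.

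So the fix is: replace your per-key concentration with this cap-plus-martingale argument on the total $L$-score. Your ``once $L\subseteq M$ the sketch is uninformative'' endgame is morally correct but is subsumed by the bootstrap, which never needs the mask to literally contain all of $L$.
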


To establish that our attack works for a class of sketches $\mathcal{S}$, all we need to do is to prove that a determining pool of size $\ell \ll n$ exists for each sketching map in the class.
We demonstrate (see Section~\ref{examples:sec}) how the lower bound of ~\cite{SteinkeUllman:COLT2015} on statistical queries and the results of~\cite{AhmadianCohen:ICML2024} for MinHash sketches can be obtained by simply specifying the determining pool of the sketching map.
We also apply our framework to obtain new results for very broad classes of sketching maps: composable sketches and linear sketches.

\begin{remark} [Implication for Differentially Private Data Analysis]
    The existence of small determining pools for any composable sketch reveals inherent limitations of performing differentially private data analysis in the sketch space rather than directly on the original query sets. Specifically, the ``sensitivity'' depends inversely on the pool size $\tilde{O}(k)$ rather than on the potentially much larger query set size. This generalizes findings by~\citet{DLB:PET2019} for specific sketching maps.
\end{remark}

\subsubsection{Application to Composable Sketches}
(See details in Section~\ref{composablemaps:sec} and Sections~\ref{proofcomposable:sec}--\ref{proofgen:sec} for analysis)
A sketching map $(S,\mathcal{U})$ is \emph{composable} if 
for any two subsets $U,V\in 2^{\mathcal{U}}$, the sketch $S(U\cup V)$ can be composed solely from the sketches $S(U)$ and $S(V)$ (without access to $U$ or $V$).

We say that a composable map has \emph{rank} $k$ if the representation of each sketch uses at most $k$ bits. Alternatively, it suffices that for each sketch $\sigma$, the cardinality of a minimal subset with sketch $\sigma$ is at most $k$, that is, when $U$ is minimal such that $S(U)=\sigma$ (we say $U$ is a \emph{core} of $\sigma$), then $|U|\leq k$.

A composable sketching map is \emph{monotone} (simplified; see \cref{monotonecomposablemap:def}) if for any $U\subset V$, the cardinality of the cores of $S(U)$ is no larger than the cardinality of the cores of $S(V)$.
Monotonicity is a natural property which means that the sketching maps are ``efficient'' in that they are no less informative for larger sets than for smaller sets.
The sketching maps of statistical queries and of MinHash sketches (see \cref{examples:sec}) and Boolean linear sketches (see \cref{booleansketches:example}) are monotone and composable.

\begin{theorem} [Informal, see Theorem~\ref{composablepool:theorem}] \label{informalcomposable:thm}
    For any composable sketching map  $(S,\mathcal{U})$ of rank $k$ such that $n=\tilde{\Omega}(k^2)$, the attack succeeds using $\tilde{O}(k^4)$ queries. If the map is also monotone and $n=\tilde{\Omega}(k)$, the attack succeeds using $\tilde{O}(k^2)$ queries.
\end{theorem}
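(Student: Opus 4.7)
The plan is to invoke Theorem~\ref{informalmain:thm} (the meta-theorem for the unified attack), which reduces the problem to constructing a determining pool of size $|L| \ll n$ for the given sketching map. Since the meta-theorem yields $r = \tilde O(|L|^2 \log n)$ queries, it suffices to establish two structural lemmas: (i) every composable sketching map of rank $k$ admits a determining pool of size $\tilde O(k^2)$, and (ii) every monotone composable sketching map of rank $k$ admits a determining pool of size $\tilde O(k)$. The hypotheses $n = \tilde\Omega(k^2)$ and $n = \tilde\Omega(k)$ ensure $|L| \ll n$ as required by the meta-theorem, and substitution immediately yields the claimed $\tilde O(k^4)$ and $\tilde O(k^2)$ query bounds.

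For the monotone case, I would build $L$ by an iterative covering procedure. Initialize $L_0 = \emptyset$. At step $t$, test whether $L_t$ is already a determining pool, i.e.\ whether $S(U \cup M) = S((U\cap L_t) \cup M)$ with high probability over $U \sim \mathcal{Q}$ for every mask $M \subseteq L_t$. By composability, $S(U\cup M)$ decomposes through $S(U \setminus M)$ and $S(M)$, so a failure corresponds to a witness pair $(U, M)$ for which elements of $U$ outside $L_t \cup M$ still influence the sketch. If such a witness exists, add (the restriction of) the core of $S(U \cup M)$ to $L_t$; by the rank assumption this contributes at most $k$ new keys. Monotonicity is the crucial ingredient that forces termination in $\mathrm{polylog}(n)$ rounds: once a key is recruited into the pool it remains informative across supersets, so cores cannot ``shift'' to fresh outsider keys indefinitely, and a potential function based on the expected core size of $S(U \cup M)$ for the worst-case $M \subseteq L_t$ decreases by a fixed amount at each round.

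For the general composable case, the same iterative template is used, but without monotonicity the recruited cores are not aligned across different masks $M$, so I cannot uniformly bound the number of rounds by $\mathrm{polylog}(n)$. Instead, the potential must be stated in terms of the total number of sketch values reachable from $\mathcal{Q} \cup M$ as $M$ varies, which is bounded by $2^k$; this yields $\tilde O(k)$ rounds, each contributing at most $k$ keys, and hence $|L| = \tilde O(k^2)$.

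The main obstacle I anticipate is this termination argument: identifying a potential function on $L_t$ whose decrease per iteration is guaranteed whenever a new core is added and whose total drop is controlled. In the monotone case the potential is naturally expressed through core sizes, but in the general case one must argue more coarsely through sketch-value counting, which is precisely where the $\tilde O(k)$ blow-up in pool size enters. A secondary subtlety is handling the adversarial dependence of $M$ on $L_t$, which requires a union bound over the $2^{|L_t|}$ possible masks; this is absorbed into the $\log n$ factor at the cost of driving the per-step failure probability down by a matching amount, and it will force me to work with a sufficiently sharp concentration inequality at each step of the covering procedure.
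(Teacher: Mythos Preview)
Your high-level plan (build a small determining pool, then invoke the meta-theorem) matches the paper, but the pool construction and its termination analysis have real gaps.

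First, the potential you propose for the general composable case does not work. You write that the number of sketch values reachable from $\mathcal{Q} \cup M$ is bounded by $2^k$, but rank $k$ only says that every core has cardinality at most $k$; it does \emph{not} bound the number of distinct sketch values, which can be as large as $\sum_{i\le k}\binom{n}{i}$. (The implication goes the other way: a $k$-bit sketch representation forces rank $\le k$.) So the ``$\tilde O(k)$ rounds'' conclusion is unsupported. Separately, your union bound over the $2^{|L_t|}$ masks is dangerous: $|L_t|$ is growing toward $\tilde O(k^2)$, so you would need per-step failure probability $2^{-\tilde\Omega(k^2)}$, which is not obviously compatible with a polylog number of rounds.

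The paper sidesteps both issues with a very different construction: a deterministic \emph{core peeling} $A_1, A_2, \ldots$ where $A_{i+1}$ is an in-core of $S(\mathcal{U}\setminus A_{\le i})$, and $L = \bigcup_{i\le \ell} A_i$. The mask is eliminated once and for all by a short composability argument (the termination condition $S(Q_{\le i}) = S(Q_{\le i}\cup A_{i+1})$ is shown to imply $S(Q\cup M) = S((Q\cap L)\cup M)$ for \emph{every} $M$, so no union bound over masks is needed). Termination for general maps is proved by defining, at each layer, a ``yield'' set $C_i\subset Q_i$ that is nonempty with probability $\ge q$ and then showing that the union $C = \bigcup_i C_i$ is itself a core, hence $|C|\le k$; Chernoff then gives $\ell = O(k\log(k/\delta)/q_{\min})$ layers. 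For monotone maps, one tracks the size of a residual core and shows it shrinks in expectation by a factor $(1-q)$ per layer, giving $\ell = O(\log(k/\delta)/q_{\min})$. Neither argument uses a count of sketch values or a witness-chasing procedure.
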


We prove the theorem by constructing a determining pool that is of size $O(k^2\log (k/\delta))$ for any composable sketching map and of size $O(k\log (k/\delta))$ for any monotone composable sketching map, and then applying \cref{informalmain:thm}. The pool construction is simple and is based on a \emph{peeling} process of the ground set $\mathcal{U}$, where in each step we peel a core of the sketch of the remaining keys that is contained in the remaining keys. That is, a minimal subset $A_i\subset \mathcal{U}\setminus \bigcup_{j<i} A_j$ so that $S(A_i)=S(\mathcal{U}\setminus \bigcup_{j<i} A_j)$. For maps with rank $k$, the size of each layer satisfies $|A_i|\leq k$ and we establish that the number of layers we need is $O(k\log (k/\delta))$ for general composable sketching maps and $O(\log (k/\delta))$ for monotone maps. We also (Section~\ref{tightnonmonotone:sec}) give an example of a (non-monotone) composable map of rank $k$  where $\Theta(k\log k)$ layers of size $\Theta(k)$ are necessary for a pool, and hence, our analysis for the peeling construction is tight. A direct implication of our results is that there is no composable sketching map (that is, a deterministic cardinality sketch) with guaranteed accuracy for all queries.

Observe that if the size of the ground set is $n\leq k$, we can represent $S(U)$ by $U$. This makes $(S,\mathcal{U})$ trivially composable of rank $k$. Furthermore, since $S$ is injective, it is resilient to attacks, as the goal of an attack is essentially to identify sets with very different cardinality that produce the same sketch. Consequently, attacks are feasible only when $n>k$, meaning that our attacks, which operate when $n=\tilde{O}(k)$, are nearly optimal in terms of the required ground set size.

\subsection{Application to Linear Sketches}
(see details in Section~\ref{linearsketches:sec})
With linear sketches over a field $\mathbbm{F}$, the input is represented as a vector $\boldsymbol{v}\in \mathbbm{F}^n$, the sketching map is specified by a \emph{sketching matrix} $A \in \mathbbm{F}^{n\times k}$ where $k\ll n$, and the sketch of $\boldsymbol{v}$ is the product $A \boldsymbol{v} \in \mathbbm{F}^k$.  
When linear sketches are used as cardinality sketches, the 
ground set are the entries $\mathcal{U}\equiv [n]$ and the query set corresponds to the nonzero entries of $\boldsymbol{v}$. The cardinality $\|\boldsymbol{v}\|_0$ is the number of nonzero entries, also called the $\ell_0$ norm. We establish the following:

\begin{theorem} [Informal, see Theorem~\ref{linearsketches:thm}] \label{linearinformal:thm}
Over a field $\mathbbm{F}$ that is $\mathbbm{R}$ (the real numbers) or $\mathbbm{F}_p$ (the finite field of prime order $p$, for any $p$),
there is a constant $C$ so that for any sketching matrix $A \in \mathbbm{F}^{k\times n}$, when $n> C\cdot k \log k$, there is an attack that succeeds using $O(k^2 \log^2 k \log n)$ queries.
\end{theorem}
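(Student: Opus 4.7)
The plan is to apply the meta-theorem (Theorem~\ref{informalmain:thm}) by constructing, for any sketching matrix $A \in \mathbb{F}^{k \times n}$, a determining pool $L \subset [n]$ of size $\ell = O(k\log k)$. Plugging this size into the $r = O(\log(1/\delta)\,\ell^2\,\log n)$ bound from the meta-theorem immediately yields the claimed attack length $O(k^2\log^2 k\,\log n)$, so the entire proof reduces to exhibiting such a pool for an arbitrary linear sketching matrix over $\mathbb{F}\in\{\mathbb{R},\mathbb{F}_p\}$.

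The construction would proceed by peeling, in direct analogy with the layered argument used for composable sketches in Theorem~\ref{composablepool:theorem}, but replacing the notion of a ``core'' of a sketch by a linear-algebraic analogue that captures the residual uncertainty of the sketch outside $L$. Concretely, since $\mathrm{rank}(A)\le k$, the columns $A_1,\dots,A_n$ lie in a $k$-dimensional subspace of $\mathbb{F}^k$. Using the linearity identity $S(U\cup M) = A\mathbf{1}_{U\cap L} + A\mathbf{1}_{M} - A\mathbf{1}_{U\cap M} + A\mathbf{1}_{U\setminus L}$, the sketch becomes computable from $M$ and $U\cap (L\setminus M)$ exactly when the ``residual sketch'' $A\mathbf{1}_{U\setminus L}$ is determined (up to failure probability $\delta$). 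I would build $L$ in layers of size $O(k)$, each chosen so that after moving it into $L$ the conditional distribution of $A\mathbf{1}_{U\setminus L}$ under the Bernoulli$(q)$ query distribution becomes strictly more concentrated. The aim is a constant-factor contraction of this residual distribution per layer, so that $O(\log k)$ layers suffice to drive the failure probability below $\delta$, giving $|L| = O(k\log k)$ as required.

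The two fields call for somewhat different tools. Over $\mathbb{F}_p$ the residual sketch takes values in the finite group $\mathbb{F}_p^k$, so each peeling step can be phrased in terms of collisions and subset sums, and the contraction per layer follows from a pigeonhole/amplification argument exploiting the fact that only finitely many residual values are possible. Over $\mathbb{R}$ the argument is more delicate, since random sums of real vectors almost never coincide exactly; here I would choose columns through a QR- or SVD-style factorization of $A_{[n]\setminus L}$, peeling off those that span the orthogonal complement of the already-selected columns, and then argue that outside an exceptional event of probability at most $\delta$ the residual sum agrees with a deterministic prediction computable from $M$ and $U\cap L$.

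The main obstacle I anticipate is establishing the geometric (constant-factor) shrinkage of the residual distribution at each peeling layer, which is exactly what buys the logarithmic overhead $\log k$ rather than a polynomial one, and which is needed to match the lower bound advertised as tight. The hypothesis $n \geq Ck\log k$ is needed here to guarantee enough ``fresh'' indices outside $L$ at every step of the peeling so that each new layer can be assembled. A secondary technical issue is field uniformity: I would try to abstract the peeling step in terms of (i) the dimension of the span of the residual columns and (ii) a generic collision/anti-concentration estimate, so that the same template applies to both $\mathbb{R}$ and $\mathbb{F}_p$ without field-specific modifications beyond the primitive used to certify the per-layer contraction.
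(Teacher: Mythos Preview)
Your proposal has a genuine gap: you implicitly take the query vector to be the indicator vector $\mathbf{1}_{U\cup M}$, i.e., all nonzero entries equal to $1$. The paper explicitly observes that this choice cannot work. If $A$ contains the all-ones row $\mathbf{1}_n$, then the corresponding sketch coordinate is $\mathbf{1}_n\cdot\mathbf{1}_{U\cup M}=|U\cup M|$, the exact cardinality. In that case every key affects the sketch, the smallest determining pool is all of $[n]$, and your ``residual'' $A\mathbf{1}_{U\setminus L}$ contains the binomial random variable $|U\setminus L|$, which never concentrates on a single value no matter how you peel. Your proposed contraction argument therefore cannot go through for general $A$.

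What the paper does instead is treat the nonzero \emph{values} as auxiliary information to be chosen by the attacker, and chooses them randomly so that the sketch distribution depends only on a composable quantity. Over $\mathbb{F}_p$, each nonzero entry is sampled uniformly from $\mathbb{F}_p$; then $A\boldsymbol{v}$ is uniform over $\mathrm{span}\{A_{\cdot j}:j\in U\cup M\}$, so the sketch distribution is determined by this span. Over $\mathbb{R}$, entries are given exponentially separated random magnitudes so that a greedy basis of the columns in $U\cup M$ determines the sketch distribution up to small TV distance. The map $U\mapsto\mathrm{span}\{A_{\cdot j}:j\in U\}$ (and its greedy-basis variant) \emph{is} a monotone composable map of rank $k$, so the existence of a pool of size $O(k\log(k/\delta))$ follows directly from Theorem~\ref{composablepool:theorem} rather than from a new peeling argument. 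Your high-level plan (pool of size $O(k\log k)$, then invoke the meta-theorem) is right, but the pool must be built for the span map with randomized auxiliary values, not for the indicator-vector sketch.
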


Our results for $\mathbbm{R}$ are in 
the \emph{real RAM model}, which assumes that the sketch has
the \textit{exact} values in $A \boldsymbol{v}$.
We specify a parameter $\gamma_0(A)$ that depends on sub-determinants of $A$ and our attack is parametrized by an upper bound $\gamma> \gamma_0(A)$.

In the case of $\F_p$, our quadratic attack size improves upon the corresponding result in \cite{GribelyukLWYZ:FOCS2024}, which used a cubic attack size. In the case of $\R$, the quadratic attack size is also smaller than that of \cite{GribelyukLWYZ:FOCS2024} (which used an unspecified polynomial number of queries), though the entries of our query vector are exponentially far in magnitude (unless we make a further ``naturalness'' assumption on the query responder, see \cref{restrictednonzero:rem}).

Our quadratic bounds are \emph{tight} in that respective quadratic upper bounds are guaranteed in the same models, that is, with $\mathbbm{F}_p$ and specifically with $\mathbbm{R}$, even when the input values are unrestricted. To see this, recall the 
cardinality sketch of~\citep{CormodeDIM03} or folklore linear sketches for $\ell_0$ sampling or estimation (designed for non-adaptive queries) use a sketching matrix and estimators that only depend on whether measurement values (that is, the values in the sketch) are zero or not: each measurement simply tests if the nonzero entries in the measurement vector intersect with those in the input (see also Example~\ref{booleansketches:example}). The measurement vectors correspond to random sets of different sizes and have randomly chosen values. The main property needed is that the probability that a measurement value is $0$ when the intersection is not empty is unlikely. We can then apply the robustness wrapper method mentioned above~\cite{HassidimKMMS20} to $k$ independent random maps to obtain a sketch that supports a quadratic number of adaptive queries.

\begin{remark}[Magnitude of values] \label{restrictednonzero:rem} 
The query vectors used in our attack for $\mathbbm{R}$ have nonzero entries where the logarithm of the ratio between the largest and smallest nonzero magnitudes is $O(n \log \gamma)$. 
We propose an alternative specification of the nonzero values (see Section~\ref{naturallimited:sec}) where the logarithm of the ratio is
$O(\log (n \gamma))$, but this attack requires that the query responder is \emph{natural} (uses only components of the sketch that contain information on the cardinality --- see \cref{naturalQR:def} for a precise definition). 
For that, we establish a weaker property that is the existence of a \emph{limited} determining pool of size $O(k\log k)$ (A pool is limited if a sufficient statistic for $|U|$, but not necessarily the full sketch, can be computed from $(U\setminus M)\cap L$). We then establish a variant of Theorem~\ref{informalmain:thm} (see \cref{limitedpoolnaturalQR:lemma}) where the statement holds when the determining pool is limited but the query responder is natural.
Note that the known cardinality sketches are natural and therefore so are the respective robust algorithms obtained via the mentioned wrapper methods. 
\end{remark}

\subsubsection{Ideas in our Attacks on Linear Sketches}
Observe that the sketching map is a linear operator that is not composable, therefore, we can not directly apply Theorem~\ref{informalcomposable:thm} to establish \cref{linearinformal:thm}.\footnote{Except for boolean linear sketches, which are composable monotone maps -- see Example~\ref{booleansketches:example}.} Moreover, when specifying the attack, it is insufficient to specify the queries as sets because 
the same set $U\subset [n]$ has multiple representations, which are all the vectors with the set of nonzero entries being $U$, and there are multiple sketches of the same set. We therefore need to augment the description of the attack queries so that it specifies a vector, that is, values for the nonzero entries. 

Note that simply choosing the values of the nonzero entries to be $v_i=1$ will not work:
A sketching matrix that includes the measurement row vector 
$\boldsymbol{1}_n$ would have
the measurement value
$A \boldsymbol{v} = \mathbbm{1}_n\cdot \boldsymbol{v} = \|\boldsymbol{v}\|_0$, which is the exact cardinality value. Additionally, the smallest determining pool would be the full set $[n]$, because inclusion of any new key impacts the sketch. Therefore, for an attack that is effective against any sketching matrix, we need to specify these nonzero entries in our attack queries with some cleverness -- so that a determining pool of size $\tilde{O}(k)$ would exist. We do so using different and ad hoc approaches for the real RAM, for the real RAM with smaller entries (see Remark~\ref{restrictednonzero:rem}), and for the fields $\mathbbm{F}_p$.
There is, however, one common ingredient in all these constructions:
The determining pool $L$ is constructed from what we call a \emph{basis pool} of the column vectors $a^{(i)}\in \mathbbm{F}^k$ of the sketching matrix $A$. A basis pool
has the property that 
for any $M\subset [n]$, with probability at least $1-\delta$ over $U\sim \mathcal{Q}$, the span of the column vectors $(A_{\cdot j})_{j\in U\cup M}$ is equal to the span of the column vectors $(A_{\cdot j})_{j\in M \cup U\cap L}$. Surprisingly, the existence of such a basis pool of size $O(k\log k)$ falls out as a special case of the existence of pools for composable sketches (Theorem~\ref{informalcomposable:thm}) by observing that the span of a set of vectors in vector spaces is a monotone composable map of rank $k$. So after all, and indirectly, we do use our analysis for composable sketches.

% \paragraph{Linear sketches over $\mathbbm{R}$}
% We consider the \emph{real RAM} model, that assumes the
% algorithm has access to the \textit{exact} values in $A \boldsymbol{v}$. Consequently, we must also allow the adversary to have unbounded input entries: if instead the entries of $\bsv$ were bounded in magnitude by $\gamma$, then the sketch matrix could just contain one row, whose $i$-th element is $(2\gamma)^i$, and reveal the entire set of nonzero entries. Moreover, even in this case it is impossible for the adversary to do anything without at least having an upper bound on the value of $\gamma$ that was used. Thus, for our attack, we assume that the adversary has an upper bound on a parameter $\gamma_0$ which depends on the matrix $A$.

% We establish the following:
% \begin{theorem} [Informal, see Theorem~\ref{linearsketches:thm}]
%    For any sketching matrix $A \in \mathbbm{R}^{k\times n}$, there exists a parameter $\gamma_0(A)$ such that, given any upper bound on $\gamma_0$, the attack succeeds using $\tilde{O}(k^2)$ queries.
% \end{theorem}

\subsection{Statistical queries and MinHash Sketches} \label{examples:sec}
As a warm up and additional context for the reader, we recover prior results on statistical queries~\citep{SteinkeUllman:COLT2015} and MinHash cardinality sketches~\citep{AhmadianCohen:ICML2024} as applications of our framework.  We do so by demonstrating that the respective sketching maps are monotone composable. 

\paragraph{Statistical Queries}
The perhaps simplest composable cardinality sketches utilize a uniform random sample $R \subset \mathcal{U}$ of size $|R| = k$. 
The sketching map $S$, maps each set $U$ to its intersection with the sample, $S(U)= R \cap U$. This map is clearly composable since $S(U\cup V)= (U\cup V)\cap R = U\cap R \cup V\cap R = S(U) \cup S(V)$.
The map is also monotone: When $U\subset V$, $U\cap R \subset V\cap R$ and hence $|S(U)|\leq |S(V)|$. The determining pool for the sketching map, with $\delta=0$, is simply the set $R$, which has size $k$.

Cardinality queries are statistical queries and
the estimate $|R \cap U| \cdot n/k$ is a good approximation of $|U|$ when $U$ is sufficiently large, that is, $|U|=\Theta(|\mathcal{U}|)$. Observe that when the queries are non-adaptive (do not depend on the sample), we can accurately answer a number of queries that is exponential in the sample size $k$.  

This simple fixed-size sample sketch is 
effective when the queries have large cardinality. 
When the query set $U$ is small, the intersection with the sample would be small or even empty, making it impossible to reliably approximate $|U|$ to within a relative error. 
MinHash sketches~\citep{FlajoletMartin85,ECohen6f,flajolet2007hyperloglog,Rosen1997a,ECohen6f,Broder:CPM00,BJKST:random02}, which provide statistical guarantees for all cardinality values, can be viewed as drill-down samples instead of a fixed size sample. We sample a partial order of random priorities for all keys in $\mathcal{U}$. This determines the sketching map, roughly by selecting the $k$ top priority keys that are in the set according to the partial order. In this way, there is a size $k$ sketch for any subset and enough information also in the sketches of small sets to facilitate accurate estimates. We examine two MinHash sketch types, describe the  sketching maps, argue they are monotone composable, and  describe the determining pool. 
% Our bound on the size of the pool relies on our queries being samples with rates that are at least a constant $q> \qmin$. When this is the e show that when this is the case it is broadly possible to specify a pool of size $\tilde{O}(k\log n)$ so that
% it is very unlikely that keys beyond a pool of size  impact the sketch. 

\paragraph{HyperLogLog (\texorpdfstring{$k$}{k}-partition) sketch}
A sketching map of $k$-partition sketches such as the HyperLogLog (HLL) sketch~\cite{flajolet2007hyperloglog} is specified by a (randomly chosen) partition of the keys in $\mathcal{U}$ to $k$ buckets and a (random) priority order on the keys in each bucket. For a fixed such partition and order, 
the sketch $S(U)$ of a subset $U$ depends on the lowest priority key in each of the buckets. This sketching map is composable since the sketch
$S(U\cup V)$ can be obtained by taking from each bucket, the highest priority key in the bucket among those in $S(U)$ and $S(V)$. The sketching map is monotone since the sketch size $S(U)$ is the number of buckets for which there is at least one representative from $U$. When $U\subset V$, $V$ has at least as many buckets represented as $U$.
The $\delta$-determining pool $L$ includes the $\log(k/\delta)/q_{\min}$ lowest priority keys from each of the $k$ buckets. Observe that a randomly chosen $U$ with rate at least $\qmin$ is $(1-\delta)$ likely to have its highest priority key in each of the buckets present in $L$. 
% This is $|L|= k \log(k/\delta)/q_{\min} $ keys in total.

% The probability that none of these keys is selected in one bucket is 
% $(1-q_{\min})^{\log(k/\delta)/q_{\min}} < \delta/k$.  Applying a union bound, the probability that there is a bucket with a key not in $L$ is at most $\delta$.

% \paragraph{\texorpdfstring{$k$}{k}-mins sketch}
% The sketching map of $k$-mins sketches~\citep{FlajoletMartin85,ECohen6f} is specified by $k$ permutations of $\mathcal{U}$.  The sketch of $U$ depends on the lowest priority key in $U$ in each of the permutations.
% Our pool includes 
% the $\log(k/\delta)/q_{\min}$ lowest priority keys in each of the $k$ permutations. The calculation to establish that it is a pool is similar to $k$-partition sketches.

\paragraph{Bottom-\texorpdfstring{$k$}{k} sketch}
A sketching map of bottom-$k$ sketches~\citep{Rosen1997a,ECohen6f,BRODER:sequences97,BJKST:random02} is specified by priorities according to a random permutation of $\mathcal{U}$. For a fixed permutation, the sketch $S(U)$ depends on the $k$ highest priority keys in $U$.
The map is composable because the $k$ highest priority keys in $U\cup V$ are the $k$ highest priority keys in $S(U)\cup S(V)$.
The size of the sketch $S(U)$ is $\min\{|U|,k\}$ and the map is clearly monotone.
The determining pool $L$ includes the 
$k \log(k/\delta)/q_{\min}$ highest priority keys. It is easy to verify that the probability for a random $U$ selected with rate at least $\qmin$ that its $k$ highest priority keys are not included in the pool $L$ is at most $\delta$.

\section{Preliminaries} \label{prelim:sec}

\subsection{Notation}
We use the following notational conventions. 
\begin{itemize}
\item For a set $V$ and rate $q\in[0,1]$,
$\Bern[q]^{V}$ denotes the distribution over subsets that includes each 
$x\in V$ independently with probability $q$.
\item For $m\in \mathbbm{N}$, $\binom{V}{m}$ denotes the set of all subsets of $V$ of size exactly $m$.
\item For a set $V$, $\texttt{Uniform}[V]$ denotes the uniform probability distribution over $V$.
\item For a property $P$, we write $\mathbf{1}_P$ to denote the indicator function for $P$, which is equal to $1$ if $P$ holds and $0$ otherwise.
% \item We assume throughout that $n$, the size of the universe, is sufficiently large, and this will be assumed for the purposes of asymptotic notation. In addition, there will be a parameter $\eps$ (equal to $B/A-1$) which is treated as a constant for the purposes of asymptotic notation, but on occasion, we will make claims about the behavior of other constants with respect to $\e$.
\end{itemize}

\subsection{Interaction Model}

We follow (and slightly extend) the model of~\cite{AhmadianCohen:ICML2024}.
Let the ground set of keys be denoted $\Uu$ and let $n$ be its size. We will denote the set to be sketched by $U \subset \Uu$. In some settings, there will be additional information provided to the sketch other than simply the set $U$. For example, in the case of linear sketches, where $U$ is the set of nonzero indices of the input vector, the vector contains the additional information of the values of the nonzero entries. We denote this additional information by $X \in \Xx$, where $\Xx$ is some arbitrary set which depends on the setting.

There are three interacting players, the \textbf{attacker} (which selects a query $(U,X)$), the \textbf{system} (which produces a sketch $S(U,X)$), and the \textbf{query responder (QR)} (which attempts to answer the threshold problem for the query).

We fix query thresholds $A < B$ in advance\footnote{We could also have the attacker specify $A$ and $B$ itself (subject to some restriction on $B/A$), since a QR that can give cardinality estimates would still be able to answer them. The attack that we provide, however, does not require $A$ and $B$ to change, so for simplicity we fix them in our model.}.
The interaction proceeds in steps as follows.
\begin{itemize}
\item
The \textbf{attacker} generates a query $(U,X)$, and sends $(U,X)$ to the system.
\item
The \textbf{system} receives $(U,X)$ from the attacker, computes the sketch $S(U,X)$, and sends $S(U,X)$ to the QR.
\item
The \textbf{query responder (QR)} then, having received the sketch, responds to the threshold problem for $|U|$, with either ``0'' or ``1.'' We say the response is \textit{correct} if it is 0 if $|U| \le A$ and 1 if $|U| \ge B$ (and if $A < |U| < B$ it is always considered correct).
\end{itemize}

Here, we allow the query responder to be adaptive over the course of multiple queries. That is, the QR can incorporate the information of the previous values of $S(U, X)$ when it is answering the current query\footnote{In fact, for the attack we present, the QR may even be allowed to access the actual previous values of $U$ and $X$. It can even access the entire internal state of the attacker, up to the distribution of queries that it chooses.}.

% We now formally define what it means for an attack to be effective \eccomment{successful?}.

% \begin{definition}
% For a given class $\Ss$ of sketches, an algorithm for the attacker is \textit{effective in $r$ rounds} if the following holds. For every sketch function $S \in \Ss$ and every possible query responder algorithm, if the interaction is run for $r$ rounds, then with probability $1-1/n^2$ \ecmargincomment{was $\Omega(1)$ but constant undermines what we do. Even $1-1/n$ does not clarify that success is independent and this can be repeated to decrease the error probability per map. I would rather say $(r,\delta)$-effective. Then later say that to simplify presentation we use $\delta = 1/n$ but the dependence is increase $r$ by factor $\log(1/\delta)$ to get failure $\delta$.}
% \eccomment{Also, we need to say that after $r$ rounds we construct an adversarial distribution on which any QR errs with at least a constant probability}, one of the responses must be incorrect.
% \end{definition}

% Essentially and in particular, an effective attack attests that there can be no QR algorithm for the given class of sketches that can give correct responses for $r$ rounds. \eccomment{Moreover, after $r$ rounds we have an adversarial distribution} Note that the attack must work against every sketch in $\Ss$. Thus, even if the sketch $S$ is chosen randomly in a manner unknown to the attacker (see \cref{examples:sec} for some examples), the attack must still succeed.

In the next section, we will provide a universal attack --- that is, a single attack algorithm that is effective against any collection of sketching maps that have a \textit{determining pool} (we will define this notion formally later). The attack will force the QR to make a constant fraction of errors during the attack with high probability.

\section{Determining pools and the unified attack algorithm} \label{unifiedattack:sec}

\subsection{Universal attack algorithm} 

We will first describe the attack algorithm which we use, before stating the conditions under which it works.

The universal attack on a sketching map $S:\Uu\times\mathcal{X}$ is
described in \cref{metaattack:algo}. The attacker gets a 
ground set $\mathcal{U}$ of size $n$ and a family of (parametrized) distributions $\mathtt{Aux}[M_1, M_2]$ over $\mathcal{X}$; these are the distributions from which we will sample $X$. We will specify conditions on these distributions later.

Essentially, the goal of the algorithm will be to try to build up a mask $M$ which should end up containing most of the elements of a determining pool $L$. 

First we describe how the attacker samples query sets to give to the system. In short, we will do this by sampling a rate $q$ according to some fixed distribution, and then picking each element of the ground set with probability $q$, and then taking the union with a mask. We start by specifying the distribution $\nu$ of the rate.

\begin{definition}[Rate distribution] \label{ratedist:def}
Pick arbitrary $\qmin, q_1, q_2, \qmax$ such that
\begin{gather}
0 < \qmin < q_1 < \f {A-|L|}n, \label{q-bd-1:eq}\\
\f Bn < q_2 < \qmax < 1, \label{q-bd-2:eq}
\end{gather}
and moreover, each inequality is true by an additive difference of $\Omega(1)$. (Here, $|L|$ is the size of a determining pool\footnote{This requires $|L| < A - \Omega(n)$, which we state as an assumption in our theorem statement.}.) Then, let $\nu(q)$ be the probability density function which is proportional to $C_\nu f(q)/(q(1-q))$, where $f(q)$ is the following piecewise function, and $C_\nu$ is a constant selected to make the total probability density equal to 1.
\begin{equation} \label{f-def:eq}
f(q) = 
\begin{cases}
0, & q \le \qmin \\
(q - \qmin) / (q_1 - \qmin), & \qmin \le q \le q_1 \\
1, & q_1 \le q \le q_2 \\
(\qmax - q) / (\qmax - q_2), & q_2 \le q \le \qmax \\
0, & q \ge \qmax
\end{cases}
\end{equation}
A plot of $\nu(q)$ is shown in \cref{fig:nu}.
\end{definition}

\begin{figure}
\centering
\begin{tikzpicture}
  \begin{axis}[
    xlabel={$q$},
    ylabel={$\nu(q)$},
    domain=0:0.5,
    samples=200,
    width=10cm,
    height=7cm,
    axis x line=bottom,
    axis y line=left,
    grid=none,
    xtick={0.1, 0.2, 0.25, 0.35},
    xticklabels={$q_{\min}$, $q_1$, $q_2$, $q_{\max}$},
    ytick=\empty,
    ymin=0, ymax=6,
    enlarge y limits=upper,
  ]
    \addplot[
      blue,
      thick,
    ]
    { ( (x<=0.1 ? 0 : (x<=0.2 ? 10*(x-0.1) : (x<=0.25 ? 1 : (x<=0.35 ? 10*(0.35-x) : 0)))) ) / (x*(1-x)) };
  \end{axis}
\end{tikzpicture}

\caption{Example plot of the density function $\nu(q)$}
\label{fig:nu}
\end{figure}

\begin{definition}[Query distribution] \label{querydist:def}
The \textit{query distribution}, which we denote by $\Qq$ for the rest of this paper, is the distribution obtained by sampling $q \sim \nu$, and then sampling from $\Bern[q]^\Uu$ (that is, sampling each element of the ground set with probability $q$).
\end{definition}

\begin{remark}[Choice of $\qmin, q_1, q_2, \qmax$ when $\eps$ is small]
We will also later consider the regime where $B/A = (1 + \e)$ and $|L|/n < \e$, where $\e$ is small but constant. In this case, we will explicitly choose the following values:
\begin{gather*}
q_1 = A/n - 2\e, \\
q_2 = B/n + 2\e, \\
q_1 - \qmin = \qmax - q_2 = \sqrt \e.
\end{gather*}
Note that, for constant but sufficiently small (compared to $A/n$ and $1-B/n$) values of $\e$, these still satisfy the conditions of \cref{ratedist:def}. Also, in this case, the support of $f$ is contained in an interval of size $o_{\e\to0}(1)$ around $q_1$, so (as long as $q_1$ is bounded away from 0 and 1) the integral of $f(q)/(q(1-q))$ is just $(1+o_{\e\to0}(1))/(q_1(1-q_1))$ times the integral of $f$, which is just $\sqrt \e + O_{\e\to0}(\e)$. Thus, we have
\begin{equation} \label{eq:cnu}
C_\nu = (1+o_{\e\to0}(1)) \f{q_1(1-q_1)}{\sqrt \e}.
\end{equation}
\end{remark}

Now, for the algorithm, we initialize an empty mask $M$ and initialize counters $C[x]\gets 0$ for each member of the ground set.
In each step, we sample a subset $U \sim \mathcal{Q}$, and then sample auxiliary information $X\sim \mathtt{Aux}[M,U\setminus M]$.   

We then send the query $(U\cup M, X)$ and receive a response $Z$ from the QR. We increment by $Z$ the counters of all $x\in U\setminus M$.  Keys whose count is sufficiently above the median count of $\mathcal{U}\setminus M$ are then inserted into the mask $M$. After $r$ queries, the attack will have completed.

\begin{algorithm2e}[t]\caption{\small{\texttt{Attack on a sketching map $S:\mathcal{U}\times\mathcal{X}$}} \label{metaattack:algo}}
{\small
\DontPrintSemicolon
\KwIn{$r$ (upper bound on number of queries), $|L|$ (upper bound on size of a determining pool)}
\KwIn{A ground set $\mathcal{U}$ of $n$ keys; distributions $\mathtt{Aux}[M_1, M_2]$ over $\mathcal{X}$ \tcp*{parametrized by disjoint $M_1, M_2\in 2^{\mathcal{U}}$}}
\KwIn{thresholds $0 < A < B < n$}
Pick $\qmin, q_1, q_2, \qmax$ as in \cref{ratedist:def}.

$M\gets \emptyset$ \tcp*{Initialize mask}
\lForEach(\tcp*[f]{Initialize counts}){key $x\in \mathcal{U}$}{$C[x] \gets 0$}
\For{$t=1,\ldots, r$}
{
Sample rate $q$ with density $\nu(q)$ \tcp*{support $[\qmin,\qmax]$ and $\nu$ as in \cref{querydist:def}}
Sample $U\sim \Bern[q]^{\mathcal{U}}$ \tcp*{independently include each $x\in\mathcal{U}$ with probability $q$} 
Sample $X\sim\mathtt{Aux}[M,U\setminus M]$ \tcp*{auxiliary info}
\textbf{Send} $(U \cup M,X)$ to System \tcp*{System computes $\sigma := S(U\cup M,X)$ and sends $\sigma$ to QR}\;
\textbf{Receive} $Z$ from QR \tcp*{QR returns
$Z\in\{0,1\}$}
\ForEach(\tcp*[f]{score keys}){key $x\in U$}{$C[x] \gets C[x]+Z$\;
\If(\tcp*[f]{test if score is high}){$C[x] \geq \mathtt{median}(C[\mathcal{U}\setminus M]) + 16\sqrt{r \log(rn)}$}{$M\gets M\cup\{x\}$}}
% \textbf{Send} $M,C,U$ to QR \tcp*{share internal state}
}
}
\end{algorithm2e}

%Fixing a step, the query distribution of the algorithm over $2^{\mathcal{U}}\times \mathcal{X}$ depends only on the mask $M$ and we denote it by $\mathcal{Q}^*[M]$. \msmargincomment{check if we need this definition}

\subsection{Determining pools}

We now describe what it means for a sketch to have a \textit{determining pool}. We will end up showing that the attack in \cref{metaattack:algo} works against any sketch that has a determining pool. 

We will establish that when the sketching map $S$ has a \emph{determining pool} $L\subset\mathcal{U}$ that is of small size $|L|\ll |\mathcal{U}|$, then \cref{metaattack:algo} halts after a number of queries that is quadratic in $|L|$. Informally, $L$ is a determining pool if $W=U\cap L\setminus M$ subsumes the information that is available to the query responder from the sketch $S(U\cup M, X)$ on the query $U$ (except with probability at most $\delta$). 

\begin{definition} [Determining pool] \label{pool:def}
Fix a sketching map $S$, a ground set $\mathcal{U}$, and distributions $(\mathtt{Aux}[U_1,U_2])_{U_1,U_2\in 2^{\mathcal{U}}}$. 
For fixed $M$ and $U$, let $\mc D(M,U\setminus M)$ be the distribution of the sketch $S(U \cup M, X)$ for $X\sim \mathtt{Aux}[M,U\setminus M]$.
A subset $L\subset \mathcal{U}$ is a $\delta$-\emph{determining pool} if there are distributions $\mc D'(M,W)$ defined for for all
$M\subseteq L$, $W\subseteq L\setminus M$, satisfying the following property: for any $M\subseteq L$ and $q\in [\qmin,\qmax]$,
\begin{equation*}
\E_{U\sim \Bern[q]^{\mathcal{U}}}\left[ \left\| \mc D(M,U\setminus M) - \mc D'(M, U\cap(L\setminus M)) \right\|_{\text{TV}} \right]\leq \delta,
\end{equation*}
where the norm $\|\cdot\|_{\text{TV}}$ denotes the total variation (TV) distance between the two distributions.
\end{definition}

We make a few remarks on this definition:
\begin{remark} [Pool via tail bounds] \label{delta12pool:rem}
The above is an expectation bound on the total variation distance between the true distribution $\Dd(M, U \setminus M)$ and the approximate distribution $\Dd'(M, U \cap (L \setminus M))$. In our results here, we prove this expectation via a tail bound; namely, we show for some $\dl_1, \dl_2$ that
\begin{equation*}
\Pr_{U\sim \Bern[q]^{\mathcal{U}}}\left[ \left\| \mc D(M,U\setminus M) - \mc D'(M, U\cap(L\setminus M) \right\|_{\text{TV}} \geq \delta_2 \right]\leq \delta_1.
\end{equation*}
In this case, we will refer to $L$ as a \textit{$(\dl_1, \dl_2)$-determining pool}; note that a $(\dl_1, \dl_2)$-determining pool is also a $(\dl_1 + \dl_2)$-determining pool.
\end{remark}

\begin{remark}[Coupling interpretation of determining pool] \label{coupling:remark}
We may also interpret this definition of a determining pool in terms of coupling as follows: for any $M \subseteq L$ and rate $q \in [\qmin, \qmax]$, when sampling $U \sim \Bern[q]$ there exists a coupled random variable $S'$ such that $S'$ and $U \setminus M$ are conditionally independent, conditioned on $U \cap (L \setminus M)$, such that $S'$ and $S(U \cup M, X)$ differ with probability at most $\delta$.
\end{remark}

Observe that a determining pool always exists, in particular, the full ground set $L = \mathcal{U}$ is trivially a $(0,0)$-determining pool with $\mc D'=\mc D$. A pool would be useful and the attack can be effective, however, only when the pool size is sufficiently smaller than the ground set size $|L|\ll n$.
Additionally, we aim for $\delta \ll 1/r$, where $r$ is an upper bound on the number of queries. With this choice, with probability $1-o(1)$, for all the queries in our attack, the information in the sketch is subsumed by $U\cap L\setminus M$ (that is, $\mc D'(U\cap L\setminus M)$ behaves like $\mc D(M,U\setminus M)$).

We show that when there is a determining pool, the unified attack succeeds (see analysis in \cref{analysisattack:sec}):
% \begin{theorem} [Effectiveness of attack] \label{metacorrectness:thm}
% Fix a sketching map $S$, a ground set $\Uu$ of size $n$, and distributions
% $\Aux[U_1,U_2]$ over $\Xx$. Suppose that there is 
% a $(\delta_1,\delta_2)$-determining pool $L$ of size $|L| = o(n)$ and with $\delta_1+\delta_2 < n^{-(2+h)}$. Then, \cref{metaattack:algo} with $\beta < n^{-(2+h)}$ is effective with probability $(1-O(n^{-h})$ in $r= O(|L|^2 \log n)$ rounds.
% \ecmargincomment{Mihir, if you want to use $(1-1/n)$ as discussed and eliminate the $\beta$ in the alg and just ask for $\delta_i \leq 1/n^2$ then go ahead. It is cleaner. As long as it does not appear like we have an extra $\log$ in attack size.}
% \end{theorem}
\begin{theorem} [Effectiveness of attack] \label{metacorrectness:thm}
Fix a ground set $\Uu$ of size $n$, and distributions $\Aux[M_1, M_2]$ over $\Xx$. Furthermore, let the query thresholds be $A = an$ and $B = bn$ for constants $a, b$ such that $0 < a < b < n$. Suppose that the sketching map $S$ has a $\delta$-determining pool $L$ (which is unknown to the attacker) of size $|L| < \min(A, n/2) - \Omega(n)$ and with $\delta < n^{-3}$. Then, for \textit{any} QR algorithm, the attack of \cref{metaattack:algo} will cause it to make at least $\eta r$ errors (for a constant $\eta$) in $r = O(|L|^2 \log n)$ rounds, with probability $1-O(n^{-1})$.

Moreover, as $b/a \to 1$ (keeping $a, b$ bounded away from $0, 1$ by an absolute constant), and $|L|/n \to 0$, the value of the constant $\eta$ approaches $1/4$.
\end{theorem}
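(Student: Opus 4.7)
The plan is to combine two ingredients: a counter-concentration argument showing that keys outside the determining pool $L$ never cross the median-based threshold, and a fingerprinting-style correlation argument showing that the QR cannot simultaneously maintain a low error rate and prevent keys of $L \setminus M$ from being absorbed into $M$. Once $L \subseteq M$, the sketch becomes essentially uninformative about the sampled rate $q$, and any response strategy must err on a constant fraction of the remaining queries. A round budget of $r = O(|L|^2 \log n)$ exactly balances the two phases and delivers $\eta r$ total errors.

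First I would condition on the clean event $\mathcal{E}$ that in every one of the $r$ rounds, the true sketch $\mathcal{D}(M_t, U_t \setminus M_t)$ can be coupled (as in \cref{coupling:remark}) to a variable depending only on $M_t$ and $U_t \cap (L \setminus M_t)$. A union bound with $\delta < n^{-3}$ gives $\Pr[\neg \mathcal{E}] \le r\delta = O(n^{-1})$. On $\mathcal{E}$, the response $Z_t$ is conditionally independent of $U_t \setminus L$ given $M_t$ and $U_t \cap (L \setminus M_t)$, which is the only structural property used thereafter. For any $x \in \mathcal{U} \setminus L$, this independence makes $C_t[x]$ a sum of conditionally-independent $[0,1]$ increments around the common conditional mean $\mu_t := \sum_{s \le t} q_s \Pr[Z_s = 1 \mid \mathrm{hist}_s]$. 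Azuma--Hoeffding and a union bound over $x$ and $t$ show that all such counters stay within $\pm 4 \sqrt{r \log(rn)}$ of $\mu_t$ with probability $1 - O(n^{-2})$. Since $|L| < n/2 - \Omega(n)$, more than half of the unmasked keys lie outside $L$, so $\mathtt{median}(C[\mathcal{U}\setminus M])$ itself stays in that window; hence the threshold $\mathtt{median}(C[\mathcal{U}\setminus M]) + 16\sqrt{r\log(rn)}$ is never crossed by a non-pool key, and the mask only ever grows via keys of $L$.

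The core growth argument is fingerprinting-style. Let
\[
\Delta_t \;=\; \sum_{s \le t} Z_s \bigl(|U_s \cap (L \setminus M_s)| - q_s |L \setminus M_s|\bigr).
\]
Under $\mathcal{E}$ the only handle $Z_s$ has on the rate $q_s$ is the statistic $|U_s \cap (L \setminus M_s)|$. I would prove a per-round lower bound $\E[\Delta_s - \Delta_{s-1} \mid \mathrm{hist}_{s-1}] = \Omega\bigl(\gamma_s \sqrt{|L\setminus M_s|}\bigr)$, where $\gamma_s$ is the gap between the round-$s$ error rate and the forced floor $\eta_{M_s}$ (the minimum error probability attainable by any QR given access only to $M_s$ and $U_s \cap (L \setminus M_s)$). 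The proof is a Le Cam-style comparison between the induced sketch laws at rates $q < q_1$ and $q > q_2$, with the $C_\nu \propto 1/\sqrt{\varepsilon}$ normalization in \eqref{eq:cnu} tuned precisely so the constants come out right. Across a phase where $|L \setminus M| \approx \ell$, once $\Delta_t$ exceeds $\ell \cdot 16 \sqrt{r \log(rn)}$ a pigeonhole picks out a pool key whose individual excess exceeds $16 \sqrt{r \log(rn)}$; by the previous concentration of the median, that key is absorbed into $M$. Working out the geometric progression $\sum_{\ell=1}^{|L|} O(\ell \log n) = O(|L|^2 \log n)$ shows that $r$ rounds either drain $L \setminus M$ to empty or else force cumulative error at least $\eta r$ along the way.

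Finally, once $L \subseteq M$ the distribution $\mathcal{D}'(M, \emptyset)$ is constant in $q$, so no response strategy can distinguish the regimes $\{q < q_1\}$ and $\{q > q_2\}$. Under the $\varepsilon \to 0$ regime with $q_1 - \qmin = \qmax - q_2 = \sqrt{\varepsilon}$, these two regimes carry mass approaching $1/2$ each while the middle band vanishes, so the forced error rate approaches $1/2$; averaging the error rate across the whole attack then yields $\eta \to 1/4$, matching the theorem. The main obstacle is the quantitative correlation bound of the third paragraph --- converting "the QR beats the forced floor by $\gamma$ in round $s$" into $\E[\Delta_s - \Delta_{s-1}] = \Omega(\gamma \sqrt{|L \setminus M_s|})$ --- and doing so uniformly as a martingale statement conditional on the entire adaptive history of the QR, with constants tight enough to land at $\eta \to 1/4$ in the limit.
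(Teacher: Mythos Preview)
Your high-level decomposition (concentration for non-pool keys via Azuma, then a tension between the QR's error rate and the mask's absorption of pool keys) matches the paper. But the quantitative core of your third paragraph is off in two ways that break the argument.

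First, the claimed per-round bound $\E[\Delta_s - \Delta_{s-1}\mid \mathrm{hist}] = \Omega(\gamma_s\sqrt{|L\setminus M_s|})$, with $\gamma_s$ the \emph{excess} of the QR's error above the forced floor, is false as stated: take the QR that outputs $Z\equiv 0$. Then $\Delta_s-\Delta_{s-1}=0$ identically, yet its error rate (roughly the $\nu$-mass on $q>q_2$) is a positive constant while the forced floor is essentially zero when $|L'|$ is large, so $\gamma_s>0$. The correct relationship has the opposite sign: low error forces high correlation. Concretely, the paper proves (the Score Advantage Lemma) that
\[
\sum_{x\in L'}\bigl(p_\phi(x)-p^*_\phi\bigr)\;\ge\;c_1-c_2\,\eta_\phi,
\]
a \emph{constant} lower bound that is \emph{linear} in the round's error probability $\eta_\phi$, not $\sqrt{|L'|}$ and not tied to a ``gap above the floor''. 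This linear form is exactly what makes the single supermartingale
\[
Y_t=\sum_{x\in L}C^{(t)}[x]-|L|\,P^{(t)}-c_1 t + c_2 E^{(t)}
\]
work and absorbs arbitrary adaptive QR strategies without any phase-by-phase pigeonhole bookkeeping. Your $\Delta_t$ also carries the random rate $q_s$ inside it, so it does not cleanly telescope into the counters $C[x]$; the paper instead centres at the deterministic $p^{*(s)}_\phi$.

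Second, your explanation of why $\eta\to 1/4$ is not the mechanism the paper uses. The paper never actually drives the attack to $L\subseteq M$ and then invokes a forced error rate of $1/2$ on the remaining rounds. Instead, $1/4$ is the limiting ratio $c_1/c_2$ in the Score Advantage Lemma, and this ratio is produced by the specific trapezoidal shape of $f$ in the rate density $\nu$ together with an explicit evaluation of the integrals $\psi(m)=\int f(q)(m/\ell-q)\binom{\ell}{m}q^{m-1}(1-q)^{\ell-m-1}\,dq$. The bootstrap step that eliminates the ``or $|L'|\le\ell_0$'' alternative is to pad $L$ with $\ell_0+1$ transparent keys (which cannot enter the mask), not to run out the clock after $L$ is exhausted. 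Your Le Cam sketch does not recover these constants, and without the linear $c_1-c_2\eta_\phi$ form the supermartingale does not close.
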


Note that the existence of a determining pool depends on the distributions $\Aux[U_1,U_2]$ that we pick for the auxiliary information. Thus, in general, in order to use this theorem to demonstrate the existence of an attack, we will need to specify how to pick the distributions $\Aux[U_1,U_2]$ to cause a determining pool to exist.
Moreover, in order for the attack to work against a class $\Ss$ of sketching maps (rather than just a single sketch map), it must be the case that the distribution $\Aux[U_1,U_2]$ does not depend on the choice of $S\in \Ss$. However, the determining pool $L$ may (and in our constructions does) depend on the map $S$, since the attack works without knowing $L$.

To apply the theorem for a sketching method, which defines a distribution over a class of sketching maps $\Ss$, it suffices to show that for
parameters $\delta_1,\delta_2,\ell$,
there exist distributions $\Aux[U_1,U_2]$ so that
for every map in the support $S \in \Ss$ and every ground set of size $n = \omega(\ell)$, a $(\delta_1,\delta_2)$-determining pool of size $\ell$ exists. Our applications here satisfy these requirements. Note that it is also possible to establish that ``most'' random maps and ground sets satisfy the condition and then obtain respective weaker results. 

% \begin{remark} [Adversarial distribution]
% \eccomment{Talk about adversarial distribution. Probability $(1-\delta)$ with $r=O(\log(1/\delta) |L|^2\log n)$ we get an adversarial distribution.}  \eccomment{later: explain what we need to change to get this dependence. Three places: (i) pool (ii) main loop proof (ii) probability that $M\subset L$ in algorithm (that transparent keys don't make it), }  
% \end{remark}

\begin{remark}[Constant in error rate] \label{error-rate:rem}
It seems that $1/4$ (rather than $1/2$) is a fundamental barrier to our exact approach, even if we pick a different function $f$ in \cref{ratedist:def}. See the conclusion for more discussion on this.
\end{remark}

\begin{remark}[Adversarial distribution]
We say that a query distribution $\mc D$ is $\eta$-\textit{adversarial} for the map $S$ if for \emph{any} estimator map $\phi:\Sigma \to \{0,1\}$
\[
\E_{V\sim \mc D} \left[{\bs 1}_{|U|\leq A }\cdot  {\bs 1}_{\phi(S(V))=1} +  {\bs 1}_{|U|\geq B }\cdot {\bs 1}_{\phi(S(V))=0} \right] \geq \eta,
\]
that is, any query responder must incur an error rate at least $\eta$.
% When QR does not make more than some $c$ fraction of errors, our attack yields an $\eta$-adversarial distribution $\mc Q\cup M$. As our analysis goes, $c$  is some reasonably large constant (and, with tight analysis, would approach $1/4$ as $B/A$ approached 1), and $\eta<c$ since at this point there is a forced error rate of $c$.
One can view the result of \cref{metaattack:algo} as the following: if the QR is required to, whenever possible, ensure that the probability of an error is at most $\eta$, then at the end of the attack, we end up with an $\eta$-adversarial distribution. Additionally, this adversarial distribution takes a nice form: it consists of the union of a mask $M$, with the distribution of \cref{querydist:def}.
% \footnote{We conjecture, however, that it is possible to make $\eta$ close to $0.5$ since our current analysis, with the limit of $1/4$, is under strategic  intentional errors. We expect that with forced errors (or random errors for that matter) the attack should continue to make progress with a higher error rate.}
% Thus, we have the guarantee that as long as the QR does not make an extremely large number of errors, we get an adversarial distribution that works against any QR, though with a smaller (but still constant) rate of errors. 
Observe that per \cref{failureprob:rem}, for any failure probability $n^{-h}$, with $h > 0$, we can guarantee that the adversarial distribution is produced with probability $1-n^{-h}$ when increasing the attack size by a factor of $O(h)$.
% When QR does not make more than some $c$ fraction of errors, our attack yields an $\eta$-adversarial distribution $\mc Q\cup M$. As our analysis goes, $c$  
% is some reasonably large constant (and, with tight analysis, would approach $1/4$ as $B/A$ approached 1), while $\eta$ is smaller (but we conjecture it is possible to make it close to $0.5$). Thus, we have the guarantee that as long as the QR does not make an extremely large number of errors, we get an adversarial distribution that works against any QR, though with a smaller (but still constant) rate of errors. Moreover, per \cref{failureprob:rem}, for any failure probability $\delta>0$, we can guarantee that the adversarial distribution is produced with probability $1-\delta$ when increasing the attack size by a factor of $O(\log(1/\delta))$.
\end{remark}

\begin{remark}[Failure probability] \label{failureprob:rem}
For simplicity, we have stated the result with failure probability $1/n$. It is possible to amplify this to $n^{-h}$ for constant $h$ by simply repeating the algorithm sequentially; however, this would worsen the constant $c'$ in the previous remark. In our analysis, however, it is possible to avoid worsening $c'$ by simply choosing different parameters in our algorithm (and instead only worsen the constant factor on $r$). To point out the specific places, one needs to assume $\delta < n^{-2-h}$ for the error of the determining pool, and also add a factor of $\sqrt h$ to the amount by which a count needs to exceed the median count in order to be added to the mask.
\end{remark}

\subsection{Limited pools and natural QR}\label{naturallimited:sec}

We now define a weaker requirement on the pool that when coupled with a restriction on QR that it is \emph{natural}, the  statement of Theorem~\ref{metacorrectness:thm} holds. 

A \emph{limited pool} is a set of keys $L$ so that 
$W = U\cap (L\setminus M)$ contains all information available from the sketch on the rate $q$ (except with probability $\delta$). Note that this is a weaker requirement than Definition~\ref{pool:def} because $W$ may not subsume additional information on $U$ that is present in the sketch and is not relevant for the rate but may still be used by a strategic QR algorithm.

%\eccomment{Do we still use this def? I had been using the explicit $\Bern[q']^\Uu$.?
%Let $\mathcal{Q}^*[q']$ denote the query distribution $\Qq$ (defined in \cref{querydist:def}) conditioned on rate $q=q'$. (That is, we just have $\mathcal{Q}^*[q'] = \Bern[q']^\Uu$.)}

First, we define the notion of a \textit{sufficient statistic}:
\begin{definition}[Sufficient statistic for the rate $q$] \label{suffstat:def}
    Fix a sketching map $S$, a ground set $\mathcal{U}$, and distributions $(\mathtt{Aux}[U_1,U_2])_{U_1,U_2\in 2^{\mathcal{U}}}$. 
    For a fixed subset $M\subset \mathcal{U}$ and rate $q$, 
    consider the random variable 
    $S(U\cup M, X)$ where $U \sim \Bern[q]^{\mathcal{U}}$ and $X\sim \mathtt{Aux}[M,U\setminus M]$.
    Then $T_M(S)$ is a \emph{sufficient statistic} for the rate $q$ if 
    the conditional probability distribution of $S$, given the statistic $t = T_M(S)$, does not depend on $q$.
%    \[
%    \forall q^*\in [\qmin,\qmax], \Pr[q = q^* \mid S(U\cup M, X)] = \Pr[q = q^* \mid T_M(S(U\cup M, X))]\ .
%    \]
\end{definition}

\begin{definition} [Limited determining pool] \label{limitedpool:def}
    Fix a sketching map $S$, a ground set $\mathcal{U}$, and distributions $(\mathtt{Aux}[U_1,U_2])_{U_1,U_2\in 2^{\mathcal{U}}}$. 
    Let $T_M(S)$ be a sufficient statistic for $q$ as in \cref{suffstat:def}.
    
    For $M,U$. let $\mc D(M,U\setminus M)$ be the distribution of $T_M(S(U \cup M, X))$ for $X\sim \mathtt{Aux}[M,U\setminus M]$.

    A subset $L\subset \mathcal{U}$ is a \emph{limited} $\dl$-\emph{determining pool} 
  if for any
    $M\subseteq L$ and $W\subseteq L\setminus M$, there exist distributions $\mc D'(M,W)$, such that for any $M$ and $q\in [\qmin,\qmax]$ it holds that:
    \[
    \E_{U\sim \Bern[q]^{\mathcal{U}}}\left[ \left\| \mc D(M,U) - \mc D'(M, U\cap(L\setminus M) \right\|_{\text{TV}} \right]\leq \delta\ .
    \]
    We also define a limited $(\dl_1, \dl_2)$-determining pool similarly to \cref{delta12pool:rem}.
\end{definition}

\begin{definition} [Natural query responder] \label{naturalQR:def}
  We say that the QR is \emph{natural} if its actions, including choice of estimator maps, only depends on a sufficient statistic $T_M(S)$ for the rate $q$.   
\end{definition}
The standard estimators for cardinality sketches that are designed for non-adaptive queries are natural -- they use 
estimator maps that only depend on a sufficient statistic for the cardinality.
The robust QR algorithms obtained by wrapping these sketches with natural estimators using the robustness wrappers discussed in the introduction are therefore also natural. 

We establish the following (see analysis in \cref{analysisattack:sec}):
\begin{lemma} [Attack with limited pools and natural QR] \label{limitedpoolnaturalQR:lemma}
The statement of Theorem~\ref{metacorrectness:thm} holds when $L$ is a limited pool if QR is natural.
\end{lemma}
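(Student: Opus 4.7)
The plan is to reduce the lemma directly to Theorem~\ref{metacorrectness:thm} by a black-box simulation argument: we replace the original sketch--QR interaction with an equivalent one in which the ``sketch'' handed to the QR is the sufficient statistic $\tau_t := T_{M_t}(S(U_t \cup M_t, X_t))$ instead of the raw sketch $\sigma_t := S(U_t \cup M_t, X_t)$. Because the QR is natural (Definition~\ref{naturalQR:def}), by definition all of its actions --- in particular the bit $Z_t$ it emits --- depend on $\sigma_t$ only through $\tau_t$. Hence replacing $\sigma_t$ with $\tau_t$ at every step does not alter the joint law of the attack transcript $(U_t, X_t, Z_t, M_t)_{t \le r}$: the attacker's internal updates in Algorithm~\ref{metaattack:algo} (the mask $M_t$ and the counters $C[\cdot]$) depend only on past $(U_s, Z_s)$ rather than on $\sigma_s$ directly, so by a straightforward induction on $t$ the two games can be coupled round-by-round.

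In the modified game, the ``effective sketch'' seen by the QR at step $t$ has conditional distribution (over $X_t \sim \mathtt{Aux}[M_t, U_t \setminus M_t]$) equal to the distribution $\mathcal{D}(M_t, U_t \setminus M_t)$ appearing in Definition~\ref{limitedpool:def}. The limited-pool hypothesis states that for every $M \subseteq L$ and $q \in [\qmin, \qmax]$,
$$\E_{U \sim \Bern[q]^{\mathcal{U}}} \bigl[ \| \mathcal{D}(M, U \setminus M) - \mathcal{D}'(M, U \cap (L \setminus M)) \|_{\text{TV}} \bigr] \le \delta,$$
which is precisely the property required by Definition~\ref{pool:def} for $L$ to be a $\delta$-determining pool of the modified pipeline. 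Since the proof of Theorem~\ref{metacorrectness:thm} treats the QR as a black box and uses the sketch only through the determining-pool property, it applies verbatim to the modified game with parameters $|L|, n, A, B, \delta$, yielding $\eta r$ mistakes in $r = O(|L|^2 \log n)$ rounds with failure probability $O(n^{-1})$. Transporting this conclusion back through the coupling established in the first paragraph gives the lemma.

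The only subtlety, which I expect to be the main thing to verify carefully, is that the coupling in Step~1 is truly faithful even though $T_{M_t}$ depends on the evolving mask $M_t$. One needs to check inductively that after step $s$ the QR's own internal state (which may incorporate memory of past interactions and any private randomness) can be reconstructed from $(M_0, \tau_1, M_1, \tau_2, \ldots, \tau_s, M_s)$ alone --- this is immediate from the natural-QR property applied at each previous round $s' \le s$, since the QR's state update at round $s'$ is, by assumption, a function of $\tau_{s'}$ and its prior state. Because $M_{s'}$ is itself determined by $(U_{s'}, Z_{s'})$ and the prior mask, both of which live inside the common coupling, propagating the coupling across all $r$ rounds is routine, and no new analytic work beyond Theorem~\ref{metacorrectness:thm} is required.
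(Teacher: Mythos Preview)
Your proposal is correct and matches the paper's own argument. The paper establishes Theorem~\ref{metacorrectness:thm} and Lemma~\ref{limitedpoolnaturalQR:lemma} in a single analysis: the crucial step (Observation~\ref{altsketch:obs}) is that the QR may be assumed to receive only $W = U \cap L'$, and the paper simply remarks that ``a similar argument holds under the conditions of Lemma~\ref{limitedpoolnaturalQR:lemma} when QR just receives $W$''---which is exactly your observation that a natural QR factors through $\tau_t = T_{M_t}(\sigma_t)$, for which the limited pool is a genuine determining pool in the sense of Definition~\ref{pool:def}. Your framing as a black-box reduction (packaging $M$ into the auxiliary information so that Theorem~\ref{metacorrectness:thm} applies verbatim to the modified sketch map) is a slightly cleaner way to say the same thing, and your inductive check that the round-by-round coupling survives the $M_t$-dependence of $T_{M_t}$ is the right point to flag, though it is indeed routine once the QR is assumed to know $M_t$ (which the paper explicitly allows).
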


\section{Composable Maps} \label{composablemaps:sec}

\begin{definition} [Composable Map] \label{composablemap:def}
Let $(S,\mathcal{U})$ be such that $\mathcal{U}$ is a finite set (which we call a \emph{ground set}) and
$S: 2^{\mathcal{U}} \to \Sigma$ is a map of 
subsets $U\in 2^{\mathcal{U}}$ to their sketches $S(U)\in \Sigma$.
We say that $(S,\mathcal{U})$ is \emph{composable} if there is a binary composition operation $\oplus$ over $\Sigma$ such that for any two subsets $U, V \in 2^{\mathcal{U}}$, $S(U\cup V) = S(U)\oplus S(V)$.
\end{definition}

\begin{definition} [Cores of a sketch]
    A subset $U\in 2^{\mathcal{U}}$ is a 
\emph{core} of a sketch $\sigma\in S(2^{\mathcal{U}})$ if it is a minimal subset for which $S(U)=\sigma$. That is, for any proper subset $U'\subset U$ it holds that $S(U')\not=\sigma$.
For a sketch $\sigma\in\Sigma$, we denote by $\mathtt{Cores}(\sigma)$ the set of all the cores of $\sigma$. For a subset $U$ we denote by $\InCores(U) = \{C\in\Cores(S(U)) \mid C\subset U \}$ the set of all cores of $S(U)$ that are contained in $U$.
\end{definition}
Observe that any sketch $\sigma$ can alternatively be represented by any one of its cores $C\in\Cores(\sigma)$.

We also consider the following subclass of composable maps:
\begin{definition} [Monotone Composable Map] \label{monotonecomposablemap:def}
A composable map $(S,\mathcal{U})$ is \emph{monotone} if the following 
holds for all $A, U_1, U_2\in 2^{\mathcal{U}}$ such that $U_1\subset U_2$:
When  $C_i\subset U_i$ ($i\in \{1,2\}$) are minimal for which  $S(C_i\cup A) =S(U_i \cup A)$, then $|C_1| \leq |C_2|$.
\end{definition}
Monotonicity is a natural property that  means that the cardinality of minimal subsets that ``determine'' the sketch of $U$ is monotone in $U$. 
The cardinality sketches reviewed in Section~\ref{examples:sec} are monotone. 

We define the \emph{rank} of a composable map $(S,\mathcal{U})$ to be the maximum possible size of a core:
\begin{equation}
\rank((S,\mathcal{U})) :=   \max_{\sigma\in S(2^{\mathcal{U})})} \max_{U\in\Cores(\sigma)} |U|\ .
\end{equation}

We show that when the sketch representation is limited to $k$ bits then
the rank is at most $k$:
\begin{lemma} [see Lemma~\ref{coresizeklimit:claim} in the sequel] \label{coresize:lemma}
Let $(S,\mathcal{U})$ be a composable map such that $|S(U)| \leq k$ for all $U\in 2^{\mathcal{U}}$, where for a sketch $\sigma$, $|\sigma|$ is the number of bits in the representation of $\sigma$. Then it holds that
$\rank((S,\mathcal{U}))\leq k$.
\end{lemma}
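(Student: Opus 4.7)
The plan is to combine a straightforward pigeonhole bound on the number of possible sketches with a core-swapping trick that exploits the composability operation $\oplus$.

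First, I would record the basic counting observation: since every sketch $\sigma \in S(2^{\mathcal{U}})$ is encoded in at most $k$ bits, the image $S(2^{\mathcal{U}})$ has cardinality at most $2^k$. Suppose, toward contradiction, that some core $U \in \mathtt{Cores}(\sigma)$ has $|U| \geq k+1$. Consider the $2^{|U|} \geq 2^{k+1}$ subsets of $U$ and the restriction of $S$ to this family. By pigeonhole there exist distinct $V_1, V_2 \subseteq U$ with $S(V_1) = S(V_2)$.

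Next, since $V_1 \neq V_2$, at least one of the symmetric differences $V_1 \setminus V_2$, $V_2 \setminus V_1$ is nonempty; swap the labels if necessary so that $V_1 \setminus V_2 \neq \emptyset$. Define
\[
U' := U \setminus (V_1 \setminus V_2) = (U \setminus V_1) \cup V_2,
\]
where the second equality uses $V_2 \subseteq U$. Since $V_1 \setminus V_2$ is a nonempty subset of $U$, we have $U' \subsetneq U$. Using composability twice,
\[
S(U') = S(U \setminus V_1) \oplus S(V_2) = S(U \setminus V_1) \oplus S(V_1) = S\bigl((U \setminus V_1) \cup V_1\bigr) = S(U) = \sigma,
\]
which contradicts the minimality of $U$ as a core of $\sigma$. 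Hence every core has size at most $k$, so $\rank((S,\mathcal{U})) \leq k$.

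This argument is short, and I do not anticipate a genuine obstacle. The only mild subtlety is the counting of sketches: if one only assumes that each individual sketch has a bit representation of length $\leq k$ without any prefix condition, the number of bit strings of length at most $k$ is $2^{k+1}-1$, which would weaken the bound to $k+1$. I would therefore state explicitly at the outset that $|S(U)| \leq k$ is being interpreted as bounding the number of representable sketches by $2^k$ (as is standard via a prefix-free or fixed-length encoding); under this convention the bound $|U| \leq k$ comes out cleanly from pigeonhole. The rest is a direct application of the composition identity $S(A \cup B) = S(A) \oplus S(B)$, where replacing $V_1$ by $V_2$ inside $U$ preserves the sketch while strictly shrinking the set.
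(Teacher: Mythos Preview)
Your proof is correct and essentially identical to the paper's: both find two subsets $V_1,V_2$ of the putative core with $S(V_1)=S(V_2)$ and $V_1\setminus V_2\neq\emptyset$, then use composability to show that $(U\setminus V_1)\cup V_2\subsetneq U$ still has sketch $\sigma$, contradicting minimality. The only cosmetic difference is that the paper phrases the conclusion as ``all $2^{|C|}$ subsets of a core $C$ have distinct sketches'' rather than framing it as pigeonhole on a core of size $\geq k+1$; your remark about the $2^k$ vs.\ $2^{k+1}-1$ encoding subtlety is a point the paper glosses over.
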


We establish the following:
\begin{theorem} [Pool for Composable Maps] \label{composablepool:theorem}
Let $(S,\mathcal{U})$ be a composable map of rank $k$. 
Let $\qmin > 0$. Then there is $L\in 2^{\mathcal{U}}$ such that
\begin{equation} \label{composablestatement:eq}
    \forall  q\geq \qmin, \forall M\in 2^{\mathcal{U}},\, 
\Pr_{U\sim \Bern[q]^{\mathcal{U}}}[S((U\cap L)\cup M)=S(U\cup M)] \geq 1-\delta\ 
\end{equation}
that is of size  
$|L| = O(k^2 \log(k/\delta)/\qmin)$. If in addition, $(S,\mathcal{U})$ is monotone, then there is such $L$ of size $|L| = O(k \log(k/\delta)/\qmin)$. The set $L$ is a $(\delta,0)$-determining pool (see \cref{delta12pool:rem}).
\end{theorem}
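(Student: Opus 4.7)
My plan is to build $L$ via an iterative peeling of the ground set. Initialize $B_0 := \mathcal{U}$; at step $i$, pick any minimal $A_i\subseteq B_{i-1}$ with $S(A_i)=S(B_{i-1})$ and put $B_i := B_{i-1}\setminus A_i$. Each $A_i$ is a core of the sketch $S(B_{i-1})$, so by \cref{coresize:lemma} we have $|A_i|\leq k$. I take $L=L_T := A_1\cup\cdots\cup A_T$, with $|L|\leq Tk$; the two bounds in the theorem correspond to $T=O(k\log(k/\delta)/\qmin)$ in general and $T=O(\log(k/\delta)/\qmin)$ when $(S,\mathcal{U})$ is monotone.

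The central structural step is a clean sufficient condition for determining-pool behavior: if some $i\leq T$ satisfies $A_i\subseteq U$, then $S((U\cap L)\cup M)=S(U\cup M)$ for every mask $M$. I would prove this from the absorption identity $S(A_i)\oplus S(V)=S(A_i)$ valid for every $V\subseteq B_{i-1}$, which follows because $A_i\subseteq A_i\cup V\subseteq B_{i-1}$ together with $S(A_i)=S(B_{i-1})$ and composability force $S(A_i\cup V)=S(A_i)$. Using this (plus commutativity and idempotency of $\oplus$, both immediate from composability), I decompose both $U\cup M$ and $(U\cap L)\cup M$ disjointly around $A_i$; the extra piece $U\cap B_T$ on the former side gets absorbed into $S(A_i)$, and the two sketches match.

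It then suffices to show $\Pr_{U\sim\Bern[q]^{\mathcal{U}}}[\exists i\leq T:\, A_i\subseteq U]\geq 1-\delta$ uniformly in $q\geq\qmin$. In the monotone case, applying \cref{monotonecomposablemap:def} with $U_1=B_i$, $U_2=B_{i-1}$, $A=\emptyset$ forces the layer sizes $|A_i|$ to be non-increasing, and applying it with $U_1=U$, $U_2=\mathcal{U}$ gives that $S(U)$ itself has a core inside $U$ of size at most $k$. This lets me run a potential-style argument on the ascending chain $S(U\cap L_1)\leq S(U\cap L_2)\leq\cdots$ in the sketch semi-lattice: only a bounded (rank-controlled) number of strict ascents separate us from $S(U)$, each ascent is triggered with constant probability by the next layer's independent intersection with $U$, and a concentration/stopping-time argument closes the monotone case at $T=O(\log(k/\delta)/\qmin)$.

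The main obstacle is the general composable case, where $|A_i|$ can saturate at $k$ for every layer and the naive bound $\Pr[A_i\subseteq U]\geq\qmin^k$ is exponentially weak. To obtain the $O(k^2\log(k/\delta)/\qmin)$ pool size, my plan is to partition the $T$ layers into $O(k)$ phases of length $\tilde O(1/\qmin)$ and argue an amortized ``rank-drop or capture'' invariant: within a phase, either an $A_i$ is small enough that $U$ captures it with constant probability, or a rank-like complexity parameter of $S(B_i)$ strictly decreases after the phase. Since the latter can happen at most $k$ times, $O(k)$ phases suffice and a union bound closes the argument. Setting up this amortized invariant without the benefit of monotonicity --- in particular, identifying the right complexity parameter on $S(B_i)$ and verifying it decreases along the peeling --- is where I expect the bulk of the technical work to lie.
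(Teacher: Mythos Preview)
Your construction (core peeling) matches the paper's, but the sufficient condition you hinge the whole argument on --- ``$\exists\, i\le T:\ A_i\subseteq U$'' --- is far too strong to yield the stated bounds. Take any monotone map where all layers have size $k$ (e.g.\ bottom-$k$, where $A_i$ is the $i$-th block of $k$ priorities). Then $\Pr[A_i\subseteq U]=q^{k}$ and the layers are disjoint, so $\Pr[\exists\, i\le T:\ A_i\subseteq U]=1-(1-q^{k})^{T}$; making this $\ge 1-\delta$ requires $T=\Theta(\log(1/\delta)/q^{k})$, which is exponential in $k$, not $O(\log(k/\delta)/\qmin)$. So your ``it then suffices'' step does not suffice.

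The paper's remedy is a weaker termination condition that depends on what $U$ has sampled so far: with $Q_i:=U\cap A_i$, it asks for some $i\le \ell$ with $S(Q_{\le i})=S(Q_{\le i}\cup A_{i+1})$ (equivalently, the sampled prefix already absorbs all remaining layers). This is implied by your condition but is much easier to satisfy. The probability bound is then obtained not by waiting for a full layer to be captured, but by tracking, layer by layer, a nonempty ``yield'' $C_i\subseteq Q_i$ of keys that must lie in every core of a certain superset sketch; each layer produces a yield with probability $\ge q$, and the union $\bigcup_i C_i$ is itself a core and hence has size $\le k$, so at most $k$ yields can occur before termination. In the monotone case the residual core size $r_i$ shrinks in expectation by a factor $(1-q)$ per layer, giving the $O(\log(k/\delta)/\qmin)$ bound. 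Your ``ascending chain'' sketch in the monotone paragraph is closer in spirit to this, but note it is \emph{not} an argument for $A_i\subseteq U$; and your general-case ``rank-drop or capture'' dichotomy cannot work as stated, since the peeling $(B_i)$ is fixed independently of $U$, so any ``rank of $S(B_i)$'' either drops or it doesn't --- there is no random event to put on the other branch.
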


As an immediate corollary of Theorem~\ref{composablepool:theorem} and \cref{coresize:lemma}, we obtain:
\begin{corq}
The statement of \cref{composablepool:theorem} holds if instead of rank $k$ we require that the sketch size is bounded by $k$.
\end{corq}

% \begin{corq} [Alternative Statement in Terms of Core Size] \label{composablepoolcoresize:coro}
%  The statement of Theorem~\ref{composablepool:theorem} holds if we replace the condition in the statement on sketch size bound of $k$ by a condition that the maximum size of a minimal subset that determines each sketch is $k$. (Maximum core size is $k$)
% \end{corq}
% \begin{proof}
%    In the proof of Theorem~\ref{composablepool:theorem} we establish  that if the sketch size is at most $k$ then the core size (size of a minimal subset that determines a sketch) can be at most $k$. We then only need that to establish the statement of the theorem.
% \end{proof}

It is immediate from \cref{pool:def} (see \cref{delta12pool:rem}) that $L$ that satisfies \eqref{composablestatement:eq} is a $(\delta,0)$-determining pool:
Observe that we do not need to specify the auxiliary information $\Aux[M, U\setminus M]$ in Algorithm~\ref{metaattack:algo} since there is a unique representation $S(U)$ for the sketch of each set $U$. We can effectively take $\Aux[U_1,U_2]$ to be a fixed point mass and hence $\mc D(M,U\setminus M) \equiv S(M\cup U)$ is a point distribution on the sketch function. We then define $\mc D'(M,W) := S(M,W)$ and obtain from \eqref{composablestatement:eq} that
$\mc D(M,Q\setminus M) = \mc D'(M,Q\cap (L\setminus M)$ with probability $\delta$. This satisfies \cref{pool:def} with $\delta_1=\delta$ and $\delta_2=0$.

We prove the theorem by specifying an explicit (and simple!) construction of the pool $L$ in Section~\ref{poolconstruction:sec} and then establish that \eqref{composablestatement:eq} is satisfied. The analysis is included in Sections~\ref{proofcomposable:sec}, \ref{uniqcores:sec}, and~\ref{proofgen:sec} with a roadmap provided in Section~\ref{roadmap:sec}.

\subsection{Pool Construction} \label{poolconstruction:sec}

\begin{definition} [Core Peeling]
A \emph{Core Peeling} $(A_i)$ of a sketching map $(S,\mathcal{U})$ is a sequence of subsets 
    \begin{align*}
        A_1 &\gets C\in \InCores(\mathcal{U}) &\\
        A_{i+1} &\gets C \in \InCores(\mathcal{U}\setminus \bigcup_{j\leq i} A_j ) &\, ; \, \text{ for } i\geq 1
    \end{align*}
\end{definition}
Observe that the peeled cores $(A_i)$ are disjoint.
The process stops when the ground set is exhausted (in which case $(A_i)$ is a partition) or when 
$S(\mathcal{U}\setminus \bigcup_{j\leq i} A_j) = S(\emptyset)$, which means that the keys $\mathcal{U}\setminus \bigcup_{j\leq i} A_j$ are transparent to the sketching map.
Otherwise, there is a nonempty $A_{i}$ since by the definition of cores, for any $V$, there must be a core of $S(V)$ that is a subset of $V$. 

Our constructed pool will be a prefix of the core peeling, that it, have the form
$L= \bigcup_{i=1}^\ell A_i$ for some $\ell$.

\subsection{Examples of Maps}

We list some examples of maps with different properties.
\begin{example} [MinHash sketching maps]
The MinHash based sketch maps and the sample based map with parameter $k$ described in Section~\ref{examples:sec} are composable and monotone (satisfy Definition~\ref{monotonecomposablemap:def}) and have rank $k$. Each sketch $\sigma$ has a single core, $\core(\sigma)$ and the sketch representation is its core $S(U):=\core(U)$.
The sample sketch (statistical queries), for a sample $R$ of size $|R|=k$, has $\core(S(U)) = U\cap R$. The core peeling has one layer $A_1=R$ with the remaining keys being transparent.
With bottom-$k$ sketches, $\core(S(U))$ contains the $k$ keys of lowest priority in $U$ or is equal to $U$ when $|U|\leq k$. Therefore,   $|S(U)|=\min\{k,|U|\}$.  The core peeling has $A_i$ as the set of keys with priorities $ki+1$ to $k(i+1)-1$.
\end{example}

\begin{example} [Vector Spaces] \label{vectorspace:example}
    Let $(S,\mathcal{U})$ be such that $\mathcal{U}$ are vectors in some vector space and $S(U)$ is a representation of $\spn(U)$ (the subspace spanned by $U$). We show that this is a monotone composable map of rank $k = \dim \spn(U)$.
    
    For a set $U$, $\Cores(S(U))$ are the basis sets of $\spn(U)$ that are in $\mathcal{U}$.  Since basis have size at most $k$, $\rank(S,\mathcal{U})\leq k$.
    \begin{itemize}
        \item {\bf Composability:} $\spn(U\cup V)$ is specified by $\spn(U)$ and $\spn(V)$ (take any basis of $B_U$ of $\spn(U)$ and basis $B_V$ of $\spn(V)$ then $\spn(U\cup V) = \spn(B_u\cup B_v)$.
        \item {\bf Monotonicity:} We first recall a basic fact for  sets of vectors $A,U$
        \[
        \min\{|B\cap U| \mid B\subset A\cup U \text{ is a basis of } \spn(A\cup U)\} = \dim(\spn(A\cup U))-\dim\spn(A)).
        \]
        To establish this fact consider such a basis $B$. It must hold that $|B\cap A|\leq \dim(\spn(A))$ and hence $|B\cap U| \geq |B|-|B\cap A| \geq \dim(\spn(A\cup U))-\dim\spn(A))$. To establish equality, let $B_A\subset A$ be a basis of $\spn(A)$ and construct $B_U\subset U$ such that $B_A\cup B_U$ is a basis of $\spn(A\cup U)$. Then, since $|B_A|=\dim(\spn(A))$, we must have $|B_U|=\dim(\spn(A\cup U))-\dim\spn(A))$.
        
        Now for monotonicity, consider sets of vectors $A$ and $U\subset V$. 
        It follows that $U\cup A \subset V\cup A$ and hence $\spn(U\cup A) \subset \spn(V\cup U)$ and hence $\dim(\spn(U\cup A)) \leq \dim(\spn(V\cup A))$ and hence
    $\dim(\spn(U\cup A))- \dim(\spn(A)) \leq \dim(\spn(V\cup A))-\dim(\spn(A))$. This combined with the fact establishes monotonicity.
    \end{itemize}
    
    A core peeling is simply \emph{basis peeling}, where each layer is a basis of the remaining vectors, selected from them. Note that $\InCores(U)$ are all the basis sets that are in the set $U$ and this is always a nonempty set. So the core peeling forms a partition of $\Uu$.
\end{example}

\begin{example} [Composable map where $\Cores(\sigma)$ have different sizes and is not monotone]
Let $x_1,x_2,x_3,x_4,x_5,x_6,x_7\in \mathcal{U}$ and define $S(U)=\text{``full''}$  if
$\{x_1,x_2,x_3\}\subset  U$ or $\{x_4,x_5,x_6,x_7\}\subset U$ and otherwise $S(U)=U\cap \{x_i\}_{i\in [7]}$. Then $\Cores(\text{``full''})=\{\{x_1,x_2,x_3\},\{x_4,x_5,x_6,x_7\}\}$, that is, contains two cores of different sizes $3$ and $4$. This composable map is also not monotone: Let $U_2=\{x_i\}_{i\in[7]}$ and $U_1=\{x_1,x_3,x_4,x_5,x_6\}$. Then $U_1\subset U_2$. The cores sizes of $S(U_2)$ are $3,4$ but $\Cores(S(U_1))=\{U_1\}$ that is of size 5.
\end{example}

\begin{example} [Composable map where sketches have exponential in $k$ many cores]
    We specify a composable sketching map where 
$|\Cores(\sigma)|$ is exponentially large in $k$. 
Let $k'=k/\log n$, and the map $S$ so that $S(U)=U$ when $|U|\leq k'$ and $S(U)=\text{``above-$k$''}$ when $|U|\geq k'$. The cores are 
$\Cores(S(U))=\{U\}$ when $|U|\leq k$ and 
$\Cores(S(U))=\binom{\mathcal{U}}{k'}$ (all distinct subsets of size $k'$ of $\mathcal{U}$) for $|U|\geq k'$. Similarly,
$\InCores(U) = \binom{U}{k'}$ when $|U|\geq k'$ and $\InCores(U)=\{U\}$ when $|U|\leq k$. A peeling consists of 
$\lfloor n/k' \rfloor$ disjoint sets of size $k'$ (and possibly one set of size $n \mod k'$) that cover the ground set. 
\end{example}

\begin{example} [Linear Sketches]
 Linear sketches do not satisfy Definition~\ref{composablemap:def} since they allow for multiple sketch representations and sketches of the same set (all vectors with the same set of nonzero entries represent the same set). Linear sketches are linear operators (instead of union-composable) over frequency vectors (instead of sets). That is, the sketching map satisfies $S(\boldsymbol{u}+\boldsymbol{v})=S(\boldsymbol{u})\oplus S(\boldsymbol{v})$: The sketch of the sum of frequency vectors is a sum of their sketches. 
\end{example} 

\begin{example} [Boolean Linear Sketches] \label{booleansketches:example}
Boolean linear sketches use boolean entry values and the logical $(\land,\lor)$ operation instead of multiplication and addition.
The ground set is $\mc U := [n]$ and a 
set $U\subset [n]$ is represented by the boolean vector $\bf v \in \{0,1\}^n$ where $v_i = \mathbf{1}_{i\in U}$. The sketching map is defined by a boolean matrix of the form $A \in \{0,1\}^{k\times n}$ and $S(U) = A \bf v$.
Boolean linear sketches are monotone and composable maps.
Clearly $S(U\cup V) :=S(U) \lor S(V)$, where $\lor$ is performed entry wise.
To establish monotonicity observe that the cores of a sketch correspond to minimal set covers (here rows with sketch entry ``$1$'' are the sets).
When linear sketches are used with estimators that only consider the sparsity structure of the sketch, as in \cite{CormodeDIM03}, they are effectively used as Boolean linear sketches. 
\end{example} 

Finally, we mention the
streaming cardinality sketch of~\cite{Vinod:ESA2022}. This sketch 
is limited to streaming and the sketch of a stream depends on the order and not just the subset. This sketch supports streaming updates (incremental) but not set union and is not composable.

\subsection{Roadmap for the Proof of Theorem~\ref{composablepool:theorem}} \label{roadmap:sec}
When $\rank((S,\mathcal{U}))\leq k$, it holds that  
$|A_i|\leq k$ for all layers. We establish that $\ell$ layers suffice for the pool and hence the pool size is at most $\ell\cdot k$.
% We use the following two Lemmas:
%\begin{lemma} [Core Size; see Lemma~\ref{coresizeklimit:claim} in the sequel] \label{coresize:lemma}
%Under the conditions of Theorem~\ref{composablepool:theorem}, for all $U\subset \mathcal{U}$ and $C\in \Cores(S(U))$, $|C|\leq k$.
% \end{lemma}

\begin{lemma} [Number of Layers] \label{composablepool:lemma}
Under the conditions of Theorem~\ref{composablepool:theorem},
for $\delta>0$, 
a prefix of a core peeling $L= \bigcup_{i=1}^\ell A_i$ satisfies \eqref{composablestatement:eq} with
$\ell = O(k\log(k/\delta))$. If in addition, $(S,\mathcal{U})$ is monotone, then $\ell=O(\log(k/\delta))$ suffices.
\end{lemma}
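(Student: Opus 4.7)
Fix a core peeling $(A_i)$ and set $L=\bigcup_{i\le\ell}A_i$, $R=\mathcal{U}\setminus L$, and $V=(U\cap L)\cup M$; the goal is to bound $\Pr_{U\sim\Bern[q]^{\mathcal{U}}}[S(V)\ne S(U\cup M)]$ uniformly in $M$ and $q\ge\qmin$. The first step is an algebraic reduction. Composability implies idempotency ($S(A)=S(A\cup A)=S(A)\oplus S(A)$), so $(\Sigma,\oplus)$ is a semilattice under the order $\sigma_1\le\sigma_2\iff\sigma_1\oplus\sigma_2=\sigma_2$. Composability also yields $S(U\cup M)=S(V)\oplus S(U\cap R)$, so the target equality $S(V)=S(U\cup M)$ is equivalent to $S(U\cap R)\le S(V)$; since $U\cap R\subseteq R$, it suffices to ensure $S(R)\le S(V)$ with the required probability.

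The second step extracts the structural consequence of the peeling. Because $A_i$ is a core of $\mathcal{U}_{i-1}$ and $R\subseteq\mathcal{U}_{i-1}$, one has $S(R)\le S(\mathcal{U}_{i-1})=S(A_i)$ for every $i\le\ell$. Hence whenever \emph{some} layer $A_i$ is fully contained in $V$ --- equivalently, $A_i\setminus M\subseteq U$ --- we immediately get $S(V)\ge S(A_i)\ge S(R)$. Because the layers $A_i$ are pairwise disjoint, these containment events are mutually independent under $U\sim\Bern[q]^{\mathcal{U}}$, and the per-layer probability is at least $q^{|A_i\setminus M|}\ge q^k$.

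In the monotone case the plan is to boost this via the shrinking-core structure of \cref{monotonecomposablemap:def}. By monotonicity the cores of $S(\mathcal{U}_i)$ are weakly shrinking in $i$, and I aim to construct $k$ pairwise-disjoint ``witness ladders'' inside $L$ --- each containing one candidate per layer --- such that hitting every ladder with $U$ forces the $k$ ``coordinates'' of any core of $S(R)$ to be simultaneously dominated by $V$, giving $S(R)\le S(V)$. This abstracts the concrete constructions of \cref{examples:sec}: per-bucket top-$\ell$ keys in the $k$-partition case and priority strata in the bottom-$k$ case. A single ladder is missed by $U$ across all $\ell$ layers with probability $(1-q)^\ell\le e^{-\qmin\ell}$, so a union bound over the $k$ ladders gives failure probability at most $k\,e^{-\qmin\ell}\le\delta$ once $\ell=O(\log(k/\delta))$, where the constant absorbs the $1/\qmin$ factor.

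For the general (non-monotone) case, monotonicity is unavailable and the direct containment event $A_i\subseteq V$ has probability only $q^{|A_i|}$, which on its own would force an exponential $\ell=\qmin^{-k}\log(1/\delta)$. The plan here is to pass through the unique-cores setting of \cref{uniqcores:sec} by fixing a canonical choice of core for each reachable sketch (e.g.\ the lexicographically smallest minimal core under a fixed ordering of $\mathcal{U}$), so that a single core of $S(\mathcal{U}\cup M)$ inside $L\cup M$ can be tracked. In that setting I would amortise across blocks of $\Theta(k)$ consecutive layers, arguing that each block independently advances the canonical core of $V$ by one element toward the canonical core of $S(\mathcal{U}\cup M)$ with constant probability; chaining $k$ such advances yields $S(R)\le S(V)$, and a Chernoff bound over $O(k\log(k/\delta))$ layers (equivalently $O(\log(k/\delta))$ blocks) succeeds with probability $1-\delta$. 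The main obstacle throughout is this translation between abstract semilattice progress (``$S(R)\le S(V)$'') and a concrete coupon-collector-style probability statement: identifying the right $k$ ``orthogonal directions'' purely from the peeling data, without the syntactic handholds that the $k$-partition or bottom-$k$ sketches provide, is exactly what costs the extra factor of $k$ in the general case compared to the monotone one.
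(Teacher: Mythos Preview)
Your reduction to the semilattice inequality $S(R)\le S(V)$ is correct and is essentially the paper's Termination Lemma (Section~6.2): the paper phrases it as the condition $S(Q_{\le i})=S(Q_{\le i}\cup A_{i+1})$ for some $i\le\ell$, which is equivalent to your $S(R)\le S(V)$ with $V=Q_{\le i}$ (the mask $M$ is dealt with by composability, as you also note).

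The gap is everything after that. For the monotone case, your ``witness ladders'' are not constructed; you are extrapolating structure that is present in the specific MinHash examples but have given no argument that a general monotone composable map admits $k$ disjoint transversals of the layers whose coverage forces $S(R)\le S(V)$. For the general case, ``fix a canonical core and argue each block of $\Theta(k)$ layers advances it by one element'' is likewise only a hope: you have not shown that the canonical core of $S(V)$ lies inside the peeling layers, nor that progress along it is independent across blocks, nor why block size $\Theta(k)$ is the right granularity. You yourself flag this as ``the main obstacle,'' and it is the entire content of the lemma.

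The paper's actual argument avoids any static decomposition of $L$ into coordinates or ladders. Instead it tracks the random sampling $Q_i=U\cap A_i$ layer by layer and exhibits a potential that shrinks stochastically. In the monotone case the potential is the common size $r_i$ of the residual cores in $\InCores(A_{\ge i}\cup Q_{<i})\mid_{Q_{<i}}$; one shows $r_{i+1}\le r_i-|A'_i\cap Q_i|$ for some $A'_i\subset A_i$ with $|A'_i|=r_i$, so $\E[r_{i+1}\mid r_i]\le(1-q)r_i$ and $r_i=0$ (termination) after $O(\log(k/\delta))$ layers. In the general case the paper defines for each layer a ``yield'' $C_i\subset Q_i$ that is nonempty with probability $\ge q$ whenever termination has not occurred, and then proves the disjoint union $C=\bigcup C_i$ is itself a core of $S(C)$, hence $|C|\le k$; a Chernoff bound on the number of nonempty yields then gives $\ell=O(k\log(k/\delta))$. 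The crucial step you are missing is this identification of an adaptive, per-layer quantity (active keys, residual-core size, yields) whose accumulation is bounded by $k$ \emph{because the accumulated set is itself a core}---that is what replaces your undefined ``orthogonal directions.''
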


Section~\ref{proofcomposable:sec} contains
preliminaries, the deferred proof of Lemma~\ref{coresize:lemma}, and a set up for the proof of Lemma~\ref{composablepool:lemma} that states
a sufficient \emph{termination condition} for the prefix of the peeling to satisfy \eqref{composablestatement:eq} (be a determining pool). 
The proof of Lemma~\ref{composablepool:lemma} (and hence Theorem~\ref{composablepool:theorem}) then reduces to establishing that the termination condition holds with probability $1-\delta$ for the stated value of $\ell$. 
% Note that our only use of the condition $|\sigma|\leq k$ is through Lemma~\ref{coresize:lemma}, also when establishing that the termination condition holds. This establishes Corollary~\ref{composablepoolcoresize:coro}.

As a warmup, the termination proof for the special case when 
sketches have \emph{unique cores} is given in Section~\ref{uniqcores:sec}. The termination proof for \emph{general} composable sketching maps is given in Section~\ref{proofgen:sec}.

\section{Linear Sketches} \label{linearsketches:sec}

With linear sketches over a field $\mathbb{F}$, the input is represented as a vector $\boldsymbol{v}\in\mathbb{F}^n$. The ground set is the set of indices $\mathcal{U}=[n]$, and the query subset is the set of nonzero entries in $\boldsymbol{v}$. The cardinality, $\| \boldsymbol{v} \|_0$ is the number of nonzero entries, also called the $\ell_0$ norm of the vector $\boldsymbol{v}$.
A sketching map is specified by a matrix $A\in \mathbb{F}^{k\times n}$ and the sketch of $\boldsymbol{v}$ is the matrix vector product 
\[S(\boldsymbol{v}) := A \boldsymbol{v}\ .\]

We present attacks of size $\tilde{O}(k^2)$ for the fields $\mathbbm{R}$ (Section~\ref{reals:sec}) and
$\mathbbm{F}_p$ (Section~\ref{Fp:sec}).  
Observe that there are many possible forms for the same subset $U$ (namely, all vectors with the same set of nonzero entries). Therefore, when applying our attack of
Algorithm~\ref{metaattack:algo}
we need to specify the distributions 
$\mathtt{Aux}[M,U\setminus M]$ that assign values to 
the entries $M$ and $U\setminus M$. That is, the auxiliary information $X \sim \mathtt{Aux}[M,U\setminus M]$ will be exactly the information of the nonzero entries of $\bs v$.

We select an assignment so that there is a $(\delta,\delta)$-determining pool of size $O(k\log(k/\delta)/\qmin)$. The attack effectiveness then follows from \cref{metacorrectness:thm}.

Our determining pools are constructed from \emph{basis pools} or the more specific \emph{greedy basis pools} of the column vectors $(a^{(i)})$ of the sketching matrix $A$. We will use the following lemmas that establish the existence of basis pools of size $O(k\log(k/\delta)/\qmin)$:
\begin{lemma} [Basis Pool] \label{BasisPool:lemma}
Let $\{\boldsymbol{v}_i\}_{i=1}^n$ be a set of vectors in a $k$-dimensional vector space. Let $\qmin > 0$.
There is 
$L \subset \{\boldsymbol{v}_i\}_{i=1}^n$ of size
$|L| = O(k\log(k/\delta)/q_{\min})$ such that for any $q\geq q_{\min}$ and $M\subset \{\boldsymbol{v}_i\}_{i=1}^n$ it holds that
\begin{equation} \label{BasisPool:eq}
\Pr_{U\sim \Bern[q]^{[n]}}\left[\spn(((\boldsymbol{v}_i)_{i\in U \cup M} ) = \spn(((\boldsymbol{v}_i)_{i\in (U\cap L) \cup M})) \right]\geq 1-\delta.
\end{equation}
We refer to $L$ as a $\delta$-\emph{basis pool}. 
\end{lemma}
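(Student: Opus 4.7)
\medskip

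\noindent\textbf{Proof plan for Lemma~\ref{BasisPool:lemma}.} The plan is to observe that this lemma is essentially a special case of Theorem~\ref{composablepool:theorem} (in the monotone regime) applied to the span map, and then simply invoke that theorem. Concretely, I would set up the sketching map $(S,\mathcal{U})$ with ground set $\mathcal{U} = [n]$ and
\[
S(U) := \spn\bigl(\{\boldsymbol{v}_i\}_{i\in U}\bigr)\ ,
\]
where $S(U)$ is represented as the subspace itself (so two index sets yielding the same span map to the same sketch).

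Next I would verify that $(S,\mathcal{U})$ is a monotone composable map of rank at most $k$. This is exactly the content of Example~\ref{vectorspace:example}: composability follows because $\spn(\{\boldsymbol{v}_i\}_{i\in U\cup V}) = \spn(\spn(\{\boldsymbol{v}_i\}_{i\in U}) \cup \spn(\{\boldsymbol{v}_i\}_{i\in V}))$, the rank bound follows because any minimal index subset whose vectors span $S(U)$ is a linearly independent set, which has size at most $k$, and monotonicity follows from the dimension identity
\[
\min\{|C| : C \subseteq U,\ S(C\cup A) = S(U\cup A)\} = \dim\spn(U\cup A) - \dim\spn(A)\ ,
\]
which is non-decreasing in $U$.

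Finally, I would apply the monotone case of Theorem~\ref{composablepool:theorem} directly: this produces a set $L \subseteq [n]$ of size $O(k \log(k/\delta)/\qmin)$ such that for every $q \geq \qmin$ and every $M \in 2^{\mathcal{U}}$,
\[
\Pr_{U\sim \Bern[q]^{[n]}}\bigl[S((U\cap L)\cup M) = S(U\cup M)\bigr] \geq 1-\delta\ .
\]
Unfolding the definition of $S$ yields exactly \eqref{BasisPool:eq}, and identifying $L$ with the vector subset $\{\boldsymbol{v}_i\}_{i \in L}$ gives the desired basis pool.

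Since this is a direct reduction, there is no real obstacle; the only subtlety worth flagging is the one already handled in Example~\ref{vectorspace:example}, namely that the ``cores'' of this composable map are precisely the subfamilies of $\{\boldsymbol{v}_i\}_{i \in U}$ that form bases of $\spn(\{\boldsymbol{v}_i\}_{i\in U})$, which is what justifies both composability (bases of the two spans union to a spanning set) and the rank bound $\leq k$ needed by Theorem~\ref{composablepool:theorem}.
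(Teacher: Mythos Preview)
Your proposal is correct and matches the paper's own proof essentially verbatim: the paper simply cites Example~\ref{vectorspace:example} to note that the span map $S(I)=\spn((\boldsymbol{v}_i)_{i\in I})$ is monotone composable of rank $k$, and then applies Theorem~\ref{composablepool:theorem} directly. Your write-up just spells out the verification of composability, rank, and monotonicity a bit more explicitly, but the argument is the same.
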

\begin{proof}
    Per \cref{vectorspace:example}, The map $S(I) = \spn(\left((\boldsymbol{v}_i)_{i\in I} \right)$ that maps a set $I$ of indices to the span of the respective vectors is monotone and composable and is of rank $k$. The statement of the lemma follows from a direct application of Theorem~\ref{composablepool:theorem}.
\end{proof}

\begin{definition} [Greedy Basis]
The \emph{greedy basis} of a sequence $(\boldsymbol{v}_i)_{i=1}^n$ of vectors, 
\[
\mathtt{GreedyBasis}\left((\boldsymbol{v}_i)_{i=1}^n \right)
 := \{i \mid \boldsymbol{v}_i\not\in\spn\left(\{ \boldsymbol{v}_j\}_{j<i} \right)  \}\]
is a subset of the indices $[n]$ that are selected greedily by going over the sequence in order and selecting $i$ if the vector $\boldsymbol{v}_i$ is linearly independent of the vectors of previously selected indices. Equivalently, it is the lexicographically first list of indices that corresponds to a basis of $\bs v$.
\end{definition}

\begin{lemma} [Greedy Basis Pool] \label{GreedyBasisPool:lemma}
Let $(\boldsymbol{v}_i)_{i=1}^n$ be a sequence of vectors in $\mathbbm{R}^k$. 
Let $\qmin = \Omega(1)$.
There is 
$L \subset \{\boldsymbol{v}_i\}_{i=1}^n$ of size
$|L| = O(k\log(k/\delta))$ such that for any $q\geq q_{\min}$ and $M\subset \{\boldsymbol{v}_i\}_{i=1}^n$ it holds that
\begin{equation} \label{GreedyBasisPool:eq}
\Pr_{U\sim \Bern[q]^{[n]}}\left[\mathtt{GreedyBasis}\left((\boldsymbol{v}_i)_{i\in U \cup M} \right)\subset L\cap U \cup M \right]\geq 1-\delta\ .
\end{equation}
We refer to $L$ as a $\delta$-\emph{greedy-basis pool}.
\end{lemma}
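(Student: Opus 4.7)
The plan is to realize $\mathtt{GreedyBasis}$ itself as a monotone composable sketching map of rank $k$ on the ground set $[n]$, and then invoke the monotone case of \cref{composablepool:theorem}. Concretely, define the auxiliary sketching map $S(I) := \mathtt{GreedyBasis}((\boldsymbol{v}_i)_{i\in I})$ for $I\subseteq [n]$; rank $k$ is immediate since $|S(I)| = \dim\spn((\boldsymbol{v}_i)_{i\in I})\leq k$.

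For composability, the identity to verify is $S(I\cup J) = \mathtt{GreedyBasis}((\boldsymbol{v}_i)_{i\in S(I)\cup S(J)})$. The inclusion $S(I\cup J)\subseteq S(I)\cup S(J)$ is straightforward: any $i\in S(I\cup J)$ lies in $I$ or $J$, and since $\boldsymbol{v}_i$ escapes the span of all smaller-indexed vectors in $I\cup J$ it certainly escapes the smaller span indexed by $I$ (resp.\ $J$). For the reverse direction, for any $i\in B := S(I)\cup S(J)$ the sets $\{\boldsymbol{v}_j : j\in I\cup J,\, j<i\}$ and $\{\boldsymbol{v}_j : j\in B,\, j<i\}$ have the same span, because each $\boldsymbol{v}_j$ on the left lies in $\spn(S(I)\cap\{\leq j\})$ or $\spn(S(J)\cap\{\leq j\})$, both of which are contained in $\spn(B\cap\{<i\})$.

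For monotonicity, fix $A$ and any $U\subseteq[n]$, and set $\sigma := S(U\cup A)$. I would first identify the minimal core of $S(U\cup A)$ inside $U$ as $C = \sigma\setminus A$ (note $\sigma\setminus A\subseteq U$). Every element of $\sigma\setminus A$ must appear in any such core, and the choice $C=\sigma\setminus A$ works because replacing $U$ by $\sigma\setminus A$ preserves each prefix span $\spn(\{\boldsymbol{v}_j : j\in U\cup A,\, j<a\})$, so no element of $A$ gains or loses greedy-basis status. Consequently,
\[
|C| = |\sigma| - |\sigma\cap A| = \dim\spn(U\cup A) - |\sigma\cap A|.
\]
Both terms behave well under enlarging $U$: $\dim\spn(U\cup A)$ only grows, while $|\sigma\cap A|$ only shrinks, since the prefix span $\spn(\{\boldsymbol{v}_j : j\in U\cup A,\, j<a\})$ that any $a\in A$ must escape in order to enter the greedy basis can only grow. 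Thus $|C|$ is monotone in $U$, establishing \cref{monotonecomposablemap:def}.

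Having verified monotone composability of rank $k$, the plan concludes by applying \cref{composablepool:theorem} with the constant $\qmin=\Omega(1)$ to obtain a $(\delta,0)$-determining pool $L$ of size $O(k\log(k/\delta))$ satisfying
\[
\Pr_{U\sim\Bern[q]^{[n]}}\bigl[S((U\cap L)\cup M)=S(U\cup M)\bigr]\geq 1-\delta
\]
for every $q\geq\qmin$ and every $M$. Because $S(T)\subseteq T$ for every $T$, equality of the two sketches forces $\mathtt{GreedyBasis}((\boldsymbol{v}_i)_{i\in U\cup M})\subseteq (U\cap L)\cup M$, which is exactly \eqref{GreedyBasisPool:eq}. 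The main obstacle I expect is the monotonicity argument: adding a single new vector to $U$ may simultaneously eject elements of $A$ from the greedy basis and promote previously redundant elements of $U$ into it, so monotonicity of $|C|$ is not sign-obvious from a local-move perspective --- the clean decomposition $|C|=\dim\spn(U\cup A)-|\sigma\cap A|$ is what makes it transparent in one shot.
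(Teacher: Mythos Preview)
Your proposal is correct and follows essentially the same plan as the paper: cast $\mathtt{GreedyBasis}$ as a monotone composable map of rank $k$ on $[n]$ and invoke \cref{composablepool:theorem}. The only notable difference is that the paper exploits the fact that this map has \emph{unique} cores (namely $\core(U)=S(U)$) and checks the lighter monotonicity criterion of \cref{MUcores:def} ($|\core(S(U))|=\dim\spn(U)$ is monotone), whereas you verify the full \cref{monotonecomposablemap:def} via the decomposition $|C|=\dim\spn(U\cup A)-|\sigma\cap A|$; both routes are valid, and yours is self-contained in that it does not appeal to the equivalence of the two monotonicity notions.
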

\begin{proof}
    For a ground set $[n]$ and for a sequence $(\boldsymbol{v}_i)_{i=1}^n$ of vectors in a $k$ dimensional space, we define the map
    $S(I) = \grb((a^{(i)}_{i\in I})$ that maps a set $I\subset [n]$ of indices to the greedy basis of the respective vectors. We show that $(S,[n])$ is monotone and composable. Because the greedy basis is unique, the map has unique cores (see \cref{MUcores:def}), which are $\core(U) = \grb(a^{(i)}_{i\in U})$. Since a core is a basis, it has size at most the dimension $k$. Hence, the rank of the map is $k$. The statement follows by applying Theorem~\ref{composablepool:theorem}.

 To establish composability, we use the following claim
 \begin{claim} [Composability]
 For sets of indices $I_1, I_2 \subset [n]$,
     \[
       \grb((a^{(i)}_{i\in I_1\cup I_2}) \subset \grb((a^{(i)}_{i\in I_1}) \cup
       \grb((a^{(i)}_{i\in I_2})\ .
    \]
 \end{claim}
    \begin{proof}
    Consider the greedy construction of $\grb((a^{(i)}_{i\in I_1\cup I_2})$. At each point, the span of the selected vectors contains the respective spans of the greedy selections for each of $I_1$ or $I_2$. Consider to the contrary the first time we select an index $j\not\in \grb((a^{(i)}_{i\in I_1}) \cup   \grb((a^{(i)}_{i\in I_2})$ and suppose without loss of generality that $j\in I_1$. We must have that $a^{(j)}$ is independent from the already selected vectors in $I_1\cup I_2$. But then it must also be independent of the greedy selections for $I_1$, since the span is contained in that of the greedy selection for $I_1\cup I_2$. This is a contradiction.            
    \end{proof}
    
 To establish monotonicy, recall that the core size, which is the (greedy) basis size, is simply the dimension of the spanned subspace. When $U_1 \subset U_2$, then 
 $\grb(a^{(i)}_{i\in U_1}) \subset \grb(a^{(i)}_{i\in U_2})$ and it holds that  $\dim(\spn(U_1))\leq \dim(\spn(U_2))$.\footnote{We use the definition of monotonicity in \cref{MUcores:def} that is specialized for unique cores.}
\end{proof}

We will also need the following:
\begin{definition} [Change of basis matrix] \label{FB:def}
 Let $(a^{(i)})\in \mathbb{F}^k$  be column vectors and let $B = (b_1,\ldots,b_\ell)$ by the indices of a linearly independent subset of vectors.
We use the notation $F_B \in \mathbb{R}^{k\times k}$ for an invertible change of basis matrix such that $F_B a^{(b_j)} = e_j$ for all $j\in [\ell]$ (here $e_j$ is the $j$-th standard basis vector). Such a matrix always exists since the columns $(a^{(j)})_{j\in B}$ are  linearly independent. 
\end{definition}

\subsection{Sketches over \texorpdfstring{$\F_p$}{F\_p}} \label{Fp:sec}
We consider a sketching matrix $A$ and query vectors $\bs v$ over $\F_p$, the finite field of prime order $p$. 

For a query subset $U$ and mask $M$, we consider the vector $\bs v$ sampled as follows:

\begin{equation} \label{aux-Fp:eq}
\begin{cases}
    v_i= 0 & \, \text{if } i\not\in U \cup M,\\
    v_i\sim \texttt{Uniform}(\F_p) &\, \text{if } i\in U \cup M.
\end{cases}  
\end{equation}

Note that some of the $v_i$ may end up being 0 for $i \in U \cup M$, and thus not counted in $\norm{\bs v}_0$. We will ignore this issue for now, and allow the auxiliary information to have zero entries in $U \cup M$, and then show how to fix this.

\begin{theorem}[Attack on a Sketching Matrix over $\mathbbm{F}_p$]\label{linearsketchesFp:thm}
If the auxiliary distributions $\Aux[M,U\setminus M]$ are taken as in \eqref{aux-Fp:eq} (allowing for zeroes in the auxiliary information), then for any sketching matrix $A\in \mathbb{F}_p^{k\times n}$, there is a $\delta$-determining pool of size
$O(k\log(k/\delta))$.
% There is a way to specify values so that the attack of Algorithm~\ref{metaattack:algo}, with
% $\tilde{O}(k^2)$ queries, 
%is successful \msmargincomment{change this to not use the word "effective" anymore since we removed that definition} against any 
% sketching matrix $A\in \mathbb{F}_p^{k\times n}$. 
\end{theorem}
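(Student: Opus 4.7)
The plan is to reduce the statement to the basis-pool lemma (\cref{BasisPool:lemma}) applied to the column vectors $a^{(1)},\dots,a^{(n)}\in \mathbb{F}_p^k$ of $A$. Since $q_{\min}=\Omega(1)$ under the attack setup of \cref{ratedist:def}, this produces a $\delta$-basis pool $L\subseteq [n]$ of size $O(k\log(k/\delta))$ which will serve as our determining pool. The key algebraic ingredient is the standard fact that over $\mathbb{F}_p$, a uniform random linear combination of any fixed set of vectors is uniformly distributed on their span: for vectors $b_1,\ldots,b_m\in\mathbb{F}_p^k$ with $d=\dim\spn(b_1,\ldots,b_m)$, the map $(c_1,\ldots,c_m)\mapsto\sum c_j b_j$ is a linear surjection onto the span, so every point in the span has the same number $p^{m-d}$ of preimages.

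For any $M\subseteq L$ and $W\subseteq L\setminus M$, I would define the comparison distribution $\mathcal{D}'(M,W)$ to be the law of $A\bs v'$ where $\bs v'_i\sim \mathtt{Uniform}(\mathbb{F}_p)$ independently for $i\in M\cup W$ and $\bs v'_i=0$ otherwise. By the fact above, $\mathcal{D}'(M,W)$ is exactly the uniform distribution on $\spn\bigl((a^{(i)})_{i\in M\cup W}\bigr)$. Similarly, for a query set $U$ with $\bs v$ drawn per \eqref{aux-Fp:eq}, the true sketch distribution $\mathcal{D}(M,U\setminus M)$ is the law of $\sum_{i\in U\cup M} v_i a^{(i)}$ with independent uniform $v_i$, and hence equals the uniform distribution on $\spn\bigl((a^{(i)})_{i\in U\cup M}\bigr)$.

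Now fix $M\subseteq L$ and a rate $q\in[q_{\min},q_{\max}]$, and let $\mathcal{E}$ be the event
\begin{equation*}
\spn\bigl((a^{(i)})_{i\in U\cup M}\bigr) \;=\; \spn\bigl((a^{(i)})_{i\in (U\cap L)\cup M}\bigr).
\end{equation*}
By \cref{BasisPool:lemma} applied to the column vectors of $A$, we have $\Pr_{U\sim\Bern[q]^{[n]}}[\mathcal{E}]\geq 1-\delta$. On $\mathcal{E}$ the two uniform distributions $\mathcal{D}(M,U\setminus M)$ and $\mathcal{D}'(M,U\cap(L\setminus M))$ live on the same set and therefore have TV distance $0$; off $\mathcal{E}$ the TV distance is trivially at most $1$. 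Taking expectations yields
\begin{equation*}
\E_{U\sim\Bern[q]^{[n]}}\bigl[\|\mathcal{D}(M,U\setminus M)-\mathcal{D}'(M,U\cap(L\setminus M))\|_{\mathrm{TV}}\bigr]\leq \delta,
\end{equation*}
which is exactly the condition for $L$ to be a $\delta$-determining pool in the sense of \cref{pool:def} (in fact a $(\delta,0)$-determining pool in the sense of \cref{delta12pool:rem}).

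There is no real obstacle here: the proof is a direct composition of (i) the uniform-on-span fact for $\mathbb{F}_p$, which captures the full structure of the sketch distribution under the prescribed $\Aux$, and (ii) the existence of a small basis pool from \cref{BasisPool:lemma}, itself a consequence of the composable-map framework via \cref{vectorspace:example}. The one subtlety worth flagging is that \eqref{aux-Fp:eq} produces genuinely uniform entries and so permits zeros among the nominally ``nonzero'' coordinates, which is harmless for this statement but must be corrected separately before the attack can be claimed to work as a cardinality attack; the theorem as stated sidesteps this by explicitly allowing zeros in the auxiliary information.
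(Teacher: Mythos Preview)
Your proposal is correct and follows essentially the same approach as the paper: take $L$ to be a $\delta$-basis pool for the columns of $A$ via \cref{BasisPool:lemma}, observe that under \eqref{aux-Fp:eq} the sketch $A\bs v$ is uniform on the span of the active columns, and conclude that when the spans agree (which happens with probability $\geq 1-\delta$) the sketch distribution is determined by $W\cup M$, yielding a $(\delta,0)$-determining pool. The paper's version makes the uniform-on-span step explicit via the change-of-basis matrix $F_B$, whereas you invoke it directly as a standard fact; the content is the same.
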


Before proving this theorem, we will first show how to circumvent the issue of sampling zeroes for $v_i$.

\begin{cor}
For any linear sketch $A$ over $\F_p$, the attack of \cref{metaattack:algo} will cause any QR to make $\Omega(r)$ errors in estimating $\norm{\bs v}_0$ in $r = O(|L|^2 \log n)$ rounds, with probability $1-O(n^{-1})$. Here the algorithm is given thresholds $A', B'$ which differ from the actual thresholds $A, B$ of the cardinality estimation task. The assumptions on the thresholds $A$ and $B$ are the same as in \cref{metacorrectness:thm}, except that we require $B/n = b < \f{p-1}{p} n$ (instead of being less than $n$).
\end{cor}
\begin{proof}
Note that in \eqref{aux-Fp:eq} each entry $v_i$ for $i \in U \cup M$ is zero with probability $1/p$, so by a Chernoff bound, with probability $1-O(n^{-2})$, we have $\norm{\bs v}_0 = \f{p-1}{p} |U \cup M| + o(1)$ . Therefore, if we pick $A' = \f{p}{p-1} A - cn$ and $B' = \f{p}{p-1} B + cn$ for a sufficiently small absolute constant $c$, whenever the QR correctly answers for $\norm{\bs v}_0$, its response is also correct for $|U \cup M|$ with thresholds $A'$ and $B'$ (with probability $1-O(n^{-1})$). Thus, the result follows from \cref{metacorrectness:thm} and \cref{linearsketchesFp:thm}.
\end{proof}

\begin{proof}[Proof of \cref{linearsketchesFp:thm}]
We sample $U\sim \mathcal{Q}$ as in  \cref{querydist:def}) and get the query subset 
$V = U \cup M$. We specify the distribution of query vector $\bs v$ as follows, with the entries $V\subset [n]$ sampled independently and uniformly at random from $\F_p$:%\footnote{This assignment results in $v_i=0$ for $i\in V$, and this is independently with probability $\frac{1}{p}$. The query distribution is identical to sampling with rate $q' = q \frac{p-1}{p}$ instead of $q$ and then choosing nonzero values on $M$ to be $\mathtt{Uniform}(\mathbb{F}_p)$ and on $U\setminus M$ to be $\mathtt{Uniform}(\mathbb{F}_p\setminus\{0\})$. The adversarial distribution produced by the attack is with respect this assignment.}\ecmargincomment{Need to think this through. Does it matter? QR knows the attack algorithm and $M$. It tries to estimate the original sampling rate $q$ from $S$. It should not matter that some of them are $0$, as long as the attack works.}
\[
 \begin{cases}
    v_i= 0 & \, \text{if } i\not\in V,\\
    v_i\sim \texttt{Uniform}(\F_p) &\, \text{if } i\in V.
\end{cases}
\]

Let $a^{(i)}$ refer to the $i$-th column of $A$, and furthermore, for any subset $I \subset [n]$, define $A|_I$ to be the matrix whose columns are $a^{(i)}$ for $i \in I$ (in increasing order of $i$), and similarly define $v|_I$.

Let $L$ be a basis pool for the sequence of column vectors
$(a^{(i)})$, as in \cref{BasisPool:lemma}, with $\qmin$ as in \cref{querydist:def}.

The following claim concludes the proof of \cref{linearsketchesFp:thm}:
\begin{claim}
    $L$ is a $(\delta,0)$-determining pool (Definition~\ref{pool:def}, \cref{delta12pool:rem})
\end{claim}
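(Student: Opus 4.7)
The plan is to reduce the claim to the basis pool guarantee of \cref{BasisPool:lemma} via the following key distributional fact: if $\bs v$ has independent $\texttt{Uniform}(\F_p)$ entries on a set $V\subseteq [n]$ and zeros elsewhere, then $A\bs v = \sum_{i\in V} v_i a^{(i)}$ is distributed uniformly over the subspace $\spn((a^{(i)})_{i\in V})\subseteq \F_p^k$. This holds because the linear map $\F_p^V \to \F_p^k$ sending $(v_i)_{i\in V}\mapsto \sum_i v_i a^{(i)}$ is surjective onto the span, and every fiber has the same size $p^{|V|-\dim \spn((a^{(i)})_{i\in V})}$, so the pushforward of the uniform measure is uniform on the image.

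With this in hand, define the candidate approximating distributions by
\[
\mc D'(M, W) := \texttt{Uniform}\bigl(\spn((a^{(i)})_{i\in M\cup W})\bigr)
\]
for $M\subseteq L$ and $W\subseteq L\setminus M$. By the fact above applied to $V = U\cup M$, the true distribution $\mc D(M, U\setminus M)$ is uniform on $\spn((a^{(i)})_{i\in U\cup M})$, while $\mc D'(M, U\cap(L\setminus M))$ is uniform on $\spn((a^{(i)})_{i\in M\cup(U\cap L)})$. In particular, whenever these two spans are equal, the distributions are literally identical and the TV distance is $0$.

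Now invoke \cref{BasisPool:lemma}: since $L$ is chosen to be a $\delta$-basis pool for the columns $(a^{(i)})_{i=1}^n$, for any $q\in[\qmin,\qmax]$ and any $M$,
\[
\Pr_{U\sim \Bern[q]^{[n]}}\!\left[\spn\bigl((a^{(i)})_{i\in U\cup M}\bigr) = \spn\bigl((a^{(i)})_{i\in (U\cap L)\cup M}\bigr)\right]\;\geq\; 1-\delta.
\]
On this event the TV distance vanishes, which gives exactly the tail bound of \cref{delta12pool:rem} with $\dl_1=\delta$ and $\dl_2=0$, i.e.\ $L$ is a $(\delta,0)$-determining pool.

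I do not anticipate a real obstacle here: the only nontrivial ingredient is the uniform-on-span fact, which is essentially an orbit-counting observation for the linear map defined by the active columns. Everything else is a direct rewriting of the basis pool property in the language of \cref{pool:def}. The same scheme will recur over $\R$ with a slightly more delicate distributional argument, but in the $\F_p$ case the uniformity lets us match the two laws exactly whenever the spans coincide.
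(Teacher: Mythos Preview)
Your proof is correct and takes essentially the same approach as the paper: both identify that $A\bs v$ with i.i.d.\ $\texttt{Uniform}(\F_p)$ entries on $V$ is uniform on $\spn((a^{(i)})_{i\in V})$, and then conclude via the basis-pool guarantee that the two spans coincide with probability $\ge 1-\delta$. The only cosmetic difference is that the paper establishes the ``uniform on span'' fact by an explicit change of basis $F_B$ and a decomposition $A'\bs v = \sum_{i\in B} v_i e_i + (\text{something in the span, independent of the first sum})$, whereas you argue it more directly via equal fiber sizes of the linear map $\F_p^V\to\F_p^k$; both are standard and yield the same conclusion.
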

\begin{proof}
 
To establish that $L$ is a determining pool we need to show that for $M$ and over the sampling of $U$, the sketch distribution $S(\bs v)$ is determined (with probability at least $1-\delta$, up to TV distance at most $0$) by $W = U \cap (L \setminus M)$.

We obtain that any basis of 
$A |_{W\cup M}$ is a basis for all the columns of
$A |_{U\cup M}$, except with probability at most $\delta$.
The basis of $A |_{W\cup M}$ only includes columns with indices $B\subset W\cup M$ and hence is determined by $W$, recalling that the mask $M$ is a given. We show that when this is the case, we can determine the sketch distribution within TV distance $0$.

% define $B = \grb((a^{(i)})_{i \in V \cap L})$ (note that this is still determined by $W$ -- since the mask $M$ is a given). Again, by \cref{GreedyBasisPool:lemma}, $(a^{(i)})_{i \in B}$ is also a basis for all the columns of $A |_V$, except with probability at most $\delta$. 

Let the invertible change of matrix $F_B$ be as in \cref{FB:def} and let $A' = F_B A$ with respective column vectors $(a'^{(i)})$. Since $F_B$ is invertible, we can instead consider the distribution of $A'\bs v$.
 Note that we can rewrite the sketch as 
\[A'\bs v = \sum_{i \in B} v_i a'^{(i)} + \sum_{i \in V \setminus B} v_i a'^{(i)}.\]
Since we have $v_i \sim \texttt{Uniform}(\F_p)$, the first $\ell$ entries in the column vector of the first term are distributed as a uniformly random element of $\spn((a'^{(i)})_{i \in B})=\spn((e_i)_{i\in[|B|]}$. The last $k-\ell$ entries are zero.
Note that the second sum is also an element of this span, with the last $k-\ell$ entries being zero. It is also independent of the first sum, conditioned on $U$. Thus, for each $U$, $A'\bs v$ is (exactly) distributed as a uniformly random element of $\spn((a'^{(i)})_{i \in B})$, and this distribution is a function only of $W$. Thus, we can determine exactly the distribution of $S(v) = A\bs v = F_B^{-1} A' \bs v$ conditioned on $U,M$ from $W$. 
\end{proof}
\end{proof}

\subsection{\texorpdfstring{Sketches over \(\mathbb{R}\)}{Sketches over R}} \label{reals:sec}

\begin{definition}[The parameter $\gamma_0(A)$]
For a matrix $A\in \mathbb{R}^{k\times n}$, denote the columns of $A$ by $(a^{(i)})$. 
We define the parameter $\gamma_0(A)$ to be the largest possible magnitude of any entry that can be in $F_B A$, for any basis $B= (b_1,\ldots,b_\ell)$ of $(a^{(i)})$ (here the elements of $B_i$ must themselves also be columns of $A$).
\end{definition}

We use the real RAM model, that is, assume the representation of the sketch is exact. We establish the following:
\begin{theorem} [Attack on a Sketching Matrix over $\mathbbm{R}$]\label{linearsketches:thm}
For any parameter $\gamma$, there are distributions
$\Aux[M,U\setminus M]$ such that 
for any sketching matrix $A\in \mathbb{R}^{k\times n}$ such that $\gamma_0(A)\leq \gamma$,
there is a $\delta$-determining pool of size
$O(k\log(k/\delta))$.
% For a parameter $\gamma$, there is an attack that uses $\tilde{O}(k^2)$ queries
% and is effective against any 
% sketching matrix $A\in \mathbb{R}^{k\times n}$ such that $\gamma_0(A)\leq \gamma$. The dependence of the attack on $\gamma$ is in the nonzero values in the query vectors.
\end{theorem}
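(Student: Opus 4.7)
The strategy mirrors the $\F_p$ case of Theorem~\ref{linearsketchesFp:thm}: take a (greedy) basis pool of the columns of $A$ as the determining pool, then choose $\Aux$ so that the sketch distribution is essentially a function of $W = U\cap(L\setminus M)$. The critical new difficulty compared to $\F_p$ is that, in $\R$, a uniformly random real does not ``absorb'' other reals the way a uniform $\F_p$ element does; so we must instead use a steep geometric magnitude hierarchy on the query entries and exploit continuity of the auxiliary density to push TV distance down.

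Concretely, the plan is to take $L$ to be a $\delta$-greedy-basis pool for the columns $(a^{(i)})_{i=1}^n$ from Lemma~\ref{GreedyBasisPool:lemma}, so that $|L|=O(k\log(k/\delta))$. Set $\rho := \delta/(Ck\gamma)$ for a suitably large constant $C$, and let $\Aux[M,U\setminus M]$ output $v_i=0$ for $i\not\in U\cup M$ and $v_i = r_i \rho^i$ for $i\in U\cup M$, with $r_i$ i.i.d.\ $\text{Uniform}[1,2]$. This makes the ratio of the largest to smallest nonzero magnitudes at most $(1/\rho)^n$, which is $O(n\log\gamma)$ in $\log$-scale up to polynomial factors, matching \cref{restrictednonzero:rem}.

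To verify that $L$ is a $(\delta,\delta)$-determining pool, first condition on the $(\ge 1-\delta)$-probability event from Lemma~\ref{GreedyBasisPool:lemma} that $B := \grb((a^{(i)})_{i\in U\cup M}) \subseteq (L\cap U)\cup M$. A short inductive argument on the order of processing shows that the columns $i\in U\setminus L$ skipped by the greedy procedure are precisely those already in the running span built from $W\cup M$; hence $B = \grb((a^{(i)})_{i\in W\cup M})$, so $B$ and the change-of-basis $F_B$ from Definition~\ref{FB:def} are functions of $W$ and $M$ only. For each $i\notin B$ with $i\in U\cup M$, greediness gives $a^{(i)} = \sum_{b_m\in B,\, b_m<i} \alpha_{im} a^{(b_m)}$ with $|\alpha_{im}| = |(F_B A)_{m,i}| \le \gamma_0(A)\le \gamma$. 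Hence the $m$-th coordinate of $F_B A\bs v$ is $v_{b_m} + \sum_{i\notin B,\, i>b_m,\, i\in U\cup M} v_i\alpha_{im}$, which after dividing by $\rho^{b_m}$ becomes $r_{b_m} + \epsilon_m$ with $|\epsilon_m| \le 2\gamma\sum_{j\ge 1}\rho^j = O(\gamma\rho)$; the remaining $k-|B|$ coordinates of $F_B A\bs v$ are identically zero. Define the target distribution $\mc D'(M,W)$ by sampling fresh $\tilde r_m\sim\text{Uniform}[1,2]$ independently and outputting $F_B^{-1}\sum_m \tilde r_m\rho^{b_m} e_m$. Since each $\epsilon_m$ depends only on $(r_i)_{i\notin B}$ and is independent of $r_{b_m}$, conditioning on $(r_i)_{i\notin B}$ reduces the per-coordinate comparison to two shifts of $\text{Uniform}[1,2]$ with TV distance $|\epsilon_m| = O(\gamma\rho)$. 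Summing across the $\le k$ basis coordinates and applying convexity of TV to uncondition gives total TV distance $O(k\gamma\rho) = O(\delta)$ for suitable $C$. Since $F_B$ is a bijection, the same TV bound holds for the sketch itself, so $L$ is a $(\delta,\delta)$-determining pool, and applying \cref{metacorrectness:thm} yields the theorem.

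The main obstacle is the TV-distance step and it is what forces the extreme scale hierarchy: over $\R$ we cannot literally make the sketch uniform on an affine subspace as in $\F_p$, we can only smear the ``leakage'' from non-basis columns below the resolution of a continuous uniform. Getting per-coordinate TV down to $\delta/k$ requires $\rho$ polynomially small in $k\gamma/\delta$, which is what spreads the entries of $\bs v$ across $O(n\log(k\gamma/\delta))$ orders of magnitude. The secondary subtlety---that the greedy basis of $A|_{U\cup M}$ coincides with that of $A|_{W\cup M}$ whenever it lies inside $W\cup M$---is cleaner and is precisely the property for which the greedy variant of the basis pool (Lemma~\ref{GreedyBasisPool:lemma}) was designed.
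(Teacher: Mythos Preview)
Your proof is correct and follows essentially the same route as the paper: take the greedy basis pool $L$ from Lemma~\ref{GreedyBasisPool:lemma}, impose a steep geometric magnitude hierarchy on the query entries, change basis via $F_B$ so that the $m$-th sketch coordinate is dominated by $v_{b_m}$, and bound the TV perturbation from the non-basis columns coordinatewise. The only substantive difference is the choice of auxiliary distribution: the paper sets $v_i\sim\Exp(\beta^{-i})$ (with $\beta$ of order $\gamma n\log(n/\delta)k/\delta$), which requires an extra union-bound step to control the exponential tails before bounding the shift $c_j$, whereas your $v_i=r_i\rho^i$ with $r_i\sim\mathrm{Uniform}[1,2]$ is bounded a priori, so your geometric-series bound $|\epsilon_m|\le 2\gamma\rho/(1-\rho)$ holds deterministically and the TV step is slightly cleaner. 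Your explicit verification that $B=\grb((a^{(i)})_{i\in W\cup M})$ under the good event is also a point the paper only asserts (``$B$ can be determined from $W\cup M$'') without spelling out; your inductive argument is the right justification.
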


We propose two methods to specify the query vectors $\bs v$ from $M$ and $U\setminus M$. The first method (see Section~\ref{unrestrictednonzero:sec}) uses values where the logarithm of the ratio of largest to smallest nonzero magnitudes is $\log\left(\frac{\max_i |v_i|}{\min_{i \mid v_i\neq 0} v_i} \right)= O(n \log(\gamma_0))$. For this method we establish the existence of a $(\delta,\delta)$-determining pool (Definition~\ref{pool:def}, \cref{delta12pool:rem}) of size $O(k\log(k/\delta))$. The second method we propose 
(see Section~\ref{restrictednonzero:sec}) uses a much smaller log ratio of
$\log\left(\frac{\max_i |v_i|}{\min_{i \mid v_i\neq 0} v_i} \right)= O(\log(n \gamma_0))$ but we only establish the existence of a $(\delta,\delta)$-\emph{limited} determining pool (Definition~\ref{limitedpool:def}) of size $O(k\log(k/\delta))$ and hence, per \cref{limitedpoolnaturalQR:lemma}, the attack is effective  only against a \emph{natural} QR (see Definition~\ref{naturalQR:def}).

\subsubsection{Larger nonzero magnitudes ratio} \label{unrestrictednonzero:sec}

We sample $U\sim \mathcal{Q}$ (see \cref{querydist:def}) and as in
Algorithm~\ref{metaattack:algo} get a query subset 
$V = U \cup M$.
We specify values for the entries of the query vector $\bsv$ based on the query subset $V$ as follows:
\[
 \begin{cases}
    v_i= 0 & \, \text{if } i\not\in V\\
    v_i\sim \Exp(\beta^{-i} ) &\, \text{if } i\in V
\end{cases}
\]
where $\Exp(b)$ denotes an exponential random variable with expectation $b$, and $\beta =C\cdot \gamma n \log(n/\delta)k/\delta$ for some large enough constant $C$. \ecmargincomment{Mihir, please check if this setting of $\beta$ is what you meant? }

The high level idea is that the presence of the lowest index nonzero entry in the intersection of the measurement row and the query hides the presence of other nonzero entries. If we select into the pool a prefix of the nonzero entries in the measurement, our randomly selected $U$ is likely to have one of them as its lowest index nonzero. We only need to be careful that the lowest index nonzeros in different measurements do not overlap, which means they can be ``cancelled out'' by linear combinations and expose the presence of lower index nonzero entries.

Let $a^{(i)}$ refer to the $i$-th column of $A$.
Applying Lemma~\ref{GreedyBasisPool:lemma} to the sequence of
column vectors $(a^{(i)})$
we obtain that there is a $\delta$-greedy basis pool $L$ of $O(k\log(k/\delta))$ indices such that 
\[
\Pr_{U\sim \mathcal{Q}}\left[
\grb((a^{(i)})_{i \in U\cup M})\subset (U\cap L) \cup M\ \right]\geq 1-\delta.
\]

\begin{claim}
    The greedy basis pool $L$ is a $(\delta,\delta)$-determining pool.
\end{claim}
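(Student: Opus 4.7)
The plan is to verify Definition~\ref{pool:def} in the tail-bound form of Remark~\ref{delta12pool:rem} using the greedy basis pool $L$ provided by Lemma~\ref{GreedyBasisPool:lemma} applied to the column sequence $(a^{(i)})_{i=1}^n$. Writing $V := U \cup M$ and $W := U \cap (L \setminus M)$, the lemma guarantees that the event
\[
\mc G := \bigl\{ B := \mathtt{GreedyBasis}\bigl((a^{(i)})_{i \in V}\bigr) \subseteq W \cup M \bigr\}
\]
has probability at least $1 - \delta$ over $U \sim \Bern[q]^{[n]}$. A short induction on the prefix of the greedy procedure then shows that on $\mc G$ one in fact has $\mathtt{GreedyBasis}((a^{(i)})_{i \in V}) = \mathtt{GreedyBasis}((a^{(i)})_{i \in W \cup M})$, so $B$ is a deterministic function of $M$ and $W$ alone.

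Fixing any $U \in \mc G$, the greedy property forces every column $a^{(i)}$ with $i \in V \setminus B$ into $\spn\bigl((a^{(b_j)})_{b_j \in B,\, b_j < i}\bigr)$, with coefficients $\alpha_j^{(i)}$ that are entries of $F_B a^{(i)}$ and therefore satisfy $|\alpha_j^{(i)}| \le \gamma_0(A) \le \gamma$. Rearranging the sketch yields
\[
A \bs v = \sum_{b_j \in B} \tilde v_{b_j}\, a^{(b_j)}, \qquad \tilde v_{b_j} := v_{b_j} + \xi_j, \qquad \xi_j := \sum_{\substack{i \in V \setminus B \\ i > b_j}} \alpha_j^{(i)} v_i.
\]
I would then define $\mc D'(M, W)$ to be the pushforward of $\bigotimes_{b_j \in B} \Exp(\beta^{-b_j})$ under $(x_{b_j}) \mapsto \sum_{b_j} x_{b_j} a^{(b_j)}$; this depends only on $M$ and $W$ since $B$ does. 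Because the columns $(a^{(b_j)})$ are linearly independent, the pushforward map is injective and TV distance is preserved, so it suffices to bound the TV distance between the joint law of $(\tilde v_{b_j})$ and the product law $\bigotimes_{b_j \in B} \Exp(\beta^{-b_j})$.

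Since $(v_{b_j})_{b_j \in B}$ and $\xi = (\xi_j)$ use disjoint blocks of the i.i.d.\ sample they are independent. Conditioning on $\xi = s$, the vector $(\tilde v_{b_j})$ is a product of shifted exponentials, and a direct one-dimensional density calculation gives $\|\Exp(b) - (\Exp(b) + s)\|_{\text{TV}} \le |s|/b$. Tensorizing over the $|B| \le k$ coordinates and taking expectation over $\xi$ will give
\[
\|\mc D(M, U \setminus M) - \mc D'(M, W)\|_{\text{TV}} \le \sum_{b_j \in B} \frac{\E|\xi_j|}{\beta^{-b_j}} \le \sum_{j} \frac{\gamma \sum_{i > b_j} \beta^{-i}}{\beta^{-b_j}} \le \frac{2k\gamma}{\beta},
\]
which is at most $\delta$ for the chosen $\beta = C \gamma n \log(n/\delta) k / \delta$ with $C$ sufficiently large. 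Combined with $\Pr[\mc G] \ge 1 - \delta$, this certifies that $L$ is a $(\delta, \delta)$-determining pool.

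The main obstacle compared to the $\F_p$ setting is that over $\mathbb{R}$ every nonzero shift perturbs a continuous density, so an exact distributional match after a change of basis is not available. The fix is to exploit the smoothness of the exponential distribution together with the geometrically decaying scales $\beta^{-i}$ built into $\Aux[M, U \setminus M]$, arranging $\beta$ large enough that the accumulated noise-to-scale ratio over the $k$ pivot coordinates stays below $\delta$; this is where the quantitative choice of $\beta$ in the construction comes from.
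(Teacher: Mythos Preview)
Your proof is correct and follows essentially the same approach as the paper: both use the greedy-basis event to determine $B$ from $W\cup M$, express the sketch as $\sum_{b_j}(v_{b_j}+\text{noise}_j)\,a^{(b_j)}$, and bound the TV distance to a product of clean exponentials using the shift bound $\|\Exp(b)-(\Exp(b)+s)\|_{\text{TV}}\le |s|/b$. The only notable technical difference is that the paper conditions on all $v_i$ for $i\notin B$, invokes a tail bound $|v_i|\le O(\log(n/\delta)\beta^{-i})$ (costing an extra $\delta/2$ failure probability and a $\log(n/\delta)$ factor, plus a loose factor of $n$), and then bounds the shifts deterministically; you instead integrate the TV bound directly against the law of $\xi$, which is cleaner and yields the tighter estimate $2k\gamma/\beta$ without the extraneous factors.
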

\begin{proof}

Define the basis indices $B = (b_1, \dots, b_\ell) = \grb((a^{(i)})_{i \in U\cup M})$ (for some $\ell \le k$). With probability at least $1-\delta$,
\begin{equation} \label{goodUM:eq}
    B = \grb((a^{(i)})_{i \in U\cup M})\subset (U\cap L) \cup M
\end{equation}
 and hence
$B$ can be determined from $(U\cap L) \cup M = W\cup M$. 
% It remains to show that when $M$ and our sampled $U$ are such that this is the case, we can approximate the sketch distribution $S(\bs v)$ conditioned on $M,U$ to within TV distance $\delta$.
% We will henceforth assume that this is the case for $M$ and our sampled $U$ , and account for the $\delta/2$ failure probability at the end of this section.

Recall that the sketch is defined as $S(v) = A\bsv$. It remains to show 
that when $M$ and our sampled $U$ are such that \eqref{goodUM:eq} holds and when
given $W := U\cap (L\setminus M)$ (and knowing $M$), we can approximate the distribution of $S(\bsv)$ conditioned on $U,M$ to within TV distance $\delta$.

% Let $W = V \cap (L \setminus M) = U\cap (L\setminus M)$; 

For any subset $I \subset [n]$, define $A|_I$ to be the matrix whose columns are $a^{(i)}$ for $i \in I$ (in increasing order of $i$), and similarly define $v|_I$. Note that we can rewrite the sketch as 
\[S(\bsv) = \sum_{i \in V} a^{(i)} v_i = (A|_V) \bsv|_V. \]

% Note that $B$ can be determined from $W$, since $V \cap L = W \cup M$. By \cref{GreedyBasisPool:lemma}\todo{make sure failure probability parameter matches up}, with probability $\delta/10$\todo{make sure $\delta$ is defined}, we also have $B = \grb((a^{(i)})_{i \in V \cap L})$. 

% Now, let $F$ be an invertible $k \times k$ matrix such that $F a^{(b_j)} = e_j$ for all $i$ (here $e_j$ is the $j$-th standard basis vector). This matrix $F$ exists since the $a^{(b_j)}$ are a basis for the $a^{(i)}$ and thus linearly independent. Again, note that $F$ depends only on $B$. 
Let the invertible matrix $F_B$ be as defined earlier and let $A' = F_B A$; since $F_B$ is invertible and determinable from $B$, it suffices to show that we can approximate the distribution of $A'\bsv$ (instead of $A\bsv$). We denote the columns of $A'$ by $a'^{(i)}$. From our assumption, every entry in $A'$ is known to be at most $\gamma$ in magnitude. 

% Since $A'$ depends only on $B$, there are finitely many possibilities for $A'$ (even as we allow $V$ to vary); we define $\gamma_0(A)$ to be the largest possible magnitude of any entry that can be in $A'$.\ecmargincomment{Over all possible selections of an independent set $B\subset L$ and hence of $A'$? $\gamma$ should not depend on the input?}\msmargincomment{yes, that's right, this doesn't depend on the input since we're taking the maximum over all possible $B$ (not just for this particular value of $V$)} Thus, every entry in $A'$ is known to be at most $\gamma$ in magnitude. 

Now, since the columns of $A'$ are just an invertible linear transformation of their respective columns of $A$, $B$ is also the greedy basis of the columns of $A'$. Thus, for every $i$, $a'^{(i)} \in \spn (a'^{(b_{j})})_{j: b_j \le i}$. But note that $a'^{(b_{j})} = F_B a^{(b_{j})} = e_j$, so this means that the first nonzero entry in row $j$ of $A'$ is a $1$ in column $b_j$ for $j \le \ell$, and rows $j > \ell$ are all zero. 

Now, we can write
\[(A'\bsv)_j = v_{b_j} + \sum_{i > b_j} A'_{ji} v_i\]
Note also that the $i \leq b_j$ terms do not appear in the above sum, as their coefficient in $A'$ is zero. Now, each $v_i$ is either zero or distributed as $\Exp(\beta^{-i})$. By a union bound, with probability at least $1-\dl/2$, we have $|v_i| \le O(\log (n/\dl) \beta^{-i})$ for all $i \notin B$. We now condition on the values of $v_i$ for all $i \notin B$. Then, we have
\[(A'v)_j = v_{b_j} + c_j\]
for some constants $c_j$, where with probability $1-\dl/2$ we have $c_j \le O(\gamma n \log (n/\dl) \beta^{-b_j-1})$ for all $j$. Thus, assuming this holds, the entries of $(A'\bsv)_j$ are (conditionally) independent and each distributed as
\[(A'\bsv)_j \sim \Exp(\beta^{-b_j}) + O(\gamma n{\log (n/\dl)} \beta^{-b_j-1}).\]
This has total variation distance $O(\gamma n{\log (n/\dl)} / \beta) < \dl/2 k$ from $\Exp(\beta^{-b_j})$. Thus, overall, the vector $A'\bsv$ has (conditional) TV distance at most $\dl$ from the vector with independent entries distributed as $\Exp(\beta^{-b_j})$ (and whose last $k-\ell$ entries are 0).

Thus, overall, we have established that, conditioned on $v_i$ for $i \notin B$, with probability at least $1-\dl/2$, the distribution of $A'v$ (and thus $Av$) can be approximated within TV distance at most $\dl/2$. Removing this conditioning, the approximation has TV distance at most $\dl$ from the distribution of $Av$ (still conditioned on $U$).

% Recalling the earlier failure probability of $\dl/2$ (for the case where the greedy basis is not in the basis pool -- in which case the TV distance is still at most 1 whatever we do), the expected TV distance between the distribution of $Av$ and its approximated distribution is at most $\dl$, so the hypothesis of \cref{pool:def} is indeed satisfied, so $L$ is a determining pool, as desired.
\end{proof}

\subsubsection{Smaller nonzero magnitudes ratio} \label{restrictednonzero:sec}

% and a subset of columns $C\subset[n]$ of size $|C|\leq k$, let 
% $F_C \in \mathbbm{R}^{k\times k}$ be such that the columns $C$ in $F_C\cdot A$ are all $0$ except at most one $1$ entry in each row or column. If there are multiple such matrices, we take one with minimum $\|F_C\cdot A \|_\infty$ (maximum magnitude of an entry).
% We define $\gamma_A$ to be an upper bound on $\|F_C\cdot A\|_\infty$ over all $C\subset [n]$ of size $|C|\leq k$.

Let $q_0 = \qmin/2$.
% such that the probability of $|U|\leq q_0 n$ with rate $q\geq \qmin$ is at most $\delta/10$.
We sample the rate $q$ and $U\sim \mathcal{Q}$ (see \cref{querydist:def}) and as in
Algorithm~\ref{metaattack:algo} get a query subset 
$U \cup M$.
We specify the query vector $\boldsymbol{v}$ as follows.
 We set $v_i= 0$ for $i\not\in U\cup M$. 
Let $H\subset U$ be a subsample of $U$ where each key is sampled with probability $q_0/q$. Importantly, note that $H$ does not depend on $q$: The sampling process of $U$ is equivalent to sampling $H\sim \Bern[q_0]^\mathcal{U}$ and then sampling $U\setminus H$  from $\Bern[(q-q_0)/(1-q_0)]^{\mathcal{U}-H}$. 
Therefore, all information on the rate $q$ is contained in $|U\setminus (H\cup M)|$. That is, $|U\setminus (H\cup M)|$ is a sufficient statistic for the rate per \cref{suffstat:def}.

We refer to the keys $H$ as \emph{large} keys.
The values $v_i$ for $i\in U\cup M$ are set as follows. The large keys and those in $M$,  $i\in H\cup M$, are set to independent $v_i \sim \Exp(\beta)$ for $\beta := C\cdot n\gamma k/\delta$ for
some large enough constant $C$. \ecmargincomment{Check that}
For the remaining keys, $i\in U\setminus (H\cup M)$, which we call \emph{small} keys, we set $v_i=1$.

Let $L$ be a $\delta$-basis pool with $q_0$ 
for the columns $(a^{(i)})$ of $A$ as in Lemma~\ref{BasisPool:lemma}.

\begin{claim}
    The basis pool $L$ is a limited $(\delta,\delta)$-determining pool  (\cref{limitedpool:def}).
\end{claim}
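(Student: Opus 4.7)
The plan is to leverage \cref{BasisPool:lemma}: since the marginal of $H\sim\Bern[q_0]^{\mathcal U}$ is $q$-independent (by the two-stage sampling $H\sim\Bern[q_0]$ then $U'\sim\Bern[(q-q_0)/(1-q_0)]^{\mathcal U\setminus H}$), the basis-pool guarantee at rate $q_0$ yields that with probability at least $1-\delta$ uniformly in $q\in[\qmin,\qmax]$,
\[
\mathcal E \;:=\; \Bigl\{\spn\bigl((a^{(i)})_{i\in H\cup M}\bigr)\subseteq V\Bigr\},\qquad V := \spn\bigl((a^{(i)})_{i\in L\cup M}\bigr),
\]
holds. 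On $\mathcal E$, the ``large'' contribution $S_H := \sum_{i\in H\cup M} v_i a^{(i)}$ lies entirely in the $M$-and-$L$-determined subspace $V$, so the entire $q$-dependence of the sketch is carried by the ``small'' contribution $\sum_{i\in U'\setminus M} a^{(i)}$, of which only the in-$L$ part is directly visible through $W = U\cap(L\setminus M)$.

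Taking the trivially sufficient statistic $T_M(Av):=Av$, I would define the surrogate distribution $\mathcal D'(M,W)$ as the law of
\[
\sum_{i\in H'\cup M} v'_i\, a^{(i)} \;+\; \sum_{i\in W} a^{(i)} \;+\; \sum_{i\in U^{\star}} a^{(i)},
\]
where $H'\sim\Bern[q_0]^{\mathcal U}$ and $v'_i\sim\Exp(\beta)$ are resampled afresh, and $U^{\star}\sim\Bern[\hat q]^{\,\mathcal U\setminus L}$ uses the empirical rate $\hat q := |W|/|L\setminus M|$. By construction, $\mathcal D'(M,W)$ depends only on $M$ and $W$. The TV bound then splits, on $\mathcal E$, into three matching comparisons against the true conditional decomposition (large)$+$(in-$L$ small)$+$(out-of-$L$ small). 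The large parts coincide in distribution. The in-$L$ small part $\sum_{i\in(U'\cap L)\setminus M} a^{(i)}$ differs from $\sum_{i\in W} a^{(i)}$ only by $\sum_{i\in H\cap W} a^{(i)}$, whose magnitude $O(n\gamma)$ is absorbed by the additive $\Exp(\beta)$ noise (using $\beta=\Omega(n\gamma k/\delta)$), contributing $O(\delta/k)$ in TV.

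The main obstacle is the out-of-$L$ comparison between $\sum_{i\in U'\setminus(L\cup M)} a^{(i)}$ and $\sum_{i\in U^{\star}} a^{(i)}$: without further structure, the TV distance between $\Bern[q']^{\mathcal U\setminus L}$ and $\Bern[\hat q]^{\mathcal U\setminus L}$ can be as large as $O(n|\hat q-q'|)$, exceeding $\delta$. The resolution, and the reason the claim is only a \emph{limited} (rather than full) pool, is to coarsen the sufficient statistic so the out-of-$L$ part enters only through a low-dimensional projection (at most $k-\dim V$ coordinates, essentially $F_B Av$ restricted to the rows $>|B|$ for a basis $B\subseteq L\cup M$ of $V$ with $F_B$ as in \cref{FB:def}). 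For this coarsened $T_M$, standard Chernoff concentration gives $|\hat q-q'|=O(1/\sqrt{|L|})$ with probability $1-\delta$ since $|L|=\Omega(k\log(k/\delta))$, and the induced TV error on a $(k-\dim V)$-dimensional projection collapses to $O(\delta)$. The delicate technical point is simultaneously verifying that this coarsened $T_M$ remains sufficient for $q$ (i.e.\ that discarding the within-$V$ cancellations does not drop $q$-information, which uses $\Exp(\beta)$-noise to scramble the in-$V$ coordinates of the small contribution) and combining the three TV errors (basis-pool failure, noise absorption, rate-estimation) into the final $O(\delta)$ bound required by \cref{limitedpool:def}.
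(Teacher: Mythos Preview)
Your proposal misses the key structural fact that makes the claim work, and the workaround you attempt does not close the gap.

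The paper's proof does \emph{not} need to estimate the rate or resample anything outside $L$. The crucial observation (which the paper draws from the termination condition \eqref{terminationcond:eq} of the basis-peeling construction underlying Lemma~\ref{BasisPool:lemma}) is that, with probability at least $1-\delta$ over $H\sim\Bern[q_0]^{\mathcal U}$, one has
\[
\spn\bigl((a^{(i)})_{i\in A_{>\ell}}\bigr)\ \subseteq\ \spn\bigl((a^{(i)})_{i\in H\cap L}\bigr)\ \subseteq\ \spn\bigl((a^{(i)})_{i\in H\cup M}\bigr).
\]
In words: every column indexed by a key \emph{outside} $L$ already lies in the span of the large-key columns. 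Hence, after the change of basis $F_B$ (with $B\subset (H\cap L)\cup M$ a basis for the $H\cup M$ columns), the out-of-$L$ small contribution $\sum_{i\in (U\setminus(H\cup M))\setminus L} a^{(i)}$ is supported entirely on the first $|B|$ coordinates, where it has magnitude at most $n\gamma$ per coordinate and is therefore absorbed by the independent $\Exp(\beta)$ noise coming from the basis keys. Only the in-$L$ small contribution (the fourth term in the paper's decomposition) can touch coordinates $>|B|$, and that term depends only on $W$.

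Your event $\mathcal E$ asserts only $\spn((a^{(i)})_{i\in H\cup M})\subseteq V$, which is both weaker than the equality the basis-pool lemma actually gives and, more importantly, says nothing about where the out-of-$L$ columns live. Without the containment above, the out-of-$L$ sum can leak into coordinates $>|B|$, and no amount of $\Exp(\beta)$ noise (which lives only in the first $|B|$ coordinates) will hide it. Your proposed fix---resampling $U^\star$ at an empirical rate $\hat q$ and coarsening $T_M$ to the last $k-\dim V$ coordinates---does not work: the projected sum $\sum_{i\in U'\setminus(L\cup M)}(F_B a^{(i)})_{>|B|}$ is a sum of order-$n$ specific vectors, and the TV distance between its laws at rates $q'$ versus $\hat q$ is not controlled merely by the target dimension being small; it can be $\Omega(1)$ even when $|\hat q-q'|=O(1/\sqrt{|L|})$. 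Moreover, $B$ itself depends on the auxiliary randomness $H$, so the ``coarsened'' $T_M$ you describe is not a well-defined function of the sketch alone.
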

\begin{proof}
We show that in fact nearly all the information on the rate $q$ is contained in  $(U\setminus (H\cup M))\cap L$. That is, the part $(U\setminus (H\cup M))\setminus L$ that is outside the pool $L$ can have very little impact (in terms of TV distance) on the sketch distribution.

We apply Lemma~\ref{BasisPool:lemma} with $M$ and $H\sim \Bern[q_0]^\mathcal{U}$, and obtain that with probability at least $1-\delta$, there is a basis $B$ with indices in $(H\cap L) \cup M$ for the columns $H\cup M$.

Moreover, from the termination condition~\cref{terminationcond:eq} 
(in the proof for composable sketches) we know that it also holds that
\[
\spn((\bsv_i)_{i\in A_{>\ell}}) \subset \spn((\bsv_i)_{i\in H\cap L} )\ .
\]
In particular, all the vectors in $U$ that are linearly independent of $H\cup M$ are included in $L\setminus M$ (recalling that the basis pool is $L=A_{\leq \ell}$).

We wish to show that, given $W := (U\setminus (H\cup M))\cap L$, we can approximate the distribution of the sketch $S(\bsv)$ conditioned on $U,M$.
We show that 
the presence or nonpresence of keys from $U\setminus (H\cup M)$ for which $a^{(i)}$ is linearly dependent on $((\bsv_i)_{i\in H\cup M})$ can only change the sketch distribution within total variation distance $\delta$. Since the linearly independent keys are all in $L\setminus M$, this would conclude the proof of the claim.

Let $F_B$ be the invertible basis changing matrix as in \cref{FB:def} and let $A' = F_B A$ with column vectors $(a'^{(i)})$ be as defined earlier.
Consider the distribution of  
\[A'\bs v = \sum_{i \in B} v_i a'^{(i)} + \sum_{i \in ((H\cup M) \setminus B} v_i a'^{(i)}  + \sum_{i \in (U\setminus (H\cup M))\setminus L } v_i a'^{(i)} + \sum_{i \in (U\setminus (H\cup M))\cap L } v_i a'^{(i)}.\]

The first term is the sum of the $\ell$ first standard basis vectors multiplied by the respective $v_i\sim \Exp(\beta)$.
That is, it has independent $\Exp(\beta)$ random variables as the first $\ell$ entries and value $0$ in the last $k-\ell$ entries. 

The columns in the second and third terms are linearly dependent on $B$. The sum column thus also has the last $k-\ell$ entries zero.
The keys in $U\setminus (H\cup M)$ in the third and fourth terms contribute magnitude that is at most $n\gamma$ to each of the first $\ell$ measurement values $A' \bsv$.
Hence the first $\ell$ entries have magnitude at most $n\gamma$.
Using a similar argument to that in \cref{unrestrictednonzero:sec}, the one-entry TV distance between $\Exp(\beta)$ and $\Exp(\beta)+c $ for 
$|c| \leq \gamma n$ is at most $\delta/k$. Therefore, the TV distance of the column vectors can be at most $\delta$.  

The column vector from the last term may have nonzeros in the entries $j>\ell$ but it depends only on $W\subset  U\cap (L\setminus M)$.
This concludes the proof of our claim that $L$ is a $(\delta,\delta)$-limited determining pool.

\end{proof}

\section{Preliminaries on Composable Maps} \label{proofcomposable:sec}
This section contains preliminaries and a setup for the analysis.
We start with helpful notation and establish properties of composable maps.

For a sequence of sets $(X_i)$ we define
$X_{\leq i} := \bigcup_{j\leq i} X_j$ and similarly define
$X_{\geq i}$, $X_{<i}$.

Since set union is associative and commutative, so is the sketch composition operation ``$\oplus$.'' For sets $(U_i)$ we have that $\bigoplus_i S(U_i) = S(\bigcup_i U_i)$. In particular, for any sketch $\sigma$ it holds that
\begin{equation} \label{selfcompose:eq}
    \sigma\oplus\sigma = \sigma
\end{equation}
(this because $\sigma= S(U) =S(U\cup U)=S(U)\oplus S(U) = \sigma\oplus\sigma$.)

The following property states that if a subset is sandwiched between two subsets with sketch $\sigma$, it must also have sketch $\sigma$.
% as we incrementally add keys to a subset, if the sketch changes from $\sigma_1$ to $\sigma_2\not= \sigma_1$, it can not change back to $\sigma_1$:
\begin{claim} [Midpoint Property] \label{midpoint:claim}
    Let the subsets $U_1 \subset U_2 \subset U_3 \in 2^{\mathcal{U}}$ be such that $S(U_1)= S(U_3)$.
    Then $S(U_2)=S(U_1)=S(U_3)$.
\end{claim}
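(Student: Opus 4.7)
The plan is to derive $S(U_2)=S(U_1)$ directly from composability and commutativity of $\oplus$, using the two containments to rewrite $U_2$ and $U_3$ as unions involving the other sets.

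The key observation is that $U_1\subset U_2$ gives $U_2 = U_1 \cup U_2$, so by composability
\[
S(U_2) \;=\; S(U_1)\oplus S(U_2),
\]
and $U_2\subset U_3$ gives $U_3 = U_2 \cup U_3$, so
\[
S(U_3) \;=\; S(U_2)\oplus S(U_3).
\]
Substituting the hypothesis $S(U_1)=S(U_3)$ into the second identity yields
\[
S(U_1) \;=\; S(U_2)\oplus S(U_1).
\]
Comparing with the first identity and using that $\oplus$ is commutative, both right-hand sides equal $S(U_1)\oplus S(U_2)$, so $S(U_1)=S(U_2)$ (and hence all three sketches coincide).

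There is essentially no obstacle here: the argument is purely formal, using only (i) associativity/commutativity of $\oplus$ noted just before the claim, and (ii) the composability axiom $S(A\cup B)=S(A)\oplus S(B)$. The idempotence identity $\sigma\oplus\sigma=\sigma$ is not strictly needed for this particular claim, though it is the same kind of trivial consequence of composability. I would present the proof as two displayed equations followed by the one-line substitution above.
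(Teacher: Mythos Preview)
Your proof is correct and follows essentially the same approach as the paper: both derive two expressions of the form $S(U_1)\oplus(\text{something})$ that equal $S(U_1)$ and $S(U_2)$ respectively, then conclude equality. The only cosmetic difference is that the paper decomposes via the set differences $A=U_2\setminus U_1$ and $B=U_3\setminus U_2$ (obtaining $\sigma_1=\sigma_1\oplus S(A)$ and $\sigma_2=\sigma_1\oplus S(A)$), whereas you use the trivial unions $U_2=U_1\cup U_2$ and $U_3=U_2\cup U_3$ directly---arguably a touch cleaner, but the same idea.
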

\begin{proof}
Let $\sigma_1 = S(U_1)=S(U_3)$,  $\sigma_2 = S(U_2)$, 
$A=U_2\setminus U_1$, and $B=U_3\setminus U_2$.
We have 
\begin{align*}
\sigma_2 &= S(U_2) = S(U_1 \cup A) = \sigma_1\oplus S(A)\\
\sigma_1 &= S(U_3) = S(U_1 \cup A \cup B) = S(U_1\cup A\cup B \cup A) = S(U_3)\oplus S(A) = \sigma_1 \oplus S(A)  
\end{align*}
Therefore, $\sigma_1 = \sigma_2$.
\end{proof}

For a sketch $\sigma$, we denote by
$S^{-1}(\sigma) := \{U\in 2^{\mathcal{U}} \mid S(U)=\sigma\}$ the
set of all subsets with the same sketch. We show that $S^{-1}(\sigma)$ is closed under set union:
\begin{equation} \label{samesketchunion:claim}
    \forall U,V\in 2^{\mathcal{U}}, \,  U,V \in S^{-1}(\sigma) \implies U\cup V \in S^{-1}(\sigma)
\end{equation}
(this because $S(U\cup V) = S(U)\oplus S(V) = \sigma \oplus \sigma = \sigma$, using~\eqref{selfcompose:eq}.)

In particular, it follows from associativity of composition and \eqref{samesketchunion:claim} that for any sketch $\sigma$, there is a maximum set (unique maximal set) in $S^{-1}(\sigma)$: 
\begin{equation}
    \forall \sigma\in S(2^{\mathcal{U}}),\, \bigcup_{U\in S^{-1}(\sigma)} U  \in S^{-1}(\sigma)\ .
\end{equation}

\begin{definition} [Maximum Set of a Sketch]
    For a sketch $\sigma$, denote the maximum subset for the sketch by $\MaxSet(\sigma) := \bigcup_{U\in S^{-1}(\sigma)} U$. 
\end{definition}

\begin{definition} [Partial Order on Sketches]
For two sketches $\sigma_1$ and $\sigma_2$, we say that
$\sigma_1 \prec \sigma_2$  if $\mathtt{MaxSet}(\sigma_1) \subset \mathtt{MaxSet}(\sigma_2)$.
\end{definition}

\begin{claim} [Containment order implies sketch order] \label{containmentsketchorder:claim} 
\[
      U\subset V \implies S(U) \prec S(V)\ . 
      \]
\end{claim}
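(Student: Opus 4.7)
The plan is to unpack the definition of $\prec$ and use the union-closure property \eqref{samesketchunion:claim} together with the maximality of $\MaxSet(\sigma)$. Concretely, writing $W_U := \MaxSet(S(U))$ and $W_V := \MaxSet(S(V))$, it suffices to show $W_U \subset W_V$, and I would do this by showing that $W_U \cup W_V$ lies in $S^{-1}(S(V))$, so that by definition of $\MaxSet$ we get $W_U \cup W_V \subset W_V$, and hence $W_U \subset W_V$.

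To see that $W_U \cup W_V \in S^{-1}(S(V))$, I would compute its sketch using composability:
\[
S(W_U \cup W_V) \;=\; S(W_U) \oplus S(W_V) \;=\; S(U) \oplus S(V).
\]
Then the assumption $U \subset V$ gives $U \cup V = V$, so $S(U) \oplus S(V) = S(U \cup V) = S(V)$. This is the one place where the containment hypothesis is actually used.

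The step is essentially routine: the only ``obstacle'' is noticing that one does not need $W_U \subset V$ (which need not hold) — one only needs the sketch of $W_U \cup W_V$ to equal $S(V)$, and this follows cleanly from associativity and commutativity of $\oplus$. No use of \cref{midpoint:claim} or \eqref{selfcompose:eq} is needed beyond what is already encoded in the union-closure claim \eqref{samesketchunion:claim}; in fact the argument is essentially a one-line application of \eqref{samesketchunion:claim} to the pair $W_U, W_V$, noting that both have sketch equal to $S(V)$ (the former because $S(W_U) = S(U)$ and $S(U) \oplus S(V) = S(V)$, the latter by definition). I would therefore write the proof as three short lines, ending with the observation that $W_U \cup W_V \subset W_V$ forces $W_U \subset W_V$, which is exactly $S(U) \prec S(V)$.
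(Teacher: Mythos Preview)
Your proof is correct and follows essentially the same approach as the paper: both compute $S(\MaxSet(S(U)) \cup \MaxSet(S(V))) = S(U) \oplus S(V) = S(V)$ via composability and $U\subset V$, then invoke the maximality of $\MaxSet(S(V))$ to conclude the containment. The paper presents this as a single chain of equalities, but the logical content is identical to yours.
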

\begin{proof}
    \begin{align*}
        S(V) &= S(U \cup V) = S(U) \oplus S(V) = S(\MaxSet(U))\oplus S(\MaxSet(V)) \\ &= S(\MaxSet(U) \cup \MaxSet(V))\ .
    \end{align*}
   Therefore 
   $\MaxSet(U) \cup \MaxSet(V) \subset \MaxSet(V)$ and therefore
   $\MaxSet(U)\subset \MaxSet(V)$.
\end{proof}
Note that the other direction is not necessarily true, that is, it is possible that $S(U)\prec S(V)$ and $U\not\subset V$: Consider bottom-1 sketches, where the core of each sketch is a single key and the sketch order corresponds to the priority order over keys.

It directly follows from Claim~\ref{containmentsketchorder:claim} that if $U$ is a subset of $\MaxSet(\sigma)$ then $S(U)\prec \sigma$ and if $U$ contains  a core of $\sigma$ then $\sigma\prec S(U)$:
 \begin{align}
         U\subset \MaxSet(\sigma) &\implies S(U)\prec \sigma \label{subsketch:eq}\\
         C\in\Cores(\sigma) \text{ and } C\subset U &\implies \sigma \prec S(U)\ . \label{supsketch:eq}
 \end{align}

We can now characterize $S^{-1}(\sigma)$: A subset has a sketch $\sigma$ if only if it contains a core of $\sigma$ and is contained in $\MaxSet(\sigma)$:
\begin{claim} \label{sketchissigma:claim}
Let $\sigma$ be a sketch and $U\in 2^{\mathcal{U}}$ then
\[
S(U)=\sigma \, \iff \, \exists C\in\Cores(\sigma) \text{ such that }
C\subset U\subset \MaxSet(\sigma)\ .
\]
\end{claim}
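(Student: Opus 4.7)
The plan is to prove the two directions separately, with the backward direction being an immediate consequence of the already-established midpoint property.

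For the forward direction ($\Rightarrow$), assume $S(U)=\sigma$. Then $U \in S^{-1}(\sigma)$, so $U \subset \bigcup_{V \in S^{-1}(\sigma)} V = \MaxSet(\sigma)$ directly from the definition of $\MaxSet$. For the existence of a core $C \subset U$, I would observe that the collection $\{V \subset U : S(V) = \sigma\}$ is nonempty (it contains $U$) and finite, hence has a minimal element $C$ under set inclusion. Any such minimal element is, by the definition of a core, an element of $\Cores(\sigma)$, and by construction $C \subset U$.

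For the backward direction ($\Leftarrow$), assume $C \subset U \subset \MaxSet(\sigma)$ with $C \in \Cores(\sigma)$. Since $C$ is a core of $\sigma$ we have $S(C)=\sigma$, and by definition $S(\MaxSet(\sigma))=\sigma$. Thus $U$ is sandwiched between two sets with the same sketch $\sigma$, and \cref{midpoint:claim} (the Midpoint Property) yields $S(U)=\sigma$.

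I do not expect any serious obstacle here. Essentially everything needed has already been proved earlier in the section: the definition of $\MaxSet(\sigma)$ gives one half of the forward direction, the existence of minimal subsets with a given sketch gives the other half, and the Midpoint Property delivers the backward direction in one line. The only thing to be careful about is stating the implications in terms of $\Cores$, $\MaxSet$, and the composable-map definitions exactly as phrased in the excerpt, so that the claim can be cited cleanly in later proofs (e.g., in \cref{proofgen:sec}) as the structural characterization of $S^{-1}(\sigma)$.
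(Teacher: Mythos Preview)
Your proposal is correct and matches the paper's proof essentially line for line: both directions use exactly the ingredients you identify (definition of $\MaxSet$, existence of a minimal subset with sketch $\sigma$ inside $U$, and the Midpoint Property for the converse). The only difference is that the paper is terser, simply asserting ``by definition of $\Cores(\sigma)$'' where you spell out the finite-minimality argument explicitly.
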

\begin{proof}
$\implies$ direction: By definition of $\MaxSet(\sigma)$, if $S(U)=\sigma$ then $U\subset \MaxSet(\sigma)$. By definition of $\Cores(\sigma)$, if $S(U)=\sigma$ then $U$ must contain a core of $\sigma$.  $\impliedby$ direction:
  Applying Claim~\eqref{midpoint:claim}, since $C\subset U\subset \MaxSet(\sigma)$ and $C,\MaxSet(\sigma)\in S^{-1}(\sigma)$ it follows that $S(U) \in S^{-1}(\sigma)$. 
%  $S(C)\prec S(U) \prec S(\MaxSet(\sigma))$. Therefore, $\sigma\prec S(U) \prec \sigma \implies S(U)=\sigma$.
\end{proof}

As an immediate corollary of Claim~\ref{sketchissigma:claim} we obtain that any set must contain a core of its sketch:
\begin{equation} \label{coreinset:eq}
    \forall U,\  \exists C\in \Cores(S(U)) \text{ such that } C\subset U\ .
\end{equation}

\subsection{Rank Lemma}

We establish that if the sketch representation is limited to $k$ bits, then the rank can be at most $k$:
\begin{lemma}  [Core Size; Restatement and proof of Lemma~\ref{coresize:lemma}] \label{coresizeklimit:claim}
    If $|\sigma(U)|\leq k$ for all $U\in 2^{\mathcal{U}}$ then 
    $\rank((S,\mc U)\leq k$, that is,
    for all $\sigma\in \{S(U) \mid U\in 2^{\mathcal{U}}\}$, and all $C\in \Cores(\sigma)$, 
    $|C|\leq k$. 
\end{lemma}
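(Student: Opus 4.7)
I would argue by contradiction. Suppose some sketch $\sigma$ has a core $C$ with $|C| = m > k$. The plan is to show that the $2^m$ subsets of $C$ all produce pairwise distinct sketches, which immediately contradicts the assumption that every sketch is representable in at most $k$ bits (since the number of bit strings of length at most $k$ is strictly less than $2^{k+1}$, so we would need $2^m < 2^{k+1}$, forcing $m \leq k$).

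The heart of the argument is to rule out collisions $S(A) = S(B)$ for distinct $A, B \subseteq C$. Given any such putative pair, I may assume without loss of generality that $B \setminus A$ is nonempty (otherwise swap $A$ and $B$). Then consider the set $C' := C \setminus (B \setminus A)$, which is a \emph{proper} subset of $C$. Using composability and the idempotence identity $\sigma \oplus \sigma = \sigma$ from \eqref{selfcompose:eq}, I compute
\[
S(C') = S\bigl(A \cup (C \setminus (A \cup B))\bigr) = S(A) \oplus S(C \setminus (A \cup B)) = S(A) \oplus S(B) \oplus S(C \setminus (A \cup B)) = S(C) = \sigma,
\]
where the second equality uses $S(A) = S(A) \oplus S(B)$ (which follows from $S(A) = S(B)$ and idempotence). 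This contradicts the minimality of $C$ as a core of $\sigma$. Therefore all $2^m$ subsets of $C$ have pairwise distinct sketches.

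Combining the two steps: since every sketch in the image of $S$ is representable by a bit string of length at most $k$, there are at most $2^{k+1}-1$ possible sketch values. Thus $2^m \leq 2^{k+1}-1 < 2^{k+1}$, so $m \leq k$, which gives the claimed bound on $\rank((S,\mathcal{U}))$. I do not anticipate a serious obstacle; the only subtlety is handling the case where $A$ and $B$ are incomparable, which is resolved cleanly by working with the symmetric set $C \setminus (B \setminus A)$ rather than trying to compare $A$ and $B$ directly.
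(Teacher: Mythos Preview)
Your proof is correct and essentially identical to the paper's own argument. Both proofs show that distinct subsets of a core $C$ must have distinct sketches by deriving a contradiction to minimality: given $S(A)=S(B)$ with $B\setminus A\neq\emptyset$, your set $C'=C\setminus(B\setminus A)=A\cup(C\setminus B)$ is exactly the same proper subset the paper constructs (there written as $C_2\cup(C\setminus C_1)$), and the chain of equalities matches. Your final counting step is even slightly more careful than the paper's, explicitly handling that bit strings of length \emph{at most} $k$ number $2^{k+1}-1$.
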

\begin{proof}
    Let $C\in\Cores(\sigma)$ be a core of a sketch $\sigma$. We show that the sketches $S(U)$ for $U\in 2^C$ must be distinct from each other, that is,
    $|\{S(U) \mid U\in 2^C \}|=2^{|C|}$.
    Assume to the contrary that there are two subsets 
    $C_1,C_2\in 2^C$ such that $S(C_1)=S(C_2)$ but there is $y\in C_1\setminus C_2$.
    \begin{align*}
        \sigma &= S(C) = S(C_1 \cup C\setminus C_1) = S(C_1) \oplus S(C\setminus C_1) \\
         &= S(C_2) \oplus S(C\setminus C_1) = S(C_2 \cup C\setminus C_1)\ .
    \end{align*}
     Now consider $C' := C_2 \cup C\setminus C_1$. It holds that $C'\subset C$ and this is strict containment, that is $C' \subsetneq C$, because $y\in C$ and $y\not\in C'$. From the minimality of cores, we must have $S(C')\not= \sigma$ and we get a contradiction.
\end{proof}

\subsection{Termination Lemma}

We consider the sampling of $Q\sim \Bern[q]^{\mathcal{U}}$ as it unfolds by layers. 
Let $Q_i := Q\cap A_i$ be the sampled keys from layer $i$. Observe that $Q_i \sim \Bern[q]^{A_i}$ are independent random variables and hence their conditional distribution on fixing other $Q_j$'s remains the same.
\begin{lemma} [Termination Condition] \label{termination:lemma}
    To establish Lemma~\ref{composablepool:lemma} it suffices to show that under the respective conditions:
    For any $q\geq \qmin$, with probability at least $1-\delta$ (over sampling of $Q$),    
    the condition 
\begin{equation} \label{terminationcond:eq}
S(Q_{\leq i})= S\left(Q_{\leq i} \cup A_{i+1} \right)\ .
\end{equation}    
    is satisfied at some $i\leq \ell$.
\end{lemma}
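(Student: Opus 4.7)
The plan is to show, by purely algebraic reasoning from composability, that whenever the termination condition holds at some layer $i\leq \ell$ for a specific realization of $Q=U$, then $S((U\cap L)\cup M)=S(U\cup M)$ holds for \emph{every} mask $M$. Since the hypothesis gives that termination occurs with probability at least $1-\delta$ over $Q\sim\Bern[q]^{\mathcal{U}}$, this immediately implies \eqref{composablestatement:eq} and hence Lemma~\ref{composablepool:lemma}.

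The first step is to reinterpret termination as an absorption property of $S(U_{\leq i})$. Because the peeling chose $A_{i+1}\in\InCores(\mathcal{U}\setminus A_{\leq i})$, we have $S(A_{i+1})=S(\mathcal{U}\setminus A_{\leq i})$, and composability gives
\[
S(U_{\leq i}\cup A_{i+1}) = S(U_{\leq i})\oplus S(A_{i+1}) = S(U_{\leq i})\oplus S(\mathcal{U}\setminus A_{\leq i}) = S\bigl(U_{\leq i}\cup(\mathcal{U}\setminus A_{\leq i})\bigr).
\]
Combined with the termination condition \eqref{terminationcond:eq}, this yields $S(U_{\leq i})=S(U_{\leq i}\cup(\mathcal{U}\setminus A_{\leq i}))$. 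The midpoint Claim~\ref{midpoint:claim} then sub-absorbs any $V\subseteq\mathcal{U}\setminus A_{\leq i}$, giving $S(U_{\leq i}\cup V)=S(U_{\leq i})$. Edge cases where the peeling halts before layer $\ell$ fit naturally into this framework by setting $A_{i+1}=\emptyset$ once the peeling stops, since then either $\mathcal{U}$ is already exhausted or the remaining keys are transparent to $S$.

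The second step is to promote this absorption from $U_{\leq i}$ to any $W\supseteq U_{\leq i}$ by a short composability computation: for any $V\subseteq\mathcal{U}\setminus A_{\leq i}$,
\[
S(W\cup V) = S(W)\oplus S(V) = S(W)\oplus S(U_{\leq i})\oplus S(V) = S(W)\oplus S(U_{\leq i}\cup V) = S(W)\oplus S(U_{\leq i}) = S(W),
\]
where the second and last equalities use $S(W)\oplus S(U_{\leq i})=S(W\cup U_{\leq i})=S(W)$ (because $U_{\leq i}\subseteq W$), and the middle equality invokes the sub-absorption just established. Setting $W=(U\cap L)\cup M$ and $V=U\cap A_{>\ell}$ then yields the conclusion: since $i\leq \ell$ we have $A_{\leq i}\subseteq L$ and therefore $U_{\leq i}\subseteq U\cap L\subseteq W$; clearly $V\subseteq\mathcal{U}\setminus A_{\leq i}$; and $W\cup V = U\cup M$, so the display gives $S(U\cup M)=S((U\cap L)\cup M)$.

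I expect the main obstacle to be this second step: a direct application of the midpoint claim to $W$ fails because $M$ may contain elements of $A_{\leq i}\setminus U_{\leq i}$, so $W$ is not sandwiched between $U_{\leq i}$ and $U_{\leq i}\cup(\mathcal{U}\setminus A_{\leq i})$. The algebraic identity above circumvents this by exploiting that any sketch which already contains $U_{\leq i}$ inherits the absorption of every subset of $\mathcal{U}\setminus A_{\leq i}$.
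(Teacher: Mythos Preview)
Your proof is correct and follows essentially the same route as the paper: both show that the termination condition \eqref{terminationcond:eq} is equivalent to $S(Q_{\leq i})=S(Q_{\leq i}\cup A_{>i})$ via $S(A_{i+1})=S(A_{>i})$, then use composability to add an arbitrary mask (the paper does this via the one-line observation $S(A)=S(B)\Rightarrow S(A\cup M)=S(B\cup M)$, you via your promotion identity, which is the same content), and finally invoke the midpoint claim to sandwich $Q\cup M$.

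One small slip: in your final instantiation you set $V=U\cap A_{>\ell}$ and assert $W\cup V=U\cup M$, but this set equality fails when the peeling halted early and there are transparent keys $T=\mathcal{U}\setminus\bigcup_j A_j$, since then $U\cap T$ is in neither $W$ nor $V$. The fix is trivial: take $V=U\setminus L=U\cap(\mathcal{U}\setminus A_{\leq\ell})$ instead (this still satisfies $V\subseteq\mathcal{U}\setminus A_{\leq i}$), or simply note that $S(W\cup V\cup(U\cap T))=S(W\cup V)$ by transparency of $T$.
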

\begin{proof}
 
We can stop considering the sampling at layer $i$ when the results of the sampling in higher layers do not matter, that is, for any mask $M\in 2^{\mathcal{U}}$ and any possible $Q_j$ for ($j>i$), the sketch $S(Q\cup M)$ is determined from $Q_1,\ldots, Q_i$. Formally, 
\begin{equation} \label{terminationcond2:eq}
\forall M\in 2^{\mathcal{U}},\ S(M\cup Q_{\leq i})= S\left(M\cup Q_{\leq i} \cup A_{>i} \right)\ .
\end{equation}
Since $M\cup Q_{\leq i} \subset Q\cup M \subset M\cup Q_{\leq i} \cup A_{>i}$, from Claim~\ref{midpoint:claim}, \eqref{terminationcond2:eq} implies that we determined $S(Q\cup M)$.
We next simplify this condition.

If $S(A)=S(B)$ then from composability for any $M$,
  $S(A\cup M)=S(A)\oplus S(M) = S(B)\oplus S(M) = S(B\cup M)$.
  Therefore for \eqref{terminationcond2:eq} to hold it suffices to consider $M=\emptyset$:
  \begin{equation} \label{emptyM:eq}
      S(Q_{\leq i})= S\left(Q_{\leq i} \cup A_{>i} \right)\ .
  \end{equation} 
  Next, recall from peeling that $A_{i+1} \in \Cores(S(A_{>i}))$ and hence $S(A_{i+1}) = S(A_{>i})$. Using composition rules
  \[
  S\left(Q_{\leq i} \cup A_{>i} \right) = S(Q_{\leq i})\oplus S(A_{>i}) = 
  S(Q_{\leq i})\oplus S(A_{i+1}) = S\left(Q_{\leq i} \cup A_{i+1} \right)\ .
  \]
   This concludes the proof by showing that \eqref{emptyM:eq} is equivalent to \eqref{terminationcond:eq}.
\end{proof}

% We establish Lemma~\ref{composablepool:lemma} by establishing that the 
% termination condition (see Lemma~\ref{termination:lemma}) holds for respective values of $\ell$.

\section{Termination Proofs for Unique Cores} \label{uniqcores:sec}

  In this section we establish the termination condition \eqref{terminationcond:eq} for $\ell$ as stated in Lemma~\ref{composablepool:lemma} holds for the special case of sketching maps where sketches have unique cores:

\begin{definition} [Unique Cores] \label{MUcores:def}
    A sketching map $(S,\mathcal{U})$ has \emph{unique cores} if 
    for all $\sigma$, $|\Cores(\sigma)|=1$. In this case, we use the notation $\core(\sigma)$ for the core. A sketching map with unique cores is \emph{monotone} if  $U\subset V \implies |\core(S(U)|\leq |\core(S(V))|$.\footnote{This monotonicity requirement may appear to be weaker than in Definition~\ref{monotonecomposablemap:def} but our proof will reveal it is equivalent. Regardless, the result is only stronger if established under weaker assumptions.}
\end{definition}

Section~\ref{propuniqcores:sec} includes preliminaries for unique cores. The termination proof is concluded  in Section~\ref{uniqcorestermination:sec} for general unique cores and 
in Section~\ref{monotoneuniqcore:sec} for monotone unique cores. In Section~\ref{tightnonmonotone:sec} we give examples that show that our quadratic bound on the size of the pool for maps with unique cores is tight for peeling pool constructions.

% \begin{lemma} [Pool with Unique Cores] \label{poolunique:lemma}
% Let the sketching map $S$ be such that
% $\forall \sigma, |\mathtt{Cores}(\sigma)|=1$.
%   For $\delta>0$, a prefix of a core peeling $L= \bigcup_{i=1}^\ell A_i$ where $\ell=  (k+\ln(1/\delta))/q$
%   is a $(1-\delta)$-determining pool.  
% \end{lemma}

\subsection{Properties of Unique Cores} \label{propuniqcores:sec}
We establish some properties that hold with unique cores.
We sometimes abuse notation and write $\core(U) := \core(S(U))$.

If a key in $\MaxSet(\sigma_2)$ is in the core of a ``higher'' sketch $\sigma_1 \succ \sigma_2$, then it must also be ``inherited'' in  $\core(\sigma_2)$.
\begin{claim} [Inheritance for Unique Cores] \label{uniqcoreinheritance:claim}
    If $\sigma_1 \succ \sigma_2$ then
    $\core(\sigma_1) \cap \MaxSet(\sigma_2) \subset \core(\sigma_2)$.
\end{claim}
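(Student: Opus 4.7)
The plan is a proof by contradiction. Suppose some $x \in \core(\sigma_1) \cap \MaxSet(\sigma_2)$ satisfies $x \notin \core(\sigma_2)$. I want to exploit the two contradictory roles of $x$: it is essential for producing $\sigma_1$ (since it lies in $\core(\sigma_1)$ and cores are unique, so any set with sketch $\sigma_1$ must contain it), but it is redundant for producing $\sigma_2$. The key auxiliary set will be $W := \MaxSet(\sigma_1) \setminus \{x\}$.

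The first step is to show $S(W) \neq \sigma_1$. This is immediate: by \eqref{coreinset:eq} combined with uniqueness of cores, every set with sketch $\sigma_1$ contains $\core(\sigma_1)$, but $x \in \core(\sigma_1) \setminus W$. Next I would show $S(W) \oplus \sigma_2 = \sigma_1$: since $x \in \MaxSet(\sigma_2) \subset \MaxSet(\sigma_1)$, we have $W \cup \MaxSet(\sigma_2) = \MaxSet(\sigma_1)$, so composability gives $S(W) \oplus \sigma_2 = S(\MaxSet(\sigma_1)) = \sigma_1$. Finally, I would use the assumption $x \notin \core(\sigma_2)$: the set $\MaxSet(\sigma_2) \setminus \{x\}$ sits between $\core(\sigma_2)$ and $\MaxSet(\sigma_2)$, so by the Midpoint Property (\cref{midpoint:claim}) its sketch is $\sigma_2$. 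But $\MaxSet(\sigma_2) \setminus \{x\} \subset \MaxSet(\sigma_1) \setminus \{x\} = W$, so $W \cup (\MaxSet(\sigma_2) \setminus \{x\}) = W$, and composability yields
\[
S(W) \;=\; S\bigl(W \cup (\MaxSet(\sigma_2) \setminus \{x\})\bigr) \;=\; S(W) \oplus \sigma_2 \;=\; \sigma_1,
\]
contradicting $S(W) \neq \sigma_1$.

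The only real obstacle is spotting the ``telescoping'' choice of $W$: removing $x$ from $\MaxSet(\sigma_1)$ simultaneously (i) still includes $\MaxSet(\sigma_2) \setminus \{x\}$, so adding the $\sigma_2$-piece does nothing, while (ii) adding back $x$ via $\MaxSet(\sigma_2)$ restores the full $\MaxSet(\sigma_1)$. Everything else is routine use of composability, the Midpoint Property, and the unique-cores hypothesis.
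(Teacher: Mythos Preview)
Your proof is correct. Both your argument and the paper's are short contradiction proofs that hinge on producing a set whose sketch must equal $\sigma_1$ (via composability with a $\sigma_2$-sketched set) yet which fails to contain $\core(\sigma_1)$, which is impossible under the unique-cores hypothesis. The difference lies in the choice of auxiliary set. The paper takes $A := \core(\sigma_1) \setminus \MaxSet(\sigma_2)$, applies the Midpoint Property to the sandwich $\core(\sigma_1) \subset A \cup \MaxSet(\sigma_2) \subset \MaxSet(\sigma_1)$ to get $S(A) \oplus \sigma_2 = \sigma_1$, and then observes that $A \cup \core(\sigma_2)$ omits the offending key $x$ yet still satisfies $S(A \cup \core(\sigma_2)) = S(A) \oplus \sigma_2 = \sigma_1$. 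You instead remove the single offending key from the \emph{maximum} set, setting $W := \MaxSet(\sigma_1) \setminus \{x\}$, and use the Midpoint Property one level down (on $\MaxSet(\sigma_2) \setminus \{x\}$) to show that composing $W$ with a $\sigma_2$-sketched set is a no-op. Your route is arguably a bit more direct since the contradiction ($S(W) = \sigma_1$ versus $S(W) \neq \sigma_1$) concerns the very same set $W$, whereas the paper's version moves between $A \cup \MaxSet(\sigma_2)$ and $A \cup \core(\sigma_2)$; on the other hand, the paper's choice of $A$ is perhaps more natural to discover, being the ``non-$\sigma_2$'' part of $\core(\sigma_1)$.
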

\begin{proof}
Let $A = \core(\sigma_1) \setminus \MaxSet(\sigma_2)$.
It follows that
\[
\core(\sigma_1) \subset A \cup \MaxSet(\sigma_2) \subset \MaxSet(\sigma_1)
\]
(since $A\subset \core(\sigma_1)$ and by assumption $\MaxSet(\sigma_2) \subset \MaxSet(\sigma_1)$).

Therefore, from Claim~\ref{midpoint:claim}, 
$S(A \cup \MaxSet(\sigma_2)) = \sigma_1$.
Therefore, 
\begin{equation} \label{pp1:eq}
    S(A)\oplus S(\MaxSet(\sigma_2))=S(A)\oplus \sigma_2 = \sigma_1\ .
\end{equation}
Now assume to the contrary that the statement does not hold.
Then $\core(\sigma_1)\subsetneq A \cup \core(\sigma_2)$ and hence from Claim~\ref{sketchissigma:claim}
$S(A \cup \core(\sigma_2))\neq \sigma_1$. Therefore
$S(A) \oplus S(\core(\sigma_2)) = S(A)\oplus \sigma_2 \not= \sigma_1$. We get a contradiction to \eqref{pp1:eq}.
\end{proof}

We show that the core of the sketch of the union of subsets must be a subset of the union of their cores:
\begin{claim} [Core of Union is in Union of Cores] \label{mergeuniq:claim}
    \begin{equation*} 
    \forall U_1,U_2 \in 2^{\mathcal{U}},\ \core(S(U_1\cup U_2)) \subset \core(S(U_1)) \cup \core(S(U_2))
\end{equation*}
\end{claim}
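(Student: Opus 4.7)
The plan is very short because this is essentially a one-step corollary of composability combined with the uniqueness of cores. Set $\sigma_i := S(U_i)$ for $i=1,2$ and $\sigma := S(U_1 \cup U_2)$. Let $C := \core(\sigma_1) \cup \core(\sigma_2)$. The first step is to observe that $C$ has sketch $\sigma$: using composability and the fact that $S(\core(\sigma_i)) = \sigma_i$, we get
\[
S(C) \;=\; S(\core(\sigma_1)) \oplus S(\core(\sigma_2)) \;=\; \sigma_1 \oplus \sigma_2 \;=\; S(U_1 \cup U_2) \;=\; \sigma.
\]

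The second step is to invoke \eqref{coreinset:eq}, which says that any set contains a core of its sketch. Applied to $C$, this gives some $C' \in \Cores(\sigma)$ with $C' \subset C$. Because the map has unique cores, $|\Cores(\sigma)| = 1$, so $C' = \core(\sigma)$. Hence $\core(\sigma) \subset C = \core(\sigma_1) \cup \core(\sigma_2)$, which is the desired inclusion.

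There is no real obstacle here — the work was already done by setting up composability, \eqref{coreinset:eq}, and \cref{sketchissigma:claim}; uniqueness of cores is what lets the minimal-subset argument collapse to exactly one candidate. The only place to be slightly careful is not to accidentally argue $\core(\sigma) \subset \core(\sigma_1) \cup \core(\sigma_2)$ by naively taking the union of cores as "a core" — it need not be minimal; the argument must go through "$C$ has sketch $\sigma$, so $C$ contains the (unique) core of $\sigma$," which is exactly what the two steps above do.
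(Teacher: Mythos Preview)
Your proof is correct and follows essentially the same approach as the paper: both show that $C=\core(S(U_1))\cup\core(S(U_2))$ has sketch $S(U_1\cup U_2)$ via composability, then invoke~\eqref{coreinset:eq} together with uniqueness of cores to conclude $\core(S(U_1\cup U_2))\subset C$. The paper's write-up is just more compressed.
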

\begin{proof}
\begin{align*}
    S(U_1\cup U_2) &= S(U_1)\oplus S(U_2) = S(\core(U_1)) \oplus S(\core(U_2)) \\ &= S(\core(U_1)\cup \core(U_2))\ .
\end{align*}
Hence, using~\eqref{coreinset:eq}
\[ \core(S(U_1\cup U_2)) = \core(S(\core(U_1)\cup \core(U_2)))\subset \core(U_1)\cup \core(U_2)\ .\]
\end{proof}

\subsection{Termination Proof for Unique Cores} \label{uniqcorestermination:sec}
We show that
condition~\eqref{terminationcond:eq} (the termination condition) is satisfied with probability at least $(1-\delta)$ after at most $\ell = (k+O(\sqrt{k} \ln(1/\delta))/\qmin$ layers.

Observe that with unique cores, the peeling process simplifies to
$A_{i+1} \gets \core(S(\mathcal{U})\setminus \bigcup_{j\leq i} A_j)$ and at each layer there is a new sketch.

We specify members of $\core(S(Q))$ layer by layer, after  fixing the sampling $Q_i$.  We use the following claim:
  \begin{claim} \label{pu2:claim}
If the termination condition \eqref{terminationcond:eq} is not satisfied at the beginning of step $i$ (after fixing $(Q_j)_{j< i}$), then there is at least one key in $A_i$ that is \emph{active}, that is, if it is sampled into $Q_i$, it must be included in $C_i:= \core(S(Q))\cap A_i$.
  \end{claim}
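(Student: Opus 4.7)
}
The plan is to produce an explicit active key in $A_i$. Let $\sigma_1 := S(Q_{<i})$ and $\sigma := S(Q_{<i}\cup A_i)$. The hypothesis that the termination condition fails at the beginning of step $i$ says exactly $\sigma_1\neq \sigma$, and $Q_{<i}\subset Q_{<i}\cup A_i$ gives $\sigma_1\prec \sigma$ by Claim~\ref{containmentsketchorder:claim}. Using Claim~\ref{mergeuniq:claim} on $S(Q_{<i}\cup A_i)=S(Q_{<i})\oplus S(A_i)$, we get
\[
\core(\sigma)\subset \core(\sigma_1)\cup \core(S(A_i))\subset Q_{<i}\cup A_i.
\]
I would first argue that $\core(\sigma)\cap A_i\neq\emptyset$: otherwise $\core(\sigma)\subset Q_{<i}$, whence $\sigma=S(\core(\sigma))\prec S(Q_{<i})=\sigma_1$ by Claim~\ref{containmentsketchorder:claim}, forcing $\sigma=\sigma_1$, a contradiction.

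Next, I would fix any $x^\ast \in \core(\sigma)\cap A_i$ and show $x^\ast$ is active. The key structural observation is that $A_i$ is chosen as a core of $S(R_i)$ where $R_i:=\mathcal{U}\setminus A_{<i}=A_{\ge i}$, so $S(A_{\ge i})=S(A_i)$, and therefore
\[
S(Q_{<i}\cup A_{\ge i})=S(Q_{<i})\oplus S(A_{\ge i})=S(Q_{<i})\oplus S(A_i)=\sigma.
\]
Since $Q\subset Q_{<i}\cup A_{\ge i}$ and $Q\supset Q_{<i}$, Claim~\ref{containmentsketchorder:claim} sandwiches $\sigma_1\prec S(Q)\prec \sigma$.

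Finally, I would invoke inheritance (Claim~\ref{uniqcoreinheritance:claim}) applied to $\sigma\succ S(Q)$:
\[
\core(\sigma)\cap \MaxSet(S(Q))\subset \core(S(Q)).
\]
If $x^\ast \in Q_i$, then $x^\ast \in Q\subset \MaxSet(S(Q))$, and combined with $x^\ast\in\core(\sigma)$ this gives $x^\ast\in \core(S(Q))\cap A_i = C_i$, proving activeness.

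The only genuinely nontrivial step is the nonemptiness of $\core(\sigma)\cap A_i$; everything else is bookkeeping with the partial order $\prec$ and the peeling identity $S(A_{\ge i})=S(A_i)$. I do not anticipate a real obstacle, as the unique-cores assumption lets us apply Claim~\ref{mergeuniq:claim} and Claim~\ref{uniqcoreinheritance:claim} cleanly without worrying about core selection.
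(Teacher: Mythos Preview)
Your proposal is correct and follows essentially the same approach as the paper: both identify the set $\core(S(Q_{<i}\cup A_i))\cap A_i$ (the paper calls it $\mathtt{Active}_i=\core(S(U_i))\setminus Q_{<i}$, which coincides since $S(U_i)=S(Q_{<i}\cup A_{\geq i})=S(Q_{<i}\cup A_i)$), show it is nonempty exactly because termination has not occurred, and then use the inheritance property (Claim~\ref{uniqcoreinheritance:claim}) with $\sigma\succ S(Q)$ to conclude that any such key, once sampled, lands in $\core(S(Q))$. The only cosmetic difference is that the paper sets up the full decreasing chain $U_1\supset\cdots\supset Q$ first, while you work directly at level $i$ and invoke the peeling identity $S(A_{\geq i})=S(A_i)$ where needed.
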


It follows from Claim~\ref{pu2:claim} that at each layer before termination, there is probability at least $q$ of determining at least one additional member of $\core(S(Q))$.
Since from Claim~\ref{coresizeklimit:claim}, a core can be of size at most $k$, then in particular, $|\core(Q)|\leq k$, and we must terminate when or before a $k$th key of $\core(S(Q))$ is determined. 
    
We apply Chernoff bounds. The probability that there are fewer than $k$ successes in $\ell$ $\Bern[q]$ i.i.d.\ Bernoulli trials is at most
$e^{-(q\ell -k)^2/(2q\ell)}$. We can verify that with
$\ell = (k+O(\sqrt{k} \ln(1/\delta))/q$ we obtain that  $e^{-(q\ell -k)^2/(2q\ell)}\leq \delta$. This concludes the proof.

It remains to prove Claim~\ref{pu2:claim}:
\begin{proof}[Proof of Claim~\ref{pu2:claim}]
Consider the sequence
\begin{align*}
   U_1 &=A_{\geq 1} = \mathcal{U}& \\
   U_i &= Q_{<i} \cup A_{\geq i} & \text{; for $i>1$}
\end{align*}
Observe that this is a 
non-increasing sequence 
$U_1\supset U_2 \cdots \supset U_\ell \cdots \supset Q$ of supersets of $Q$ and that
the set $U_i$ is determined once we fixed the sampling $Q_j$ for $j<i$. From Claim~\ref{containmentsketchorder:claim}, 
$S(U_1)\succ\cdots \succ S(U_\ell)\cdots \succ S(Q)$.

Since, $S(U_i)\succ S(Q)$, from Claim~\ref{uniqcoreinheritance:claim}, all keys in $Q$ that are in $\core(S(U_i))$ must be in $\core(S(Q))$. That is,
$\core(S(U_i))\cap Q \subset \core(S(Q))$.
Similarly, all keys in $Q$ that are in $\core(S(U_i))$ must be in $\core(S(U_j))$ for $j>i$.

Let $\mathtt{Active}_i = \core(S(U_i))\setminus Q_{<i}$ be the set of keys in $\core(S(U_i))$ that are not included in
$Q_j$ for $j<i$. It follows that if any key of $\mathtt{Active}_i$ is sampled into $Q$, it must be in $\core(S(Q))$.

We show that $\mathtt{Active}_i \subset A_i$:
\begin{claim} \label{coresui:claim}
$\core(S(U_i)) \subset A_i \cup Q_{<i}$
\end{claim}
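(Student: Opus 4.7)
The plan is to combine the peeling construction, composability, and uniqueness of cores into a short chain of sketch equalities that lands us in the desired set. The key observation is that modulo sketch value, we can replace the ``tail'' $A_{\geq i}$ of the peeling by its core $A_i$, which sits inside $A_i \cup Q_{<i}$.

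First I would recall from the peeling definition that $A_i$ is a core of $S(\Uu \setminus A_{<i}) = S(A_{\geq i})$, so by the defining property of cores $S(A_i) = S(A_{\geq i})$. Using composability, this gives
\begin{equation*}
S(U_i) \;=\; S(Q_{<i} \cup A_{\geq i}) \;=\; S(Q_{<i}) \oplus S(A_{\geq i}) \;=\; S(Q_{<i}) \oplus S(A_i) \;=\; S(Q_{<i} \cup A_i).
\end{equation*}

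Next I would invoke equation~\eqref{coreinset:eq}: since $Q_{<i} \cup A_i$ is a subset whose sketch equals $S(U_i)$, it must contain some core of $S(U_i)$. Because we are in the unique-cores setting (\cref{MUcores:def}), the only core is $\core(S(U_i))$, and we conclude $\core(S(U_i)) \subset Q_{<i} \cup A_i = A_i \cup Q_{<i}$, as claimed.

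I do not expect any real obstacle here; the statement is essentially a bookkeeping consequence of the peeling definition, composability ($\oplus$), and uniqueness of cores. The one place to be slightly careful is the first step — making sure the peeling invariant $S(A_i) = S(A_{\geq i})$ is applied to the correct ``tail'' set (namely $A_{\geq i}$, not $A_{>i}$), so that $A_i$ is actually the core being peeled off at layer $i$.
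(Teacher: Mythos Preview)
Your proof is correct and follows essentially the same logic as the paper. The paper's version routes through Claim~\ref{mergeuniq:claim} (core of a union lies in the union of cores), writing $\core(S(U_i)) \subset \core(S(Q_{<i})) \cup \core(S(A_{\geq i})) \subset Q_{<i} \cup A_i$, whereas you inline that step by directly computing $S(U_i)=S(Q_{<i}\cup A_i)$ via composability and then invoking \eqref{coreinset:eq} with uniqueness of cores; both arguments hinge on the same peeling identity $S(A_i)=S(A_{\geq i})$ and the same use of \eqref{coreinset:eq}.
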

\begin{proof}
In the first step, $A_1 = \core(S(U_1))$ and therefore 
$\mathtt{Active}_1 =A_1$ and any key that is sampled will be in $S(Q)$. 
$Q_1 \subset A_1$ and  
$C_1 = Q_1 \subset \core(S(Q))$.

For $i>1$, $U_i = Q_{<i} \cup A_{\geq i}$ and hence 
\begin{align*}
    \core(S(U_i)) &\subset \core(S(Q_{<i})) \cup \core(A_{\geq i}) &\, ;\text{Claim~\ref{mergeuniq:claim}}\\
    &\subset Q_{<i} \cup \core(A_{\geq i}) &\, ;\text{Eq. \eqref{coreinset:eq}}\\
    & \subset Q_{<i} \cup A_i &\, ;\text{ Peeling def}
\end{align*}
\end{proof}

We established that if there are active keys, 
that is $|\mathtt{Active}_i|\geq 1$, then they must be included in $A_i$. We obtain that $C_i = \mathtt{Active}_i \cap Q_i$.
We now show that if there are no active keys, then the termination condition is met.

For $i=2$, observe that if $Q_1=A_1$ (all keys in $A_1$ are sampled), then
the termination condition \eqref{terminationcond:eq} is satisfied at $i=2$:
It holds that $S(Q_1)=S(A_1)=S(A_{i\geq 1})$ (since by the peeling $A_1=\core(S(A_{\geq 1}))$).
Since $A_1\subset A_1\cup A_2\subset  A_{\geq 1}$, it follows from Claim~\ref{midpoint:claim} that $S(A_1\cup A_2)=S(A_1)$.
Combining, we obtain that $S(Q_1)=S(Q_1\cup A_2)$.

If $|\mathtt{Active}_i|=0$, then 
$\core(S(U_i)) \subset Q_{<i}$. In this case, $S(U_i)=S(Q_{<i})$ and
condition \eqref{terminationcond:eq} holds at step $i-1$.
This concludes the proof of Claim~\ref{pu2:claim}.
\end{proof}

%Using Claim~\ref{coresizeklimit:claim}, each layer has size $|A_i|\leq k$, therefore, the total size of the pool of $\ell$ layers is
%$|L| \leq \ell k \leq (k^2 \ln(1/\delta))/\qmin$.

\subsection{Termination Proof for Monotone Unique Cores} \label{monotoneuniqcore:sec}
We now consider the case of \emph{monotone} unique cores (see Definition~\ref{MUcores:def}).

\begin{claim}
    With probability at least $(1-\delta)$, the termination condition is satisfied within
$\ell = O(\log(1/\delta)/\qmin)$ steps.
\end{claim}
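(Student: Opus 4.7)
The plan is to refine the argument of \cref{uniqcorestermination:sec} by exploiting monotonicity to eliminate the linear $k$ dependence. Monotonicity gives two structural facts that the general-case argument lacks: (i) the peeled layer sizes $|A_i|$ are non-increasing in $i$, because $A_{\geq i} \supseteq A_{\geq i+1}$ implies $|A_i| = |\core(S(A_{\geq i}))| \geq |\core(S(A_{\geq i+1}))| = |A_{i+1}|$; and (ii) $|\core(S(U_i))| \geq |A_i|$, which together with $\core(S(U_i)) \subseteq Q_{<i} \cup A_i$ (\cref{coresui:claim}) forces $A_i$ to contain at least one core element of $S(U_i)$ so long as the peeling has not yet exhausted the ground set.

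Building on \cref{pu2:claim}, I would show that at each non-terminating step $i$ there exists at least one active key $x_i \in A_i$ whose sampling into $Q_i$ advances termination, and---crucially, under monotonicity---that a single such sample suffices to trigger the termination condition \eqref{terminationcond:eq}. Concretely, by the reformulation $\core(S(U_{i+1})) \cap A_{i+1} = \emptyset$ of termination at step $i$, the only obstruction at step $i$ lives in layer $A_{i+1}$; monotonicity, combined with the inheritance property of \cref{uniqcoreinheritance:claim}, implies that this obstruction is removed by a single well-placed sample from $Q_i$, rather than requiring a cumulative sampling of up to $k$ core elements as in the general case.

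Given the single-shot progress property, independence of $Q_i \sim \Bern[q]^{A_i}$ across layers yields that the probability of seeing no successful critical sample over $\ell$ consecutive layers is at most $(1-q)^\ell \leq e^{-q\ell}$. Setting $\ell = \lceil \ln(1/\delta)/\qmin \rceil$ bounds this failure probability by $\delta$, giving the claimed $\ell = O(\log(1/\delta)/\qmin)$ without any dependence on $k$.

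The main obstacle is rigorously establishing the single-shot property for monotone unique cores. In the general (non-monotone) setting one genuinely needs to ``pay'' for each of up to $k$ distinct core elements, incurring an additive $k/q$ term via Chernoff; the subtlety in the monotone setting is that even though the core size $|\core(S(U_i))|$ may be as large as $k$, monotonicity prevents the core from \emph{growing} across the random refinement $U_i \supset U_{i+1}$, so that consuming the unique obstruction in $A_{i+1}$ cannot reintroduce new uncleared obstructions in later layers. Formalizing this stability via the inheritance lemma applied to the chain $S(U_1) \succeq S(U_2) \succeq \cdots$ should suffice to complete the argument.
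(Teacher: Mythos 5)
There is a genuine gap: the ``single-shot'' property at the heart of your plan is false for monotone unique-core maps. Take the bottom-$k$ sketch (monotone, unique cores): the peeling layers are consecutive blocks of $k$ keys in priority order, and the termination condition $S(Q_{\leq i})=S(Q_{\leq i}\cup A_{i+1})$ holds exactly when $Q_{\leq i}$ already contains at least $k$ sampled keys. If $Q_{<i}$ currently holds only $k-5$ keys, no single well-placed sample from $A_i$ can trigger termination; you must accumulate samples. The same failure occurs for the $k$-partition (HLL) map, where termination requires \emph{every} bucket with remaining keys to have at least one sampled representative, so the obstruction at step $i$ is not a single key in $A_{i+1}$ but up to $k$ independent ones. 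Your structural facts (i) and (ii) are correct consequences of monotonicity, but they only show $A_i$ meets $\core(S(U_i))$; termination is equivalent to $\core(S(U_{i+1}))\subseteq Q_{\leq i}$, i.e.\ \emph{all} active keys having been sampled, not one. Consequently the geometric bound $(1-q)^\ell$ with $\ell=O(\log(1/\delta)/\qmin)$ does not follow, and in fact the $k$-free bound you are aiming for is unattainable for this peeling construction: in the $k$-partition example the probability of non-termination after $\ell$ layers is roughly $k(1-q)^\ell$, forcing $\ell=\Omega(\log(k/\delta)/q)$. (The displayed claim should be read with $\log(k/\delta)$, consistent with \cref{composablepool:lemma}.)

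The paper's argument is different in exactly the place your plan breaks. It keeps the set $\mathtt{Active}_i=\core(S(U_i))\setminus Q_{<i}\subseteq A_i$ from the unique-cores proof and uses monotonicity only to show that $|\core(S(U_i))|$ is non-increasing; combined with the disjoint decomposition $\core(S(U_i))=C_{<i}\cup\mathtt{Active}_i$ from \cref{coresui:claim}, this gives that $|\mathtt{Active}_i|$ is non-increasing, so the at most $k$ ``active positions'' present at layer $1$ are each eliminated independently with probability at least $q$ per layer, \emph{in parallel}. Termination is the event that all of them have been cleared, and a union bound over the $\leq k$ positions yields $\ell=O(\log(k/\delta)/\qmin)$. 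So the correct use of monotonicity is to prevent the active set from growing (removing the additive $k/q$ term of the general case), not to reduce the obstruction to a single key; if you want to salvage your write-up, replace the single-shot step by this parallel-elimination-plus-union-bound argument and adjust the bound to $O(\log(k/\delta)/\qmin)$.
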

\begin{proof}
We piggyback on our proof for unique cores, defining $U_i$, $C_i$, $Q_i$, and $\mathtt{Active}_i\subset A_i$ the same way.

It follows from monotonicity that $(|\core(S(U_i))|)_{i\geq 1}$  is non-increasing and that for all $i$, $|\core(S(U_i))| \geq |\core(S(Q))|$.  
From Claim~\ref{coresui:claim}, it follows that
\[
\core(S(U_i)) = C_{<i} \cup \mathtt{Active}(A_i)\ .
\]
Since these sets are disjoint ($C_j \subset A_j$), it follows that
\begin{equation} \label{active:eq}
|\core(S(U_i))| = \sum_{1\leq j<i} |C_j| + |\mathtt{Active}(A_i)|\ .
\end{equation}

Since $|\core(S(U_i))|$ are non-increasing and $\sum_{j<i} C_i$ is non-decreasing with $i$, it follows from \eqref{active:eq} that  
$|\mathtt{Active}_i|$ is non-increasing. Moreover, $|\mathtt{Active}_i|$ decreases in expectation by a factor of $\cdot(1-q)$. 

Recall that when there are no active keys, that is 
$|\mathtt{Active}_i|=0$, the termination condition is satisfied. At layer $i=1$, $\mathtt{Active}_1=A_1$ and hence
$|\mathtt{Active}_1|\leq |A_1|\leq k$.

Solving for each of the $k$ ``active positions,'' the probability that the process does not terminate after $\ell$ steps is bounded by the probability that $\Pr[\Geom[q] > \ell]$. Taking a union bound over $k$ ``positions'' it suffices to choose $\ell$ so that this probability is at most $\delta/k$. Solving for $(1-q)^\ell \leq \delta/k$ we obtain that $\ell \approx \ln(k/\delta)/q = O(\ln(k/\delta)/\qmin)$ suffices for eliminating all ``active positions'' with probability at least $1-\delta$. This concludes the proof.  
\end{proof}

\subsection{Tightness of Analysis} \label{tightnonmonotone:sec}

Our $\tilde{O}(k)$ pool size result for monotone unique cores is asymptotically tight within logarithmic factors. We show that the
$\tilde{O}(k^2)$ pool size result is tight for  
composable sketching maps with unique cores. We show examples of (non monotone) sketching maps for which $\tilde{\Omega}(k)$ layers of size $\tilde{\Omega}(k)$ each are necessary for the pool, and hence the pool size of $\tilde{\Omega}(k^2)$ is necessary when using peeling pool constructions. 

\begin{example} \label{quadraticpool:example}
Our sketching map $(S,\mathcal{U})$ is constructed from two composable maps $(S_i,\mathcal{U}_i)$ for $i=1,2$ with disjoint ground sets and core sizes at most $k'=k/2$. We use $\mathcal{U} := \mathcal{U}_1 \uplus \mathcal{U}_2$ and $S(U)=S(U_1\uplus S_2) = (S_1(U_1),S_2(U_2))$.
Observe that such a combination is a composable map with cores size at most $k$. For a subset $U$ where 
$U_i = U\cap \mathcal{U}_i$, it holds that
$\core(U) = \core_{S_1}(U_1) \cup \core_{S_2}(U_2)$. A core peeling of $S$ is a layer-wise union of core peelings of $S_1$ and $S_2$.
We choose $(S_2,\mathcal{U}_2)$ to be a $k'$-mins sketch. 
Recall that all layers $(A^2_j)$ in the core peeling of $S_2$ have size $k'$.
It remains to describe $(S_1,\mathcal{U}_1)$, which is designed to have a ``long and thin'' core peeling.
The map $S_1$ is defined with respect to disjoint sets $(R_i)_{i\in[k'/W]}$ of size $|R_i|=W=\log(10k/\delta)$. The sketch is designed so that for a set $U$, we can determine from $S(U)$ the smallest index $i$ for which $U$ contains $R_i$: $\arg\min_{i\in [k'/W]} R_i\subset U$. 
The composable sketch $S_1(U)$ must encode the smallest such $i$ and for all $j<i$, also encode $U\cap R_i$. In particular, 
$\core(U) = R_i\cup \biguplus_{j<i} U\cap R_j$.
A core peeling of $S_1$ is simply $A^1_i \equiv R_i$. 
Note that with $q\leq 1/2$, the probability that all keys in a particular $R_i$ are sampled is at most $\delta/(10 k)$. Hence the probability of termination before layer $k'$ is $< \delta$ and a determining pool for $S_1$ must use all $k'$ layers.
This means that a determining pool for the sketching map $S$ must also use $k$ layers and in total these layers include more that $(k')^2=k^2/4$ keys.
\end{example}

\ignore{
%% the below is good for thinking on whether it is possible to improve the quadratic bound. But it needs to be written better 
  The Example~\ref{quadraticpool:example} can be mitigated by refining our peeling construction so that we delete from each $A_i$ the keys that have probability $< \delta/k^2$ of surviving the sampling of layers $A_j$ ($j<i$), that is, remaining at the core of $Q_{<i}\cup A_i$. With this modification, the keys in the peeling of $S_2$ beyond layer $\log(k/\delta)$ are deleted and we remain with a pool of size $\tilde{O}(k)$. We next describe an example sketching map where this strategy is also insufficient. 
  
\eccomment{to clean up}

\begin{example}
Each $A_i$ is of size $i\cdot W$ where $W=\log(10k/\delta)$. The compositions says that if fewer than $0.25 W$ of the keys of $A_i$ are included then we need specific $0.75 W$ of the keys of the $A_{i+1}$ but the specific is distributed equally (hash of what was selected in prior layer). This "introduces" new capacity at each layer but keeps all of $A_i$ in the game (probability of surviving distributed equally and $>1/k$ so can not be trimmed from peeling. So we can have $k/J$ layers and still able to store the sketches (record each subset of relevant keys is sampled from each layer). 
\end{example}
 }

 \section{Termination Proof for the General Case} \label{proofgen:sec}

In this section we establish that the termination condition (see Lemma~\ref{termination:lemma}) holds for general composable sketches with $\ell=O(k\log(k/\delta))$ and for monotone composable sketches with $\ell=O(\log(k/\delta))$. This would conclude the proof of Theorem~\ref{composablepool:theorem}.

\subsection{Termination Proof for Monotone Composable Sketches}

We reuse the notation introduced in Section~\ref{uniqcores:sec}. The following definition will be handy:
\begin{definition} [Residual Set]
   For a set of subsets $P\subset 2^{\mathcal{U}}$ and $A\in 2^{\mathcal{U}}$, the \emph{residual} set of $P$ with respect to $A$, $P\mid_A$, is the set of all \emph{minimal} subsets in
$P\setminus A := \{U\setminus A \mid U\in P\}$. 
\end{definition}

The monotonicity condition in \cref{monotonecomposablemap:def} can equivalently be stated as follows.
For all $U_1\subset U_2$ and $A$,
The cardinality of sets in
$\InCores(U_1 \cup A) \mid_A$ is at most that of sets in
$\InCores(U_2 \cup A) \mid_A$.

\begin{claim} \label{samesize:claim}
    For a monotone composable $(S,\mathcal{U})$, all cores $\Cores(\sigma)$ of the same sketch $\sigma$ have the same cardinality and for all $A$, all residual cores
$\Cores(\sigma)\mid_A$ have the same cardinality. 
\end{claim}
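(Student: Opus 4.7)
The plan is to derive both parts of the claim from the monotonicity hypothesis via an intermediate equivalence that identifies residual in-cores with ``minimal subsets that preserve the sketch.'' Concretely, for arbitrary $U, A \in 2^{\mathcal U}$, I will show
\[
\mathcal{M}(U,A) := \{C \subseteq U \text{ minimal with } S(C \cup A) = S(U \cup A)\} \;=\; \InCores(U \cup A) \mid_A.
\]
For the forward direction, given $C \in \mathcal{M}(U,A)$, the set $C \cup A$ has sketch $S(U\cup A)$ and therefore, by \eqref{coreinset:eq}, contains some core $K \in \InCores(U \cup A)$. Using the self-composition identity \eqref{selfcompose:eq} one checks that $S((K\setminus A) \cup A) = S(K)\oplus S(A) = S(U\cup A)$ and that $K\setminus A \subseteq (C\cup A)\setminus A \subseteq C$; minimality of $C$ then forces $C = K\setminus A$, and a parallel argument shows $C$ is minimal among $\{K'\setminus A : K'\in \InCores(U\cup A)\}$. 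The reverse inclusion is symmetric: any $D$ in the residual in-cores satisfies $S(D\cup A) = S(U\cup A)$, and if it failed to be minimal the forward direction would extract a strictly smaller residual in-core, contradicting minimality of $D$.

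Once this equivalence is in hand, monotonicity applied with $U_1 = U_2 = U$ and the same $A$ immediately forces all elements of $\InCores(U \cup A) \mid_A$ to share a common cardinality (the inequality $|C_1|\le|C_2|$ together with its symmetric partner). Part~1 of the claim then follows by taking $U = \MaxSet(\sigma)$ and $A = \emptyset$: here $\InCores(\MaxSet(\sigma)) = \Cores(\sigma)$ (every core lies in $\MaxSet(\sigma)$, and any core of $S(\MaxSet(\sigma)) = \sigma$ is a core of $\sigma$), and the $\emptyset$-residual equals $\Cores(\sigma)$ itself because distinct cores are pairwise incomparable.

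For Part~2 the subtlety I anticipate is that $S(\MaxSet(\sigma) \cup A) \neq \sigma$ in general, so one cannot directly identify $\InCores(\MaxSet(\sigma) \cup A)$ with $\Cores(\sigma)$. The remedy is to replace $A$ by $A' := A \cap \MaxSet(\sigma)$: every $K \in \Cores(\sigma)$ satisfies $K \subseteq \MaxSet(\sigma)$, so $K\setminus A = K\setminus A'$, whence $\Cores(\sigma)\mid_A = \Cores(\sigma)\mid_{A'}$; and $A' \subseteq \MaxSet(\sigma)$ gives $\MaxSet(\sigma) \cup A' = \MaxSet(\sigma)$. Applying the equivalence with $U = \MaxSet(\sigma)$ and $A = A'$ then identifies $\Cores(\sigma)\mid_{A'}$ with $\mathcal{M}(\MaxSet(\sigma), A')$, whose elements share a common size by monotonicity. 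The main obstacle to watch for is the careful two-way verification of the equivalence $\mathcal{M}(U,A) = \InCores(U\cup A)\mid_A$; the specialization to $U = \MaxSet(\sigma)$ and the reduction from $A$ to $A'$ are then routine bookkeeping using the $\MaxSet$ machinery of Section~\ref{proofcomposable:sec}.
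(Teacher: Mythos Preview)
Your proof is correct, and in fact more complete than the paper's own argument. The paper's proof is a three-line contradiction for Part~1 only: given cores $C_1,C_2$ of $\sigma$ with $|C_1|<|C_2|$, it sets $U_1=C_2$, $U_2=C_1\cup C_2$, $A=\emptyset$, and notes that $C_2\in\InCores(U_1)$ while $C_1\in\InCores(U_2)$, contradicting monotonicity. Part~2 is not written out; the paper relies on the sentence preceding the claim, which asserts (without proof) the equivalent restatement of monotonicity in terms of $\InCores(U_i\cup A)\mid_A$.

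Your route differs in two ways. First, you actually prove the identification $\mathcal{M}(U,A)=\InCores(U\cup A)\mid_A$ that the paper only states. Second, instead of constructing a strict inclusion $U_1\subsetneq U_2$ to force a contradiction, you apply monotonicity with $U_1=U_2$ and use symmetry to obtain equality of sizes directly; specializing to $U=\MaxSet(\sigma)$ (and replacing $A$ by $A\cap\MaxSet(\sigma)$ for Part~2) then yields both statements uniformly. This buys you a clean treatment of Part~2, which the paper's proof leaves implicit, at the cost of carrying the $\MaxSet$ machinery and the two-way inclusion argument. The paper's trick is shorter for Part~1 alone but would need a separate (and slightly more delicate) instantiation for the residual statement.
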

\begin{proof}
    If two cores $C_1,C_2$ are such that $|C_1|<|C_2|$ then 
    consider $U_1=C_2$ and $U_2= C_1\cup C_2$ and note that $U_1\subset U_2$. Then $C_1\in \InCores(C_1\cup C_2)$ and $C_2\in \InCores(C_2)$ which contradicts monotonicity.
\end{proof}

%When our sketching map is monotone, in our construction $|A_i|$ is non-increasing.
%Moreover, for fixed $j$ and $i>j$, the cardinality of
% $\InCores(A_i\cup Q_{\leq j})\mid Q_{\leq j}$ is non-increasing.

 For each $i$ and fixing $Q_{<i}$, it follows from Claim~\ref{samesize:claim}, all residual cores 
 $C\in \InCores(A_{\geq i}\cup Q_{< i})\mid Q_{< i}$
 have the same cardinality. We denote this value by $r_i$. 

\begin{claim}
    If $r_i=0$ then the termination condition~\eqref{terminationcond:eq} is satisfied for $i-1$.
\end{claim}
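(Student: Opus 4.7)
The plan is to unpack what $r_i=0$ means in terms of residual cores and then apply the Midpoint Property (\cref{midpoint:claim}) twice.

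First, by the definition of the residual, $r_i = 0$ means that there exists a set in $\InCores(A_{\geq i}\cup Q_{<i})\mid_{Q_{<i}}$ of size $0$. Unwinding the definition of the residual, this gives some $C \in \InCores(A_{\geq i}\cup Q_{<i})$ with $C \setminus Q_{<i} = \emptyset$, i.e., $C \subseteq Q_{<i}$. Since $C$ is a core of $S(A_{\geq i} \cup Q_{<i})$, we have $S(C) = S(A_{\geq i}\cup Q_{<i})$.

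Next, since $C \subseteq Q_{<i} \subseteq A_{\geq i}\cup Q_{<i}$ and the sketches of $C$ and $A_{\geq i}\cup Q_{<i}$ coincide, the Midpoint Property applied to this sandwich yields
\[
S(Q_{<i}) \;=\; S(A_{\geq i}\cup Q_{<i}).
\]
Now observe the chain $Q_{<i} \subseteq Q_{<i}\cup A_i \subseteq Q_{<i}\cup A_{\geq i}$ (using $A_i \subseteq A_{\geq i}$). Since the outermost sketches agree by the previous equation, a second application of the Midpoint Property gives
\[
S(Q_{<i}\cup A_i) \;=\; S(Q_{<i}).
\]
Recalling that $Q_{\leq i-1} = Q_{<i}$, this is exactly the termination condition \eqref{terminationcond:eq} at index $i-1$, as desired.

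There is no real obstacle here: the result is essentially a direct consequence of the characterization of cores combined with the Midpoint Property, both already established. The only thing to be careful about is correctly translating ``$r_i=0$'' into the existence of a core of $S(A_{\geq i}\cup Q_{<i})$ that is contained in $Q_{<i}$, which is immediate from the definition of the residual.
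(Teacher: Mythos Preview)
Your proof is correct and follows essentially the same approach as the paper: extract a core $C\subseteq Q_{<i}$ from $r_i=0$, then apply the Midpoint Property to the chain $C\subseteq Q_{<i}\subseteq A_{\geq i}\cup Q_{<i}$. The paper actually stops after this single application, obtaining $S(Q_{<i})=S(A_{\geq i}\cup Q_{<i})$, which is the form \eqref{emptyM:eq} at index $i-1$ (already shown equivalent to \eqref{terminationcond:eq} in the Termination Lemma); your second Midpoint application just makes this equivalence explicit, so the difference is cosmetic.
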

\begin{proof}
    If $r_i=0$ then there is $C\in \InCores(A_{\geq i}\cup Q_{< i})$ that is $C\subset Q_{<i}$.
Note that $C\subset Q_{<i} \subset A_{\geq i}\cup Q_{< i}$. From Claim~\ref{midpoint:claim}, since $S(C)=S(A_{\geq i}\cup Q_{< i})$ then
$S(Q_{< i})=S(A_{\geq i}\cup Q_{< i})$.
\end{proof}

Since $A_i$ is a core of $A_{\geq i}$, from Claim~\ref{coreofunion:claim} there must be a residual core
$A'_i\in \InCores(A_{\geq i}\cup Q_{< i})\mid Q_{< i}$ such that
$A'_i\subset A_i$. Recall that $|A'_i|=r_i$.

\begin{claim}
    Fixing $Q_i$, it holds that $r_{i+1} \leq r_i - |A'_i\cap Q_i|$
\end{claim}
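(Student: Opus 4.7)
The plan is to invoke the rephrased form of monotonicity stated right before Claim~\ref{samesize:claim} to bound $r_{i+1}$ by the common size of residual cores of the strictly larger set $A_{\geq i}\cup Q_{\leq i}$, and then exploit the identity $A_{\geq i}\cup Q_{\leq i} = A_{\geq i}\cup Q_{<i}$ to read the bound off from the step-$i$ witness $A'_i$ that we already have in hand.

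First I apply that rephrased monotonicity with $U_1 := A_{\geq i+1}$, $U_2 := A_{\geq i}$ and $A := Q_{\leq i}$; the inclusion $U_1\subset U_2$ is immediate, so
\[
r_{i+1} \;\le\; |C'| \qquad\text{for any } C' \in \InCores(A_{\geq i}\cup Q_{\leq i})\mid_{Q_{\leq i}}.
\]
Because $Q_i \subset A_i \subset A_{\geq i}$, the union on the left collapses: $A_{\geq i}\cup Q_{\leq i} = A_{\geq i}\cup Q_{<i}$. Thus it suffices to exhibit a single element of $\InCores(A_{\geq i}\cup Q_{<i})\mid_{Q_{\leq i}}$ of size at most $r_i - |A'_i\cap Q_i|$, since by the same reasoning that defines $r_i$ (an extension of Claim~\ref{samesize:claim} to $\InCores$) all such residuals share a common size.

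To produce such an element I will use $A'_i$ directly. Since $A'_i \in \InCores(A_{\geq i}\cup Q_{<i})\mid_{Q_{<i}}$, by definition of the residual set there exists a core $C \in \InCores(A_{\geq i}\cup Q_{<i})$ with $C\setminus Q_{<i} = A'_i$; equivalently $C = A'_i \cup (C\cap Q_{<i})$. The peeling layers $(A_j)$ are pairwise disjoint, so $A'_i \subset A_i$ is disjoint from $Q_{<i}$, and $C\cap Q_{<i} \subset Q_{<i} \subset Q_{\leq i}$. Therefore
\[
C\setminus Q_{\leq i} \;=\; A'_i\setminus Q_i,
\]
a set of size $|A'_i| - |A'_i\cap Q_i| = r_i - |A'_i\cap Q_i|$. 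Any minimal element of $\{C''\setminus Q_{\leq i} : C''\in \InCores(A_{\geq i}\cup Q_{<i})\}$ is contained in this one, hence has size at most $r_i - |A'_i\cap Q_i|$. Combining with the display above yields $r_{i+1} \le r_i - |A'_i\cap Q_i|$, which is the claim.

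No step looks genuinely hard; the only point to be careful about is that the rephrased monotonicity, together with its ``all residuals share a common size'' consequence (which the paper is already implicitly using when it defines $r_i$), applies to $\InCores(\cdot)\mid_{Q_{\leq i}}$ and not merely to $\Cores(\cdot)\mid_{Q_{\leq i}}$. Once that is accepted, the collapse $A_{\geq i}\cup Q_{\leq i} = A_{\geq i}\cup Q_{<i}$ does all of the real work.
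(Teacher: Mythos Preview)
Your proposal is correct and follows essentially the same route as the paper: both apply the rephrased monotonicity with $U_1=A_{\geq i+1}$, $U_2=A_{\geq i}$, $A=Q_{<i+1}=Q_{\leq i}$, and both bound the residual-core size on the $U_2$ side via the witness $A'_i$ (the paper names the minimal subset $A''_i\subset A'_i$ explicitly, whereas you pass directly to $A'_i\setminus Q_i$ and invoke the common-size property). One wording slip: it is not true that \emph{every} minimal element of $\{C''\setminus Q_{\leq i}\}$ is contained in $A'_i\setminus Q_i$; what you need (and what holds) is that \emph{some} minimal element is, which together with the common-size property gives the bound.
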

\begin{proof}
Since
$A'_i\in \InCores(A_{\geq i}\cup Q_{< i})\mid Q_{< i}$, there must be
$A''_i\subset A'_i$ such that
\[
A''_i\setminus Q_i \in \InCores(A_{\geq i}\cup Q_{< i})\mid_{Q_{< i}\cup Q_i} =
\InCores(A_{\geq i}\cup Q_{< i})\mid_{Q_{< i+1}}\ .
\]
From the monotonicity property applied to the sets
$A_{\geq i+1}\cup Q_{< i+1}\subset A_{\geq i}\cup Q_{< i+1}$ and $Q_{< i+1}$, it follows that
$r_{i+1}\leq |A''_i\setminus Q_i|$. The claim follows  using that $|A''_i\setminus Q_i| \leq |A'_i\setminus Q_i|= r_i - |A'_i\cap Q_i|$.
\end{proof}

We are now ready to conclude the termination proof. At each step, when sampling $Q_i$, $r_i$ decreases in expectation by a factor of $(1-q)$.  The argument for bounding the number of steps until $r_i=0$ with probability at least $(1-\delta)$ is then exactly the same as for monotone unique cores in Section~\ref{monotoneuniqcore:sec}.

\subsection{Termination Proof for Composable Maps}
We reuse the notation introduced in Section~\ref{uniqcores:sec}:
Let
\begin{align*}
   U_1 &=A_{\geq 1} & \\
   U_i &= Q_{<i} \cup A_{\geq i} & \text{; for $i>1$}
\end{align*}
Recall that this is a 
non-increasing sequence 
$U_1\supset U_2 \cdots \supset U_\ell \cdots \supset Q$ of supersets of $Q$ and that
the set $U_i$ is determined once we fixed the sampling $Q_j$ for $j<i$. 
\ignore{
Let $P = \Cores(S(U_\ell)) \mid_{Q_{<\ell}} $.
\begin{claim}
  $P=\{\emptyset\}$ if and only if the termination condition~\eqref{terminationcond:eq} is met at some $i\leq \ell$.  
\end{claim}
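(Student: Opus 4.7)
The plan is to reduce the claim to a single sketch equality and translate between termination and that equality using composability and the midpoint property. First I would establish the key equivalence
\[ P = \{\emptyset\} \iff S(Q_{<\ell}) = S(U_\ell). \]
For the ``$\Rightarrow$'' direction, $P = \{\emptyset\}$ means some $C \in \Cores(S(U_\ell))$ has $C \setminus Q_{<\ell} = \emptyset$, i.e.~$C \subseteq Q_{<\ell}$; since $Q_{<\ell} \subseteq U_\ell$ by the definition of $U_\ell$, \cref{midpoint:claim} applied to $C \subseteq Q_{<\ell} \subseteq U_\ell$ gives $S(Q_{<\ell}) = S(U_\ell)$. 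For the converse, if $S(Q_{<\ell}) = S(U_\ell)$, then \eqref{coreinset:eq} applied to $Q_{<\ell}$ produces a core $C \in \Cores(S(U_\ell))$ with $C \subseteq Q_{<\ell}$, so $\emptyset$ lies in the residual family and must be its (necessarily unique) minimal element.

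\textbf{Forward direction of the claim.} Assuming $P = \{\emptyset\}$, hence $S(Q_{<\ell}) = S(U_\ell)$, I would use composability together with the peeling identity $S(A_\ell) = S(A_{\geq \ell})$ (which holds because $A_\ell \in \Cores(S(A_{\geq \ell}))$ by construction of the peeling) to compute
\[ S(Q_{\leq \ell-1} \cup A_\ell) = S(Q_{<\ell}) \oplus S(A_\ell) = S(Q_{<\ell}) \oplus S(A_{\geq \ell}) = S(U_\ell) = S(Q_{<\ell}), \]
which is exactly the termination condition~\eqref{terminationcond:eq} at $i = \ell - 1 \leq \ell$.

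\textbf{Reverse direction of the claim.} Given termination at some $i$ with $i + 1 \leq \ell$, so $S(Q_{\leq i}) = S(Q_{\leq i} \cup A_{i+1}) = S(U_{i+1})$, I would propagate this equality up the decreasing chain $U_{i+1} \supseteq U_{i+2} \supseteq \cdots \supseteq U_\ell$. For each $j > i$, the inclusions $Q_k \subseteq A_k$ yield $Q_{\leq i} \subseteq U_j \subseteq U_{i+1}$, so \cref{midpoint:claim} gives $S(U_j) = S(Q_{\leq i})$. Specializing to $j = \ell$ and applying the midpoint once more to $Q_{\leq i} \subseteq Q_{<\ell} \subseteq U_\ell$ produces $S(Q_{<\ell}) = S(U_\ell)$; by the key equivalence, $P = \{\emptyset\}$.

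\textbf{Main obstacle.} The only delicate point is the indexing boundary $i = \ell$: since $P$ is defined via $Q_{<\ell}$ rather than $Q_{\leq \ell}$, a termination occurring strictly at $i = \ell$ would assert $S(Q_{\leq \ell}) = S(U_{\ell+1})$, one step beyond what $P$ encodes. I would address this by reading ``termination at some $i \leq \ell$'' as ``termination has occurred by the time the first $\ell$ layers are in play,'' effectively $i \leq \ell - 1$; under this natural reading the two directions above combine to the stated iff, and the remainder is routine bookkeeping with the composable-sketch identities.
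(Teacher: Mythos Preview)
Your proof is correct and takes essentially the same approach as the paper: both reduce to the equivalence $P=\{\emptyset\}\iff S(Q_{<\ell})=S(U_\ell)$ (the paper invokes this as a separate ``emptyresidual'' claim while you prove it inline via \eqref{coreinset:eq} and the midpoint property) and then use the midpoint property along the chain $Q_{<i}\subseteq Q_{<\ell}\subseteq U_\ell\subseteq U_i$ to connect $S(Q_{<\ell})=S(U_\ell)$ with termination. The indexing boundary you flag at $i=\ell$ is a genuine ambiguity that the paper's own proof shares (it writes ``termination at $\ell$'' from $S(Q_{<\ell})=S(U_\ell)$, where your reading gives $\ell-1$), so your resolution is consistent with the paper's intent.
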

\begin{proof}
From Claim~\ref{emptyresidual:claim}, it suffices to show that the termination condition is met at some $i\leq \ell$ if and only if $Q_{<\ell}$ contains a core of $S(U_\ell)$.
If $Q_{<\ell}$ contains a core $C\in \Cores(S(U_\ell))$ then from Claim~\ref{midpoint:claim}, since we also have $C\subset Q_{<\ell}\subset U_\ell$ it holds that $S(Q_{<\ell}) = S(U_\ell)=S(C)$ and hence \eqref{terminationcond:eq} is met at $\ell$.
If \eqref{terminationcond:eq} is met at $i\leq \ell$ then 
$S(U_i) = S(Q_{<i})$. Since $Q_{<i} \subset Q_{<\ell} \subset U_\ell \subset U_i$ it holds from Claim~\ref{midpoint:claim} that 
$S(Q_{<\ell)} = S(U_\ell) = S(Q_{<i})$.
\end{proof}

It remains to consider the case that $P\neq \{\emptyset\}$, which corresponds to the case that the termination condition was not met at $\ell$. We need to show that the probability of that is at most $\delta$.

Let 
$P_i$ ($i\in [\ell]$) be the residual cores of $S(U_\ell)$ with respect to $Q_{< \ell, -i} := \bigcup_{j\in [\ell], j\not= i} Q_j$.
\begin{equation}
    P_i := \Cores(S(U_{\ell}))\mid_{Q_{< \ell, -i}}
\end{equation}
Since $P\neq\emptyset$ and for all $i$, $Q_{< \ell, -i}\subset Q_{<\ell}$ it follows from Claim~\ref{monotonicityres:claim} that $P_i\neq\{\emptyset\}$.

} The proof proceeds as follows:
\begin{enumerate}
    \item 
    We establish in Lemma~\ref{yield:lemma} that conditioned on no termination,
    each step $i$ has probability at least $q$ of having a nonempty set of ``yield'' keys (to be defined in the statement of the lemma) $C_i\subset Q_i$. This for any fixed sampling $Q_j$ at other steps $j\not=i$.
Hence, by reusing the same standard Chernoff argument as in Section~\ref{uniqcorestermination:sec}, we obtain that with probability $(1-\delta)$, there are at least $k$ steps with a nonempty yield when we perform $\ell$ steps. Equivalently, the probability that there were $k$ or fewer yields in $\ell$ steps is at most $\delta$.
    \item
    We show that before termination we can have at most $k$ nonempty yields in total. Note that since $C_i\subset A_i$, the yields from different steps are disjoint. The total yield is $C:= \bigcup_{i<\ell} C_i$ and therefore it suffices to argue that 
    $|C| = \sum_{i<\ell} |C_i|\leq k$. This is established
    in Lemma~\ref{kyields:lemma} that states that $C$ must be a core of $S(C)$. From Claim~\ref{coresize:lemma}, a core can be of size at most $k$, and therefore it must hold that $|C|\leq k$.
%    can not contain additional keys to those in $C$ and therefore 
% Let $i_j$ be the steps that resulted in such a yield, and let $c_{i_j}\in C_{i_j}$ be one such representative. Now consider the set  
% $C=\{c_{i_1},\ldots,c_{i_{k'}}\}$.
% We establish in Lemma~\ref{distinctclass:lemma} that 
% each subset of $C'\in 2^C$ belongs to a distinct equivalence class $[C']_{S(U_\ell)}$.  It then follows from Claim~\ref{boundrescores:claim} that we must have $|C|\leq k$.
\end{enumerate}
Combining the two, we obtain that the 
probability that the termination condition was not satisfied in $\ell$ steps is at most the probability of having $k$ or fewer yields in $\ell$ steps, which is at most $\delta$.

It remains to prove the two lemmas.
The following generalizes Claim~\ref{mergeuniq:claim}:
\begin{claim} [Cores of a Union in Union of Cores] \label{coreofunion:claim}
Let $U_1,U_2 \in 2^{\mathcal{U}}$  be disjoint.
For any $C_1\in \InCores(U_1)$ and $C_2\in \InCores(U_2)$ 
there is $C\in \Cores(S(U_1\cup U_2))$ 
such that $C\subset C_1\cup C_2$.

Conversely, for any $C\in \InCores(U_1\cup U_2)$ there is 
$C_1\in \InCores(U_1)$ and $C_2\in \InCores(U_2)$ such that
$C\subset C_1\cup C_2$.
\end{claim}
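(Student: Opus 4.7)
Plan: I will handle the two directions separately.

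For the forward direction, composability does all the work. Since $C_i\in\InCores(U_i)$ means $S(C_i)=S(U_i)$, the composition rule gives
\[
S(C_1\cup C_2) \;=\; S(C_1)\oplus S(C_2) \;=\; S(U_1)\oplus S(U_2) \;=\; S(U_1\cup U_2).
\]
By \eqref{coreinset:eq} applied to $C_1\cup C_2$, this set contains some core $C$ of its own sketch, i.e., $C\in\Cores(S(U_1\cup U_2))$ with $C\subset C_1\cup C_2\subset U_1\cup U_2$. Hence $C\in\InCores(U_1\cup U_2)$.

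For the converse, take $C\in\InCores(U_1\cup U_2)$ and split it as $C = C^{(1)}\uplus C^{(2)}$ with $C^{(i)}:=C\cap U_i$ (disjoint because $U_1,U_2$ are). The preliminary observation I would record is that each $C^{(i)}$ is a core of its own sketch $S(C^{(i)})$, proved by contradiction: if $V\subsetneq C^{(1)}$ had $S(V)=S(C^{(1)})$, then
\[
S(V\cup C^{(2)}) \;=\; S(V)\oplus S(C^{(2)}) \;=\; S(C^{(1)})\oplus S(C^{(2)}) \;=\; S(C),
\]
violating the minimality of $C$ in $\InCores(U_1\cup U_2)$ since $V\cup C^{(2)}\subsetneq C$.

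Given this, the plan to produce $C_i\in\InCores(U_i)$ with $C^{(i)}\subset C_i$ is a greedy peel inside $U_i$: initialize $C_i\gets U_i$ and repeatedly discard any element of $U_i\setminus C^{(i)}$ whose removal preserves the sketch $S(U_i)$; the resulting $C_i$ satisfies $C^{(i)}\subset C_i\subset U_i$ and $S(C_i)=S(U_i)$. The main obstacle is verifying that this $C_i$ is actually minimal. If $V\subsetneq C_i$ with $S(V)=S(U_i)$ and $C_i\setminus V$ meets $U_i\setminus C^{(i)}$, the midpoint property (Claim~\ref{midpoint:claim}) forces $S(C_i\setminus\{x\})=S(U_i)$ for some $x\in U_i\setminus C^{(i)}$, contradicting termination of the peel. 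The genuinely hard subcase is $C_i\setminus V\subset C^{(i)}$; here the strategy is to couple such a hypothetical $V$ with $C^{(3-i)}$ (or with the maximal set of $U_{3-i}$) and use the composition identity $S(U_1)\oplus S(U_2)=S(U_1\cup U_2)$ to extract a proper subset of $C$ with sketch $S(U_1\cup U_2)$, once again contradicting minimality of $C$. If this last reduction resists in full generality, the fallback is to specialize: with unique cores the statement reduces to Claim~\ref{mergeuniq:claim}, and for monotone composable maps the rigidity supplied by Claim~\ref{samesize:claim} (all residual cores have equal cardinality) closes the subcase cleanly.
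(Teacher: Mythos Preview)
Your forward direction is correct and coincides with the paper's argument verbatim.

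For the converse, however, there is a genuine gap, and it cannot be closed: the converse as stated is in fact \emph{false}. Take $\mathcal{U}=\{a,b,c\}$, $U_1=\{a,b\}$, $U_2=\{c\}$, and the composable map
\[
S(\emptyset)=0,\ S(\{a\})=1,\ S(\{b\})=S(\{a,b\})=2,\ S(\{c\})=3,\ S(\{a,c\})=S(\{b,c\})=S(\{a,b,c\})=4.
\]
One checks directly that $S$ is composable. Here $\InCores(U_1)=\{\{b\}\}$ and $\InCores(U_2)=\{\{c\}\}$, while $C=\{a,c\}\in\InCores(U_1\cup U_2)$. The only choice is $C_1\cup C_2=\{b,c\}$, which does not contain $C$. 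So your ``hard subcase'' (where $C_i\setminus V\subset C^{(i)}$) is exactly where the statement breaks, and no coupling with $C^{(3-i)}$ or with $\MaxSet$ can rescue it.

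What the paper's proof actually establishes for the converse is the weaker fact that $C_i':=C\cap U_i$ is itself a core of $S(C_i')$ (equivalently, $C_i'\in\InCores(C_i')$), which is precisely your ``preliminary observation''. The paper's final line produces $C_i\subset C_i'$ with $S(C_1\cup C_2)=S(C)$; minimality of $C$ then forces $C_i=C_i'$, but this never yields $C_i\in\InCores(U_i)$. Your greedy peel genuinely tries to go further, and the obstruction you identified is real. Note, finally, that the converse direction is not invoked anywhere else in the paper (all applications of the claim use only the forward direction), so the defect is harmless for the paper's results.
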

\begin{proof}
\begin{align*}
        S(U_1\cup U_2)= S(U_1)\oplus S(U_2) = S(C_1) \oplus S(C_2) = S(C_1\cup C_2)
\end{align*}
From \eqref{coreinset:eq} 
there must be a core of $S(C_1\cup C_2)$ (which is the same as a core of $S(U_1\cup U_2)$) that is a subset of $C_1\cup C_2$. This concludes the proof of the first part of the statement.

Let $C'_i = U_i\cap C$.
\begin{align*}
    S(C) &= S(C'_1 \cup C'_2) = S(C'_1) \oplus S(C'_2)
\end{align*}
From \eqref{coreinset:eq}, there is $C_i\subset C'_i$ such that $C_i\in \Cores(S(C'_i))$.
Therefore 
\begin{align*}
    S(C'_1) \oplus S(C'_2) = S(C_1) \oplus S(C_2) = S(C_1\cup C_2)\ .
\end{align*}
This concludes the proof of the second part of the statement.
\end{proof}

\begin{claim} [Core Inheritance] \label{inheritancegen:claim}
    Let $A,B \in 2^{\mathcal{U}}$ be such that
    for all $C\in \InCores(A\cup B)$, it holds that $A\subset C$.
    Then for any $A'\subset A$ and $B'\subset B$ it holds that for
    all $C\in \InCores(A'\cup B')$, $A'\subset C$.
\end{claim}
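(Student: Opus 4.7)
I will argue by contradiction. Suppose there exists $C \in \InCores(A' \cup B')$ and an element $a \in A' \setminus C$. The plan is to extend $C$ back to a subset of $A \cup B$ that (i) has the same sketch as $A\cup B$, (ii) is contained in $A\cup B$, and (iii) does not contain $a$. Invoking the hypothesis on such an extension then contradicts (iii).

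The candidate extension is
\[
D \;:=\; C \,\cup\, \bigl((A \cup B) \setminus (A' \cup B')\bigr).
\]
Clearly $D \subseteq A \cup B$. Since $A \cup B = (A' \cup B') \cup \bigl((A \cup B) \setminus (A' \cup B')\bigr)$, composability and the assumption $S(C) = S(A' \cup B')$ give
\[
S(A \cup B) \;=\; S(A' \cup B') \oplus S\bigl((A \cup B) \setminus (A' \cup B')\bigr) \;=\; S(C) \oplus S\bigl((A \cup B) \setminus (A' \cup B')\bigr) \;=\; S(D).
\]
By \eqref{coreinset:eq} applied to $D$, there is some $C^* \in \Cores(S(D)) = \Cores(S(A \cup B))$ with $C^* \subseteq D$. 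Since $D \subseteq A \cup B$, this $C^*$ lies in $\InCores(A \cup B)$.

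Now apply the hypothesis: every member of $\InCores(A \cup B)$ contains $A$, so $A \subseteq C^* \subseteq D$. In particular $a \in D$. But $a \in A'$, so $a \notin (A \cup B) \setminus (A' \cup B')$, and $a \notin C$ by choice, hence $a \notin D$. This is the desired contradiction, and therefore every $C \in \InCores(A' \cup B')$ must contain $A'$. The only real subtlety is choosing $D$ so that it both reconstructs the sketch $S(A\cup B)$ and provably excludes $a$; taking the complement of $A'\cup B'$ inside $A\cup B$ (rather than the more tempting $(A\setminus A')\cup(B\setminus B')$) handles this cleanly without needing any disjointness assumption on $A$ and $B$.
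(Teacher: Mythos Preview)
Your proof is correct and follows the same high-level strategy as the paper: extend the candidate core $C\in\InCores(A'\cup B')$ to a subset of $A\cup B$ with sketch $S(A\cup B)$, locate a core of $A\cup B$ inside it, and invoke the hypothesis. The execution differs, however. The paper decomposes $A\cup B$ as $(A'\cup B')\cup\bigl((A\setminus A')\cup(B\setminus B')\bigr)$ and applies Claim~\ref{coreofunion:claim} (which is stated for \emph{disjoint} unions) to produce a core $D\subset D'\cup D''$ with $D'\in\InCores(A'\cup B')$ and $D''\in\InCores((A\setminus A')\cup(B\setminus B'))$; it then argues that since $A\subset D$ and the two parts are disjoint, $A'\subset D'$. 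Your argument instead builds the extension $D$ directly as $C\cup\bigl((A\cup B)\setminus(A'\cup B')\bigr)$ and proceeds by contradiction on a single missing element $a$. This buys you two things: you avoid the detour through Claim~\ref{coreofunion:claim}, and---as you note---you need no disjointness assumption on $A$ and $B$ (the paper's decomposition is only a disjoint union when $A\cap B=\emptyset$, which happens to hold in all its applications). Your version is thus slightly more self-contained and more general.
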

\begin{proof}
    \begin{align*}
        \InCores(A\cup B) = \InCores( (A'\cup B') \cup (A\setminus A' \cup B\setminus B'))\ .
    \end{align*}
    Applying Claim~\ref{coreofunion:claim} it follows that for any
    $D' \in \InCores(A'\cup B')$ and $D''  \in \InCores((A\setminus A') \cup (B\setminus B'))$ there is a core $D\in \InCores(A\cup B)$ such that $D\subset D'\cup D''$. Since $A\subset D$ and the two parts are disjoint, it must hold that $A'\subset D'$ (and $A\setminus A'\subset D''$). Therefore, any core $D' \in \InCores( (A'\cup B')$ must contain $A'$.
\end{proof}

% \ignore{
% %% we may not need it
% \begin{claim}  
% Let $A\in \InCores(U)$ and let $B\in 2^{\mathcal{U}}$.
% Then there is a (possibly empty) $A'\subset A$ such that 
% $A' \in \Cores(S(U\cup B)\mid_B$
% \end{claim}
% \begin{proof}
% \begin{align*}
%     S(U\cup B) &= S(U) \oplus S(B) = S(A) \oplus S(B) = S(A\cup B) \ .
% \end{align*}
% Therefore from \eqref{coreinset:eq} there must be a core $C$ of $S(A\cup B)$ such that $C\subset A\cup B$.
% \end{proof}
% }

\begin{lemma} [Probability of yield] \label{yield:lemma}
Consider the sampling of $Q_j$.
% at step $i-1<\ell$.
Fix a step $i$ and fix $Q_{<\ell,-i}$ (that is, $Q_j$ for all $j\in[\ell-1],j\not= i$. 
Assume that termination does not hold based on $Q_{<\ell,-i}$ alone, that is, 
\begin{equation}\label{terminationminusi:eq}
    S(Q_{<\ell,-i}) \neq S(Q_{<\ell,-i} \cup A_{\geq\ell})\ .
\end{equation}

% the distribution of $Q_i$
% Conditioned on fixed $Q_j$ for all $j\not= i$, 
Then with probability at least $q$ over the sampling of $Q_i$, there is a non-empty $C_i\subset Q_i$ such that if sampled, must be contained in all $\InCores(Q_{< \ell, -i} \cup C_i)$:
%    \[
%    \forall D\in \InCores(Q_{< \ell, -i} \cup A_{\geq \ell} \cup C_i),\ C_i \subset D\ .
%    \]
    \[
    \forall D\in \InCores(Q_{< \ell, -i} \cup C_i),\ C_i \subset D\ .
    \]
We refer to the set $C_i$ as the \emph{yield} of step $i$.
% $[Q_{< \ell, -i}\cup A_{\geq \ell}]_{S(U_\ell)} \neq [Q_{< \ell, -i} \cup A_{\geq \ell} \cup \{c_i\}]_{S(U_\ell)}$.
\end{lemma}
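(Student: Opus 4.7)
The plan is to locate a non-empty ``candidate'' set $P_i\subseteq A_i$ such that including any single element of $P_i$ into $Q_i$ already provides a valid singleton yield $C_i=\{x\}$, and then use the independence of the sampling $Q_i\sim\mathrm{Bern}[q]^{A_i}$ to achieve the $\geq q$ probability bound. Concretely, I would take
\[
P_i:=\bigl\{x\in A_i : S(Q_{<\ell,-i}\cup\{x\})\neq S(Q_{<\ell,-i})\bigr\}.
\]

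\textbf{First}, I would promote the termination-failure hypothesis $S(Q_{<\ell,-i})\neq S(Q_{<\ell,-i}\cup A_{\geq\ell})$ to the stronger statement $S(Q_{<\ell,-i}\cup A_i)\neq S(Q_{<\ell,-i})$. By the peeling identity $S(A_i)=S(A_{\geq i})$ and composability,
\[
S(Q_{<\ell,-i}\cup A_{\geq i})=S(Q_{<\ell,-i})\oplus S(A_{\geq i})=S(Q_{<\ell,-i})\oplus S(A_i)=S(Q_{<\ell,-i}\cup A_i).
\]
Since $Q_{<\ell,-i}\subseteq Q_{<\ell,-i}\cup A_{\geq\ell}\subseteq Q_{<\ell,-i}\cup A_{\geq i}$, if $S(Q_{<\ell,-i}\cup A_i)$ equalled $S(Q_{<\ell,-i})$ then the midpoint property (Claim~\ref{midpoint:claim}) would force $S(Q_{<\ell,-i}\cup A_{\geq\ell})=S(Q_{<\ell,-i})$, contradicting the hypothesis.

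\textbf{Second}, I would prove $P_i\neq\emptyset$ by contradiction. If $P_i=\emptyset$, then for every $x\in A_i$ we have $S(Q_{<\ell,-i}\cup\{x\})=S(Q_{<\ell,-i})$, so $Q_{<\ell,-i}\cup\{x\}\subseteq\MaxSet(S(Q_{<\ell,-i}))$, and in particular $x\in\MaxSet(S(Q_{<\ell,-i}))$. Hence $Q_{<\ell,-i}\cup A_i\subseteq\MaxSet(S(Q_{<\ell,-i}))$. By~\eqref{coreinset:eq} there is a core $C_0\subseteq Q_{<\ell,-i}$ of $S(Q_{<\ell,-i})$, and the chain $C_0\subseteq Q_{<\ell,-i}\cup A_i\subseteq\MaxSet(S(Q_{<\ell,-i}))$ combined with the midpoint property gives $S(Q_{<\ell,-i}\cup A_i)=S(Q_{<\ell,-i})$, contradicting the first step.

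\textbf{Finally}, for any $x\in P_i$ every in-core $D\in\InCores(Q_{<\ell,-i}\cup\{x\})$ must contain $x$: if not, $D\subseteq Q_{<\ell,-i}\subseteq Q_{<\ell,-i}\cup\{x\}$ together with $S(D)=S(Q_{<\ell,-i}\cup\{x\})$ and the midpoint property would yield $S(Q_{<\ell,-i})=S(Q_{<\ell,-i}\cup\{x\})$, contradicting $x\in P_i$. Since $P_i$ is non-empty and $Q_i$ includes each element of $A_i$ independently with probability $q$, we have $\Pr[Q_i\cap P_i\neq\emptyset]=1-(1-q)^{|P_i|}\geq q$; on this event we pick any $x\in Q_i\cap P_i$ and set $C_i:=\{x\}$, which is a non-empty subset of $Q_i$ contained in every member of $\InCores(Q_{<\ell,-i}\cup C_i)$. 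The main obstacle I anticipate is the first step, which hinges on carefully combining the peeling identity with the midpoint property; the other crucial design choice is to let $P_i$ consist of \emph{single-key} forcers rather than attempting to force the intersection of $A_i$-restrictions across all in-cores of $Q_{<\ell,-i}\cup A_i$ (which can be empty, as visible in toy examples with many symmetric in-cores).
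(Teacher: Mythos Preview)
Your proof is correct and takes a genuinely different, more elementary route than the paper.

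The paper works with the auxiliary set $V_i = Q_{<\ell,-i}\cup A_{\geq\ell}\cup A_i$, collects all in-cores of $S(V_i)$ that lie in $Q_{<\ell,-i}\cup A_i$, takes a \emph{minimal} $A'_i$ among their $A_i$-intersections, and then sets $C_i = Q_i\cap A'_i$; the passage from $A'_i$ down to $C_i$ invokes the core-inheritance Claim~\ref{inheritancegen:claim}. You instead go straight to the set $P_i$ of single keys whose inclusion changes the sketch of $Q_{<\ell,-i}$, get non-emptiness from the $\MaxSet$ characterization, and observe that any such $x$ must lie in every in-core of $Q_{<\ell,-i}\cup\{x\}$ by a one-line midpoint argument (since $x$ is the only key outside $Q_{<\ell,-i}$). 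This bypasses Claim~\ref{inheritancegen:claim} entirely in the present lemma and always produces a singleton yield. The paper's yield can have $|C_i|>1$, but the downstream Lemma~\ref{kyields:lemma} only counts how many steps have non-empty yield (and bounds the total $|C|\le k$), so singletons are exactly enough; nothing is lost. Your first step, reducing the hypothesis to $S(Q_{<\ell,-i}\cup A_i)\neq S(Q_{<\ell,-i})$ via the peeling identity $S(A_i)=S(A_{\geq i})$ plus midpoint, is cleaner than the paper's chain through $U_\ell$ (which, as the margin comments there indicate, needed some care).
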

\begin{proof}

Consider  
\[
V_i := Q_{<\ell, -i} \cup A_{\geq \ell} \cup A_i\ .
\]

Consider the set $\mathcal{D}$ of all $D\in \InCores(V_i)$ such that $D\subset Q_{<\ell, -i} \cup A_i$.
\begin{claim}
    $\mathcal{D}$ is not empty and for all $D\in\mathcal{D}$, $D\cap A_i$ is not empty.
\end{claim}
\begin{proof}
We show that this set is not empty.
$A_i$ is a core of $S(A_{\ge i})$ and hence a core of
$S(A_{\geq \ell} \cup A_i)$.
Applying Claim~\ref{coreofunion:claim}, taking any $B\in\InCores(Q_{<\ell, -i})$
there exists $D\in \InCores(V_i)$ such that $D\subset B\cup A_i$. 

We also show any $D\in\mathcal{D}$ must have a nonempty $D\cap A_i$.
If this was not the case, then 
\[
D\subset Q_{<\ell, -i} \subset Q_{<\ell} \subset U_\ell = Q_{<\ell} \cup A_{\geq \ell} \subset Q_{<\ell, -i} \cup A_{\geq \ell} \cup A_i=V_i\ .
\]
Since $S(D)=S(V_i)$, applying \cref{midpoint:claim}, we obtain that $S(Q_{<\ell,-i})=S(U_\ell)$ \msmargincomment{the assumption of this lemma was that the termination condition does not hold at step $i$, but this shows that the termination condition holds at step $\ell$ (or maybe $\ell-1$), right?}\ecmargincomment{should be fixed} and therefore the termination condition is satisfied, which contradicts our assumption \cref{terminationminusi:eq}.
\end{proof} % of claim

Let $A'_i\subset A_i$ be a minimal set in $\{D\cap A_i\mid D\in \mathcal{D}\}$. From the claim we know that there is a minimal nonempty such set.
 From minimality,
$A'_i$ must be contained in all cores in 
$\InCores(Q_{<\ell, -i} \cup A'_i)$. This is because
$S(Q_{<\ell, -i} \cup A'_i)=S(V_i)$ and all cores of $V_i$ that are contained in $Q_{<\ell, -i} \cup A'_i$ must include $A'_i$.

Let $C_i = Q_i\cap A'_i$. Note that since $|A'_i|>0$, with probability $\geq q$, $|C_i|\geq 1$.

It follows from Claim~\ref{inheritancegen:claim} that $C_i\subset A'_i$
must be contained in all cores in
$\InCores(Q_{<\ell, -i} \cup C_i)$. This concludes the proof of the lemma.
   \end{proof}

\begin{lemma} [There can be at most $k$ yields] \label{kyields:lemma}
 Assume that the termination condition~\ref{termination:lemma} does not hold at step $\ell$ and let $C_i$ for $i<\ell$ be the yield (possibly empty) at step $i$. Let $C:= \bigcup_{i<\ell} C_i$.
 Then $C\in \InCores(C)$.
% Alternative:
% For all $D\in \InCores(A_{\geq \ell} \cup C)$ it holds that $C \subset D$. 
\end{lemma}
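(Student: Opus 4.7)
The plan is to prove the stronger statement that $C := \bigcup_{i<\ell} C_i$ is itself a core of its own sketch, i.e., $C \in \InCores(C)$; then Lemma~\ref{coresize:lemma} immediately yields the promised bound $|C| \le k$ used in the argument preceding the lemma. Since any $D \in \InCores(C)$ is by definition a subset of $C$, it suffices to show that every such $D$ must also contain $C$, forcing $D = C$ and hence $C \in \InCores(C)$.

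The main step is to apply Core Inheritance (Claim~\ref{inheritancegen:claim}) once for each index $i$, using Lemma~\ref{yield:lemma} to supply its hypothesis. Fix $i < \ell$. Lemma~\ref{yield:lemma} tells us that every $D \in \InCores(Q_{<\ell,-i} \cup C_i)$ contains $C_i$, which is precisely the hypothesis of Claim~\ref{inheritancegen:claim} with $A := C_i$ and $B := Q_{<\ell,-i}$. I then take $A' := C_i$ and $B' := \bigcup_{j < \ell,\, j \neq i} C_j$. The inclusion $B' \subset B$ follows immediately from the peeling structure: each $C_j$ is a subset of $Q_j$ by construction, and $Q_j$ is one of the pieces making up $Q_{<\ell,-i}$ whenever $j \neq i$. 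The conclusion of Claim~\ref{inheritancegen:claim} is then that every $D \in \InCores(A' \cup B') = \InCores(C)$ contains $C_i$.

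Since $i$ was arbitrary, every $D \in \InCores(C)$ contains $C_i$ for all $i < \ell$, hence contains $\bigcup_{i<\ell} C_i = C$; combined with $D \subset C$ this gives $D = C$, so $C \in \InCores(C)$, as required. The main obstacle is just the indexing bookkeeping: verifying that $A' \cup B' = C$ and $B' \subset Q_{<\ell,-i}$, both of which follow from the disjointness of the peeling layers $A_j$ (and hence of the yields $C_j \subset A_j \cap Q_j$). No further structural manipulation of the sketching map is needed; the lemma reduces to a single clean application of Core Inheritance to the yield property from Lemma~\ref{yield:lemma}.
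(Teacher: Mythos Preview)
Your proof is correct and is essentially identical to the paper's own argument: both apply Core Inheritance (Claim~\ref{inheritancegen:claim}) with $A=C_i$, $B=Q_{<\ell,-i}$, $A'=C_i$, $B'=\bigcup_{j\neq i}C_j$, using Lemma~\ref{yield:lemma} for the hypothesis and the containment $C_j\subset Q_j\subset Q_{<\ell,-i}$ for the bookkeeping. You have simply spelled out a bit more explicitly why $D=C$ forces $C\in\InCores(C)$.
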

\begin{proof}
Applying Claim~\ref{inheritancegen:claim} with Lemma~\ref{yield:lemma} and using that $\bigcup_{j<\ell, j\not=i} C_j \subset Q_{<\ell,-i}$ we obtain that for all $i<\ell$,
    \[
    \forall D\in \InCores\left(\bigcup_{j<\ell, j\not=i} C_j  \cup C_i\right),\ C_i \subset D\ .
    \]
    Rewriting we obtain that 
    \begin{align*}
        \forall i<\ell, \forall D\in \InCores\left(\bigcup_{j<\ell} C_j \right),\ C_i \subset D\ .
    \end{align*}
    Therefore using $C:= \bigcup_{j<\ell} C_j$ this is equivalent to the claim in the statement of the Lemma:
    \begin{align*}
        \forall D\in \InCores(C ),\ C \subset D\ .
    \end{align*}
    This concludes the proof.
\end{proof}

\ignore{
\begin{lemma} [Distinct Equivalence Classes] \label{distinctclass:lemma}
   For subsets $C_1\neq C_2 \in 2^C$ it holds that $[C_1]_{S(U_\ell)} \neq [C_2]_{S(U_\ell)}$.  
\end{lemma}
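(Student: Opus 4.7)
The plan is to leverage Lemma~\ref{kyields:lemma}, which establishes $C \in \InCores(C)$, so that the argument driving the rank bound Lemma~\ref{coresizeklimit:claim} can be applied directly to $C$. Under the natural reading that $[C_1]_{S(U_\ell)}$ denotes the equivalence class of $C_1 \subseteq C$ under the sketch-equivalence relation $C_1 \sim C_2 \iff S(C_1) = S(C_2)$ (with the total yield $C$ itself being determined by the $U_\ell$-based yield construction), the claim reduces to injectivity of the map $C_1 \mapsto S(C_1)$ on $2^C$.

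I would argue by contradiction. Suppose $C_1 \neq C_2 \subseteq C$ satisfy $S(C_1) = S(C_2)$, and pick $y \in C_1 \setminus C_2$ (WLOG). By the associativity and commutativity of $\oplus$,
\[
S(C) = S(C_1) \oplus S(C \setminus C_1) = S(C_2) \oplus S(C \setminus C_1) = S\bigl(C_2 \cup (C \setminus C_1)\bigr).
\]
Setting $C' := C_2 \cup (C \setminus C_1)$, we have $y \notin C'$ (since $y \in C_1 \setminus C_2$), so $C' \subsetneq C$. Combined with $S(C) = S(C')$, this contradicts the minimality of $C$ in $\InCores(C)$ guaranteed by Lemma~\ref{kyields:lemma}.

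The injectivity then yields $2^{|C|}$ distinct equivalence classes among subsets of $C$, which, in conjunction with the sketch-size bound on $S(\cdot)$ (via Lemma~\ref{coresize:lemma}), forces $|C| \leq k$. This is precisely the bound needed to close the termination argument of Section~\ref{proofgen:sec}. I do not anticipate a substantive obstacle: the proof mirrors Lemma~\ref{coresizeklimit:claim} almost verbatim, with ``$C$ is a core'' now supplied by Lemma~\ref{kyields:lemma} rather than by hypothesis. The one point requiring care is matching the precise definition of $[\cdot]_{S(U_\ell)}$; if the intended notion were a $U_\ell$-augmented variant such as $S((U_\ell\setminus C) \cup C_1) = S((U_\ell\setminus C) \cup C_2)$, then because $\oplus$ is not cancellative the same contradiction would require an auxiliary invocation of the Midpoint Property (Claim~\ref{midpoint:claim}) applied to the chain $(U_\ell\setminus C)\cup C' \subset (U_\ell\setminus C)\cup (C\setminus\{y\}) \subset U_\ell$, but remain structurally unchanged.
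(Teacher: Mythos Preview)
Your argument is correct under the reading $[C_1]_{S(U_\ell)} = [C_2]_{S(U_\ell)} \iff S(C_1)=S(C_2)$: it is precisely the proof of Lemma~\ref{coresizeklimit:claim} replayed on $C$, with the core hypothesis supplied by Lemma~\ref{kyields:lemma}.

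The paper's own proof, however, is only an informal sketch (the lemma is commented out in the source and was dropped from the final version), and it takes a different route. It works through the \emph{residual-core} structure of $S(U_\ell)$: each yield key $c_j$ cannot be eliminated from the residual cores of $S(U_\ell)$ by the remaining $c_i$'s, since those lie in $Q_{<\ell,-j}$; the sketch then appeals to a downward-closure principle of the form ``$[D\cup\{c\}]_\sigma \neq [D]_\sigma$ implies $[B\cup\{c\}]_\sigma \neq [B]_\sigma$ for $B\subset D$''. This phrasing indicates that the intended meaning of $[\,\cdot\,]_\sigma$ is tied to residual cores of a fixed sketch $\sigma$ rather than to bare sketch equality, so your primary interpretation may not coincide with the paper's (and your fallback sketch for the $U_\ell$-augmented reading does not obviously cover that semantics either).

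That said, your route is cleaner and actually complete, whereas the paper's is left unfinished. The cost is that by invoking Lemma~\ref{kyields:lemma} you reduce the statement to an immediate corollary of the very result that supplanted it in the final argument --- which is presumably why the authors abandoned this lemma.
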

\begin{proof}
    \eccomment{formalize}
 This because each $c_i$ can not be removed from residual cores by the other $c_j$'s, since they are a subset of $Q_{< \ell, -i}$. The only way to remove each $c_j$ from residual cores of $S(U_\ell)$ is to explicitly include it.  

 We then need the claim that for $\sigma$ and $B\subset C$ and $c$, if it holds that $[C\cup\{c\}]_\sigma \neq [C]_\sigma$ then it holds that
$[B\cup\{c\}]_\sigma \neq [B]_\sigma$.

\end{proof}
}

\section*{Conclusion}
There are several potential directions for follow up improvements on our work. We established quadratic-size attacks for composable sketches that are monotone or with unique cores and for Boolean, $\mathbb{R}$ and $\mathbb{F}_p$ linear sketches. But the known attacks for integer linear sketches~\citep{GribelyukLWYZ:FOCS2024} and general composable sketches are super quadratic. We conjecture that a quadratic dependence is the right answer for any cardinality sketch.

We expect that our techniques and insights would lead to further progress on revealing the fundamental limitations of sketching methods under adaptive inputs. 
Specifically, we established that any composable map $S$ of subsets to a sketch of size at most $k$ has a small determining pool. That is, a set $L$ of size $\tilde{O}(k^2)$ (and $\tilde{O}(k)$ for monotone maps) so that $S(U)$ for random subsets $U$ is determined from $U\cap L$. This suggests that a similar attack paradigm can be designed against sketches for functions beyond cardinality.
% An example of such function is when elements have labels and colors red/blue and the goal of the sketch is to estimate the fraction of matched elements (those that have both colors for the label) in the set. A simple composable sketch for that uses the same permutation over labels and keeps the bottom-$k$ from each color. The attack framework and the soft threshold need to be modified according to this problem
%
We also conjecture that a universal attack of quadratic size exist on linear sketching maps for  
$\ell_2$ norm estimation and $\ell_2$-heavy hitters, extending the specialized attacks on the AMS sketch and Count Sketch~\cite{TrickingHashingTrick:AAAI2023}.

Universal attacks that are effective against broad families of sketching maps (and arbitrary estimators) establish vulnerabilities but also carry a promise of facilitating robustness. One potential approach is to apply attacks as part of an adversarial training process to make models more robust. This, by forcing a model to use random bits and effectively re-sample sketching maps for different inputs.

\section*{Acknowledgments}

Edith Cohen was partially supported by Israel Science Foundation (grant 1156/23). 
Uri Stemmer was Partially supported by the Israel Science Foundation (grant 1419/24) and the Blavatnik Family foundation.

\bibliographystyle{plainnat}
\bibliography{main,references,robustHH}

\begin{thebibliography}{59}
\providecommand{\natexlab}[1]{#1}
\providecommand{\url}[1]{\texttt{#1}}
\expandafter\ifx\csname urlstyle\endcsname\relax
  \providecommand{\doi}[1]{doi: #1}\else
  \providecommand{\doi}{doi: \begingroup \urlstyle{rm}\Url}\fi

\bibitem[Ahmadian and Cohen(2024)]{AhmadianCohen:ICML2024}
Sara Ahmadian and Edith Cohen.
\newblock Unmasking vulnerabilities: Cardinality sketches under adaptive
  inputs.
\newblock In \emph{ICML}, 2024.
\newblock URL \url{https://doi.org/10.48550/arXiv.2405.17780}.

\bibitem[Ahn et~al.(2012)Ahn, Guha, and McGregor]{AhnGM:SODA2012}
Kook~Jin Ahn, Sudipto Guha, and Andrew McGregor.
\newblock Analyzing graph structure via linear measurements.
\newblock In \emph{Proceedings of the 2012 Annual ACM-SIAM Symposium on
  Discrete Algorithms (SODA)}, pages 459--467, 2012.
\newblock \doi{10.1137/1.9781611973099.40}.
\newblock URL \url{https://epubs.siam.org/doi/abs/10.1137/1.9781611973099.40}.

\bibitem[Alon et~al.(1999)Alon, Matias, and Szegedy]{ams99}
N.~Alon, Y.~Matias, and M.~Szegedy.
\newblock The space complexity of approximating the frequency moments.
\newblock \emph{Journal of Computer and System Sciences}, 58:\penalty0
  137--147, 1999.

\bibitem[{Apache Software Foundation}(Accessed: 2024)]{datasketches}
{Apache Software Foundation}.
\newblock {DataSketches}, Accessed: 2024.
\newblock URL \url{https://datasketches.apache.org}.
\newblock Apache Software Foundation Documentation.

\bibitem[Athalye et~al.(2018)Athalye, Engstrom, Ilyas, and
  Kwok]{athalye2018synthesizing}
Anish Athalye, Logan Engstrom, Andrew Ilyas, and Kevin Kwok.
\newblock Synthesizing robust adversarial examples.
\newblock In \emph{International conference on machine learning}, pages
  284--293. PMLR, 2018.

\bibitem[Attias et~al.(2021)Attias, Cohen, Shechner, and Stemmer]{AttiasCSS21}
Idan Attias, Edith Cohen, Moshe Shechner, and Uri Stemmer.
\newblock A framework for adversarial streaming via differential privacy and
  difference estimators.
\newblock \emph{CoRR}, abs/2107.14527, 2021.

\bibitem[Bar-Yossef et~al.(2002)Bar-Yossef, Jayram, Kumar, Sivakumar, and
  Trevisan]{BJKST:random02}
Z.~Bar-Yossef, T.~S. Jayram, R.~Kumar, D.~Sivakumar, and L.~Trevisan.
\newblock Counting distinct elements in a data stream.
\newblock In \emph{RANDOM}. ACM, 2002.

\bibitem[Bassily et~al.(2021)Bassily, Nissim, Smith, Steinke, Stemmer, and
  Ullman]{BassilyNSSSU:sicomp2021}
Raef Bassily, Kobbi Nissim, Adam~D. Smith, Thomas Steinke, Uri Stemmer, and
  Jonathan~R. Ullman.
\newblock Algorithmic stability for adaptive data analysis.
\newblock \emph{{SIAM} J. Comput.}, 50\penalty0 (3), 2021.
\newblock \doi{10.1137/16M1103646}.
\newblock URL \url{https://doi.org/10.1137/16M1103646}.

\bibitem[Beimel et~al.(2021)Beimel, Kaplan, Mansour, Nissim, Saranurak, and
  Stemmer]{BKMNSS22}
Amos Beimel, Haim Kaplan, Yishay Mansour, Kobbi Nissim, Thatchaphol Saranurak,
  and Uri Stemmer.
\newblock Dynamic algorithms against an adaptive adversary: Generic
  constructions and lower bounds.
\newblock \emph{CoRR}, abs/2111.03980, 2021.

\bibitem[Ben{-}Eliezer et~al.(2021{\natexlab{a}})Ben{-}Eliezer, Eden, and
  Onak]{BEO21}
Omri Ben{-}Eliezer, Talya Eden, and Krzysztof Onak.
\newblock Adversarially robust streaming via dense-sparse trade-offs.
\newblock \emph{CoRR}, abs/2109.03785, 2021{\natexlab{a}}.

\bibitem[Ben{-}Eliezer et~al.(2021{\natexlab{b}})Ben{-}Eliezer, Jayaram,
  Woodruff, and Yogev]{BenEliezerJWY21}
Omri Ben{-}Eliezer, Rajesh Jayaram, David~P. Woodruff, and Eylon Yogev.
\newblock A framework for adversarially robust streaming algorithms.
\newblock \emph{{SIGMOD} Rec.}, 50\penalty0 (1):\penalty0 6--13,
  2021{\natexlab{b}}.

\bibitem[Blanc(2023)]{Blanc:STOC2023}
Guy Blanc.
\newblock Subsampling suffices for adaptive data analysis.
\newblock In Barna Saha and Rocco~A. Servedio, editors, \emph{Proceedings of
  the 55th Annual {ACM} Symposium on Theory of Computing, {STOC} 2023, Orlando,
  FL, USA, June 20-23, 2023}, pages 999--1012. {ACM}, 2023.
\newblock \doi{10.1145/3564246.3585226}.
\newblock URL \url{https://doi.org/10.1145/3564246.3585226}.

\bibitem[Blasiok(2020)]{Blasiok20}
Jaroslaw Blasiok.
\newblock Optimal streaming and tracking distinct elements with high
  probability.
\newblock \emph{{ACM} Trans. Algorithms}, 16\penalty0 (1):\penalty0 3:1--3:28,
  2020.

\bibitem[Boneh and Shaw(1998)]{BonehShaw_fingerprinting:1998}
Dan Boneh and James Shaw.
\newblock Collusion-secure fingerprinting for digital data.
\newblock \emph{{IEEE} Trans. Inf. Theory}, 44\penalty0 (5):\penalty0
  1897--1905, 1998.
\newblock \doi{10.1109/18.705568}.
\newblock URL \url{https://doi.org/10.1109/18.705568}.

\bibitem[Broder(1997)]{BRODER:sequences97}
A.~Z. Broder.
\newblock On the resemblance and containment of documents.
\newblock In \emph{Proceedings of the Compression and Complexity of Sequences},
  pages 21--29. IEEE, 1997.

\bibitem[Broder(2000)]{Broder:CPM00}
A.~Z. Broder.
\newblock Identifying and filtering near-duplicate documents.
\newblock In \emph{Proc.of the 11th Annual Symposium on Combinatorial Pattern
  Matching}, volume 1848 of \emph{LNCS}, pages 1--10. Springer, 2000.

\bibitem[Chakraborty et~al.(2022)Chakraborty, Vinodchandran, and
  Meel]{Vinod:ESA2022}
Sourav Chakraborty, N.~V. Vinodchandran, and Kuldeep~S. Meel.
\newblock Distinct elements in streams: An algorithm for the (text) book.
\newblock Schloss Dagstuhl – Leibniz-Zentrum für Informatik, 2022.
\newblock \doi{10.4230/LIPICS.ESA.2022.34}.
\newblock URL
  \url{https://drops.dagstuhl.de/entities/document/10.4230/LIPIcs.ESA.2022.34}.

\bibitem[Charikar et~al.(2002)Charikar, Chen, and
  Farach-Colton]{CharikarCFC:2002}
Moses Charikar, Kevin Chen, and Martin Farach-Colton.
\newblock Finding frequent items in data streams.
\newblock In \emph{Proceedings of the 29th International Colloquium on
  Automata, Languages and Programming}, ICALP '02, page 693–703.
  Springer-Verlag, 2002.
\newblock ISBN 3540438645.

\bibitem[Cherapanamjeri and Nelson(2020)]{DBLP:conf/nips/CherapanamjeriN20}
Yeshwanth Cherapanamjeri and Jelani Nelson.
\newblock On adaptive distance estimation.
\newblock In \emph{Advances in Neural Information Processing Systems 33: Annual
  Conference on Neural Information Processing Systems 2020, NeurIPS 2020,
  December 6-12, 2020, virtual}, 2020.

\bibitem[Cohen(1997)]{ECohen6f}
E.~Cohen.
\newblock Size-estimation framework with applications to transitive closure and
  reachability.
\newblock \emph{Journal of Computer and System Sciences}, 55:\penalty0
  441--453, 1997.

\bibitem[Cohen(2015)]{ECohenADS:TKDE2015}
E.~Cohen.
\newblock All-distances sketches, revisited: {HIP} estimators for massive
  graphs analysis.
\newblock \emph{TKDE}, 2015.
\newblock URL \url{http://arxiv.org/abs/1306.3284}.

\bibitem[Cohen(2008)]{MinHash:Enc2008}
Edith Cohen.
\newblock \emph{Min-Hash Sketches}, pages 1--7.
\newblock Springer US, Boston, MA, 2008.
\newblock ISBN 978-3-642-27848-8.
\newblock \doi{10.1007/978-3-642-27848-8_573-1}.
\newblock URL \url{https://doi.org/10.1007/978-3-642-27848-8_573-1}.

\bibitem[Cohen(2023)]{Cohen:PODS2023}
Edith Cohen.
\newblock Sampling big ideas in query optimization.
\newblock In Floris Geerts, Hung~Q. Ngo, and Stavros Sintos, editors,
  \emph{Proceedings of the 42nd {ACM} {SIGMOD-SIGACT-SIGAI} Symposium on
  Principles of Database Systems, {PODS} 2023, Seattle, WA, USA, June 18-23,
  2023}, pages 361--371. {ACM}, 2023.
\newblock \doi{10.1145/3584372.3589935}.
\newblock URL \url{https://doi.org/10.1145/3584372.3589935}.

\bibitem[Cohen et~al.(2022{\natexlab{a}})Cohen, Lyu, Nelson, Sarl{\'{o}}s,
  Shechner, and Stemmer]{DBLP:conf/icml/CohenLNSSS22}
Edith Cohen, Xin Lyu, Jelani Nelson, Tam{\'{a}}s Sarl{\'{o}}s, Moshe Shechner,
  and Uri Stemmer.
\newblock On the robustness of countsketch to adaptive inputs.
\newblock In \emph{{ICML}}, volume 162 of \emph{Proceedings of Machine Learning
  Research}, pages 4112--4140. {PMLR}, 2022{\natexlab{a}}.

\bibitem[Cohen et~al.(2022{\natexlab{b}})Cohen, Nelson, Sarlós, and
  Stemmer]{TrickingHashingTrick:arxiv2022}
Edith Cohen, Jelani Nelson, Tamás Sarlós, and Uri Stemmer.
\newblock Tricking the hashing trick: A tight lower bound on the robustness of
  {C}ount{S}ketch to adaptive inputs.
\newblock \emph{arXiv:2207.00956}, 2022{\natexlab{b}}.
\newblock \doi{10.48550/ARXIV.2207.00956}.
\newblock URL \url{https://arxiv.org/abs/2207.00956}.

\bibitem[Cohen et~al.(2023)Cohen, Nelson, Sarlos, and
  Stemmer]{TrickingHashingTrick:AAAI2023}
Edith Cohen, Jelani Nelson, Tamas Sarlos, and Uri Stemmer.
\newblock Tricking the hashing trick: A tight lower bound on the robustness of
  countsketch to adaptive inputs.
\newblock \emph{Proceedings of the AAAI Conference on Artificial Intelligence},
  37\penalty0 (6):\penalty0 7235--7243, Jun. 2023.
\newblock \doi{10.1609/aaai.v37i6.25882}.
\newblock URL \url{https://ojs.aaai.org/index.php/AAAI/article/view/25882}.

\bibitem[Cormode et~al.(2003)Cormode, Datar, Indyk, and
  Muthukrishnan]{CormodeDIM03}
Graham Cormode, Mayur Datar, Piotr Indyk, and S.~Muthukrishnan.
\newblock Comparing data streams using hamming norms (how to zero in).
\newblock \emph{{IEEE} Trans. Knowl. Data Eng.}, 15\penalty0 (3):\penalty0
  529--540, 2003.

\bibitem[Desfontaines et~al.(2019)Desfontaines, Lochbihler, and
  Basin]{DLB:PET2019}
Damien Desfontaines, Andreas Lochbihler, and David~A. Basin.
\newblock Cardinality estimators do not preserve privacy.
\newblock In \emph{Privacy Enhancing Technologies Symposium}, volume~19, 2019.
\newblock \doi{https://doi.org/10.2478/popets-2019-0018}.
\newblock URL \url{http://arxiv.org/abs/1808.05879}.

\bibitem[Dwork et~al.(2006)Dwork, McSherry, Nissim, and Smith]{DMNS06}
Cynthia Dwork, Frank McSherry, Kobbi Nissim, and Adam Smith.
\newblock Calibrating noise to sensitivity in private data analysis.
\newblock In \emph{TCC}, 2006.

\bibitem[Dwork et~al.(2015)Dwork, Feldman, Hardt, Pitassi, Reingold, and
  Roth]{DworkFHPRR:STOC2015}
Cynthia Dwork, Vitaly Feldman, Moritz Hardt, Toniann Pitassi, Omer Reingold,
  and Aaron~Leon Roth.
\newblock Preserving statistical validity in adaptive data analysis.
\newblock In \emph{Proceedings of the Forty-Seventh Annual ACM Symposium on
  Theory of Computing}, STOC '15, page 117–126, New York, NY, USA, 2015.
  Association for Computing Machinery.
\newblock ISBN 9781450335362.
\newblock \doi{10.1145/2746539.2746580}.
\newblock URL \url{https://doi.org/10.1145/2746539.2746580}.

\bibitem[Flajolet and Martin(1985)]{FlajoletMartin85}
P.~Flajolet and G.~N. Martin.
\newblock Probabilistic counting algorithms for data base applications.
\newblock \emph{Journal of Computer and System Sciences}, 31:\penalty0
  182--209, 1985.

\bibitem[Flajolet et~al.(2007{\natexlab{a}})Flajolet, Fusy, Gandouet, and
  Meunier]{hyperloglog:2007}
P.~Flajolet, E.~Fusy, O.~Gandouet, and F.~Meunier.
\newblock Hyperloglog: The analysis of a near-optimal cardinality estimation
  algorithm.
\newblock In \emph{Analysis of Algorithms (AofA)}. DMTCS, 2007{\natexlab{a}}.

\bibitem[Flajolet et~al.(2007{\natexlab{b}})Flajolet, Fusy, Gandouet, and
  Meunier]{flajolet2007hyperloglog}
Philippe Flajolet, {\'E}ric Fusy, Olivier Gandouet, and Fr{\'e}d{\'e}ric
  Meunier.
\newblock Hyperloglog: the analysis of a near-optimal cardinality estimation
  algorithm.
\newblock \emph{Discrete mathematics \& theoretical computer science},
  \penalty0 (Proceedings), 2007{\natexlab{b}}.

\bibitem[Freedman(1983)]{Freedman:1983}
David~A. Freedman.
\newblock A note on screening regression equations.
\newblock \emph{The American Statistician}, 37\penalty0 (2):\penalty0 152--155,
  1983.
\newblock \doi{10.1080/00031305.1983.10482729}.
\newblock URL
  \url{https://www.tandfonline.com/doi/abs/10.1080/00031305.1983.10482729}.

\bibitem[Ganguly(2007)]{distinct_deletions_Ganguly:2007}
Sumit Ganguly.
\newblock Counting distinct items over update streams.
\newblock \emph{Theoretical Computer Science}, 378\penalty0 (3):\penalty0
  211--222, 2007.
\newblock ISSN 0304-3975.
\newblock \doi{https://doi.org/10.1016/j.tcs.2007.02.031}.
\newblock URL
  \url{https://www.sciencedirect.com/science/article/pii/S0304397507001223}.
\newblock Algorithms and Computation.

\bibitem[Gawrychowski et~al.(2020)Gawrychowski, Mozes, and
  Weimann]{gawrychowskiMW:ICALP2020}
Pawel Gawrychowski, Shay Mozes, and Oren Weimann.
\newblock Minimum cut in o(m log{\({^2}\)} n) time.
\newblock In \emph{{ICALP}}, volume 168 of \emph{LIPIcs}, pages 57:1--57:15.
  Schloss Dagstuhl - Leibniz-Zentrum f{\"{u}}r Informatik, 2020.

\bibitem[Goodfellow et~al.(2014)Goodfellow, Shlens, and
  Szegedy]{goodfellow2014explaining}
Ian~J Goodfellow, Jonathon Shlens, and Christian Szegedy.
\newblock Explaining and harnessing adversarial examples.
\newblock \emph{arXiv preprint arXiv:1412.6572}, 2014.

\bibitem[{Google Cloud}(Accessed: 2024)]{bigquerydocs}
{Google Cloud}.
\newblock {BigQuery Documentation: Approximate Aggregate Functions}, Accessed:
  2024.
\newblock URL
  \url{https://cloud.google.com/bigquery/docs/reference/standard-sql/approximate_aggregate_functions}.
\newblock Google Cloud Documentation.

\bibitem[Gribelyuk et~al.(2024)Gribelyuk, Lin, Woodruff, Yu, and
  Zhou]{GribelyukLWYZ:FOCS2024}
Elena Gribelyuk, Honghao Lin, David~P. Woodruff, Huacheng Yu, and Samson Zhou.
\newblock A strong separation for adversarially robust \(\ell_0\) estimation
  for linear sketches.
\newblock In \emph{Proceedings of the 65th Annual Symposium on Foundations of
  Computer Science (FOCS)}, 2024.
\newblock URL \url{https://arxiv.org/pdf/2409.16153}.

\bibitem[Gutenberg and Wulff-Nilsen(2020)]{GutenbergPW:SODA2020}
Maximilian~Probst Gutenberg and Christian Wulff-Nilsen.
\newblock Decremental {SSSP} in weighted digraphs: Faster and against an
  adaptive adversary.
\newblock In \emph{Proceedings of the Thirty-First Annual ACM-SIAM Symposium on
  Discrete Algorithms}, SODA '20, page 2542–2561, USA, 2020. Society for
  Industrial and Applied Mathematics.

\bibitem[Hardt and Ullman(2014)]{HardtUllman:FOCS2014}
M.~Hardt and J.~Ullman.
\newblock Preventing false discovery in interactive data analysis is hard.
\newblock In \emph{2014 IEEE 55th Annual Symposium on Foundations of Computer
  Science (FOCS)}, pages 454--463. IEEE Computer Society, 2014.
\newblock \doi{10.1109/FOCS.2014.55}.
\newblock URL \url{https://doi.ieeecomputersociety.org/10.1109/FOCS.2014.55}.

\bibitem[Hardt and Woodruff(2013)]{HardtW:STOC2013}
Moritz Hardt and David~P. Woodruff.
\newblock How robust are linear sketches to adaptive inputs?
\newblock In \emph{Proceedings of the Forty-Fifth Annual ACM Symposium on
  Theory of Computing}, STOC '13, page 121–130, New York, NY, USA, 2013.
  Association for Computing Machinery.
\newblock ISBN 9781450320290.
\newblock \doi{10.1145/2488608.2488624}.
\newblock URL \url{https://doi.org/10.1145/2488608.2488624}.

\bibitem[Hassidim et~al.(2020)Hassidim, Kaplan, Mansour, Matias, and
  Stemmer]{HassidimKMMS20}
Avinatan Hassidim, Haim Kaplan, Yishay Mansour, Yossi Matias, and Uri Stemmer.
\newblock Adversarially robust streaming algorithms via differential privacy.
\newblock In \emph{Annual Conference on Advances in Neural Information
  Processing Systems (NeurIPS)}, 2020.

\bibitem[Heule et~al.(2013)Heule, Nunkesser, and
  Hall]{hyperloglogpractice:EDBT2013}
S.~Heule, M.~Nunkesser, and A.~Hall.
\newblock {HyperLogLog} in practice: Algorithmic engineering of a state of the
  art cardinality estimation algorithm.
\newblock In \emph{EDBT}, 2013.

\bibitem[Ioannidis(2005)]{Ioannidis:2005}
John P.~A. Ioannidis.
\newblock Why most published research findings are false.
\newblock \emph{PLoS Med}, \penalty0 (2):\penalty0 8, 2005.

\bibitem[Kane et~al.(2010)Kane, Nelson, and Woodruff]{KaneNW10}
Daniel~M. Kane, Jelani Nelson, and David~P. Woodruff.
\newblock An optimal algorithm for the distinct elements problem.
\newblock In \emph{{PODS}}, pages 41--52. {ACM}, 2010.

\bibitem[Kearns(1998)]{Kearns1998}
Michael Kearns.
\newblock Efficient noise-tolerant learning from statistical queries.
\newblock \emph{Journal of the ACM (JACM)}, 45\penalty0 (6):\penalty0
  983--1006, 1998.

\bibitem[Lukacs et~al.(2009)Lukacs, Burnham, and
  Anderson]{FreedmanParadox:2009}
Paul~M. Lukacs, Kenneth~P. Burnham, and David~R. Anderson.
\newblock Model selection bias and {F}reedman's paradox.
\newblock \emph{Annals of the Institute of Statistical Mathematics},
  62\penalty0 (1):\penalty0 117, 2009.

\bibitem[Mironov et~al.(2008)Mironov, Naor, and Segev]{MironovNS:STOC2008}
Ilya Mironov, Moni Naor, and Gil Segev.
\newblock Sketching in adversarial environments.
\newblock In \emph{Proceedings of the Fortieth Annual ACM Symposium on Theory
  of Computing}, STOC '08, page 651–660, New York, NY, USA, 2008. Association
  for Computing Machinery.
\newblock ISBN 9781605580470.
\newblock \doi{10.1145/1374376.1374471}.
\newblock URL \url{https://doi.org/10.1145/1374376.1374471}.

\bibitem[Papernot et~al.(2017)Papernot, McDaniel, Goodfellow, Jha, Celik, and
  Swami]{papernot2017practical}
Nicolas Papernot, Patrick McDaniel, Ian Goodfellow, Somesh Jha, Z~Berkay Celik,
  and Ananthram Swami.
\newblock Practical black-box attacks against machine learning.
\newblock In \emph{Proceedings of the 2017 ACM on Asia conference on computer
  and communications security}, pages 506--519, 2017.

\bibitem[Paterson and Raynal(2021)]{cryptoeprint:2021/1139}
Kenneth~G. Paterson and Mathilde Raynal.
\newblock Hyperloglog: Exponentially bad in adversarial settings.
\newblock Cryptology ePrint Archive, Paper 2021/1139, 2021.
\newblock URL \url{https://eprint.iacr.org/2021/1139}.
\newblock \url{https://eprint.iacr.org/2021/1139}.

\bibitem[Reviriego and Ting(2020)]{DBLP:journals/icl/ReviriegoT20}
Pedro Reviriego and Daniel Ting.
\newblock Security of hyperloglog {(HLL)} cardinality estimation:
  Vulnerabilities and protection.
\newblock \emph{{IEEE} Commun. Lett.}, 24\penalty0 (5):\penalty0 976--980,
  2020.
\newblock \doi{10.1109/LCOMM.2020.2972895}.
\newblock URL \url{https://doi.org/10.1109/LCOMM.2020.2972895}.

\bibitem[Ros{\'e}n(1997)]{Rosen1997a}
B.~Ros{\'e}n.
\newblock Asymptotic theory for order sampling.
\newblock \emph{J. Statistical Planning and Inference}, 62\penalty0
  (2):\penalty0 135--158, 1997.

\bibitem[Shiloach and Even(1981)]{ShiloachEven:JACM1981}
Yossi Shiloach and Shimon Even.
\newblock An on-line edge-deletion problem.
\newblock \emph{J. ACM}, 28\penalty0 (1):\penalty0 1–4, jan 1981.
\newblock ISSN 0004-5411.
\newblock \doi{10.1145/322234.322235}.
\newblock URL \url{https://doi.org/10.1145/322234.322235}.

\bibitem[Steinke and Ullman(2015)]{SteinkeUllman:COLT2015}
Thomas Steinke and Jonathan Ullman.
\newblock Interactive fingerprinting codes and the hardness of preventing false
  discovery.
\newblock In Peter Grünwald, Elad Hazan, and Satyen Kale, editors,
  \emph{Proceedings of The 28th Conference on Learning Theory}, volume~40 of
  \emph{Proceedings of Machine Learning Research}, pages 1588--1628, Paris,
  France, 03--06 Jul 2015. PMLR.
\newblock URL \url{https://proceedings.mlr.press/v40/Steinke15.html}.

\bibitem[Szegedy et~al.(2013)Szegedy, Zaremba, Sutskever, Bruna, Erhan,
  Goodfellow, and Fergus]{szegedy2013intriguing}
Christian Szegedy, Wojciech Zaremba, Ilya Sutskever, Joan Bruna, Dumitru Erhan,
  Ian Goodfellow, and Rob Fergus.
\newblock Intriguing properties of neural networks.
\newblock \emph{arXiv preprint arXiv:1312.6199}, 2013.

\bibitem[Ullman et~al.(2018)Ullman, Smith, Nissim, Stemmer, and
  Steinke]{UllmanSNSS:NEURIPS2018}
Jonathan Ullman, Adam Smith, Kobbi Nissim, Uri Stemmer, and Thomas Steinke.
\newblock The limits of post-selection generalization.
\newblock In S.~Bengio, H.~Wallach, H.~Larochelle, K.~Grauman, N.~Cesa-Bianchi,
  and R.~Garnett, editors, \emph{Advances in Neural Information Processing
  Systems}, volume~31. Curran Associates, Inc., 2018.
\newblock URL
  \url{https://proceedings.neurips.cc/paper_files/paper/2018/file/77ee3bc58ce560b86c2b59363281e914-Paper.pdf}.

\bibitem[Wajc(2020)]{Wajc:STOC2020}
David Wajc.
\newblock \emph{Rounding Dynamic Matchings against an Adaptive Adversary}.
\newblock Association for Computing Machinery, New York, NY, USA, 2020.
\newblock URL \url{https://doi.org/10.1145/3357713.3384258}.

\bibitem[Woodruff and Zhou(2021)]{WoodruffZ21}
David~P. Woodruff and Samson Zhou.
\newblock Tight bounds for adversarially robust streams and sliding windows via
  difference estimators.
\newblock In \emph{Proceedings of the 62nd {IEEE} Annual Symposium on
  Foundations of Computer Science (FOCS)}, 2021.

\end{thebibliography}
\section{Analysis of the Universal Attack} \label{analysisattack:sec}

\newcommand{\ti}{^{(t)}}
\newcommand{\ts}{^{(r)}}

We establish \cref{metacorrectness:thm} and \cref{limitedpoolnaturalQR:lemma}.
We first make some simplifications that aid our analysis with no loss of generality. We will throughout use the assumption that $n$ is sufficiently large (i.e., larger than some absolute constant). Throughout this section, we will often show that something is true with probability $1-O(1/n)$, and then proceed to assume that that was actually the case; it is implicit that in the other case, we simply let the attack fail.

Finally, for the regime where $b/a \to 1$ (while $a, b$ are bounded away from $0, 1$) and $|L|/n \to 0$, we will assume that $b < (1+\e)a$ and $|L|/n < \e$. Indeed, we may assume that $b = (1+\e)a$, since increasing the threshold $B$ only makes the problem easier for the sketch. For the purpose of all asymptotic notation, we will still treat $\e$ as a constant unless it is explicitly stated that $\e \to 0$.

We will assume that at all steps the QR knows the current value of the mask $M$ (this only makes the QR more powerful). Denote the remaining elements of the determining pool by $L' = L \setminus M$, and let $\ell = |L'|$.

Now, suppose we run the interaction using \cref{metaattack:algo} for $r$ rounds, but for one of the steps, instead of giving the QR the sketch $S(U \cup M, X)$, we instead just give the QR the set $U \cap L' = U \cap L \setminus M$. If there exists a determining pool, the QR could then sample a sketch $S \sim S_{U \cap L'}$ instead of receiving it from the system.
Then, by \cref{coupling:remark}, we can couple the original interaction and this modified interaction so that the set $U$ and the sketch $S$ are the same, except with probability $\delta$\footnote{This argument assumes that in fact $M \subseteq L$, since that is a condition in the definition of a determining pool. We will show later that with probability $1-O(\inv n)$, after each step, $M \subseteq L$; if this ever does not hold, we allow the algorithm to fail.}. 
(Note that under the hypothesis of \cref{metacorrectness:thm} (and \cref{limitedpoolnaturalQR:lemma}), we have $\delta < n^{-5}$.) 

We can now repeat this modification for all $r$ steps of the algorithm. By a union bound, we can couple the original and modified interactions such that the queries and responses are all the same except with probability $r\delta = o(1)$. Thus, we have the following fact.

A similar argument holds under the conditions of \cref{limitedpoolnaturalQR:lemma} when QR just receives $W$ (or even a sample from $\mc D'(M,W)$ per \cref{limitedpool:def}) instead of $S(U \cup M, X)$.

\begin{observation} \label{altsketch:obs}
We may assume that the QR, instead of receiving $S(U \cup M, X)$ from the system, just receives $W \coloneq U \cap L'$.
\end{observation}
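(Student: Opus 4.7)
The plan is a step-by-step (round-by-round) coupling argument. Fix an arbitrary QR and the attacker of \cref{metaattack:algo}, and think of both the real interaction (QR sees $\sigma\ti := S(U\ti\cup M\ti, X\ti)$) and the modified interaction (QR sees $W\ti := U\ti \cap (L\setminus M\ti)$) as random processes defined on a common probability space. The goal is to exhibit a coupling under which the entire sequence of attacker queries and QR responses agrees between the two processes, with failure probability $o(1)$.

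The single-step coupling is the content of \cref{coupling:remark}: conditioned on the history up through the start of step $t$ (in particular on $M\ti$ and on the rate $q\ti$), if $M\ti\subseteq L$, then there is a random variable $\widetilde\sigma\ti$ that (i) is conditionally independent of $U\ti\setminus M\ti$ given $U\ti\cap (L\setminus M\ti) = W\ti$, and (ii) equals $\sigma\ti$ except with probability at most $\delta$. Condition (i) means that a QR that receives $W\ti$ alone can, using its own randomness, produce a sample $\widetilde\sigma\ti$ with exactly the correct marginal; condition (ii) lets us couple $\widetilde\sigma\ti$ with the true $\sigma\ti$ so that they coincide with probability $1-\delta$. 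When they coincide, both QRs behave identically, the attacker's updates to $C[\cdot]$ and $M$ are identical, and the two processes remain in lockstep.

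I would then run this coupling inductively: assume the two processes have agreed on steps $1,\dots,t-1$, so in particular $M\ti$, $q\ti$, $U\ti$, and $X\ti$ are shared; apply the single-step coupling at step $t$; and condition on the agreement event. A union bound over the $r = O(|L|^2\log n)$ rounds gives total failure probability at most $r\delta$, which is $o(1/n)$ under the hypothesis $\delta < n^{-3}$ (and likewise under the amplified $\delta<n^{-2-h}$ of \cref{failureprob:rem}). For \cref{limitedpoolnaturalQR:lemma}, the only change is that naturalness of the QR means its actions depend on $\sigma\ti$ only through a sufficient statistic $T_{M\ti}(\sigma\ti)$, so it suffices to couple $T_{M\ti}(\sigma\ti)$ with a surrogate built from $W\ti$; \cref{limitedpool:def} supplies exactly such a surrogate with expected TV error $\delta$, and Markov turns the expectation bound into a per-step high-probability bound (or one directly invokes the $(\delta_1,\delta_2)$ tail form of \cref{delta12pool:rem}).

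The main obstacle (and the reason this is stated as an \emph{assumption} rather than proved in line) is the side condition $M\ti\subseteq L$, which is required to invoke \cref{pool:def}. This is not immediate from the definition of the attack, since the attacker does not know $L$ and therefore might in principle add a key outside $L$ to $M$. The resolution, which is carried out separately in the error-forcing analysis, is to show that with probability $1-O(1/n)$ the score-based promotion rule in \cref{metaattack:algo} never adds to $M$ any element outside $L$; roughly, elements of $\mc U\setminus L$ have no systematic reason for their counts to exceed the median by the $16\sqrt{r\log(rn)}$ margin, so a martingale/Chernoff argument over the $r$ steps and $n$ keys suffices. On the complementary failure event we simply let the attack abort, which is absorbed into the overall $O(1/n)$ failure probability claimed in \cref{metacorrectness:thm}.
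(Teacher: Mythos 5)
Your proposal is correct and follows essentially the same route as the paper: a per-step coupling via \cref{coupling:remark} (the QR samples a surrogate sketch from $W$), a union bound over the $r$ rounds giving failure probability $r\delta$, the side condition $M\subseteq L$ handled by the separately-proved fact that transparent keys are never promoted into the mask (with the attack aborting otherwise), and the analogous surrogate from \cref{limitedpool:def} for the limited-pool/natural-QR case. The only nitpick is quantitative: with $\delta<n^{-3}$ and $r=O(|L|^2\log n)$ one gets $r\delta=o(1)$ rather than $o(1/n)$ (the paper's parenthetical $\delta<n^{-5}$ is what yields your stronger bound), but this does not affect the argument.
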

Thus, we will henceforth indeed assume that the QR only receives $W$. 

Finally, we define transparent keys to be keys that cannot be included in the sketch.
\begin{definition} [Transparent keys] \label{transparent:def}
    We say that keys in $\mathcal{U}\setminus L$ are \emph{transparent}. 
\end{definition}
Transparent keys have the property that their presence in the query is never in $W$ (no matter what the current mask $M$ is), and thus invisible to the QR. 

\subsection{Analysis of a single step}
In this subsection we focus on a single step of Algorithm~\ref{metaattack:algo}. Per \cref{altsketch:obs}, we assume the sketch distribution is 
$W \coloneqq U\cap L'$ for $U\sim \mathcal{Q}$. This is equivalent to sampling  $q$ with density $\nu(q)$ and then sampling a set from $\Bern[q]^{L'}$. %Observe that for a fixed $q$, all sketches of the same size $|U\cap L'|$ are equally probable.  Therefore $|U\cap L'|$ is a sufficient statistic for the rate $q$.

We define the \textit{estimator map} $\phi : 2^{L'} \to [0, 1]$ as follows: for $W \subseteq L'$, $\phi(W)$ is the probability that the QR returns 1 if it receives $W$. Note that the function $\phi$ is independent of the randomness of the attacker in the current step.

For any key $x$, define 
\begin{equation} \label{scoreprob:eq}
p_\phi(x) = \E_{U\sim \mathcal{Q}}\left[\phi(U\cap L') \cdot \mathbf{1}_{x\in U} \right].
\end{equation}
This is the probability, for estimator map $\phi$, that the score $C[x]$ is incremented in the current step of \cref{metaattack:algo}.

\begin{claim} 
The score probability $p_\phi(x)$ is the same for all transparent keys $x\in \mathcal{U}\setminus L$ (Definition~\ref{transparent:def}).
\end{claim}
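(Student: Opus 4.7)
The plan is to exploit the fact that transparent keys, by definition, lie outside $L$ and hence outside $L' = L \setminus M$, so that the indicator $\mathbf{1}_{x \in U}$ is statistically independent of the random variable $U \cap L'$ on which the estimator $\phi$ depends. Once this independence is made explicit, the expectation in \eqref{scoreprob:eq} factors into a product where the $x$-dependent piece collapses to the inclusion probability, which is the same for every key under $\Bern[q]^{\mathcal U}$.

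Concretely, I would first fix the rate $q$ and condition on it. For any $x \notin L$, the event $\{x \in U\}$ depends only on the coordinate of the product measure $\Bern[q]^{\mathcal U}$ at $x$, while $U \cap L' \subseteq L$ depends only on the coordinates in $L'$. Since $x \notin L'$, these coordinate sets are disjoint, so by independence of coordinates in a product Bernoulli measure,
\begin{equation*}
\E_{U \sim \Bern[q]^{\mathcal U}}\!\left[\phi(U \cap L') \cdot \mathbf{1}_{x \in U}\right]
= q \cdot \E_{U \sim \Bern[q]^{\mathcal U}}\!\left[\phi(U \cap L')\right].
\end{equation*}
Next I would integrate against $\nu(q)$ (the rate distribution of \cref{ratedist:def}) using the law of total expectation, obtaining
\begin{equation*}
p_\phi(x) \;=\; \int_{\qmin}^{\qmax} \nu(q)\, q \cdot \E_{U \sim \Bern[q]^{\mathcal U}}\!\left[\phi(U \cap L')\right] dq.
\end{equation*}
The right-hand side makes no reference to $x$; every quantity it involves ($\nu$, $q$, $L'$, and $\phi$) is a feature of the attack state, not of the particular transparent key. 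Hence $p_\phi(x)$ takes the same value for all $x \in \mathcal U \setminus L$.

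There is essentially no obstacle here: the entire content of the claim is the coordinate-independence of the product Bernoulli measure, together with the observation that transparency is exactly the condition $x \notin L'$ needed to apply it. The only small care point is to make sure that $L' = L \setminus M$ and not something that could contain $x$; since $M \subseteq \mathcal U$ in general, one must note that $x \notin L \supseteq L'$ regardless of $M$, which is immediate from $x \in \mathcal U \setminus L$.
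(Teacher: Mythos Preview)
Your proof is correct and follows essentially the same idea as the paper. The paper's one-sentence proof simply invokes \cref{altsketch:obs} to say the QR's response depends only on $U\cap L'$ and hence not on which transparent keys are present; your argument makes this precise by conditioning on $q$, using coordinate independence in $\Bern[q]^{\mathcal U}$, and integrating out $q$ --- which is exactly the computation the paper itself carries out a few lines later in \eqref{avescoreout:eq}.
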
    
\begin{proof}
Per \cref{altsketch:obs},
all actions of the QR including the selection of the 
estimator maps $\phi$ depend only on the intersection of the query with the pool and thus do not depend on which transparent keys are included in the query.
\end{proof}

We denote the score probability of transparent keys by $p^*_\phi$.
We establish that keys in $L'$ have \emph{on average}  higher probability of being scored than transparent keys.

Finally, let $\eta_\phi$ be the probability that the QR fails, given estimator map $\phi$. Formally, this is defined as
\begin{equation} \label{eta-def:eq}
\eta_\phi = \E_{U \sim Q}[\phi(U \cap L') \cdot \mathbf{1}_{|U \cup M| < A} + (1 - \phi(U \cap L')) \cdot \mathbf{1}_{|U \cup M| > B}].
\end{equation}

\begin{lemma} [Score advantage] \label{scoreadvantage:lemma}
There exist constants $\ell_0$ and $c_1, c_2$ such that, if $|L'| \ge \ell_0$, then
\begin{equation} \label{score-advantage:eq}
\ov p_\phi(L') \coloneqq \E_{x\sim \mathtt{Uniform}[L']} \left[ p_\phi(x)\right]  \ge  p^*_\phi + \f{c_1 - c_2 \eta_\phi}{|L'|}.
\end{equation}
Moreover, as $\e \to 0$, $c_1/c_2 \to 1/4$.
\end{lemma}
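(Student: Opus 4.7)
The plan is to rewrite the score advantage as an integral in $g(q) := \E_{W_q}[\phi(W_q)]$ with $W_q \sim \Bern[q]^{L'}$, simplify it via integration by parts, and combine it with a Chernoff-based bound on $\eta_\phi$ through an extremal optimization.

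By \cref{altsketch:obs}, the QR acts only through $W = U \cap L'$; let $\phi(W) \in [0,1]$ be its marginal probability of responding $1$. Summing \eqref{scoreprob:eq} over $x \in L'$ uses $\sum_{x \in L'} \mathbf{1}_{x \in U} = |W|$, while for any transparent $x$ the indicator $\mathbf{1}_{x \in U}$ is conditionally independent of $W$ given $q$, so $p^*_\phi = \E[q\,\phi(W_q)]$. Hence
\[
|L'|\,\bigl(\ov p_\phi(L') - p^*_\phi\bigr) \;=\; \int \nu(q)\,\mathrm{Cov}(\phi(W_q), |W_q|)\,dq.
\]
Differentiating the pmf $q^{|W|}(1-q)^{|L'|-|W|}$ in $q$ gives the key identity $\mathrm{Cov}(\phi(W_q), |W_q|) = q(1-q)\,g'(q)$. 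Substituting $\nu(q) q(1-q) = C_\nu f(q)$ from \cref{ratedist:def} and integrating by parts (using $f(\qmin) = f(\qmax) = 0$ and $f'$ piecewise constant on each ramp) yields
\[
|L'|\,\bigl(\ov p_\phi(L') - p^*_\phi\bigr) \;=\; C_\nu\,\bigl(\bar g_{\mathrm{high}} - \bar g_{\mathrm{low}}\bigr),
\]
where $\bar g_{\mathrm{low/high}}$ are the uniform averages of $g$ over $[\qmin, q_1]$ and $[q_2, \qmax]$.

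Next I lower-bound $\eta_\phi$. For $q \le q_1 < (A-|L|)/n - \Omega(1)$, Chernoff gives $|U \cup M| < A$ with probability $1 - e^{-\Omega(n)}$, and symmetrically $|U \cup M| > B$ with probability $1 - e^{-\Omega(n)}$ for $q \ge q_2$. Discarding the nonnegative contribution from $q \in [q_1, q_2]$ in \eqref{eta-def:eq} then gives
\[
\eta_\phi \;\ge\; \hat g_{\mathrm{low}} + I_2 - \hat g_{\mathrm{high}} - O(e^{-\Omega(n)}),
\]
where $\hat g_{\mathrm{low/high}} := \int \nu(q) g(q)\,dq$ over the corresponding ramp intervals and $I_2 := \int_{q_2}^{\qmax} \nu(q)\,dq$.

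The core technical step is converting this $\nu$-weighted inequality into a lower bound on $\bar g_{\mathrm{high}} - \bar g_{\mathrm{low}}$. Reparameterizing each ramp by $u \in [0,1]$, the unknowns $G_{L/H}(u) \in [0,1]$ satisfy $\bar g_{\mathrm{low/high}} = \int_0^1 G_{L/H}(u)\,du$; in the $\e \to 0$ regime where $q(1-q)$ is essentially constant on the ramps, the weighted averages reduce to $\hat g_{\mathrm{low/high}} \to \int_0^1 u\,G_{L/H}(u)\,du$ and $I_2 \to 1/2$. A Lagrangian computation shows that minimizing $\int_0^1(G_H - G_L)$ subject to $\int_0^1 u(G_H - G_L) \ge 1/2 - \eta_\phi$ with $G_{L/H} \in [0,1]$ is attained by step functions $G_L = \mathbf{1}_{u \le \sqrt{\eta_\phi}}$ and $G_H = \mathbf{1}_{u \ge \sqrt{\eta_\phi}}$, yielding $\bar g_{\mathrm{high}} - \bar g_{\mathrm{low}} \ge 1 - 2\sqrt{\eta_\phi}$. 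The tangent line to $1 - 2\sqrt{\eta}$ at $\eta = 1/4$ gives the linear relaxation $1 - 2\sqrt{\eta_\phi} \ge 1/2 - 2\eta_\phi$, so we obtain $\Delta \ge C_\nu(1/2 - 2\eta_\phi)$ and $c_1/c_2 \to 1/4$ as $\e \to 0$. For fixed $\e > 0$ the same argument runs with a weight $w(u)$ differing from $u$ by bounded multiplicative factors (from the variation of $q(1-q)$ across the ramps), giving positive constants $c_1, c_2$ with a worse ratio; $\ell_0$ is chosen large enough so that the Chernoff slack $O(e^{-\Omega(n)})$ is absorbed into the constants once $n \ge |L'| \ge \ell_0$. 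The main obstacle is the extremal optimization, and obtaining the sharp $1/4$ constant requires the $\e \to 0$ collapse of the weight $w(u)$ to exactly $u$, which in turn is what motivates the specific density $\nu$ chosen in \cref{ratedist:def}.
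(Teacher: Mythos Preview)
Your proposal is correct and takes a genuinely different, and in several respects cleaner, route than the paper's.

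The paper discretizes through $m=|W|$: it writes the gap as $C_\nu\sum_m \overline{\phi}(m)\psi(m)$ with $\psi(m)=\int f(q)(m/\ell-q)\binom{\ell}{m}q^{m-1}(1-q)^{\ell-m-1}\,dq$, then approximates $\psi(m)$ piecewise via beta-integral identities and a concentration cutoff, incurring $O(\ell^{-4/3})$ errors. The error bound on $\eta_\phi$ is likewise expressed as sums in $m$, and the two are linked via a rearrangement inequality and Cauchy--Schwarz. You instead stay entirely in the $q$-variable: the identity $\mathrm{Cov}(\phi(W_q),|W_q|)=q(1-q)\,g'(q)$ (Russo's formula) together with $\nu(q)q(1-q)=C_\nu f(q)$ makes the integration by parts \emph{exact}, yielding $|L'|(\overline p_\phi(L')-p^*_\phi)=C_\nu(\bar g_{\mathrm{high}}-\bar g_{\mathrm{low}})$ with no $\ell$-dependent error. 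Your extremal problem is then the continuous analogue of the paper's rearrangement step, and your tangent-line linearization of $1-2\sqrt{\eta}$ at $\eta=1/4$ is exactly what the paper's ``pick any $c>0$, use $\sqrt{\eta}\le c+\eta/4c$'' maneuver accomplishes. The payoff of your route is that the score-advantage identity is exact, so strictly speaking you do not need $\ell\ge\ell_0$ except to absorb the $e^{-\Omega(n)}$ Chernoff slack (and that really depends on $n$, not $\ell$; your sentence conflating the two is a minor slip, but harmless since $n$ is large by standing assumption). The paper's route, on the other hand, is more explicit about the fixed-$\varepsilon$ constants; your treatment of that case (``same argument with a weight $w(u)$ differing from $u$ by bounded factors'') is correct but compressed---the Lagrangian still produces threshold solutions because $w_L,w_H$ are monotone, and one obtains $\bar g_{\mathrm{high}}-\bar g_{\mathrm{low}}\ge c-c'\sqrt{\eta_\phi}$ for some positive $c,c'$, which linearizes as before.
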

The proof of the score advantage lemma is technical and is deferred to Section~\ref{scoreadvantageproof:sec}.  The very high level idea 
resembles that in~\cite{AhmadianCohen:ICML2024}:
Queries where the set $L'$ is overrepresented (that is, there are more than $q |L'|$ keys from $L'$ in the query subset)
are on average more likely to get response ``1'' from queries in which $L'$ is underrepresented. These queries also have disproportionately more $L'$ keys than transparent keys and hence $L'$ keys on average are more likely to be scored.
    
\subsection{Proof of Theorem~\ref{metacorrectness:thm}}

We first establish that transparent keys all have around the same value in the count array $C$, and in particular, no transparent keys are placed in $M$.

For this section, our analysis of the algorithm will now span several steps, rather than just a single step. To this end, let $\eta_\phi^{(t)}$ be the failure probability of the QR in step $t$ (note that this is a random variable depending on $\phi$, which may itself depend on the outcome of previous steps.) Similarly define $p_\phi^{(t)}(x)$ and $p_\phi^{*(t)}$ to be the values of $p_\phi(x)$ and $p_\phi^{*}$, respectively, in step $t$. Also, let $C^{(t)}$ be the count array $C$ at time $t$.

Finally, for keys $x$ in the mask $M$, we modify the count array $C$ for the purposes of analysis by adding $p_\phi^{*(t)}$ to $C[x]$ in each step $t$ after $x$ was added to the mask $M$. Note that since $x$ is already in $M$, this does not affect anything in \cref{metaattack:algo}.

Also, let $E\ti$ be the total number of errors the QR has made until step $t$. Let $P\ti$ be the sum of $p_\phi^{*(j)}$ for all steps $j \le t$.

\begin{claim} \label{transparent-concentration:claim}
With probability $1-O(\inv n)$, for all transparent keys $x$, for all times $t$, $\absx{C\ti[x] - P\ti} \le 8\sqrt{r\log (rn)}$.
\end{claim}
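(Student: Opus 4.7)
The plan is to apply Azuma's inequality to a natural martingale associated with each transparent key. Fix a transparent key $x \in \mathcal{U} \setminus L$ and consider the sequence $\Delta_j^{(x)} := (C^{(j)}[x] - C^{(j-1)}[x]) - p_\phi^{*(j)}$ for $j = 1, \ldots, t$, with respect to the filtration $\mc F_j$ generated by the attacker's and QR's full interaction through the end of step $j$. Two cases determine the increment $C^{(j)}[x] - C^{(j-1)}[x]$: if $x \notin M$ at the start of step $j$, the actual counter update is $Z^{(j)} \cdot \mathbf{1}_{x \in U^{(j)}}$, whose conditional expectation given $\mc F_{j-1}$ equals $p_\phi^{(j)}(x)$ by \eqref{scoreprob:eq}, and since $x$ is transparent this coincides with $p_\phi^{*(j)}$. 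If $x \in M$ at the start of step $j$, the bookkeeping convention sets the increment equal to $p_\phi^{*(j)}$ exactly. In both cases the conditional expectation of $\Delta_j^{(x)}$ given $\mc F_{j-1}$ vanishes and $|\Delta_j^{(x)}| \le 1$, so $(\Delta_j^{(x)})_{j}$ is a martingale difference sequence with increments bounded by $1$.

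Next I would apply Azuma's inequality to the partial sums $\sum_{j \le t} \Delta_j^{(x)} = C^{(t)}[x] - P^{(t)}$: for each fixed $t \le r$,
\[
\Pr\!\left[\,|C^{(t)}[x] - P^{(t)}| > 8\sqrt{r \log(rn)}\,\right] \le 2 \exp\!\left(-\frac{64\, r \log(rn)}{2 t}\right) \le 2 (rn)^{-32},
\]
where the last inequality uses $t \le r$. Taking a union bound over the at most $n$ transparent keys and the at most $r$ time steps yields a total failure probability of at most $2 r n \cdot (rn)^{-32} = 2 (rn)^{-30}$. Since $r = O(|L|^2 \log n)$ is polynomial in $n$, this is $O(n^{-h})$ for any fixed constant $h$, and in particular $O(1/n)$ as required.

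No step here is genuinely hard; the only subtlety worth flagging is verifying the martingale property. One needs that $p_\phi^{*(j)}$ and the event $\{x \in M \text{ at the start of step } j\}$ are $\mc F_{j-1}$-measurable. This holds because the QR's estimator map $\phi$ at step $j$ and the mask $M$ at the start of step $j$ are both functions of the prior transcript, and $p_\phi^{*(j)}$ is defined (via \eqref{scoreprob:eq}) as an expectation depending only on these quantities together with the fixed query distribution $\mc Q$. With this measurability in hand, the conditional expectation calculation for $\Delta_j^{(x)}$ is straightforward and the rest is a routine application of martingale concentration.
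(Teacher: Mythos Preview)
Your proof is correct and follows essentially the same approach as the paper: show that $C^{(t)}[x]-P^{(t)}$ is a martingale with bounded increments and apply Azuma--Hoeffding followed by a union bound. You are somewhat more explicit than the paper in handling the (hypothetical) case $x\in M$ via the bookkeeping convention, in taking the union bound over times $t$ as well as keys, and in flagging the $\mc F_{j-1}$-measurability of $p_\phi^{*(j)}$; the paper's proof leaves these implicit. (Minor arithmetic slip: $2rn\cdot(rn)^{-32}=2(rn)^{-31}$, not $2(rn)^{-30}$, but this does not affect the conclusion.)
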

\begin{proof}
Note that at step $t$, $C\ti[x] - P\ti$ increases by $C\ti[x] - C^{(t-1)}[x] - p_\phi^{*(t)}$. But, conditioned on everything before step $t$, $C\ti[x] - C^{(t-1)}[x]$ is a Bernoulli random variable with probability $p_\phi^{*(t)}$ (by definition of $p^*_\phi$), so $C\ti[x] - P\ti$ is a martingale with increments bounded in $[-1, 1]$. The statement then follows from the Azuma--Hoeffding inequality (and a union bound over $x$).
\end{proof}

\begin{cor} \label{transparent:cor}
No transparent key is ever added to $M$.
\end{cor}
\begin{proof}
This follows immediately from \cref{transparent-concentration:claim}, since a transparent key is added to $M$ when its count exceeds the median count by $16\sqrt{r\log (rn)}$, and the number of transparent keys is $n - |L| > n/2$.
\end{proof}

Now, we show that, as long as the QR does not make too many errors, $M$ will eventually have almost all of the determining pool. We show this using the score advantage lemma via a martingale argument that shows, essentially, that the total score advantage must eventually get large if the QR does not make too many errors.

\begin{claim} \label{qr-linear-error:claim}
With probability $1-O(\inv n)$, after $r$ steps (for some $r = O(|L|^2 \log n)$), either the QR has made $\eta r$ errors for a constant $\eta > 0$, or we will have $|L'| \le \ell_0$ (that is, the mask $M$ will contain all but $\ell_0$ elements of $L$). Moreover, as $\e \to 0$, we have $\eta \to 1/4$.
\end{claim}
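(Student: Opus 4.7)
The plan is to define a potential function capturing the total ``excess'' count mass on keys in the pool $L$, drive it upward using \cref{scoreadvantage:lemma}, and contrast that with the upper bound enforced by the mask-addition rule. Whenever neither alternative in the claim holds, the two bounds will collide, yielding a contradiction for $r = \Theta(|L|^2 \log n)$.

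Specifically, set
\[
\Phi^{(t)} \;=\; \sum_{x\in L} \bigl(C^{(t)}[x] - P^{(t)}\bigr),
\]
using the analytically modified $C[x]$ for $x\in M$ (incrementing by $p_\phi^{*(t)}$ after $x$ joins the mask). With this convention, each $x\in L\cap M$ contributes zero conditional expected change to $\Phi$ each step, while each $x\in L'(t)$ contributes an expected increment of $p_\phi^{(t)}(x) - p_\phi^{*(t)}$. As long as $|L'(t)|\ge \ell_0$, summing these and applying \cref{scoreadvantage:lemma} gives
\[
\E\bigl[\Phi^{(t)} - \Phi^{(t-1)} \,\big|\, \text{past}\bigr] \;\ge\; c_1 - c_2\,\eta_\phi^{(t)}.
\]
Each per-step change in $\Phi$ has magnitude at most $2|L|$ (a bounded sum over at most $|L|$ keys), so the Azuma--Hoeffding inequality gives, with probability $1-O(1/n)$,
\[
\Phi^{(r)} \;\ge\; \sum_{t=1}^r \bigl(c_1 - c_2\,\eta_\phi^{(t)}\bigr) \;-\; O\bigl(|L|\sqrt{r\log n}\bigr).
\]
A parallel Azuma argument shows the actual error count $E$ is within $O(\sqrt{r\log n})$ of $\sum_{t=1}^r \eta_\phi^{(t)}$.

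For the matching upper bound, note that for any $x\in L'(r)$ the mask-addition rule implies $C^{(r)}[x] - \mathtt{median}(C^{(r)}[\mathcal{U}\setminus M]) < 16\sqrt{r\log(rn)}$, while for any $x\in L\cap M$ the analytic adjustment freezes $C^{(t)}[x]-P^{(t)}$ at the value attained when $x$ joined $M$, which is likewise at most $16\sqrt{r\log(rn)}+1$. Using \cref{transparent-concentration:claim} to approximate the median by $P^{(r)}$, every key's contribution to $\Phi^{(r)}$ is $O(\sqrt{r\log(rn)})$, hence
\[
\Phi^{(r)} \;\le\; |L|\cdot O\bigl(\sqrt{r\log(rn)}\bigr).
\]
Since $|L'|$ is monotonically nonincreasing, whenever $|L'(r)| \ge \ell_0$ the drift estimate holds at every step, and combining the two bounds yields $c_1 r - c_2 E \le O(|L|\sqrt{r\log n})$. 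Choosing $r = C\,|L|^2\log n$ for a sufficiently large constant $C$ renders the right-hand side $o(r)$, giving $E \ge (c_1/c_2 - o(1))\,r$; thus for any target $\eta < c_1/c_2$ we get $E\ge \eta r$. Since \cref{scoreadvantage:lemma} gives $c_1/c_2 \to 1/4$ as $\e\to 0$, the attainable $\eta$ approaches $1/4$, matching the claim.

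The main obstacle is the $|L|$-sized per-step variance: because $\Phi$ aggregates over $|L|$ keys, Azuma's concentration degrades to scale $|L|\sqrt{r\log n}$, which is precisely what forces the quadratic attack length $r = \Omega(|L|^2\log n)$. A more subtle point is that the analytic adjustment to $C[x]$ for $x\in M$ must be chosen exactly to freeze each key's excess $C^{(t)}[x]-P^{(t)}$ at its joining value --- any larger adjustment would inflate the upper bound on $\Phi^{(r)}$ and break the comparison, so the definition $+p_\phi^{*(t)}$ is essential rather than cosmetic. Finally, throughout the argument one must union-bound the concentration events (Azuma for $\Phi$, Azuma for $E$, and \cref{transparent-concentration:claim}), each failing with probability $O(1/n)$, to preserve the $1-O(1/n)$ overall success probability stated in the claim.
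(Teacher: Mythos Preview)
Your proposal is correct and follows essentially the same route as the paper. The paper defines the single supermartingale $Y_t = \sum_{x\in L} C^{(t)}[x] - |L|P^{(t)} - c_1 t + c_2 E^{(t)}$, folding the actual error count $E^{(t)}$ directly into the process so that only one Azuma application is needed; you instead track $\Phi^{(t)}$ and $E^{(t)}$ separately and invoke Azuma twice, but the drift bound from \cref{scoreadvantage:lemma}, the $O(|L|)$ increment scale, the $O(\sqrt{r\log(rn)})$ per-key cap from the mask rule plus \cref{transparent-concentration:claim}, and the resulting contradiction at $r=\Theta(|L|^2\log n)$ are all the same.
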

\begin{proof}
For all times $t$ where $|L'| \ge \ell_0$, write
\begin{equation} \label{Y-def:eq}
Y_t \coloneqq \sum_{x \in L} C\ti[x] - |L|P\ti - c_1t + c_2E\ti,
\end{equation}
where $c_1$ and $c_2$ are the constants in the score advantage lemma (\cref{scoreadvantage:lemma}). If $|L'| < \ell_0$ at time $t$, set $Y_t = Y_{t-1}$. Then, we have, if $|L'| \ge \ell_0$ at time $t$, that
\[Y_t - Y_{t-1} = \sum_{x \in L} (C\ti[x] - C^{(t-1)}[x]) - |L|p_\phi^{*(t)} - c_1 + c_2 \mathbf{1}_{\x{error in step $t$}}. \]
Note that this is bounded by $O(|L|)$. Now, for $x$ already in the mask $M$, we have $C\ti[x] - C^{(t-1)}[x] = p_\phi^{*(t)}$, as mentioned earlier in this section. For $x \in L' = L \setminus M$, on the other hand, we have $C\ti[x] - C^{(t-1)}[x] \sim \Bern[p_\phi^{(t)}(x)]$ (conditioned on everything before step $t$). Also, the probability of an error in step $t$ is (by definition) $\eta_\phi^{(t)}$. Thus, we can write, conditioned on everything before step $t$, if $|L'| \ge \ell_0$, then
\begin{align*}
\E[Y_t - Y_{t-1}] &= |M| p_\phi^{*(t)} + \sum_{x \in L'} p_\phi^{(t)}(x) - |L| p_\phi^{*(t)} - c_1 +  c_2 \eta_\phi^{(t)} \\
&= |L'| \ov p_\phi^{(t)}(L') - |L'| p_\phi^{*(t)} - c_1 +  c_2 \eta_\phi^{(t)}.
\end{align*}
However, by the score advantage lemma (\cref{scoreadvantage:lemma}), this is nonnegative. Therefore, $Y_t$ is a supermartingale, with increments bounded by $O(|L|)$. Therefore, by Azuma's inequality, we have with probability $1-O(\inv n)$ that $Y_t \ge -8|L|  \sqrt{t \log(rn)}$. We apply this inequality at time $r$. If $|L'| < \ell_0$ at time $r$, then we are already done, so assume otherwise. In that case, \eqref{Y-def:eq} is valid, so using this inequality and rearranging, we have
\begin{equation*}
\sum_{x \in L} C\ts[x] - |L|P\ts \ge c_1r - c_2E\ts - 8|L| \sqrt{r \log(rn)}.
\end{equation*}
Pick an arbitrary constant $0 < \gamma < 1$ such that as $\e \to 0$, $\gamma \to 1$.
Then, if $E\ts > \gamma(c_1/c_2)r$, that means we have $\eta r$ errors, for $\eta = \gamma(c_1/c_2)$, so we are done (since $\eta \to 1/4$ as $\e \to 0$). Assuming otherwise for a contradiction, we then get
\begin{equation} \label{martingale-bd:eq}
\sum_{x \in L} (C\ts[x] - P\ts) \ge (1-\gamma)c_1r - 8|L| \sqrt{r \log(rn)}.
\end{equation}
Now, consider the state of the algorithm at time $t \le r$. By \cref{transparent-concentration:claim}, we have $\texttt{median}(C) \le P\ti + 8\sqrt{r\log (rn)}$. Thus, if $C\ti[x] - P\ti \ge 16\sqrt{r\log(rn)}$, then $x$ gets added to the mask. Once $x$ gets added to the mask, $C\ti[x] - P\ti$ stops increasing, since they increase by the same amount at each step. Therefore, noting that $C\ti[x] - P\ti$ increases by at most $1$ in each step, we always have $C\ti[x] - P\ti \le 16\sqrt{r\log(rn)} + 1$ for all $x \in L$. Therefore, plugging this into \eqref{martingale-bd:eq}, we have
\begin{equation*}
|L|(16\sqrt{r\log(rn)} + 1) \ge (1-\gamma)c_1r - 8|L| \sqrt{r \log(rn)}.
\end{equation*}
However, if we pick $r = C |L|^2 \log n$ for a sufficiently large constant $C$, the above is impossible (recalling that $|L| \le n$). Thus we have a contradiction, completing the proof of \cref{qr-linear-error:claim}.
\end{proof}

% At this point, we could complete the proof of \cref{metacorrectness:thm} by showing that once the remaining pool $L'$ hits $\ell_0$ in size, there is a constant probability that the sketch $U \cap L'$ is empty, and then there is nothing the QR can do to distinguish the $|U \cup M| < A$ from $|U \cup M| > B$. However, the implicit constant error rate would be very bad here (exponential in $\ell_0$). We instead show how to bootstrap \cref{qr-linear-error:claim} directly to achieve a constant error rate, by adding transparent keys into the determining pool.

We now bootstrap \cref{qr-linear-error:claim} to show that there must be $\eta r$ errors (rather than the pool going below size $\ell_0$), by adding transparent keys into the determining pool:

\begin{claim}
With probability $1-O(\inv n)$, after $r$ steps (for some $r = O(|L|^2 \log n)$), the QR has made $\eta r$ errors for a constant $\eta > 0$ such that $\eta \to 1/4$ as $\e \to 0$.
\end{claim}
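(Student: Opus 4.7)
The strategy is to apply \cref{qr-linear-error:claim} to a slightly enlarged \emph{virtual} determining pool $L^\dagger \supseteq L$ formed by adjoining a constant-size block of transparent keys, and then observe that the Claim's second alternative becomes impossible for $L^\dagger$.

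First I would fix an arbitrary $T \subseteq \Uu \setminus L$ of size $\ell_0 + 1$ (which exists since $|L| < n/2$ and $\ell_0$ is an absolute constant), let $L^\dagger := L \cup T$, and verify that $L^\dagger$ is still a $\delta$-determining pool. This is essentially bookkeeping: given witness distributions $\Dd'(M,W)$ for $L$, I define $\Dd^\dagger(M, W^\dagger) := \Dd'(M \cap L, W^\dagger \cap L)$ and note that when $M \subseteq L$ (the only case we shall need, per Step~3 below) this coincides with $\Dd'(M, U \cap (L \setminus M))$, so the TV bound is inherited. I would also check that the attack's parameters $\qmin, q_1, q_2, \qmax$, fixed once via \cref{ratedist:def}, still satisfy the required $\Omega(1)$-gap inequalities with $|L^\dagger|$ in place of $|L|$: since $|L^\dagger| - |L| = O(1)$, the relevant thresholds shift by only $O(\inv n)$ and the gaps persist for large $n$.

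Second, I would re-run the proof of \cref{qr-linear-error:claim} verbatim but with $L^\dagger$ replacing $L$ throughout. All of its ingredients---the score-advantage \cref{scoreadvantage:lemma}, the supermartingale argument on $Y_t$, and \cref{transparent-concentration:claim}---depend only on the pool chosen for the analysis, not on the attack itself, so they port over. Choosing $r = O(|L^\dagger|^2 \log n) = O(|L|^2 \log n)$ large enough, the claim yields that with probability $1 - O(\inv n)$, either the QR makes $\eta r$ errors or $|L^\dagger \setminus M| \le \ell_0$.

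Third, I would rule out the second alternative by invoking \cref{transparent:cor} with the \emph{original} pool $L$: it guarantees that no key of $\Uu \setminus L$ (and in particular no key of $T$) is ever added to $M$, so $M \cap T = \emptyset$ and
\[
|L^\dagger \setminus M| \;\ge\; |T| \;=\; \ell_0 + 1 \;>\; \ell_0.
\]
A union bound over the two $O(\inv n)$-failure events finishes the proof, with $\eta$ inherited unchanged from \cref{qr-linear-error:claim}; in particular $\eta \to 1/4$ as $\e \to 0$. The main conceptual subtlety is precisely this last juxtaposition: we use \cref{transparent:cor} (about the original pool $L$) together with \cref{qr-linear-error:claim} (applied to the enlarged pool $L^\dagger$). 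The two analyses describe the \emph{same} run of the attack algorithm, which is consistent because the attack dynamics and the mask $M$ depend only on the actual sketch received by the QR, never on which pool the analyst chose; once this is spelled out, the rest is routine.
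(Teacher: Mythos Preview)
Your proposal is correct and follows essentially the same approach as the paper: enlarge $L$ by $\ell_0+1$ transparent keys, apply \cref{qr-linear-error:claim} to the enlarged pool, and rule out the ``small remaining pool'' alternative via \cref{transparent:cor} for the original $L$. Your write-up is in fact more careful than the paper's about the bookkeeping (explicit witness distributions for $L^\dagger$, parameter validity, and the consistency of running two analyses on the same attack), which is helpful but not a different idea.
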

\begin{proof}
Let $\tw L$ be the union of $L$ with $\ell_0 + 1$ transparent keys. Note that $\tw L$ is still a $\delta$-determining pool, since determining pools are preserved under adding elements. Also, the algorithm only ever uses the size of $|L|$ for selecting $\qmin, q_1$, and is otherwise independent of the choice of $|L|$; note that the choice of $q_1$ that we use is still valid if $|L|$ increases by a constant. Therefore, all of our analysis still applies if $L$ were replaced by $\tw L$. 

Therefore, by \cref{metacorrectness:thm}, for some $r = O(|\tw L|^2 \log n) = O(|L|^2 \log n)$, after $r$ steps, the QR has either made $\eta r$ errors or the mask contains all but at most $\ell_0$ elements of $\tw L$. However, in the latter case, the mask would contain a transparent key, which would contradict \cref{transparent:cor} (which still holds for the original choice of $L$). Thus, the QR must have made $\Omega(r)$ errors, completing the claim.
\end{proof}

This completes the proof of \cref{metacorrectness:thm}.

% Finally, to prove \cref{metacorrectness:thm}, we show that after $M$ has all but $\ell_0$ elements of $L$, the QR must have an $\Omega(1)$ probability of making an error at each step. Combined with \cref{qr-linear-error:claim} (and a Chernoff bound to show that this means a constant fraction of errors), this implies \cref{metacorrectness:thm}.

% \begin{claim}
% If at any step $M$ has all but at most $\ell_0$ elements of $L$, the QR must make an error with probability $\Omega(1)$.
% \end{claim}
% \begin{proof}
% Note that if $M$ has all but at most $\ell_0$ elements of $L$, then $|L'| \le \ell_0 = O(1)$. Thus, for any $q \in [\qmin, \qmax]$, there is an $\Omega(1)$ probability that $U \cap L'$ is empty. However, by a Chernoff bound, it is also the case (conditioned on $q$) that with high probability, $|U \cup M| = qn + o(1)$ (recalling that $|M| \le |L| = o(n)$). Since $q$ has a constant probability to be below $A/n-\Omega(1)$ and also a constant probability to be above $B/n + \Omega(1)$, this means that there is a constant probability that $U \cap L' = \emptyset$ and $|U \setminus M| < A$, as well as a constant probability that $U \cap L' = \emptyset$ and $|U \setminus M| > B$. However, the QR must return 0 in the former case and 1 in the latter case to be correct, and it receives the same information in both cases. Thus, the probability that it makes an error is $\Omega(1)$.
% \end{proof}

\subsection{Proof of the score advantage lemma} \label{scoreadvantageproof:sec}

\begin{proof} [ Proof of Lemma~\ref{scoreadvantage:lemma}]
Throughout, we denote $\ell = |L'|.$ We will not specify $\ell_0$ explicitly; rather, we will assume throughout that $\ell$ is sufficiently large, including for the sake of asymptotic notation.

For an estimator map $\phi$, denote by $\overline{\phi}(m)$ the expected value of $\phi$ (the probability that the response is $1$) conditioned on $|U\cap L'| = m$. Equivalently, it is the expectation of $\phi(Y)$ over randomly chosen subsets of $L'$ of size $m$:
\[
\overline{\phi}(m)  := \E_{Y\sim \mathtt{Uniform}\left[\binom{L'}{m} \right]}[\phi(Y)]\ .
\]

We now express the scoring probability of a transparent key in terms of $\phi$.

\begin{align}
    p_\phi^* &= 
    \E_{U\sim \mathcal{Q}}[\phi(U\cap L') \cdot \mathbf{1}_{x\in U}] \nonumber\\
    &=
    \int_0^1 q\cdot \E_{W\sim \Bern[q]^{L'}} \left[ \phi(W) \right] d\nu(q) \nonumber \\
    &= \int_0^1 q\cdot \E_{m\sim \Binomial(\ell,q)}\left[ \overline{\phi}\left( m\right) \right] d\nu(q) \label{avescoreout:eq}
\end{align}

The second equality holds since the distribution of $W = U \cap L'$ is $\Bern[q]^{L'}$.  Then note that probability that for rate $q$, a transparent key $x\not\in L'$ is included in the subset remains $q$ even when conditioned on $W$.

We similarly express the average scoring probability of a key in $L'$ in terms of $\phi$. We have
for $x\in L'$
\begin{align*}
p_\phi(x) &=
\E_{U\sim \mathcal{Q}}[\phi(U\cap L') \cdot \mathbf{1}_{x\in U}] \\
&=
\int_0^1\E_{W\sim \Bern[q]^{L'}}[\phi(W) \cdot \mathbf{1}_{x\in W}] d\nu(q)\\
&= \int_0^1 \E_{m \sim \Binomial(\ell,q)}\left[\E_{W\sim \texttt{Uniform}\left[\binom{L'}{m}\right]}[\phi(W)\cdot \mathbf{1}_{x\in W}]  \right] d\nu(q)
\end{align*}
Now note that for $0\leq m\leq \ell$:
\begin{align*}
    \lefteqn{\E_{x\sim \mathtt{Uniform}[L']} \left[\E_{W\sim \texttt{Uniform}\left[\binom{L'}{m}\right]}[\phi(W)\cdot  \mathbf{1}_{x\in W}]\right]}   \\
    &= \E_{W\sim \texttt{Uniform}\left[\binom{L'}{m}\right]}\left[ \E_{x\sim \mathtt{Uniform}[L']}[\phi(W)\cdot \mathbf{1}_{x\in W}] \right]  \\
    &= \E_{W\sim \texttt{Uniform}\left[\binom{L'}{m}\right]} \left[\frac{m}{\ell} \cdot \phi(W) \right] \\
    &= \frac{m}{\ell} \cdot \E_{W\sim \texttt{Uniform}\left[\binom{L'}{m}\right]} [\phi(W)]\\
    &= \frac{m}{\ell} \overline{\phi}\left(m\right)
\end{align*}

Combining we obtain
\begin{align} 
\overline{p}_\phi(L') &:=
    \E_{x\sim \mathtt{Uniform}[L']} \left[ p_\phi(x)\right] \nonumber\\
    &= 
    \int_0^1 \E_{m \sim \Binomial(\ell,q)}\left[ \frac{m}{\ell} \overline{\phi}\left(m\right) \right] d\nu(q) \label{avescorein:eq}
\end{align}

%\mscomment{immediately following \eqref{eq:replace-phi}, we should replace $\ov\phi$ with a function $\ov \phi'$ which is actually 0 at $<q_a$ and 1 at $>q_b$, arguing that if the difference is too big then the algorithm is wrong too often}
%\eccomment{It is now written so that we can assume wlog that the estimator maps are perfect.}

We now express the score advantage gap using \eqref{avescorein:eq} and \eqref{avescoreout:eq}:
\begin{align*}
\overline{p}_\phi(L') -p^*_\phi 
&= \int_0^1 \E_{m \sim \Binomial(\ell,q)}\left[ \left(\frac{m}{\ell}-q\right)\cdot  \overline{\phi}\left(m\right) \right] \, d\nu(q) \\
&= \sum_{m=0}^{\ell} \overline{\phi}\left(m\right) \int_0^1\left(\frac{m}{\ell}-q\right)\cdot  \Pr[\Binomial[\ell,q]=m] \, d\nu(q) \\
&= \sum_{m=0}^{\ell} \overline{\phi}\left(m\right) \int_0^1\left(\frac{m}{\ell}-q\right)\cdot \binom{\ell}{m} q^{m}(1-q)^{\ell-m} \, d\nu(q) \\
&= C_\nu \s_{m=0}^\ell \ov \phi(m) \psi(m),
\end{align*}
where we have defined
\begin{equation}
\psi(m) = \int_0^1 f(q) \p{\f m\ell - q} \binom \ell m q^{m-1}(1-q)^{\ell-m-1} \, dq. \label{psi-def:eq}
\end{equation}

We defer the bounding of this integral to \cref{integrals:sec}. Then, by the bound on $\psi(m)$ derived in \cref{psi-bound:prop}, we continue to bound the score advantage. Define $\qmin' = \qmin + \unc$, $q_1' = q_1 - \unc$, $q_2' = q_2 + \unc$, and $\qmax' = \qmax - \unc$; by \cref{psi-bound:prop}, we then have
\begin{align}
\overline{p}_\phi(L') -p^*_\phi 
&= C_\nu \s_{m=0}^\ell \ov \phi(m) \psi(m) \nonumber \\
&= -\f{C_\nu r_1}{\ell(\ell+1)} \sum_{\qmin' \le m/\ell \le q_1'} \ov\phi(m) + \f{C_\nu r_2}{\ell(\ell+1)} \sum_{q_2' \le m/\ell \le \qmax'} \ov\phi(m) + O(\ell^{-4/3}) \nonumber \\
&= \f{C_\nu r_2(\qmax - q_2)}{\ell + 1} - \f{C_\nu r_1}{\ell(\ell+1)} \sum_{\qmin' \le m/\ell \le q_1'} \ov\phi(m) \nonumber\\ &\qquad - \f{C_\nu r_2}{\ell(\ell+1)} \sum_{q_2' \le m/\ell \le \qmax'} (1 - \ov\phi(m)) + O(\ell^{-4/3}) \label{score-adv-bound:eq}
\end{align}

In order to bound this in terms of the failure probability $\eta_\phi$, we now bound $\eta_\phi$ in terms of the values of $\phi$. By the definition of $\eta_\phi$ in \eqref{eta-def:eq}, we have
\begin{align*}
\eta_\phi = \E_{U \sim Q}[\phi(U \cap L') \cdot \mathbf{1}_{|U \cup M| < A}] + \E_{U \sim Q}[(1 - \phi(U \cap L')) \cdot \mathbf{1}_{|U \cup M| > B}].
\end{align*}
Now, conditioned on $q$, we have $|U \cup M| \sim |M| + \Bin(n - |M|, q)$. Since $|M| \le |L|$, we thus have (in distribution) that $\Bin(n, q) \le |U \cup M| \le |L| + \Bin(n, q)$. Thus, if $q < q_1$, by a Chernoff bound, we have with probability $1-e^{-\Omega(n)}$ that $|U \cup M| < |L| + q_1 n + o(1) < An$ (here we have used the definition of $q_1$ in \eqref{q-bd-1:eq}). Therefore, except with probability $e^{-\Omega(n)}$, we have $\mathbf{1}_{|U \cup M| < A} \ge \mathbf{1}_{q < q_1}$. 

Similarly, for $q > q_2$, we also have with probability $1-e^{-\Omega(n)}$ that $|U \cup M| > + q_2 n + o(1) > Bn$. Thus, except with probability $e^{-\Omega(n)}$, we also have $\mathbf{1}_{|U \cup M| > B} \ge \mathbf{1}_{q > q_2}$. Thus, we may bound $\eta_\phi$ as follows.
\begin{align}
\eta_\phi &\ge \E_{q \sim \nu, U \sim \Bern[q]^\Uu}[\phi(U \cap L') \cdot \mathbf{1}_{q < q_1}] + \E_{q \sim \nu, U \sim \Bern[q]^\Uu}[(1 - \phi(U \cap L')) \cdot \mathbf{1}_{q > q_2}] - e^{-\Omega(n)} \nonumber \\
&= \int_0^1 \mathbf{1}_{q < q_1} \E_{W \sim \Bern[q]^{L'}}[\phi(W)] \, d\nu(q) + \int_0^1 \mathbf{1}_{q > q_2} \E_{W \sim \Bern[q]^{L'}}[(1-\phi(W))] \, d\nu(q) - e^{-\Omega(n)} \nonumber \\
&= \int_0^1 \mathbf{1}_{q < q_1} \E_{m \sim \Bin(\ell, q)}[\ov \phi(m)] \, d\nu(q) + \int_0^1 \mathbf{1}_{q > q_2} \E_{m \sim \Bin(\ell, q)}[(1-\ov \phi(m))] \, d\nu(q) - e^{-\Omega(n)} \nonumber \\
&= \sum_{0 \le m \le \ell} \ov \phi(m) \int_0^1 \mathbf{1}_{q < q_1} \binom \ell m q^m (1-q)^{\ell-m} \, d\nu(q) \nonumber \\
&\qquad + \sum_{0 \le m \le \ell} (1 - \ov \phi(m)) \int_0^1 \mathbf{1}_{q > q_2} \binom \ell m q^m (1-q)^{\ell-m} \, d\nu(q)
- e^{-\Omega(n)} \nonumber \\
&\ge \sum_{\qmin' \le m/\ell \le q_1'} \ov \phi(m) \int_0^1 \mathbf{1}_{q < q_1} \binom \ell m q^m (1-q)^{\ell-m} \, d\nu(q) \nonumber \\
&\qquad + \sum_{q_2' \le m/\ell \le \qmax'} (1 - \ov \phi(m)) \int_0^1 \mathbf{1}_{q > q_2} \binom \ell m q^m (1-q)^{\ell-m} \, d\nu(q)
- e^{-\Omega(n)} \nonumber \\
&\ge \f {r_3} \ell \sum_{\qmin' \le m/\ell \le q_1'} \p{\f m\ell - \qmin} \ov \phi(m) + \f {r_3} \ell \sum_{q_2' \le m/\ell \le \qmax'} \p{\qmax - \f m\ell} (1 - \ov \phi(m)) - e^{-\Omega(n)}. \label{eta-bound:eq}
\end{align}
where in the last step we have used \cref{error-bound-integral:prop}.

Now, \eqref{score-adv-bound:eq} and \eqref{eta-bound:eq} are both bounds on the score advantage and failure probability, respectively, which are linear in the $\phi(m)$ (which are bounded between 0 and 1). To complete the proof of \cref{scoreadvantage:lemma}, we will now relate these two bounds.

\begin{lemma}
We have
\begin{equation} \label{01bd:eq}
\sum_{\qmin' \le m/\ell \le q_1'} \ov\phi(m) \le \sqrt{2\ell \sum_{\qmin' \le m/\ell \le q_1'}^{\vphantom{\ell}}  \p{\f m\ell - \qmin} \ov \phi(m)} + O(1),  
\end{equation}
and
\begin{equation} \label{23bd:eq}
\sum_{q_2' \le m/\ell \le \qmax'} (1 - \ov\phi(m)) \le \sqrt{2\ell \sum_{q_2' \le m/\ell \le \qmax'}^{\vphantom{\ell}} \p{\qmax - \f m\ell} (1 - \ov \phi(m))} + O(1).
\end{equation}
\end{lemma}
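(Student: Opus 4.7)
I would prove both inequalities by a one-dimensional rearrangement/greedy argument and a direct calculation. Focus on \eqref{01bd:eq}; the second bound is essentially identical. Set $w_m := m/\ell - \qmin$, and denote $T := \sum_{\qmin' \le m/\ell \le q_1'} \ov\phi(m)$ and $S := \sum_{\qmin' \le m/\ell \le q_1'} w_m\, \ov\phi(m)$. Note that $w_m \ge \qmin' - \qmin > 0$ on the range of summation, that $w_m \le q_1' - \qmin = O(1)$, and crucially that consecutive admissible values of $w_m$ differ by exactly $1/\ell$.

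The goal is to lower-bound $S$ in terms of $T$. Since $\ov\phi(m) \in [0,1]$ and the $w_m$ are nonnegative, among all feasible assignments $\ov\phi : \{m : \qmin' \le m/\ell \le q_1'\} \to [0,1]$ with a given value of $T$, the assignment that minimizes $S$ is the greedy one: put $\ov\phi(m) = 1$ on the $\lfloor T \rfloor$ indices with the smallest $w_m$ (i.e., the $\lfloor T \rfloor$ smallest $m$ in the range) and put a fractional value on the next index. This follows from a standard exchange argument (swapping any mass from a larger $w_m$ to a smaller $w_m$ can only decrease $S$).

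Because the smallest $w_m$ in the range is nonnegative and consecutive $w_m$ differ by $1/\ell$, this greedy configuration satisfies
\[
S \;\ge\; \sum_{j=0}^{\lfloor T\rfloor - 1} \frac{j}{\ell} \;=\; \frac{\lfloor T \rfloor (\lfloor T \rfloor - 1)}{2\ell}.
\]
Solving this quadratic inequality yields $\lfloor T \rfloor \le \sqrt{2\ell S} + O(1)$, hence $T \le \sqrt{2\ell S} + O(1)$, which is \eqref{01bd:eq}. The proof of \eqref{23bd:eq} is symmetric: apply the same argument with weights $w_m := \qmax - m/\ell$ and masses $1 - \ov\phi(m) \in [0,1]$, observing that the smallest $w_m$ on the range $q_2' \le m/\ell \le \qmax'$ are now attained at the largest $m$ and are again spaced by $1/\ell$ starting from a nonnegative value.

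I do not anticipate a serious obstacle: the argument reduces to a one-variable LP whose optimum is given by a greedy solution, followed by a quadratic estimate. The only details to verify are (i) nonnegativity of $w_m$ on the summation range (guaranteed by the primed thresholds), (ii) that the greedy rearrangement is indeed optimal (one-line exchange argument), and (iii) that off-by-one and fractional-mass effects contribute only the stated $O(1)$ term, which is immediate from the quadratic formula.
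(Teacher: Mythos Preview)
Your proposal is correct and essentially identical to the paper's proof: the paper also sets $k=\lfloor T\rfloor$, observes that the weighted sum $S$ is minimized (for fixed $k$) by putting $\ov\phi(m)=1$ on the $k$ smallest indices, bounds that minimum by $\tfrac{1}{\ell}+\tfrac{2}{\ell}+\cdots+\tfrac{k}{\ell}\ge k^2/(2\ell)$, and inverts to get $T\le\sqrt{2\ell S}+O(1)$. Your write-up is slightly more explicit about the exchange/LP justification, but the argument is the same.
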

\begin{proof}
We start by proving \eqref{01bd:eq}. Let
\[k = \flrx{\sum_{\qmin' \le m/\ell \le q_1'} \ov\phi(m)}.\]
Note that the coefficients of $\ov\phi(m)$ in the right-hand side of \eqref{01bd:eq} are increasing in $m$. Thus, conditioned on $k$, the sum in the right-hand side of \eqref{01bd:eq} is minimized when the first $k$ values of $\ov \phi(m)$ are 1 and the rest are 0 (since $\phi(m) \in [0, 1]$ for all $m$). Therefore, we have
\begin{align*}
\sum_{\qmin' \le m/\ell \le q_1'}  \p{\f m\ell - \qmin} \ov \phi(m)
&\ge \sum_{\qmin' \le m/\ell \le \qmin' + k/\ell}  \p{\f m\ell - \qmin} \\
&\ge \f{1}{\ell} + \f{2}{\ell} + \dots + \f{k}{\ell} \\
&\ge \f{k^2}{2\ell}.
\end{align*}
The conclusion of \eqref{01bd:eq} follows. The proof of \eqref{23bd:eq} is virtually identical.
\end{proof}

Now, the bound in \eqref{eta-bound:eq} implies that the sums in the right-hand sides of \eqref{01bd:eq} and \eqref{23bd:eq} sum to at most $\f{\ell}{r_3} \eta_\phi + e^{-\Omega(n)}$. Therefore, we may write
\begin{gather*}
\sum_{\qmin' \le m/\ell \le q_1'} \ov\phi(m) \le \sqrt{\f{2\ell^2}{r_3} \eta_\phi \gamma_1} + O(1),  \\
\sum_{q_2' \le m/\ell \le \qmax'} (1 - \ov\phi(m)) \le \sqrt{\f{2\ell^2}{r_3} \eta_\phi \gamma_2} + O(1),
\end{gather*}
where $\gamma_1 + \gamma_2 = 1$.
(Note that the $e^{-\Omega(n)}$ term is subsumed by the $O(1)$.)

Finally, we bound the score advantage gap by substituting the above bounds into \eqref{score-adv-bound:eq}:
\begin{align*}
\overline{p}_\phi(L') -p^*_\phi 
&\ge \f{C_\nu r_2(\qmax - q_2)}{\ell + 1} - \f{C_\nu(r_1 \sqrt{\gamma_1} + r_2 \sqrt{\gamma_2})}{\ell(\ell+1)} \sqrt{\f{2\ell^2}{r_3} \eta_\phi} + O(\ell^{-4/3}) \\
&\ge \f{r_2(\qmax - q_2)}{\ell + 1} - \f{C_\nu\sqrt{(r_1^2+r_2^2)(\gamma_1+\gamma_2)}}{\ell(\ell+1)} \sqrt{\f{2\ell^2}{r_3} \eta_\phi} + O(\ell^{-4/3})\\
&= \f{C_\nu}{\ell} - \f{C_\nu \sqrt{2(r_1^2+r_2^2)/r_3}}{\ell} \cdot \sqrt{\eta_\phi} + O(\ell^{-4/3}).
\end{align*}
Now, picking any constant $c > 0$, we have $\sqrt{\eta_\phi} \le c + \eta_\phi / 4c$. Thus, we have
\begin{align*}
\overline{p}_\phi(L') -p^*_\phi 
&\ge C_\nu \p{1 - c\sqrt{2(r_1^2+r_2^2)r_3}} \f1\ell - \p{\f{C_\nu \sqrt{2(r_1^2+r_2^2)r_3}}{4c}} \f{n_\phi}{\ell} + O(\ell^{-4/3}).
\end{align*}
If we pick $c$ to be sufficiently small, the coefficient of $\f1\ell$ is positive, and \eqref{score-advantage:eq} thus follows.

Moreover, in the $\et$ regime, note that by the values of $r_1, r_2, r_3$ in \cref{integrals:sec}, we have $\sqrt{2(r_1^2+r_2^2)r_3} = 2 + \oet(1)$. For this regime, we pick $c = 1/4$. Then, in \eqref{score-advantage:eq}, we will have $c_1 = C_\nu(1/2 + \oet(1))$ and $c_2 = C_\nu(2 + \oet(1))$, so as $\et$, $c_1/c_2 \to 1/4$, as desired.

This completes the proof of \cref{scoreadvantage:lemma}.
\end{proof}

\appendix

\section{Bounds for Section~\ref{scoreadvantageproof:sec}} \label{integrals:sec}

\newcommand{\Kl}{K_{\ell, m}}

In this section we bound the integrals of the form
\begin{equation} \label{K-def:eq}
\Kl(g) \coloneqq \int_0^1 g(q) \binom \ell m q^{m-1} (1-q)^{\ell - m - 1} \, dq.
\end{equation}
Throughout this section, we will assume, as in \cref{scoreadvantageproof:sec} that $\ell$ is sufficiently large. We also assume that any function $g$ that we plug into $K$ is set to zero outside $[0, 1]$. %$[\gamma, 1-\gamma]$, for some (small) absolute constant $\gamma$.

We also sometimes abuse notation by writing an expression in terms of $q$ in place of $g$; for example, $\Kl(q^2)$ will denote $\Kl(g)$ where $g(q) = q^2$.

\begin{lemma}
For bounded (by a constant) $g$, we have
\begin{equation*}
\Kl(g) = \int_{m/\ell - \ell^{-1/3}}^{m/\ell + \ell^{-1/3}} g(q) \binom \ell m q^{m-1} q^{\ell - m - 1} \, dq \pm e^{-\Omega(\ell^{1/3})}.
\end{equation*}
\end{lemma}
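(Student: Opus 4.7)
The plan is to bound, by a direct large-deviations argument, the contribution to $K_{\ell,m}(g)$ from values of $q$ outside the window $W := [m/\ell - \ell^{-1/3}, m/\ell + \ell^{-1/3}]$, and show that it is at most $e^{-\Omega(\ell^{1/3})}$ when $g$ is uniformly bounded.

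The key pointwise estimate is that $\binom{\ell}{m} q^m(1-q)^{\ell-m}$ is sharply peaked at $q^* := m/\ell$. Setting $h(q) := m \log q + (\ell - m)\log(1-q)$, we have $h'(q^*) = 0$ and, crucially, the uniform bound
\[
h''(q) \;=\; -\frac{m}{q^2} \;-\; \frac{\ell - m}{(1-q)^2} \;\le\; -m - (\ell - m) \;=\; -\ell
\]
for all $q \in (0,1)$, since $q^2, (1-q)^2 \le 1$. Applying Taylor's theorem with the Lagrange form of the remainder then yields $h(q) \le h(q^*) - \tfrac{\ell}{2}(q - q^*)^2$ for every $q \in (0,1)$. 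Combining this with the elementary fact that $\binom{\ell}{m}(q^*)^m(1 - q^*)^{\ell-m} \le 1$ (the maximum value of the $\Bin(\ell, q^*)$ PMF), I would conclude the clean bound
\[
\binom{\ell}{m} q^m (1-q)^{\ell - m} \;\le\; e^{-\ell(q - q^*)^2/2},
\]
which drops to $e^{-\ell^{1/3}/2}$ for every $q \notin W$.

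Next I would convert this into a bound on the integrand itself by writing $\binom{\ell}{m} q^{m-1}(1-q)^{\ell - m - 1} = \frac{1}{q(1-q)} \binom{\ell}{m} q^m (1-q)^{\ell - m}$. For the "bulk" part of $[0,1] \setminus W$ in which $q \in [\ell^{-1/3}, 1 - \ell^{-1/3}]$, the prefactor $\frac{1}{q(1-q)}$ is at most $O(\ell^{2/3})$, so the integrand is bounded by $\|g\|_\infty \cdot O(\ell^{2/3}) \cdot e^{-\ell^{1/3}/2}$, and integrating over a region of measure at most $1$ gives $e^{-\Omega(\ell^{1/3})}$. For the boundary part in which $q \in [0, \ell^{-1/3}] \cup [1 - \ell^{-1/3}, 1]$ lies outside $W$, the distance $|q - q^*|$ is at least $\ell^{-1/3}$, so we can either estimate $\binom{\ell}{m} q^{m-1}(1-q)^{\ell - m - 1}$ directly (using $q^{m-1} \le 1$ and $(1-q)^{\ell - m - 1} \le e^{-q(\ell - m - 1)} \le e^{-\Omega(\ell^{2/3})}$ when $m$ is not already close to $\ell$, and symmetrically near $q = 1$), or observe that the window $W$ fully covers that boundary strip when $m/\ell$ is itself extreme.

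The main technical nuisance I expect is handling the edge cases $m \in \{0, 1, \ell-1, \ell\}$, where either the Beta-density interpretation degenerates or the prefactor $\tfrac{1}{q(1-q)}$ becomes singular at one endpoint. These cases, however, can be disposed of by the direct boundary estimate above, since the integrable factor $(1-q)^{\ell - m - 1}$ (or $q^{m-1}$) already supplies a $e^{-\Omega(\ell^{2/3})}$ saving that dominates any polynomial blowup in $\binom{\ell}{m}$ once $m$ is that small or that large. Collecting the two contributions and invoking the uniform bound $\|g\|_\infty = O(1)$ then yields the claimed $e^{-\Omega(\ell^{1/3})}$ error.
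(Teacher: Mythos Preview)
Your core concentration estimate is correct and essentially reproves the Chernoff bound the paper invokes: the Taylor argument with $h''(q)\le -\ell$ cleanly gives $\binom{\ell}{m}q^m(1-q)^{\ell-m}\le e^{-\ell(q-q^*)^2/2}$, and your bulk case follows.

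The gap is in the boundary step. For $q\in[0,\ell^{-1/3}]\setminus W$ you propose to bound the integrand via $q^{m-1}\le 1$ and $(1-q)^{\ell-m-1}\le e^{-q(\ell-m-1)}\le e^{-\Omega(\ell^{2/3})}$. But $q$ ranges down to $0$ on this strip, where $(1-q)^{\ell-m-1}\to 1$, so the last inequality is simply false there; and $\binom{\ell}{m}$ can be exponentially large, so throwing away $q^{m-1}$ is fatal. The genuine reason this strip is harmless is that it lies outside $W$ only when $m>\ell^{2/3}$, and then it is $q^{m-1}$ (not $(1-q)^{\ell-m-1}$) that supplies the decay---but making that quantitative against the huge $\binom{\ell}{m}$ still needs an argument you have not given.

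The paper sidesteps the whole issue with a one-line rewrite: for $1\le m\le \ell-1$,
\[
\binom{\ell}{m}q^{m-1}(1-q)^{\ell-m-1}\;=\;\frac{\ell(\ell-1)}{m(\ell-m)}\,\Pr\bigl[\mathrm{Bin}(\ell-2,q)=m-1\bigr].
\]
The PMF on the right is at most $1$ everywhere and is $e^{-\Omega(\ell^{1/3})}$ for $|q-m/\ell|>\ell^{-1/3}$ by Chernoff (equivalently, by your own Taylor bound applied with parameters $(\ell-2,m-1)$); the prefactor is at most $\ell^2$, a polynomial absorbed into the exponential. This yields a uniform pointwise bound on the integrand over all of $[0,1]\setminus W$ with no endpoint singularity to manage, and $m\in\{0,\ell\}$ are treated separately. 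If you want to keep your approach, the cleanest patch is exactly this: apply your second-derivative bound directly to $\binom{\ell-2}{m-1}q^{m-1}(1-q)^{\ell-m-1}$ rather than to $\binom{\ell}{m}q^m(1-q)^{\ell-m}$ followed by division by $q(1-q)$.
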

\begin{proof}
Note that we may write
\[\binom \ell m q^{m-1} q^{\ell - m - 1} = \f{\ell(\ell-1)}{m(\ell-m)} \Pr[\Bin(\ell-2, q) = m-1],\]
assuming for now that $m \neq 0, \ell$. Note that $(m-1)/(\ell - 2) = m/\ell + O(\ell^{-1})$. Thus, by a Chernoff bound, if $|q - m/\ell| > \ell^{-1/3}$, we have $\Pr[\Bin(\ell-2, q) = m-1] < e^{-\Omega(\ell^{1/3})}$. Therefore, for $|q - m/\ell| > \ell^{-1/3}$, the integrand in \eqref{K-def:eq} has magnitude most $e^{-\Omega(\ell^{1/3})}$ (since $g$ is bounded), and the conclusion follows.

In the special case $m = 0$, the integrand is just $g(q) q^{-1} (1-q)^{\ell - 1}$, which has magnitude at most $e^{-\Omega(\ell^{2/3})}$ for $q > 1/3$, and the $m = \ell$ case is similar, so we are done.
\end{proof}
Since $\Kl$ is linear, we have the following corollary.
\begin{cor} \label{K-approx:cor}
Suppose that $g, h$ are bounded and equal in the interval $[m/\ell - \unc, m/\ell + \unc]$. Then, 
\[|\Kl(g) - \Kl(h)| = \err.\]
\end{cor}

By explicit integration, we have the following fact.
\begin{fact} \label{K-poly:fact}
We have, as long as $m \neq \ell$,
\begin{align*}
\Kl(1) &= \f{\ell}{m(\ell-m)}, \\
\Kl(q) &= \f{1}{(\ell-m)}, \\
\Kl(q^2) &= \f{m + 1}{(\ell+1)(\ell-m)}.
\end{align*}
(The first of these also requires $m \neq 0$.)
\end{fact}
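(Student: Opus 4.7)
}
My plan is to recognize each of the three integrals as a Beta integral and then simplify using the standard identity $B(a,b) = \int_0^1 q^{a-1}(1-q)^{b-1}\,dq = \frac{(a-1)!\,(b-1)!}{(a+b-1)!}$ for positive integers $a,b$. Since the statement of \cref{K-poly:fact} is just three explicit integration identities, this is a direct computation rather than an argument of substance, and the main thing to be careful about is which values of $m$ give legitimate (convergent) Beta integrals.

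Concretely, I would substitute $g(q) = 1,\,q,\,q^2$ into the definition
\[
\Kl(g) \;=\; \binom{\ell}{m}\int_0^1 g(q)\, q^{m-1}(1-q)^{\ell-m-1}\,dq
\]
to get respectively the Beta integrals $B(m,\ell-m)$, $B(m+1,\ell-m)$, and $B(m+2,\ell-m)$. The first needs $m\ge 1$ and $\ell-m\ge 1$, which is exactly the stated requirement $m\ne 0, \ell$; the second and third only need $\ell-m\ge 1$, i.e.\ $m\ne\ell$. Expanding $\binom{\ell}{m} = \frac{\ell!}{m!(\ell-m)!}$ and multiplying out, one gets
\[
\Kl(1) = \tfrac{\ell!}{m!(\ell-m)!}\cdot\tfrac{(m-1)!(\ell-m-1)!}{(\ell-1)!}=\tfrac{\ell}{m(\ell-m)},
\]
\[
\Kl(q) = \tfrac{\ell!}{m!(\ell-m)!}\cdot\tfrac{m!(\ell-m-1)!}{\ell!}=\tfrac{1}{\ell-m},
\]
\[
\Kl(q^2) = \tfrac{\ell!}{m!(\ell-m)!}\cdot\tfrac{(m+1)!(\ell-m-1)!}{(\ell+1)!}=\tfrac{m+1}{(\ell+1)(\ell-m)},
\]
which are exactly the claimed formulas.

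There is no real obstacle here; the only modest care required is checking that $m=0$ is fine for the second and third identities (where $q^{m-1}$ cancels with the explicit factor of $q$ or $q^2$ in $g$, leaving a convergent integrand $(1-q)^{\ell-1}$ or $q(1-q)^{\ell-1}$) and confirming the boundary case matches the closed form, which it does since setting $m=0$ in $1/(\ell-m)$ and $(m+1)/((\ell+1)(\ell-m))$ gives $1/\ell$ and $1/(\ell(\ell+1))$, matching the direct evaluations of those integrals. Thus the lemma follows immediately from the Beta-function identity.
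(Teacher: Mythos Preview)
Your proposal is correct and essentially identical to the paper's proof: both recognize $\Kl(q^i)$ as $\binom{\ell}{m}B(m+i,\ell-m)$ and simplify via the factorial form of the Beta function. You are slightly more explicit than the paper about checking the $m=0$ boundary case for the second and third identities, but otherwise the arguments coincide.
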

\begin{proof}
For $i = 0, 1, 2$,
\begin{align*}
\Kl(q^i) &= \int_0^1 \binom \ell m q^{m+i-1} (1-q)^{\ell - m - 1} \, dq \\
&= B(m+i, \ell-m) \binom \ell m \\
&= \f{(m+i-1)!(\ell-m-1)!}{(\ell+i-1)!} \binom \ell m,
\end{align*}
and the conclusion follows. (Here $B(a, b)$ is the beta function.)
\end{proof}

Linearity of $\Kl$ implies the following corollary.
\begin{cor} \label{K-linear:cor}
We have, for all constants $a, b$,
\begin{equation*}
\Kl((aq+b)(m/\ell - q)) = -\f{a}{\ell(\ell+1)}.
\end{equation*}
\end{cor}
\begin{proof}
For $m \neq \ell$, this follows directly from \cref{K-poly:fact} (we have to handle the $m=0$ case specially, but note that the constant coefficient of $(aq+b)(m/\ell - q)$ is zero in this case). For $m = \ell$, the conclusion follows from the $m=0$ case and substituting $m \mapsto \ell-m$ and $q \mapsto 1-q$.
\end{proof}

We also show the following.
\begin{lemma}
Suppose that $\tw g(q) = g(q)(m/\ell - q)$, for an $L$-Lipschitz bounded function $g$ (where $L \ge 0$ is a constant). Then,
\begin{equation*}
|\Kl(\tw g)| = O(1/\ell^2).
\end{equation*}
\end{lemma}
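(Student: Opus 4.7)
The plan is to split $g$ around the peak of the measure at $q = m/\ell$ and exploit the cancellation that Corollary~\ref{K-linear:cor} already gives for affine $g$. Concretely, write
\[
\tw g(q) = g(m/\ell)\,(m/\ell - q) \;+\; \bigl(g(q) - g(m/\ell)\bigr)\,(m/\ell - q).
\]
Corollary~\ref{K-linear:cor} applied with $a = 0$ and $b = g(m/\ell)$ shows the first summand contributes $0$ to $\Kl$, so it suffices to bound $\Kl(r)$ where $r(q) := \bigl(g(q) - g(m/\ell)\bigr)(m/\ell - q)$.

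Next I would split the integral defining $\Kl(r)$ into the interval $I := [m/\ell - \ell^{-1/3},\, m/\ell + \ell^{-1/3}]$ and its complement. On $I$ the Lipschitz hypothesis gives $|g(q) - g(m/\ell)| \le L|q - m/\ell|$, hence $|r(q)| \le L(q - m/\ell)^2$. Outside $I$ the function $r$ is bounded by an absolute constant (since $g$ is bounded and $|m/\ell - q| \le 1$), and the Chernoff-type estimate already used in the first lemma of Appendix~\ref{integrals:sec} shows that the total mass of the measure $\binom{\ell}{m}q^{m-1}(1-q)^{\ell-m-1}\,dq$ carried outside $I$ is $e^{-\Omega(\ell^{1/3})}$. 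This contribution is therefore negligible compared with the target $O(1/\ell^2)$.

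For the part inside $I$ it suffices to evaluate $\Kl\bigl((q - m/\ell)^2\bigr)$ and use it as a pointwise upper bound. Expanding the square and applying Fact~\ref{K-poly:fact} (valid for $0 < m < \ell$) gives
\[
\Kl\bigl((q - m/\ell)^2\bigr) \;=\; \f{m+1}{(\ell+1)(\ell-m)} \;-\; \f{2m}{\ell(\ell-m)} \;+\; \f{m}{\ell(\ell-m)} \;=\; \f{1}{\ell(\ell+1)}.
\]
Combining the two parts yields $|\Kl(\tw g)| \le L/(\ell(\ell+1)) + e^{-\Omega(\ell^{1/3})} = O(1/\ell^2)$, as required.

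The only residual subtlety is the boundary cases $m = 0$ and $m = \ell$, where individual terms in Fact~\ref{K-poly:fact} are singular. In these cases the factor $(q - m/\ell)^2$ cancels the singularity, and a direct computation ($K_{\ell,0}(q^2) = \int_0^1 q(1-q)^{\ell-1}\,dq = 1/(\ell(\ell+1))$, and symmetrically for $m = \ell$) recovers the same value $1/(\ell(\ell+1))$, so the bound extends uniformly in $m$. This is the only place I expect a minor technical obstacle; everything else is a transparent application of the two preceding lemmas in this appendix.
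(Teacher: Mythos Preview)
Your proof is correct and reaches the same bound $L/(\ell(\ell+1))$ as the paper, but the route differs slightly. The paper sandwiches $\tw g$ between $\tw h_{\pm}(q) = \bigl(\pm L(q-m/\ell) + g(m/\ell)\bigr)(m/\ell - q)$, observing that the $L$-Lipschitz condition forces $\tw h_{-} \le \tw g \le \tw h_{+}$ pointwise, and then applies Corollary~\ref{K-linear:cor} once to each bounding function. You instead subtract the constant part $g(m/\ell)(m/\ell-q)$ (which vanishes under $\Kl$ by the same corollary) and bound the remainder $r(q)$ pointwise by $L(q-m/\ell)^2$, then compute $\Kl\bigl((q-m/\ell)^2\bigr)=1/(\ell(\ell+1))$ from Fact~\ref{K-poly:fact}. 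Both arguments are really the same idea in different clothing; yours trades one extra beta-integral computation for a more symmetric presentation.

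One simplification: your split into $I$ and its complement is unnecessary. The Lipschitz inequality $|g(q)-g(m/\ell)| \le L|q-m/\ell|$ holds for \emph{all} $q\in[0,1]$, so $|r(q)| \le L(q-m/\ell)^2$ globally, and since the weight $\binom{\ell}{m}q^{m-1}(1-q)^{\ell-m-1}$ is nonnegative you get $|\Kl(r)| \le L\,\Kl\bigl((q-m/\ell)^2\bigr)$ directly, with no Chernoff step needed. Your boundary-case computations for $m\in\{0,\ell\}$ are fine.
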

\begin{proof}
Consider the function $h(q) = L(q - m/\ell) + g(m/\ell)$, which is the linear function with slope $L$ that matches $g$ at $m/\ell$. Note that $h(q) \ge g(q)$ when $q \ge m/\ell$, and $h(q) \le g(q)$ when $q \le m/\ell$ (since $g$ is $L$-Lipschitz). Thus, letting $\tw h(q) = h(q) (m/\ell - q)$, we have $\tw h(q) \le \tw g(q)$ for all $q$. Thus, applying \cref{K-linear:cor},
\[\Kl(\tw g) \ge \Kl(\tw h) = -\f{L}{\ell(\ell+1)}.\]
If we similarly do this for a function with slope $-L$, we get
\[\Kl(\tw g) \le \f{L}{\ell(\ell+1)}.\]
This completes the proof of the lemma.
\end{proof}

\begin{prop} \label{psi-bound:prop}
We have, for the constants $r_1 = 1/(q_1-\qmin)$ and $r_2 = 1/(\qmax - q_2)$,
\begin{equation} \label{psi-expr:eq}
\psi(m) = 
\begin{cases}
-r_1/\ell(\ell+1), & m/\ell \in [\qmin + \unc, q_1 - \unc] \\
r_2/\ell(\ell+1), & m/\ell \in [q_2 + \unc, \qmax - \unc] \\
O(1/\ell^2), & m/\ell \in [q_i - \unc, q_i + \unc] \textnormal{ for } i \in \{\textnormal{min}, 1, 2, \textnormal{max}\} \\
0, & \textnormal{else}
\end{cases}
\pm \err. 
\end{equation}
\end{prop}
\begin{proof}
Recall the definition of $\psi(m)$ in \eqref{psi-def:eq}: we have $\psi(m) = \Kl(\tw f)$, where  $\tw f(q) = f(q)(m/\ell - q)$.

Also recalling the definition of $f$ in \cref{f-def:eq}, $f$ is $O(1)$-Lipschitz, so $\psi(m) = O(1/\ell^2)$ for all $m$. For all $m$ which are at least $\unc$ away from some $q_i$, we can, by \cref{K-approx:cor}, replace $f$ by the corresponding linear function which determines $f$ on the interval that $m/\ell$ is in. The result \eqref{psi-expr:eq} then follows from \cref{K-linear:cor}.
\end{proof}

\begin{prop} \label{error-bound-integral:prop}
There is a constant $r_3$ such that for $\qmin + \unc \le m/\ell \le q_1 - \unc$, we have
\begin{equation} \label{int1:eq}
\int_0^1 \mathbf{1}_{q < q_1} \binom \ell m q^m (1-q)^{\ell-m} \, d\nu(q) \ge \f{r_3}{\ell} \p{\f m\ell - \qmin} - \err, 
\end{equation}
and for $q_2 + \unc \le m/\ell \le \qmax - \unc$, we have
\begin{equation} \label{int2:eq}
\int_0^1 \mathbf{1}_{q > q_2} \binom \ell m q^m (1-q)^{\ell-m} \, d\nu(q) \ge \f{r_3}{\ell} \p{\qmax - \f m\ell} - \err.
\end{equation}
Moreover, in the regime where $\et$, we have $r_3 = (1+\oet(1))/\e$.
\end{prop}

\begin{proof}
We will show \eqref{int1:eq}; the proof of \eqref{int2:eq} is essentially identical.

Plugging in the definition of $\nu$ (in \cref{ratedist:def}), we have that the left-hand side of \eqref{int1:eq} is
\begin{equation*}
C_\nu \int_0^1 f(q) \mathbf{1}_{q < q_1} \binom \ell m q^{m-1} (1-q)^{\ell-m-1} \, dq
= C_\nu \Kl(f(q)\mathbf{1}_{q < q_1}).
\end{equation*}
Now, note that in the given range of $m$, for all $q \in [m/\ell - \unc, m/\ell + \unc]$, we have $f(q)\mathbf{1}_{q < q_1} = \f{q-\qmin}{q_1-\qmin}$. Thus, by \cref{K-approx:cor} (and then \cref{K-poly:fact}), we have
\begin{align*}
\Kl(f(q)\mathbf{1}_{q < q_1})
&\ge \Kl\paf{q-\qmin}{q_1-\qmin} - \err \\
&= \f{1}{q_1-\qmin} \p{\f{1}{\ell-m} - \qmin \cdot \f{\ell}{m(\ell-m)}} - \err \\
&= \f{m/\ell-\qmin}{(q_1-\qmin)(m/\ell)(1-m/\ell)\ell} - \err.
\end{align*}
Since $m/\ell$ is bounded away from both 0 and 1 by a constant, the first conclusion follows. Moreover, in the $\e \to 0$ regime, we have $m/\ell = (1+\oet(1)) q_1$, so it follows that
\begin{align*}
r_3 &= \f{(1+\oet(1)) C_\nu}{(q_1-\qmin) q_1 (1-q_1)} \\
&= \f{1+\oet(1)}{\e},
\end{align*}
where we have plugged in the value of $C_\nu$ from \eqref{eq:cnu}. Thus, we are done.
\end{proof}
\end{document}